\newcommand{\benu}{\begin{enumerate}}
\newcommand{\eenu}{\end{enumerate}}
\newcommand{\blist}{\begin{itemize}}
\newcommand{\elist}{\end{itemize}}
\DeclareMathOperator{\supp}{Supp}
\DeclareMathOperator{\dist}{dist}
\def \bm{\begin{displaymath}}
\def \em{\end{displaymath}}
\def \be{\begin{equation}}
\def \ee{\end{equation}}
\def \beq*{\begin{equation*}}
\def \eeq*{\end{equation*}}
\def \ba{\begin{eqnarray}}
\def \ea{\end{eqnarray}}
\def \ba*{\begin{eqnarray*}}
\def \ea*{\end{eqnarray*}}
\newcommand{\Q}{\mathbb{Q}}
\newcommand{\C}{\mathbb{C}}
\newcommand{\R}{\mathbb{R}}
\renewcommand{\P}{\mathbb{P}}
\newcommand{\T}{\mathbb{T}}
\newcommand{\Z}{\mathbb{Z}}
\def \mc{\mathcal}
\def\({\left(}
\def\){\right)}
\def\1{\mathbf{1}}
\def\bo{\partial \Omega}
\def\D{\displaystyle}
\def\curl{{\rm curl\,}}
\def\div{\mathrm{div} \ }
\def\D{\displaystyle}
\def\eps{\varepsilon}
\def\g{g}
\def\hne{h_{n,\eta}}
\def\hal{\frac{1}{2}}
\def\hciii{{H_{c_3}}}
\def\hcii{{H_{c_2}}}
\def\hci{{H_{c_1}}}
\def\he{{h_{\rm ex}}}
\def\indic{\mathbf{1}}
\def\io{\int_{\Omega}}
\def\j{E}
\def\lep{{|\mathrm{log }\ \eps|}}
\def\loc{{\text{\rm loc}}}
\def\l|{\left|}
\def\muv{\mu_0}
\def\mn{\mathbb{N}}
\def\mn{\mathbb{N}}
\def\mr{\mathbb{R}}
\def\mz{\mathbb{Z}}
\def\nab{\nabla}
\def\om{\Omega}
\def\pa{{\partial}}
\def\Q{{\mathbb{P}_n^\beta}}
\def\ro{{\rho}}
\def\supp{\text{Supp}}
\def\vp{\varphi}
\def\W{\mathbb{W}}
\def\Z{{Z_n^\beta}}
\def\fae{f_{\alpha,\eta}}
\def\Xint#1{\mathchoice
   {\XXint\displaystyle\textstyle{#1}}%
   {\XXint\textstyle\scriptstyle{#1}}%
   {\XXint\scriptstyle\scriptscriptstyle{#1}}%
   {\XXint\scriptscriptstyle\scriptscriptstyle{#1}}%
   \!\int}
\def\XXint#1#2#3{{\setbox0=\hbox{$#1{#2#3}{\int}$}
     \vcenter{\hbox{$#2#3$}}\kern-.5\wd0}}
\def\dashint{\Xint-}
\def\fint{\dashint}
\newtheorem{theo}{Theorem}[chapter]
\newtheorem{prop}{Proposition}[chapter]
\newtheorem{coroll}[prop]{Corollary}
\newtheorem{lemme}[prop]{Lemma}
\theoremstyle{definition}
\newtheorem{example}[prop]{Example}
\newtheorem{rem}[prop]{Remark}
\newtheorem{defini}[prop]{Definition}
\numberwithin{figure}{chapter}
\newtheorem{claim}{Claim}
\newtheorem{conjecture}{Conjecture}[chapter]
\newcommand{\f}{\frac}
\def \lg{\langle}
\def \rg{\rangle}
\newcommand{\interieur}{\overset{\circ}}
\renewcommand{\div}{\mathrm{div}\,}
\renewcommand{\eps}{\varepsilon}
\author{Sylvia Serfaty}
\title{Coulomb Gases and Ginzburg-Landau Vortices}
\date{November 2014}
\begin{document}

\begin{titlepage}
\maketitle
\end{titlepage}

{\bf Acknowledgments}\\
These are the lecture notes of the ``Nachdiplomvorlesung" course that I taught in the Spring of 2013 at the ETH Z\"urich, at the invitation of the Forschungsinstitut f\"ur Mathematik. I would  like to express my deep gratitude to Tristan Rivi\`ere, director of the FIM, for this invitation, and to him,  Thomas Kappeler, Michael Struwe and Francesca da Lio, for making my stay particularly enjoyable and stimulating.

 I am grateful to those who followed the course for their constructive questions and comments. 
A particular acknowledgment goes to Thomas Lebl\'e for taking notes and typing the first draft of the manuscript, as well as for making the figures.

The material covered here is based primarily on works with Etienne Sandier, and with Nicolas Rougerie, I am very grateful to them for the fruitful collaborations we have had. 
 Finally, this manuscript also benefited from comments by Ofer Zeitouni to whom I extend thanks here.

\tableofcontents
\chapter{Introduction}
\label{chap-intro}

These lecture notes are devoted to the mathematical study of two physical phenomena that have close mathematical connections: vortices in  the Ginzburg-Landau model of superconductivity on the one hand, and classical Coulomb gases on the other hand. A large part of the results we shall present originates in joint work with Etienne Sandier (for Ginzburg-Landau and two dimensional Coulomb gases) and in joint work with Nicolas Rougerie (for higher dimensional Coulomb gases), recently revisited in work with Mircea Petrache. In order to simplify the presentation, we have chosen to present the material in reverse chronological order, starting with the more recent  results on Coulomb gases which are simpler to present, and finishing with the more complex study of vortices in the Ginzburg-Landau model. But first, in this introductory chapter,  we start by briefly presenting the two topics  and  the connection between them.

\section{From the Ginzburg-Landau model to the 2D Coulomb gas}
\subsection{Superconductivity and the Ginzburg-Landau model}
The Ginzburg-Landau model is a very famous physics  model for superconductivity. Superconductors are certain metallic alloys,  which, when cooled down below a very low {\it critical  temperature},  lose their resistivity and let current flow without loss of energy. This is explained at the microscopic level  by  the creation of superconducting electron pairs called Cooper pairs (Bardeen-Cooper-Schrieffer or BCS theory), and superconductivity is a macroscopic manifestation of this quantum phenomenon. The Ginzburg-Landau theory, introduced on phenomenological grounds by Ginzburg and Landau in the 1950's, some forty years after superconductivity had first been discovered by Kammerling Ohnes in 1911,  has proven amazingly effective in describing the experimental results and predicting the behavior of superconductors. It is only very recently that the Ginzburg-Landau theory \cite{gl} has been rigorously (mathematically) derived from the microscopic theory of Bardeen-Cooper-Schrieffer \cite{bcs}, also dating from the 50's,  by Frank, Hainzl, Seiringer and Solovej \cite{fhss}.

These superconducting alloys exhibit a particular behavior in the presence of a magnetic field : the superconductor levitates above the magnet. This is explained by the {\it Meissner effect} : the superconductor expells the magnetic field. This only happens when the external field $h_{ex}$ is not too large. There are three critical fields $H_{c_1}, H_{c_2}, H_{c_3}$ for which {\it phase transitions} occur. Below the {\it first critical field} $\hci$, the material is everywhere superconducting.  At $H_{c_1}$, one first observes local defects of superconductivity, called {\it vortices}, around which a superconducting loop of current circulates.  As $h_{ex}$ increases,  so does the number of vortices, so that they become densely packed in the sample. The vortices repel each other, while the magnetic field confines them inside the sample, and the result of the competition between  these two effects is that they arrange themselves in a particular {\it triangular lattice} pattern. It was predicted by Abrikosov \cite{a}, and later observed experimentally, that there should be periodic arrays of vortices appearing in superconductors, and this was later observed experimentally (Abrikosov and Ginzburg earned the 2003 Nobel Prize  for their discoveries on superconductivity), cf. Fig. 
\ref{fig32} (for more pictures, see {\tt www.fys.uio.no/super/vortex/}).
\begin{figure}[ht!]
\begin{center}
\includegraphics[width=5cm]{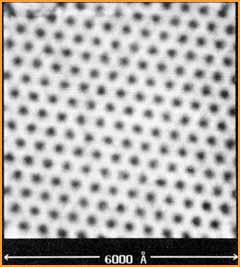}
\caption{Abrikosov lattice, H. F. Hess et al. Bell Labs
{\it Phys. Rev. Lett.} 62, 214 (1989)}
\label{fig32}

\end{center}

\end{figure}

These triangular lattices (originally Abrikosov predicted a square lattice but he had made a small mistake) then became called {\it Abrikosov lattices}. A  part of our study, detailed in this course,  is aimed towards understanding why this particular lattice appears.

The second and third critical fields correspond respectively to the  loss of superconductivity in the sample bulk and to the complete loss of superconductivity. These two transitions are not the focus of our study,  and for more mathematical details on them, we refer to the monograph by Fournais and Helffer \cite{fh}.
For a physics presentation of superconductivity and the Ginzburg-Landau model  we refer to the standard texts \cite{sst,dg,t}, for a mathematical presentation one can see \cite{livre,fh} and references therein. 

In non-dimensionalized form and in a simply connected domain $\om$ of the plane, the model proposed by Ginzburg-Landau can be written as the functional 
\be \label{GL}
G_{\eps}(u, A)  = \f{1}{2} \int_{\Omega} |(\nabla - iA)u|^2 + |\curl  A - \he|^2 + \f{(1 - |u|^2)^2}{2\eps^2}.
\ee
This may correspond to the idealized situation of  an infinite vertical cylindrical sample of cross-section $\Omega$  and a vertical external field of intensity $\he$, or to a thin film.

Here 
\begin{itemize}
\item $u : \Omega \rightarrow \C$, usually denoted by $\psi$ in the physics literature, is called the {\it order parameter}. Its modulus (the density of Cooper pairs of superconducting electrons in the BCS theory) indicates the local state of the material: where $|u|\approx 1$, the material is in the superconducting phase, where $|u|\approx 0$ in the normal phase.
 The vortices correspond to isolated zeroes of $u$, and since $u$ is complex-valued each zero carries an integer topological degree, like a “topological charge.”
\item $A : \Omega \rightarrow \R^2$ is the vector potential of the magnetic field $h = \curl  A$ (defined by $\curl A:= \partial_1 A_2-\partial_2 A_1$), which is thus a real-valued function. 
\item The parameter $\he>0$ is the intensity of the applied (or external) magnetic field.
\item The parameter $\eps>0$ is a material constant, corresponding to the ratio between characteristic lengthscales of the material (the {\it coherence length} over the {\it penetration depth}).
 We will be interested in the asymptotic regime $\eps \to 0$.
 The functional is generally expressed in the physics literature in terms of the inverse of the constant $\eps$, denoted $\kappa$, and called the {\it Ginzburg-Landau parameter.}  Materials with high-$\kappa$ (the case we are interested in) are sometimes called ``extreme type-II superconductors," and the limit $\kappa\to \infty$ is often called the {\it London limit.}
\end{itemize}
When  considering the problem of minimizing the functional $G_{\eps}$, a heuristic examination leads to observing that~: 
\begin{itemize}
\item The term $(1- |u|^2)^2$ favors $u$ close to $1$, hence $u$ should not vanish too often, especially as $\eps \to 0$. A dimensional analysis in fact shows that the regions where $|u|$ is small have lengthscale $\eps$.
\item The quantity $|\curl A - \he|^2$ is smaller when $\curl A = h \approx \he$, that is, when the magnetic field penetrates the material  so that the induced magnetic field equals the external  magnetic field.
\end{itemize} 

Minimizers and critical points of the Ginzburg-Landau functional without boundary constraints solve the associated set of Euler-Lagrange  equations, called the  Ginzburg-Landau equations :
\begin{equation*}
\mathrm{(GL)}\left\lbrace 
\begin{array}{cc}
- \nabla^2_Au = \f{1}{\eps^2} u(1- |u|^2) & \text{in} \ \om
 \\ [2mm]
- \nabla^{\perp}h = \langle i u, \nabla_A u \rangle & \ \text{in} \ \om
\end{array} \right.
\end{equation*} where $\nabla_A := \nabla - iA$, $\langle \cdot, \cdot \rangle$ denotes the scalar product in $\C$ as identified with $\mr^2$, 
$\nabla^{\perp} = (- \partial_2, \partial_1)$ and again $ h= \curl A$;
with natural  boundary conditions
\begin{equation*}
\left\lbrace 
\begin{array}{cc}
\nab_A u \cdot \nu=0 & \text{on} \ \bo\\
h=\he& \text{on} \ \bo.\end{array}\right.
\end{equation*}

\subsection{Reduction to a Coulomb interaction}
More details on the analysis of the  Ginzburg-Landau  model will be given in Chapter \ref{glheuris}, which will be devoted to it, but for now let us try to explain the Coulombic flavor of the phenomenon.

In the regime with vortices (for $\hci \le \he \ll \hcii$), formal computations that will be better detailed  in Chapter \ref{glheuris}  show that in the asymptotic regime $\eps \to 0$, 
the functional $G_{\eps}(u, A)$ behaves as if it were~: 
\be\label{apprx}
G_{\eps} (u, A)\approx \f{1}{2} \int_{\Omega} |\nabla h|^2 + |h - h_{ex}|^2, \text{ where } h = \curl A
\ee
with what is known in the physics literature as the {\it London equation}~: 
\begin{equation}\label{london}
\left\lbrace \begin{array}{ll}
- \Delta h + h \approx 2\pi \sum d_i \delta^{(\eps)}_{a_i} & \text{ in } \Omega \\
h = \he & \text{ on } \partial \Omega,
\end{array}  \right.
\end{equation}
where the $a_i$'s are the centers of the vortices of $u$ and the coefficients $d_i\in \mz$ their (topological) degrees.  One should think of  $\delta_{a_i}^{(\eps)}$ as  being formally a Dirac mass at $a_i$, smoothed out at the scale $\eps$, or some approximation of it.
A large part of our analysis in \cite{livre,ss1} is devoted to giving rigorous statements and proofs of these heuristics.

Inserting the  London equation \eqref{london} into the approximation \eqref{apprx} leads to the following electrostatic analogy:
\be\label{apprx2} G_{\eps} (u, A)   \approx \f{1}{2} \int_{\Omega\times \Omega} G_{\Omega}(x,y) \Big(2\pi \sum_i d_i \delta^{(\eps)}_{a_i} - \he \Big)(x)dx \Big(2\pi \sum_i d_i \delta^{(\eps)}_{a_i} - \he \Big)(y)dy\ee 
where $G_{\Omega}$ is a Green kernel (or more accurately, Yukawa or screened Green kernel), solution to 
\be
\left\lbrace \begin{array}{ll} -\Delta G_{\Omega} + G_{\Omega} = \delta_{y} & \mbox{ in } \Omega \\ G_{\Omega} = 0 & \mbox{ on } \partial \Omega. \end{array} \right. 
\ee
This kernel is  logarithmic to leading order~: we may write\be
G_{\Omega} (x,y)=- \f{1}{2\pi} \log |x-y| + R_{\Omega}(x,y)
\ee
where $R_{\Omega}$ is a  nonsingular function of $(x, y)$. Approximating $G_\Omega$ by $- \frac{1}{2\pi}\log$ gives that the leading terms in \eqref{apprx2} are 
\begin{equation}\label{sumcontr}
G_{\eps}(u,A) \approx - \pi \sum_{i,j} d_i d_j \log |a_i - a_j| 
\end{equation}
which is a sum of pairwise logarithmic or Coulombic interactions, weighted by the degrees  $d_i$. Two such topological charges  repel each other when they have the same sign, and attract each other if they have different signs. Rigorously, this is of course wrong, because we have replaced the smoothed out Diracs by true Dirac masses, leading to infinite contributions  when $i=j$ in \eqref{sumcontr}. One needs to analyze more  carefully the effects of the smearing out, and to remove the infinite self-interaction of each  ``charge" at $a_i$  in \eqref{sumcontr}. One also needs to retain the interaction of these charges with the ``background charge" $-\he\, dx$ appearing in \eqref{apprx2}.  This is what leads to the analogy with the Coulomb gas that we will define and describe just below. 

When looking for a  model that retains these features~: Coulombic interactions of points, combined with the confinement by a background charge, the simplest is to consider a discrete model with all charges equal to $1$, and  consider the Hamiltonian  of a Coulomb gas with confining potential in dimension 2~:
\begin{equation} \label{coulombgashamiltonian}
H_n(x_1, \dots, x_n) = -  \sum_{i \neq j} \log |x_i - x_j| + n \sum_{i=1}^n V(x_i),
\end{equation}
where  $x_i\in \mr^2$, $V$ is the confining potential  (smooth, growing faster than $\log|x|$ at infinity), and the number of points $n$ tends to infinity.

It turns out that this much simpler model (compared to $G_\eps$) does retain many of  the essential features of the vortex interaction, and is also of independent interest  for physics and mathematics, as we will see. The study of \eqref{coulombgashamiltonian} and its higher-dimensional analogues will occupy the largest part of these notes. We will then see how to use the perspective and knowledge gained on this to analyze the Ginzburg-Landau model (again, this is the reverse of the literature chronology, since we first studied the Ginzburg-Landau model and then adapted our analysis to the Coulomb gas situation!).

\section{The classical Coulomb gas}
\subsection{The general setting}
The Hamiltonian given by (\ref{coulombgashamiltonian}) corresponds to the energy of a gas of charged particles in $\R^2$ interacting via the Coulomb kernel in two dimension. To be more precise, $-\log |x-y|$ is a multiple of the Coulomb kernel (or the fundamental solution of the Laplacian in the plane) in dimension $2$. The counterpart in higher dimension corresponds to the $d$-dimensional Coulomb kernel, which is a multiple of $|x|^{2-d}$ for $d\ge 3$.
The Hamiltonian of a classical Coulomb gas in any dimension $d \ge 2$ is thus given by 
\begin{equation}\label{hamiltoniencoulomb}
H_n(x_1, \dots, x_n) = \sum_{i\neq j} g(x_i-x_j) + n \sum_{i=1}^n V(x_i)\end{equation}
where 
\be \label{coulombkernel0}
g(x) = \left\lbrace \begin{array}{cc} - \log |x| & \mbox{for } d =  2 \\ \f{1}{|x|^{d-2}} & \mbox{for } d \geq 3 .\end{array} \right. 
\ee

 The statistical mechanics of a Coulomb gas, also called in physics a {\it two-dimensional one-component plasma},  is described by the corresponding Gibbs measure~: 
\be\label{110b}
d\P_{n, \beta}(x_1, \dots, x_n) := \f{1}{Z_{n, \beta}} e^{-\f{\beta}{2} H_n(x_1, \dots, x_n)} dx_1 \dots dx_n
\ee
where $\beta > 0$ is the {\it inverse temperature} and $Z_{n,\beta}$ is a normalization constant, the {\it partition function}, defined by
\begin{displaymath}
Z_{n, \beta} = \int_{(\R^d)^n} e^{-\f{\beta}{2} H_n(x_1, \dots, x_n)} dx_1 \dots dx_n.
\end{displaymath}
The probability measure $\P_{n,\beta}$ gives the probability of finding the particles at $(x_1, \dots, x_n)$ at (inverse) temperature $\beta$. The object of statistical mechanics is then to analyze possible transitions in the types of states that can be effectively observed (i.e. those that have probability $1$ or almost $1$), according to the value of the inverse temperature $\beta$ (e.g. transitions from ordered to disordered states at critical temperatures, such as liquid to solid phases etc).
    For general reference, we refer to standard statistical mechanics textbooks such as \cite{huang}, and with increasing order of specificity to the books \cite{hansen,forrester}.

This model is one of the most basic statistical mechanics models not confined to a lattice, and it is  considered difficult because of the long-range nature of the electrostatic interaction. Moreover, it can play the role of a toy model of the structure of matter, even if it is a purely classical - and not quantum - model.  Studies in this direction include \cite{sm,LO,aj,jlm,PS}.

The macroscopic distribution of the points as their number $n$ goes to infinity is well understood and  relatively simple to derive,  this will be the object of  Chapter~\ref{leadingorder}. On the other hand, their microscopic distribution, more precisely the one seen  at the scale $n^{-1/d}$, is less understood, and will be the main object of these lectures.

Let us now see some more specific motivations for studying the classical Coulomb gas, many of them being specific to dimension $d=2$.
\subsection{Two-dimensional Coulomb gas}

This is the setting that is the closest to the Ginzburg-Landau setting, as we discussed above. In this setting,  the microscopic distribution of the points in the plane is expected in crystallize (most likely to the Abrikosov lattice triangular pattern) at low temperature. In fact there is some controversy in the physics literature as to whether there is a finite temperature phase transition for this crystallization which is numerically observed, cf. e.g.  \cite{bst,stishov,aj}.

\paragraph{Vortices in superfluids and superconductors}
A first motivation for studying the two-dimensional Coulomb gas is the analysis of vortices in the Ginzburg-Landau model of superconductivity, but also more generally of vortex systems in classical fluids  \cite{CLMP}, in quantum fluids such as in superfluids or Bose-Einstein condensates \cite{nicolas}, and in fractional quantum Hall physics  \cite{Gir,RSY1,RSY2}. All these systems share a lot of mathematics in common, and it is also of interest to understand their statistical physics (critical temperatures and phase transitions).

\paragraph{Fekete sets}
This motivation no longer  comes from physics but rather from a very different area of mathematics: interpolation theory. 
Fekete points are defined to be points that maximize the quantity
\be\label{prodxi}
\prod_{1 \leq i < j \leq n} |x_i - x_j|,
\ee among all families of $n$ points defined on a certain subset of $\R^d$, or a manifold  (or any metric space, replacing modulus by the distances). The Fekete points have the property of  minimizing the error when  interpolating a function by its value at points, see \cite{safftotik} for reference, or \cite{sk} for more details on the motivation, and \cite{brauchartgrabner,grabner} for  surveys of recent results on the sphere. 
A whole literature is also devoted to understanding Fekete points on complex manifolds, possibly in higher dimension, see e.g. \cite{berman,bbn,lev} and references therein.

Of course, in the setting of the Euclidean space, maximizing \eqref{prodxi} is equivalent to minimizing the 
logarithmic interaction
\begin{displaymath}
- \sum_{1 \leq i \neq  j \leq n} \log |x_i - x_j|
\end{displaymath}
which takes us back  to the setting of the two-dimensional Coulomb gas.    Indeed,  Fekete sets confined to a set $K\subset \R^d$ correspond to  minimizers of  $H_n$ with $V$ taken to be $0$ in $K$ and $+\infty$ in $K^c$.
 Minimizers of   \eqref{coulombgashamiltonian} for general $V$'s 
 are in fact  called {\it weighted Fekete sets}, also defined as maximizers  of   
\begin{displaymath}
\prod_{1 \leq i < j \leq n} |x_i - x_j| \prod_{i =1}^n e^{-\f{n}{2} V(x_i)}
\end{displaymath}
where $V$ is the weight. For definitions and the connection to logarithmic potential theory, see again \cite{safftotik} and references therein.
Weighted Fekete sets  are also naturally related to the theory of weighted orthogonal polynomials (cf. the surveys  \cite{simon,koe} or again \cite{safftotik}).

The correspondence can also be made via a mapping, e.g. 
the important  question of finding the Fekete points on the ($2$-)sphere is equivalent, by stereographical projection, to studying the weighted Fekete sets on $\R^2$ with weight $V(x) = \f{1}{2} \log (1 + |x|^2)$, for details see \cite{hardy,betermin}.

\paragraph{Random matrix theory}
Random matrix theory (RMT) is a relatively old theory, pionereed by statisticians and physicists such  as Wishart, Wigner and Dyson, and originally motivated by the understanding of the spectrum of heavy atoms, see \cite{mehta}. For more recent mathematical reference see \cite{agz,deift,forrester}.  An important  model of random matrices is the so-called {\it Ginibre ensemble} \cite{ginibre}~: the law is that of an $n \times n$ complex matrix whose coefficients are i.i.d. complex normal random variables. The main question asked by RMT is~: what is the law of the spectrum of a large random matrix ? In the case of the Ginibre ensemble, the law is known exactly~: upon rescaling the (complex) eigenvalues $x_1, \dots, x_n$ by a factor $\f{1}{\sqrt{n}}$, it  is given by the following density
\begin{equation} \label{loiGinibre}
d\P_n(x_1, \dots, x_n) = \f{1}{Z_n} e^{-H_n(x_1, \dots, x_n)} dx_1 \dots dx_n
\end{equation}
with
\begin{equation} \label{loiGinibre2}
H_n(x_1, \dots, x_n) = - \sum_{i \neq j} \log |x_i - x_j| + n \sum_{i=1}^n |x_i|^2
\end{equation}
and $Z_n$ a normalization constant.
We recognize in \eqref{loiGinibre2}  the 2D Coulomb gas Hamiltonian with potential $V(x) = |x|^2$, and the law $d\P_n$ is the Gibbs measure \eqref{110b} at inverse temperature $\beta = 2$.
This analogy 
 between random matrices and the statistical mechanics of Coulomb gases was first noticed by Wigner \cite{wigner} and Dyson \cite{dyson}, see \cite{forrester} for more on this link.
 Writing the law in the form \eqref{loiGinibre} immediately displays the phenomenon of {\it repulsion of eigenvalues}: eigenvalues in the complex plane interact like Coulomb particles, i.e. they do not ``like" to be too close and repel each other logarithmically.

 At this  specific temperature $\beta=2$, the law of the spectrum acquires a  special algebraic feature : it becomes a determinantal process, part of a wider class of processes (see \cite{bkpv,borodin}) for which the correlation functions are explicitly given by certain  determinants. This allows for many explicit  algebraic computations. However,  many relevant quantities that can be computed explicitly for $\beta = 2$ are not exactly known for the $\beta \neq 2$ case, even in the case of  the potential $V(x) = |x|^2$. In this course, in contrast, we will work for any $\beta$, and with a wide class of potentials.

%As mentioned we are interested in the asymptotics when the number $n$ of points goes to $\infty$. The functional has been scaled in the ``mean-field scaling" where, because of the factor $n$ in front of $V$, the weight of the interaction  ($n^2$ in total)  is the same as the weight carried by the potential confinement. As a result, 
%the points tend to remain in a  finite size region of the plane (the cost of escaping outside large compact sets is big because of the growth assumption on the confining potential $V$), and their macroscopic distribution - that depends only on $V$ - is known. 

%Their microscopic distribution, however, is less well known. The microscopic distribution is the distribution of points that one gets when rescaling the configuration by a factor $\sqrt{n}$. When $n \rightarrow + \infty$, it gives a configuration of points in the whole plane, and we expect it to have a “prefered” shape, and to follow a particular pattern - conjectured to be  again the Abrikosov lattice.

\subsection{The one-dimensional Coulomb gas and the log gas}
We have not mentioned yet   the one-dimensional Coulomb gas, which corresponds to \eqref{hamiltoniencoulomb} with the Coulomb kernel (up to a constant) $g(x)=|x|$. The reason is that we will not be interested in it,  because it has already been well-understood \cite{len1,len2,kunz,bl,am}. It can be ``solved" almost explicitly  and  crystallization at zero temperature is established.

We are interested however in another one-dimensional model (i.e. with points $x_i\in \mr$), where the two-dimensional logarithmic interaction $g(x)=-\log |x|$ is used in \eqref{hamiltoniencoulomb}. This is usually called a {\it log gas}, and its motivation also comes from Random Matrix Theory (see \cite{forrester}):
one-dimensional counterparts to the Ginibre ensemble are the Gaussian Unitary Ensemble (GUE) and the Gaussian Orthogonal Ensemble (GOE), which are symmetric analogues of it. The law of the GUE (resp. the GOE) is that of a $n\times n$ matrix whose coefficients are complex (resp. real) normal random variables, independent up to a Hermitian (resp. symmetry) condition.  Because of the Hermitian or symmetric nature of the matrix, its eigenvalues lie on the real line (hence the one dimensionality of the  model), but they still repel each other logarithmically: again, the law of the spectrum (the distribution of eigenvalues) can be given explicitly by the following density on $\R$
\begin{equation}\label{loigue}
d\mathbb{P}_n(x_1, \dots, x_n) = \f{1}{Z_n} e^{-\f{\beta}{2} H_n(x_1, \dots, x_n)} dx_1 \dots dx_n,
\end{equation}
where $H_n$ is still defined as \begin{equation}
H_n(x_1, \dots, x_n) = - \sum_{i \neq j} \log |x_i - x_j| + n \sum_{i=1}^n |x_i|^2, \quad x_i \in \R
\end{equation}
with  $\beta = 1$ for the GOE and $\beta =2$ for GUE. This is thus a particular case of a log gas, 
at specific temperature $\beta=1$ or $2$ and with quadratic potential, and the phenomenon of {\it repulsion of eigenlevels} is  visible in the same way as for the Ginibre ensemble. 
%The value of $\beta=4$ is also related to a random matrix model: the Gaussian Symplectic Ensemble  of Hermitian matrices with quaternionic coefficients.
Again, in these cases of the GOE and GUE, a lot about \eqref{loigue} can be understood and computed explicitly thanks to the underlying random matrix structure and its determinantal nature. In fact the global and local statistics of eigenvalues are completely understood. 

Considering the coincidence between a statistical mechanics model and the law of the spectrum of a random matrix model for several values of the inverse temperature, it is also natural to ask whether such a correspondence exists for any value of $\beta$. The answer is yes for $\beta = 4$, it corresponds to the Gaussian Symplectic Ensemble (GSE) of Hermitian matrices with quaternionic coefficients, and  for any $\beta$ a somehow complicated model of tridiagonal matrices can be associated to the Gibbs measure of the one-dimensional log gas at inverse temperature $\beta$, see\cite{de}.  This and other methods  allow again  to compute a lot explicitly, and to derive that  the microscopic laws of the eigenvalues are those of a so called {\it sine-$\beta$ process} \cite{vv}.

Generally speaking, much is known for log gases in one dimension, for any value of $\beta$ and a wide class of potentials $V$. In particular, a lot of attention has been devoted to proving that many of the features of the system at the  microscopic  scale are {\it universal}, i.e. independent on the particular choice of $V$. For recent results, see  \cite{bey1,bey2,sh1,sh2,sh3,borotguionnet,bg2,bfg}.
 In contrast, the analogue is true in dimension 2  only for $\beta = 2$ \cite{ginibre,bors,ahm}. 
Thus the topic of log/Coulomb gases does not seem  reducible to just a subset of Random Matrix Theory. 
This is of course even more true in dimension $3$ and higher, which we will also treat, and  where we leave the realm of RMT. 

%%%%%%%%%%%%%%%%%%%%%%%%%%%%%%%%%%%%%%%%%%%%%%%%%%%%%%%%%%%%%%%%%%%%%%%%%%%%%%%%%%%%%%%%%%%%%%%%%%%%% CHAPTER 2
%%%%%%%%%%%%%%%%%%%%%%%%%%%%%%%%%%%%%%%%%%%%%%%%%

\chapter{The leading order behavior of the Coulomb gas} \label{leadingorder} 
In this chapter, we study the leading order or ``mean field" behavior of the Coulomb gas Hamiltonian. The results are quite standard and borrowed or adapted from the literature. However, we try to give here a self-contained and general treatment, since results in the literature are a bit scattered between the potential theory literature, the probability and statistical mechanics literature and the PDE literature, and not all situations seem  to be systematically covered.

Let us recall the setting~: 
for $(x_1, \dots, x_n)$ in $(\R^d)^n$, we  define the Hamiltonian
\be \label{hamiltoniencoulomb2}
H_n(x_1, \dots, x_n) = \sum_{i\neq j} g(x_i - x_j) + n \sum_{i=1}^n V(x_i)
\ee
where $g$ is a multiple of the Coulomb kernel in dimension $d\ge 2$ (the fundamental solution of the Laplacian) and  is $-\log $ in dimension $1$ (this choice is made to treat the case of log gases mentioned in the previous chapter):
\be \label{coulombkernel1}
g (x)= \left\lbrace \begin{array}{cc} - \log |x| & \mbox{for } d =  1, 2 \\ \f{1}{|x|^{d-2}} & \mbox{for } d \geq 3. \end{array} \right. 
\ee
The results we present here are in fact valid for a much more general class of radial interaction kernels, as an inspection of the proof shows. 

We need to keep track of the proportionality factor between $g$ and the true Coulomb kernel, and note that, for $d \ge 2$, we have
\begin{equation} \label{coulombkernel2}
- \Delta g = c_d \delta_0
\end{equation}
where $\delta_0$ is the Dirac mass at $0 \in \R^d$, and where the constant $c_d$ is given by~:
\begin{equation}
\label{defcd}
c_2 = 2\pi \  \mbox{ and }  c_d = (d-2) |\mathbb{S}^{d-1}|\  \textrm{ for }d \geq 3,
\end{equation} see e.g. \cite[Chap. 6]{liebloss}.
In dimension $d=1$, instead of \eqref{coulombkernel2}, $g$ solves   the non-local equation
$$-\Delta^{1/2} g =c_1\delta_0\qquad c_1=\pi,$$ 
% je crois que \quad c_1=\pi
where $\Delta^s $ is the fractional Laplacian  (in this situation one can check that $\Delta^{1/2}$ coincides with the Dirichlet-to-Neumann operator in the upper half-plane, see e.g. \cite{caffsilvestre}).
We will sometimes   abuse notation by considering also $g$ as a function on $\R$, by writing $g(r)= \begin{cases} -\log r  & \text{for } d=1,2\\ r^{2-d} & \text{for } d\ge 3.\end{cases}$

The function $V$ is called the (confining) potential, precise assumptions on $V$ will be made later.
The Hamiltonian $H_n$ may be physically interpreted as follows~: it  is the sum of  an interaction term $\sum_{i\neq j} g(x_i-  x_j)$ and a confining term $n \sum_{i=1}^n V(x_i)$. The first term describes the pairwise interaction of charged particles of same sign, interacting via (a multiple of) the Coulomb potential in dimension $d$. Since all the charges have the same sign, their natural behavior is to repel  each other, and potentially  escape to infinity. The potential $V$, however, is there to  confine the particles to a compact set of  $\R^d$. Note that the sum of pairwise interactions is expected to scale like the number of pairs of points, i.e. $n^2$, while the sum of the potential terms is expected to scale like $n$ times the number of points, i.e. $n^2$ again. The factor $n$ in front of $V$ in \eqref{hamiltoniencoulomb2} is there precisely so that this happens, in such a way that the opposing effects of the repulsion and of the confinement balance each other. This is called the ``mean-field scaling". It is the scaling in which  the force acting on each particle is   given in terms of the average field generated by the other particles.
For general reference on mean-field theory, see statistical  mechanics textbooks such as \cite{huang}.

The beginning of this chapter is devoted to the analysis of $H_n$ only, via $\Gamma$-convergence, leading to the mean-field description of its minimizers. Later, 
 in Section \ref{LDPsection} we apply these results to the statistical mechanics model associated to the Hamiltonian $H_n$, i.e. to characterizing the states with nonzero temperature.

\section{$\Gamma$-convergence : general definition} \label{sectiongammaconvergence}
The result we want to show about the leading order behavior of $H_n$ can be formalized in terms of  the notion of $\Gamma$-convergence, in the sense of De Giorgi (see\cite{braides},\cite{braides2} for an introduction, or \cite{dalmaso} for an advanced reference). It is a notion of convergence for functions (or functionals) which ensures that minimizers tend to minimizers.  This notion is popular in the community of calculus of variations  and very much used in the analysis of sharp-interface and fracture models, dimension reduction for variational problems, homogeneization...  (see again\cite{braides} for a review) and even more recently in the study of some evolution problems \cite{braides3}.
Using this formalism here is convenient but not essential.

Let us first give the basic definitions.
\begin{defini}[$\Gamma$-convergence] \label{defgcv}We say that a sequence $\{F_n\}_{n}$ of functions on a metric space $X$ $\Gamma$-converges to a function $F : X \rightarrow (-\infty, +\infty]$ if the following two inequalities hold : 
\begin{enumerate}
\item $(\Gamma$-$\liminf)$ If $x_n \rightarrow x$ in $X$, then $\liminf_{n \to + \infty} F_n(x_n) \geq F(x)$.
\item $(\Gamma$-$\limsup)$ For all $x$ in $X$, there is a sequence $\{x_n\}_n$ in $X$ such that $x_n \rightarrow x$ and $\limsup_{n \to + \infty} F_n(x_n) \leq F(x)$. Such a sequence is called a \textit{recovery sequence}.
\end{enumerate}\end{defini}

The second inequality is essentially saying that the first one is sharp, since it implies that  there is a particular sequence $x_n \rightarrow x$ for which the equality $\lim_{n \to + \infty} F_n(x_n) = F(x)$ holds.

\begin{rem} \label{gconvcom}
In practice a compactness assumption is generally needed and sometimes added in the definition, requiring that if $\{F_n(x_n)\}_n$ is bounded, then $\{x_n\}_{n}$ has a convergent subsequence.
A similar compactness requirement also appears in the definition of a good rate function in large deviations theory (see Definition \ref{ratefun} below).
\end{rem}
\begin{rem}
The first inequality is usually proven by functional analysis methods, without making any ``ansatz" on the precise form of $x_n$, whereas the second one is usually obtained by an explicit construction, during which one constructs “by hand” the recovery sequence such that that $F_n(x_n)$ has  asymptotically less energy than $F(x)$. Note also that by a diagonal argument, one may often reduce to  constructing  a recovery sequence for a dense subset of $x$'s.
\end{rem}
\begin{rem}
A $\Gamma$-limit is always lower semi-continuous (l.s.c.) as can be checked. (In particular, a function which is not l.s.c. is a bad candidate for being a $\Gamma$-limit.)
Thus, a functional is not always its own $\Gamma$-limit~: in general $\Gamma$-lim $F = \bar{F}$ where $\bar{F}$ is the l.s.c. envelope of $F$. \end{rem}
\begin{rem}\label{rem4}
The notion of $\Gamma$-convergence can be generalized to the situation where $F_n$ and $F$ are not defined on the same space. One may instead  refer to a sense of convergence of $x_n$ to $x$  which is defined via the  convergence of any specific function of $x_n$ to $x$, which may be a nonlinear function of $x_n$,  cf. \cite{ssgcv,jstern} for instances of this.
\end{rem}

We now state the most important property of $\Gamma$-convergence~: $\Gamma$-convergence sends minimizers to minimizers.
\begin{prop}[Minimizers converge to minimizers under $\Gamma$-convergence]\label{gammaconvmini}  Assume $F_n$ $\Gamma$-converges to $F$ in the sense of Definition \ref{defgcv}.
If for every $n$,  $x_n$ minimizes $F_n$,  and if the sequence $\{x_n\}_n$ converges to some $x$ in $X$, then $x$ minimizes $F$, and   moreover, $\lim_{n\to + \infty} \min_X F_n= \min_X  F$.
\end{prop}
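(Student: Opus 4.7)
The plan is to combine the two halves of the $\Gamma$-convergence definition with the defining property of minimizers, namely that $F_n(x_n)\le F_n(y)$ for every $y\in X$. Since we are told $x_n\to x$, the $\Gamma$-liminf inequality directly yields
\[
F(x)\;\le\;\liminf_{n\to+\infty} F_n(x_n),
\]
which will serve as our lower bound on $F(x)$ in terms of the minima of $F_n$.

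Next I would produce the matching upper bound. Fix any $y\in X$ and invoke the $\Gamma$-limsup property to obtain a recovery sequence $y_n\to y$ with $\limsup_{n} F_n(y_n)\le F(y)$. Because $x_n$ minimizes $F_n$ on $X$, we have $F_n(x_n)\le F_n(y_n)$ for every $n$. Taking $\limsup$ and chaining the inequalities gives
\[
F(x)\;\le\;\liminf_{n} F_n(x_n)\;\le\;\limsup_{n} F_n(x_n)\;\le\;\limsup_{n} F_n(y_n)\;\le\;F(y).
\]
Since $y$ was arbitrary, $F(x)\le F(y)$ for all $y$, i.e.\ $x$ is a minimizer of $F$.

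Finally, to identify the limit of $\min_X F_n$ with $\min_X F$, I would specialize this argument by choosing $y=x$ itself. The recovery sequence for $x$, call it $\tilde x_n\to x$, satisfies $\limsup F_n(\tilde x_n)\le F(x)$, and again $F_n(x_n)\le F_n(\tilde x_n)$. Combined with the $\Gamma$-liminf bound $F(x)\le\liminf F_n(x_n)$, this forces
\[
\liminf_{n} F_n(x_n)\;=\;\limsup_{n} F_n(x_n)\;=\;F(x)\;=\;\min_X F,
\]
so $\lim_{n}\min_X F_n=\lim_n F_n(x_n)=\min_X F$, completing the proof.

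There is no real obstacle here: the proof is essentially a bookkeeping exercise interleaving the two $\Gamma$-inequalities with the minimizer inequality $F_n(x_n)\le F_n(y_n)$. The only subtlety worth flagging is that one must be careful not to need compactness of $\{x_n\}$ itself, since convergence of $x_n$ to $x$ is assumed as a hypothesis of the proposition (the compactness issue discussed in Remark \ref{gconvcom} is what would be required to produce a convergent subsequence in the first place, but it is not needed for the present statement).
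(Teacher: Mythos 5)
Your proof is correct and follows essentially the same route as the paper's: the $\Gamma$-liminf inequality applied to $x_n\to x$, the recovery sequence for an arbitrary $y$ combined with minimality of $x_n$, and then the specialization $y=x$ to obtain $\lim_n\min F_n=\min F$. Nothing to add.
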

\begin{proof}
Let $y\in X$. By the $\Gamma$-$\limsup$ inequality, there is a recovery sequence $\{y_n\}_n$ converging to $y$ such that $F(y) \geq \limsup_{n\to + \infty} F_n(y_n)$. By minimality of $x_n$,  we have $F_n(y_n) \geq F_n(x_n)$ for all $n$ and by the $\Gamma$-$\liminf$ inequality it follows that  $\liminf_{n\to  + \infty} F_n(x_n) \geq F(x)$, hence $F(y) \geq F(x)$.  Since this is true for every  $y$ in $X$, it proves that  $x$ is a minimizer of $F$. The relation $ \lim_{n\to + \infty} \min F_n= \min F$ follows from the previous chain of inequalities applied with $y=x$.
\end{proof}
\begin{rem}\label{gcvcomplete}
An  additional compactness assumption as in Remark \ref{gconvcom} ensures that if $\{\min F_n\}_{n} $ is bounded then a  sequence $\{x_n\}_n$ of minimizers has a limit, up to extraction. That limit  must then be a minimizer of $F$. If moreover it happens that $F$ has a unique minimizer, then the whole sequence $\{x_n\}_n$ must converge to it.
\end{rem}

\section{$\Gamma$-convergence of the Coulomb gas Hamiltonian}
The example of $\Gamma$-convergence of interest for us here is  that of  the sequence of functions $\{\frac{1}{n^2}H_n\}_n$ defined as in \eqref{hamiltoniencoulomb2}. The space  $\mc{P}(\R^d)$  of Borel probability measures on $\R^d$ endowed with  the topology of weak convergence (i.e. that of the  dual of bounded continuous functions in $\R^d$), which is metrizable,  will play the role of the metric space $X$ above. We may view $H_n$ as being defined on $\mc{P}(\R^d)$ through the map 
\begin{equation}
\left\lbrace \begin{array}{ccc} (\R^d)^n & \longrightarrow & \mc{P}(\R^d) \\
(x_1, \dots, x_n) & \mapsto & \frac{1}{n} \sum_{i=1}^n \delta_{x_i} \end{array}
\right.
\end{equation}
which associates to any configuration of $n$ points the probability measure
 $\f{1}{n}\sum_{i=1}^n \delta_{x_i}$, also called the \textit{empirical measure}, or \textit{spectral measure} in the context of random matrices.  More precisely, for any $\mu$ in $\mc{P}(\R^d)$, we let $H_n(\mu)$ be :
\begin{equation}
H_n(\mu) = \left\lbrace \begin{array}{cl} H_n(x_1, \dots, x_n) & \mbox{ if } \mu \mbox{ is of the form } \frac{1}{n} \sum_{i=1}^n \delta_{x_i} \\
+ \infty & \mbox{otherwise}.
\end{array}
\right.
\end{equation}
The first main result that we prove here is that the sequence $\{\frac{1}{n^2}H_n\}_n$, has an explicit $\Gamma$-limit. It should not be surprising that we first  need to  divide by $n^2$ in order to get a limit, since we have seen that all the terms in $H_n$ are expected to be  proportional to $n^2$.

\begin{prop}[$\Gamma$-convergence of $\frac{1}{n^2}H_n$] \label{gammaconvergenceHn} 
Assume $V$ is continuous and bounded below.  The sequence $\{\f{1}{n^2} H_n\}_n$ of functions (defined on $\mc{P}(\R^d)$ as above)  $\Gamma$-converges as $n \to + \infty$, with respect to the weak convergence of probability measures,  to the function $I : \mc{P}(\R^d) \rightarrow (-\infty, + \infty]$ defined by :
\begin{equation} \label{definitionI}
I(\mu) := \iint_{\R^d \times \R^d} g(x-y) d\mu(x)d\mu(y) + \int_{\R^d} V(x) d\mu(x).
\end{equation}

\end{prop}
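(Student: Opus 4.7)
I would prove the two inequalities of Definition~\ref{defgcv} separately, in both cases first splitting $H_n$ into its interaction and potential parts and then dealing with the singularity of $g$ at the origin by truncation.

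\textbf{The $\Gamma$-$\liminf$ inequality.} Assume $\mu_n\to\mu$ weakly in $\mc{P}(\R^d)$. If $\mu_n$ is not an empirical measure of $n$ points then $H_n(\mu_n)=+\infty$ and there is nothing to check, so write $\mu_n=\frac{1}{n}\sum_{i=1}^n\delta_{x_i^n}$ and decompose
\begin{equation*}
\frac{1}{n^2}H_n(\mu_n)=\frac{1}{n^2}\sum_{i\ne j}g(x_i^n-x_j^n)+\int V\,d\mu_n.
\end{equation*}
For the potential part, the portmanteau theorem applied to $V$ (l.s.c.\ and bounded below) gives $\liminf_n \int V\,d\mu_n\ge \int V\,d\mu$. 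For the interaction part I introduce the truncation $g_M:=\min(g,M)$, which is continuous and bounded above, and observe
\begin{equation*}
\frac{1}{n^2}\sum_{i\ne j}g(x_i^n-x_j^n)\;\ge\;\frac{1}{n^2}\sum_{i\ne j}g_M(x_i^n-x_j^n)\;=\;\iint g_M\,d\mu_n\,d\mu_n-\frac{M}{n}.
\end{equation*}
Since $g_M$ is continuous, weak convergence of $\mu_n\otimes\mu_n$ to $\mu\otimes\mu$, together with a tightness argument (trivial for $d\ge 3$ where $g_M$ is nonnegative, and for $d=1,2$ requiring an additional cut-off of $g_M$ at infinity or a coercivity-type control from $V$), yields $\iint g_M\,d\mu\,d\mu$ in the limit. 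Letting $M\to+\infty$ and invoking monotone convergence recovers $\iint g\,d\mu\,d\mu$, and adding the $V$-part completes the liminf.

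\textbf{The $\Gamma$-$\limsup$ inequality.} For $\mu\in\mc{P}(\R^d)$ with $I(\mu)<+\infty$ (otherwise nothing to prove), I need to exhibit a recovery sequence of empirical measures $\mu_n\to\mu$ with $\limsup\frac{1}{n^2}H_n(\mu_n)\le I(\mu)$. A standard diagonal argument based on the density of ``nice'' measures inside $\{I<\infty\}$ reduces the task to $\mu$ with bounded density and compact support. For such $\mu$ I would partition a large cube containing $\supp\mu$ into $n$ disjoint cells $Q_1,\dots,Q_n$, each of $\mu$-mass exactly $1/n$ and diameter $O(n^{-1/d})$, and choose one representative $x_i^n\in Q_i$. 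The empirical measure $\mu_n$ then converges weakly to $\mu$, the potential sum $\int V\,d\mu_n$ converges to $\int V\,d\mu$ by continuity of $V$, and a cell-by-cell comparison using the radial monotonicity of $g$ yields
\begin{equation*}
\frac{1}{n^2}\sum_{i\ne j}g(x_i^n-x_j^n)\;\le\;\iint_{\{|x-y|\ge c n^{-1/d}\}}g(x-y)\,d\mu(x)\,d\mu(y)+o(1),
\end{equation*}
where the $o(1)$ absorbs the (few) nearby-cell pairs whose total contribution is $O(g(n^{-1/d})/n)=o(1)$ in every dimension. The right-hand side tends to $\iint g\,d\mu\,d\mu$ by monotone convergence, giving the desired bound.

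\textbf{Main obstacle.} The crux of both halves is the singularity $g(0)=+\infty$: the diagonal $i=j$ has to be discarded by hand, and each direction must match that discarded term with the missing diagonal of the double integral. The liminf handles it by monotone truncation from above ($g_M\nearrow g$), and the limsup by enforcing a minimal separation of order $n^{-1/d}$ in the recovery sequence so that the missing diagonal is negligible. A secondary subtlety, present only for $d=1,2$ where $g$ is unbounded below at infinity, is that the weak-convergence step against $g_M$ requires an explicit tightness/cut-off argument, and in full generality one expects the statement to require $V$ growing fast enough at infinity to make the liminf meaningful (which is the setting adopted in the rest of the chapter).
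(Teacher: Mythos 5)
Your proposal follows essentially the same route as the paper: the $\Gamma$-$\liminf$ via the upper truncation $g\wedge M$ together with $\mu_n\otimes\mu_n(\triangle)=\f{1}{n}$, weak convergence and monotone convergence; and the $\Gamma$-$\limsup$ via a diagonal reduction to nice measures followed by a placement of well-separated points matching the local mass, with the near-diagonal pairs controlled by the $n^{-1/d}$ separation and the local integrability of $g$. One point to repair in your reduction: a partition into $n$ cells of $\mu$-mass exactly $\f{1}{n}$ \emph{and} diameter $O(n^{-1/d})$ requires the density to be bounded \emph{below} on the cube, not merely above, so the diagonal argument must also include the step (as in the paper) of replacing $\mu$ by $\f{\mu+\alpha\indic_K}{1+\alpha|K|}$ and letting $\alpha\to 0$. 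Your caveat about passing to the limit against $g\wedge M$ when $d=1,2$ (where it is unbounded below at infinity) is well taken; the paper glosses over this, and it is precisely the place where the coercivity of $V$ from assumption \textbf{(A2)} is implicitly needed.
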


Note that $I(\mu)$ is simply a continuous version of the discrete Hamiltonian $H_n$ defined over all $\mc{P}(\R^d)$, which may also take the value $+ \infty$. From the point of view of statistical mechanics, $I$ is the “mean-field” limit energy of  $H_n$, while we will see below in Section \ref{LDPsection} that from the point of view of probability, $I$ also plays the role of a {\it rate function}.

In the next sections, we focus on the analysis of $I$ and its minimization.
This analysis will provide ingredients  for the  proof of Proposition \ref{gammaconvergenceHn}, which  is postponed to Section \ref{seccv}.

\section{Minimizing the mean-field energy   via potential theory} \label{potentialtheoretic}
In this section, we focus on the study of $I$ defined in \eqref{definitionI}, and the associated  minimization problem  of finding
\be \label{minI}
\min_{\mu \in \mc{P}(\R^d)} \iint_{\R^d\times \R^d} g(x-y) d\mu(x) d\mu(y) + \int_{\R^d} V(x) d\mu(x).
\ee

This turns out to be  a classical electrostatic problem, that of finding  the  equilibrium distribution of charges in a  capacitor with an external potential, also called the ``capacitance problem,"  historically studied by Gauss, and rigorously solved by Frostman \cite{frostman}. It thus  is a fundamental question  in {\it potential theory}, which itself grew out of the study of the electrostatic or gravitational potential.  One may see  e.g. \cite{ah,doob,safftotik} and references therein.
The case of $d=2$ and $g(x)=-\log |x|$ is precisely treated in \cite[Chap. 1]{safftotik}. Higher dimensional and more general singular interaction potentials are treated in \cite{cgz}. The general case is not more difficult.

We start with 
\begin{lemme}\label{convexi}
The functional $I$ is strictly convex on $\mc{P}(\R^d)$.
\end{lemme}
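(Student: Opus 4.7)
The plan is to separate the linear contribution of $V$ from the quadratic one. Since $V$ appears linearly in $I$, the only non-trivial content is the strict convexity of the symmetric bilinear form $I_g(\mu,\nu) := \iint g(x-y)\, d\mu(x)\, d\nu(y)$. Given two distinct probability measures $\mu_0, \mu_1$ of finite energy (otherwise the convexity inequality is trivial), setting $\mu_t = (1-t)\mu_0 + t\mu_1$ and $\nu = \mu_0 - \mu_1$, a direct expansion of the double integral yields the identity
\begin{equation*}
(1-t)\, I(\mu_0) + t\, I(\mu_1) - I(\mu_t) \;=\; t(1-t) \iint_{\R^d \times \R^d} g(x-y)\, d\nu(x)\, d\nu(y),
\end{equation*}
the $V$-contributions cancelling. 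Strict convexity is therefore equivalent to the positivity statement: for every nonzero finite-energy signed measure $\nu$ with $\nu(\R^d) = 0$, one has $\iint g(x-y)\, d\nu(x)\, d\nu(y) > 0$.

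To prove this positivity I would exploit the fact, already recorded in the excerpt, that $g$ is (a multiple of) the fundamental solution of the Laplacian: $-\Delta g = c_d \delta_0$ in dimensions $d \ge 2$ and $-\Delta^{1/2} g = c_1 \delta_0$ in dimension $1$. Let $h_\nu := g * \nu$; since $\nu$ has zero total mass, $h_\nu$ decays at infinity, and it satisfies $-\Delta h_\nu = c_d \nu$ (resp.\ $-\Delta^{1/2} h_\nu = c_1 \nu$) in the distributional sense. Fubini and integration by parts then give
\begin{equation*}
\iint_{\R^d \times \R^d} g(x-y)\, d\nu(x)\, d\nu(y) \;=\; \int_{\R^d} h_\nu \, d\nu \;=\; \frac{1}{c_d} \int_{\R^d} |\nabla h_\nu|^2 \, dx,
\end{equation*}
with a homogeneous $\dot H^{1/2}$ seminorm in place of the Dirichlet term when $d=1$. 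The right-hand side is manifestly non-negative, and vanishes only if $\nabla h_\nu \equiv 0$, hence $h_\nu \equiv 0$ by decay, which forces $\nu = -c_d^{-1}\Delta h_\nu = 0$. An equivalent route is via Parseval, writing the double integral as $\int \hat g(\xi) |\hat\nu(\xi)|^2\, d\xi$ and using that $\hat g$ is a positive multiple of $|\xi|^{-2}$; in the logarithmic cases the Dirac mass at $\xi = 0$ appearing in $\hat g$ is killed by the constraint $\hat\nu(0) = \nu(\R^d) = 0$.

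The main obstacle is to make these manipulations rigorous for a general signed measure $\nu$ of finite energy, since a priori $h_\nu$ only lies in a homogeneous Sobolev space and the integration by parts requires that boundary contributions at infinity vanish, which is not immediate. I would address this by a mollification argument: approximate $\nu$ by smooth, compactly supported, zero-mass signed measures $\nu_\eta := \nu * \rho_\eta$ (with $\rho_\eta$ a standard mollifier), apply the identity above in that regular setting, and pass to the limit $\eta \to 0$, using lower semicontinuity of the Dirichlet (or $\dot H^{1/2}$) energy together with the finiteness of $I_g(\mu_0,\mu_0)$ and $I_g(\mu_1,\mu_1)$ assumed at the outset. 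Once this is done, the argument yields $I_g(\nu,\nu) > 0$ whenever $\nu \ne 0$, completing the proof of strict convexity of $I$.
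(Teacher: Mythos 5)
Your proposal is correct and follows essentially the same route as the paper: isolate the linear $V$-term and reduce strict convexity to the positive-definiteness of the quadratic form $\nu \mapsto \iint g(x-y)\, d\nu(x)\, d\nu(y)$ on nonzero signed measures of zero total mass. The only difference is that the paper simply cites the literature for this positivity (Lieb--Loss for $d\ge 3$, Saff--Totik via \cite{RSY2} for the logarithmic case), whereas you sketch the standard proof of it through the identity $\iint g\, d\nu\, d\nu = c_d^{-1}\int |\nabla h_\nu|^2$ --- the very identity the paper itself invokes later in the proof of Proposition \ref{proequivpb}.
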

\begin{proof}
Since $\mu \mapsto \int V\, d\mu$ is linear, it suffices to notice that the quadratic function $f\mapsto \iint g(x-y)\, df(x)\, df(y)$, defined over all (signed) Radon measures,  is positive. This is true in dimension $\ge 3$ (cf. \cite[Theorem 9.8]{liebloss}), or when $g=-\log $ in dimension $2$ (for a proof cf. \cite[Lemma 3.2]{RSY2}, which itself 
relies on \cite[Chap. 1, Lemma 1.8]{safftotik}.) \end{proof}
Note that 
less restrictive  assumptions, such as  $g>0$ or  $\hat{g}>0$, where $\hat{g}$ stands for the Fourier transform, would also suffice.

As a consequence, there is a unique (if any) minimizer to \eqref{minI}. In potential theory it is called the {\it equilibrium measure} or  the Frostman equilibrium measure, or sometimes the extremal measure.  

The question of the existence of a minimizer is a bit more delicate.
In order to show  it, we start by making the following assumptions on the potential $V$~: 
\begin{description}
\item[(A1)] $V$ is l.s.c. and bounded below.
%\item[(A2)] $\{x \in \R^d\ |\ V(x) < + \infty\}$ has positive capacity (see the definition below).
\item[(A2)] (growth assumption)
$$
   \underset{|x|\to + \infty}{\lim}\( \f{V(x)}{2} + g(x)\) = + \infty
$$
\end{description}
The first condition is there to ensure the lower semi-continuity of $I$ and that $\inf I>-\infty$, the second is made to ensure that $I$ is coercive.
Of course, in the Coulomb case with $d\ge 3$, condition $\textbf{(A2)}$ is equivalent to the condition that $V$ tends to $+\infty$ at infinity.

\begin{lemme}\label{coerI}
Assume $\mathrm{\textbf{(A1)}}$ and $\textbf{(A2)}$ are satisfied, and let $\{\mu_n \}_n$ be a sequence in $\mc{P}(\R^d)$ such that $\{I(\mu_n)\}_n $ is bounded. Then, up to extraction of a subsequence $\mu_n $ converges to some $\mu\in \mc{P}(\R^d)$ in the weak sense of probabilities, and $$\liminf_{n\to \infty} I(\mu_n) \ge I(\mu).$$
\end{lemme}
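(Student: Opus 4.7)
The plan is to rewrite, using that each $\mu\in\mc{P}(\R^d)$ is a probability measure,
\[
I(\mu)=\iint_{\R^d\times\R^d}\tilde g(x,y)\,d\mu(x)\,d\mu(y),\qquad \tilde g(x,y):=g(x-y)+\tfrac12 V(x)+\tfrac12 V(y),
\]
and to read off both the tightness of $(\mu_n)$ and the $\liminf$ inequality from suitable properties of the single integrand $\tilde g$. Note that $\tilde g$ is lower semicontinuous on $\R^d\times\R^d$ since $g$ is continuous on $\R^d\setminus\{0\}$ with $g(0)=+\infty$, and $V$ is l.s.c.\ by \textbf{(A1)}.

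The main obstacle, which is exactly what \textbf{(A2)} is designed to overcome, is that for $d=1,2$ the kernel $g(x-y)=-\log|x-y|$ is not bounded from below on $\R^d\times\R^d$. I would bound $\tilde g$ from below by $h(x)+h(y)-C$ for a l.s.c.\ function $h$ that is bounded below and tends to $+\infty$ at infinity. For $d\ge 3$ one has $g\ge 0$ and may take $h(x)=V(x)/2$, and \textbf{(A2)} combined with $g(x)\to 0$ at infinity forces $V\to+\infty$. For $d=1,2$ the elementary inequality $|x-y|^2\le 2(1+|x|^2)(1+|y|^2)$ yields
\[
-\log|x-y|\ge -\tfrac12\log(1+|x|^2)-\tfrac12\log(1+|y|^2)-\tfrac12\log 2,
\]
so the choice $h(x):=V(x)/2-\tfrac12\log(1+|x|^2)$ works: \textbf{(A2)} gives $h(x)\to+\infty$ as $|x|\to\infty$, and l.s.c.\ plus this coercivity gives $h\ge -M$ for some $M\ge 0$. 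In both cases $\tilde g\ge -C$ on $\R^d\times\R^d$.

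Now, since $\mu_n$ is a probability, $\iint\bigl(h(x)+h(y)\bigr)\,d\mu_n\,d\mu_n=2\int h\,d\mu_n$, and the lower bound on $\tilde g$ yields $\int h\,d\mu_n\le \tfrac12(I(\mu_n)+C)$, which is bounded uniformly in $n$ by hypothesis. Markov's inequality applied to the nonnegative function $h+M$ then gives
\[
\mu_n\bigl(\{|x|\ge R\}\bigr)\le \frac{\int (h+M)\,d\mu_n}{\inf_{|x|\ge R}h+M}\longrightarrow 0\quad\text{as }R\to\infty,
\]
uniformly in $n$, since $\inf_{|x|\ge R}h\to+\infty$. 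Hence $(\mu_n)$ is tight and, by Prokhorov's theorem, a subsequence converges weakly to a probability measure $\mu\in\mc{P}(\R^d)$.

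For the $\liminf$ inequality I would use that $\mu_n\rightharpoonup\mu$ implies weak convergence of the product measures $\mu_n\otimes\mu_n\rightharpoonup\mu\otimes\mu$ on $\R^d\times\R^d$, and that $\tilde g$ is l.s.c.\ and bounded from below on $\R^d\times\R^d$. The Portmanteau-type statement for l.s.c.\ bounded-below integrands, obtained by approximating $\tilde g$ from below by an increasing sequence of bounded continuous functions and combining the individual limits with monotone convergence, then yields
\[
I(\mu)=\iint\tilde g\,d(\mu\otimes\mu)\le\liminf_{n\to\infty}\iint\tilde g\,d(\mu_n\otimes\mu_n)=\liminf_{n\to\infty}I(\mu_n),
\]
which concludes the argument.
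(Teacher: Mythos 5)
Your proof is correct and follows essentially the same route as the paper: both symmetrize the energy into the integrand $g(x-y)+\tfrac{V}{2}(x)+\tfrac{V}{2}(y)$, use \textbf{(A2)} together with the elementary logarithmic inequality to show this integrand is bounded below and coercive (hence tightness, then Prokhorov), and obtain the $\liminf$ by truncating the kernel, passing to the limit in $\mu_n\otimes\mu_n$, and invoking monotone convergence. The only cosmetic difference is that you extract tightness from the additive lower bound $h(x)+h(y)-C$ via Markov's inequality, whereas the paper bounds $\mu_n(K^c)$ directly from the largeness of the integrand outside a compact product set; both rest on the same coercivity.
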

\begin{proof}
Assume that $I(\mu_n) \le C_1$ for each $n$. 
Given any constant $C_2>0$ there exists a  compact set $K\subset \R^d$ such that 
\begin{equation}\label{vgrand}
\min_{ (K\times K)^c} \left[ g(x-y) + \frac{V}{2}(x)+\frac{V}{2}(y)\right]>C_2.\end{equation}
Indeed, it suffices to  show that $g(x-y) + \frac{V}{2}(x)+\frac{V}{2}(y)\to +\infty$ as $x\to \infty$ or $y \to \infty$.
To check this,
one may separate the cases $d = 1,2$ and $d \geq 3$. In the latter case, the Coulomb kernel $g$ is positive and $V$ is bounded below, so that $g(x-y) + \f{V}{2}(x) + \f{V}{2}(y)$, which is greater than $\f{V}{2}(x) + \f{V}{2}(y)$,  can be made  arbitrarily large if $x$ or $y$ gets large by assumption $\textbf{(A2)}$.
When $d =1,2$, since $g(x-y) = -\log |x -y| \geq - \log 2 - \log \max(|x|, |y|)$, we also have from assumptions $\textbf{(A1)}$ and  $\textbf{(A2)}$ that $\f{1}{2}(V(x) + V(y)) + g(x-y)$ is arbitrarily large if $|x|$ or $|y|$ is large enough.

In addition, by \textbf{(A1)}, \textbf{(A2)}, we have  in all cases that $g(x-y)+\hal V(x)+ \hal V(y)$ is bounded below on $\R^d \times \R^d$ by a constant, say $-C_3$, with $C_3>0$.
 Rewriting then  $I$ as
 \begin{equation}
I(\mu) = \iint_{\R^d} \left[g(x-y) + \f{V}{2}(x) + \f{V}{2}(y)\right] d\mu(x) d\mu(y),
\end{equation}
the relation \eqref{vgrand} and our assumption on $\mu_n$ imply that 
$$C_1\ge I(\mu_n) \ge- C_3+  C_2(\mu_n\otimes \mu_n) ( (K\times K)^c)\ge -C_3 + C_2 \mu_n(K^c).$$
Since $C_2$ can be made arbitrarily large,  $\mu_n(K^c)$ can be made arbitrarily small, which means precisely that $\{\mu_n\}_n$ is a tight sequence of probability measures.
By Prokhorov's theorem, it thus has a convergent subsequence (still denoted $\{\mu_n\}_n$), which  converges to some probability $\mu$.
For any $n$ and any $M > 0$, we may then write
\begin{equation}
\iint g(x-y) d\mu_n(x) d\mu_n(y) \geq \iint (g(x-y) \wedge M) d\mu_n(x) d\mu_n(y) 
\end{equation}
where $\wedge$ denotes the minimum of two numbers. 
For each given $M$,
 the integrand in the right-hand side is  continuous, and thus the weak convergence of $\mu_n$ to $\mu$, which implies the weak convergence of $\mu_n \otimes \mu_n$ to  $  \mu \otimes \mu$, yields    
$$
\liminf_{n \to + \infty} \iint g(x-y) d\mu_n(x) d\mu_n(y) \geq \iint (g(x-y) \wedge M) d\mu(x) d\mu(y).
$$A monotone convergence theorem argument allows one to let $M \rightarrow + \infty$, and using assumption \textbf{(A1)} for the weak l.s.c of the potential part of the functional, we conclude that 
 \begin{equation}
 \underset{n\to  + \infty} {\liminf}  \,  I(\mu_n) \geq I(\mu). 
 \end{equation}

\end{proof}

We have seen above that $\inf I >-\infty$ (indeed the integrand in $I$ is bounded below thanks to assumption $\textbf{(A1)}$).
The next question is to see whether  $\inf I <+\infty$, i.e. that there exist probabilities with finite $I$'s. This is directly related to the notion of (electrostatic, Bessel, or logarithmic) capacity, whose definition we now give.
 One may find it \cite{safftotik,eg,ah} or \cite[Sec. 11.15]{liebloss}, the formulations differ a bit but are essentially equivalent.  It is usually not formulated this way in dimension $1$ but it can be extended to that case with our choice of $g=- \log $ without trouble.

\begin{defini}[Capacity of a set] \label{definitioncapacite1} We define the capacity of a compact set $K\subset \R^d$ by 
\begin{equation} \label{definitioncapacite2}
\mathrm{cap}(K) : = \Phi\(  \inf_{\mu \in \mc{P}(K)} \iint_{\R^d} g(x-y) d\mu(x) d\mu(y)\),
\end{equation}
with $\Phi(t)= e^{-t}$ if $d=1,2$ and $\Phi(t)= t^{-1}$ if $d \ge 3$, and 
where $\mc{P}(K)$ denotes the set of probability measures supported in $K$.  Here  the $\inf$ can be $+\infty$ if there exists no 
 probability measure $\mu\in \mc{P}(K)$ such that
$
\iint_{\R^d} g(x-y) d\mu(x) d\mu(y) < + \infty$. 
For a general set $E$, we define $\mathrm{cap}(E)$ as the supremum of $\mathrm{cap}(K)$ over the compact sets $K$ included in $E$. It is easy to check that capacity is increasing with respect to the inclusion of sets. 
\end{defini}

A basic fact is that a set of zero capacity also has zero Lebesgue measure (see the references above).
\begin{lemme} If $\mathrm{cap}(E) = 0$, then $|E| = 0$. 
\end{lemme}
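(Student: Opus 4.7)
The plan is to argue by contrapositive: I will show that if $|E|>0$ then $\mathrm{cap}(E)>0$. Since $\mathrm{cap}(E)$ is defined as the supremum of $\mathrm{cap}(K)$ over compact $K\subset E$, and since the Lebesgue measure is inner regular, the assumption $|E|>0$ gives a compact subset $K\subset E$ with $|K|>0$. It then suffices to show $\mathrm{cap}(K)>0$, which, unwinding the definition of $\Phi$, is equivalent to exhibiting a single probability measure $\mu\in\mc{P}(K)$ with
\[
\iint_{\R^d\times\R^d} g(x-y)\,d\mu(x)\,d\mu(y)<+\infty.
\]

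The natural candidate is the normalized Lebesgue measure on $K$, namely $d\mu(x):=\tfrac{1}{|K|}\mathbf{1}_K(x)\,dx$. With this choice the energy becomes
\[
\frac{1}{|K|^2}\iint_{K\times K} g(x-y)\,dx\,dy,
\]
so the task reduces to checking that $g(x-y)$ is integrable on $K\times K$. Since $K$ is bounded, $|x-y|$ stays in a bounded range, so the issue is purely local, near the diagonal $x=y$. For $d\ge 3$, $g(r)=r^{2-d}$ and, after translation, $\int_{B(0,R)}|z|^{2-d}\,dz$ is finite because the exponent $2-d$ is strictly greater than $-d$; integrating in $y$ first and then in $x$ via Fubini gives the result. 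For $d=1,2$, $g(r)=-\log r$ is bounded below on bounded sets and satisfies $\int_{B(0,R)}|\log|z||\,dz<+\infty$, so the same Fubini argument applies.

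Combining these two facts, $\mu$ has finite logarithmic/Coulomb energy, so the infimum in \eqref{definitioncapacite2} is finite and hence $\mathrm{cap}(K)=\Phi(\inf\cdots)>0$. By monotonicity of capacity with respect to inclusion, $\mathrm{cap}(E)\ge\mathrm{cap}(K)>0$, contradicting the hypothesis. Therefore $|E|=0$.

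The only mildly delicate point is the local integrability of $g$ near the diagonal, but in both the logarithmic and Riesz cases this is a standard fact since the singularity is integrable in dimension $d$; no deeper potential-theoretic input is needed.
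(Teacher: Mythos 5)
The paper states this lemma without proof, deferring to the cited references, so there is no in-paper argument to compare against; your proof is the standard one and it is correct. The key point --- that a compact set $K$ of positive Lebesgue measure carries a probability measure of finite $g$-energy, namely normalized Lebesgue measure on $K$, because $g$ is locally integrable in every dimension ($\int_{B(0,R)}|z|^{2-d}\,dz<\infty$ for $d\ge 3$ since $2-d>-d$, and $\int_{B(0,R)}|\log|z||\,dz<\infty$ for $d=1,2$) --- is exactly right, and combined with inner regularity of Lebesgue measure and the definition of $\mathrm{cap}(E)$ as a supremum over compact subsets it gives the contrapositive; one should also note, as you implicitly do, that finiteness of the infimum in \eqref{definitioncapacite2} translates into strict positivity of $\Phi$ of it in both the logarithmic and Riesz cases. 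The only caveat, not worth more than a parenthesis, is that passing through inner regularity presupposes $E$ Lebesgue measurable, which is the intended setting of the lemma (the sets to which it is applied in the paper are Borel).
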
In fact  $\mbox{cap}(E)=0$ is stronger than $|E|=0$, it implies for example that the perimeter of $E$ is also $0$. 
A property is said to hold “quasi-everywhere” (q.e.) if it holds everywhere except on a set of capacity zero. By the preceding lemma a property that holds q.e. also holds Lebesgue-almost everywhere (a.e.), whereas the converse is, in general, not true.

For the sake of generality, it is interesting to consider potential $V$'s which can take the value $+\infty$ (this is the same as imposing the constraint that the probability measures only charge  a specific set, the set where $V$ is finite).
We then need to place a third assumption 
\begin{description}
\item[(A3)]  $\{ x \in \R^d | V(x) <+\infty\}$ has positive capacity.
\end{description}
\begin{lemme}\label{infiborne}
Under assumptions $ \textbf{(A1)}$---$\textbf{(A3)}$, we have $\inf I <+\infty$.
\end{lemme}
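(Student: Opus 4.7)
The plan is to construct explicitly a probability measure $\mu\in\mc{P}(\R^d)$ with both a finite self-interaction and a finite potential energy. The starting point is assumption \textbf{(A3)}: since $\{V<+\infty\}$ has positive capacity, and the capacity of a general set is defined in Definition \ref{definitioncapacite1} as the supremum of capacities of its compact subsets, one may choose a compact $K\subset\{V<+\infty\}$ with $\mathrm{cap}(K)>0$. Unwinding the definition, this positivity is equivalent to
\begin{equation*}
\inf_{\nu\in \mc{P}(K)} \iint g(x-y)\,d\nu(x)\,d\nu(y)<+\infty,
\end{equation*}
so one can pick some $\nu\in\mc{P}(K)$ for which the self-interaction integral is finite.

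The remaining obstacle is that $\int V\,d\nu$ could still be $+\infty$: we only know $V<+\infty$ pointwise on $K$, and since $V$ is merely l.s.c.\ (not u.s.c.), it is not necessarily bounded above on the compact set $K$. The fix I would use is a truncation: set $K_M:=K\cap\{V\le M\}$, which is compact because $\{V\le M\}$ is closed by lower semicontinuity of $V$. Since $V<+\infty$ everywhere on $K$, the family $(K_M)_M$ is increasing with $\bigcup_M K_M=K$; together with $\nu(K)=1$ this yields $\nu(K_M)\to 1$ as $M\to\infty$ by monotone convergence. In particular $\nu(K_M)>0$ for $M$ large, and one may define the renormalized probability
\begin{equation*}
\mu:=\frac{1}{\nu(K_M)}\,\nu|_{K_M}\in \mc{P}(K_M).
\end{equation*}

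By construction $\int V\,d\mu\le M<+\infty$. For the double integral, I would use that since $K$ is compact and $g$ is continuous away from the origin, $g(x-y)$ is bounded below on $K\times K$ by some constant $-C$; hence $g+C\ge 0$ on $K\times K$ and is $\nu\otimes\nu$-integrable by the finiteness of $\iint g\,d\nu\,d\nu$. Restricting to the subset $K_M\times K_M$ preserves integrability, and dividing by $\nu(K_M)^2$ then shows $\iint g(x-y)\,d\mu(x)\,d\mu(y)<+\infty$. Combining the two bounds yields $I(\mu)<+\infty$, so $\inf I<+\infty$. The only non-routine point is the truncation, which converts ``$V$ finite pointwise on $K$'' into ``$V$ bounded on a subset of positive $\nu$-measure''; beyond this bookkeeping I do not foresee a serious obstacle.
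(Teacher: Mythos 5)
Your proof is correct. It is close in spirit to the paper's, but the mechanism differs in a way worth noting. The paper works with the full sublevel sets $\Sigma_\eps=\{V\le 1/\eps\}$, which are compact by \textbf{(A1)}--\textbf{(A2)}, and argues that $\mathrm{cap}(\Sigma_\eps)>0$ for $\eps$ small because these sets increase to $\Sigma_0=\{V<+\infty\}$; only then does it extract a finite-energy probability measure on $\Sigma_\eps$, for which $\int V\,d\mu\le 1/\eps$ is automatic. You reverse the order: you first extract a finite-energy measure $\nu$ on a compact $K\subset\{V<+\infty\}$ of positive capacity, and only afterwards restrict and renormalize to $K_M=K\cap\{V\le M\}$. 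This sidesteps the one slightly delicate point in the paper's argument (transferring positivity of capacity from $\Sigma_0$ to the compact sublevel sets, which implicitly uses a continuity-of-capacity-under-increasing-unions type of fact), replacing it by a purely measure-theoretic restriction argument — the same device the paper itself uses later to show that $\mu_0$ does not charge sets of zero capacity. The price is that you must check the restriction preserves finiteness of the interaction energy, which you do correctly via the lower bound $g\ge -C$ on $K\times K$ (the singularity of $g$ at the origin is at $+\infty$, so it does not obstruct the lower bound). As a small bonus, your argument does not invoke \textbf{(A2)} at all: compactness of $K_M$ comes from compactness of $K$ and closedness of $\{V\le M\}$, so lower semicontinuity of $V$ and \textbf{(A3)} suffice.
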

\begin{proof}
Let us define for any $\varepsilon > 0$ the set $\Sigma_{\varepsilon} = \{x\ |\ V(x) \leq \f{1}{\varepsilon}\}$. Since $V$ is l.s.c the sets $\Sigma_{\varepsilon}$ are closed, and it is easy to see that assumption \textbf{(A2)} implies that they are also bounded, since $V(x)$ goes to $+ \infty$ when $|x| \rightarrow + \infty$.

The capacity of $\Sigma_0 = \{x\in \R^d | V(x) < +\infty \}$ is positive by assumption. It is easily seen that the sets $\{\Sigma_{\eps}\}_{\eps > 0}$ form a decreasing family of compact sets with $\bigcup_{\eps > 0} \Sigma_{\eps} = \Sigma_0$, and by definition (see Definition \ref{definitioncapacite1} or the references given above) the capacity of $\Sigma_0$ is the supremum of capacities of compact sets included in $\Sigma_0$. Hence we have that $\mbox{cap}(\Sigma_{\varepsilon})$ is positive for $\varepsilon$ small enough. Then by definition there exists  a probability measure $\mu_{\eps}$ supported   in $\Sigma_{\eps}$ such that 
\begin{equation}
\iint g(x-y) d\mu_{\eps}(x) d\mu_{\eps}(y) < + \infty.
\end{equation}
Of course, we also have $\int V d\mu_{\eps} < + \infty$ by definition of $\Sigma_{\eps}$. Hence $I(\mu_{\eps}) < + \infty$, in particular $\inf I < + \infty$.

\end{proof}

We may now give the main existence result, together with the characterization of the minimizer.

\begin{theo}[Frostman \cite{frostman}, existence  and characterization of the equilibrium measure] \label{theoFrostman} Under the assumptions \textbf{(A1)}-\textbf{(A2)}-\textbf{(A3)}, the minimum of $I$ over $\mc{P}(\R^d)$ exists, is finite and is achieved by a unique $\mu_0$, which  has a compact support of  positive capacity. In addition $\mu_0$ is uniquely characterized by the fact that 
\begin{equation}
\label{EulerLagrange}
\left\lbrace
\begin{array}{cc} h^{\mu_0} +\D \f{V}{2} \geq c & \mbox{q.e. in } \R^d  \vspace{3mm} \\ 
 h^{\mu_0} + \D \f{V}{2}= c & \mbox{q.e. in the support of }\mu_0 \end{array} \right.
\end{equation}
where \be \label{defhmu0}
h^{\mu_0}(x) := \int_{\R^d} g(x - y) d\mu_0(y)
\ee is the electrostatic potential generated by $\mu_0$;
 and then the constant $c$ must be 
 \begin{equation}
\label{defc1} c = I(\mu_0) - \hal \int_{\R^d} V(x) d\mu_0(x).
 \end{equation}
 \end{theo}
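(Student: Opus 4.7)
The plan is to proceed by the direct method of the calculus of variations, then extract the Euler--Lagrange characterization by computing a first variation, and finally upgrade integrated inequalities to pointwise quasi-everywhere statements via potential-theoretic truncation.

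\textbf{Step 1 (existence, uniqueness, finiteness).} First I would note that $\inf I$ is finite: by Lemma \ref{infiborne} it is bounded above, and the lower bound on $g(x-y)+\tfrac12 V(x)+\tfrac12 V(y)$ used inside the proof of Lemma \ref{coerI} shows it is bounded below. Taking any minimizing sequence $\{\mu_n\}\subset \mc{P}(\R^d)$, Lemma \ref{coerI} supplies a weak subsequential limit $\mu_0$ satisfying $I(\mu_0)\le \liminf I(\mu_n)=\inf I$, hence $\mu_0$ is a minimizer. Uniqueness is immediate from the strict convexity of $I$ (Lemma \ref{convexi}).

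\textbf{Step 2 (first variation).} For any $\nu\in \mc{P}(\R^d)$ with $I(\nu)<+\infty$, I would consider the convex combination $\mu_t=(1-t)\mu_0+t\nu$ and expand $I(\mu_t)$ as a quadratic polynomial in $t\in[0,1]$. The condition that $\tfrac{d}{dt}\big|_{t=0^+}I(\mu_t)\ge 0$ yields the integrated variational inequality
$$\int \Bigl(h^{\mu_0}+\tfrac{V}{2}\Bigr)\,d\nu \;\ge\; \int \Bigl(h^{\mu_0}+\tfrac{V}{2}\Bigr)\,d\mu_0 \;=\; I(\mu_0)-\tfrac12\int V\,d\mu_0 \;=:\; c,$$
the last identity using $\int h^{\mu_0}\,d\mu_0=\iint g\,d\mu_0 d\mu_0=I(\mu_0)-\int V\,d\mu_0$. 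This already identifies $c$ as in \eqref{defc1}.

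\textbf{Step 3 (pointwise q.e., compactness, converse).} To upgrade the integrated inequality to $h^{\mu_0}+V/2\ge c$ quasi-everywhere, I would argue by contradiction: if $E=\{h^{\mu_0}+V/2<c-\delta\}$ had positive capacity for some $\delta>0$, then intersecting with a sublevel set $\{V\le M\}$ to keep $V$ integrable and using the definition of capacity gives a probability measure $\nu$ of finite $g$-energy supported on $E\cap\{V\le M\}$, and testing against the integrated inequality produces a contradiction. Equality on $\supp\mu_0$ then comes from integrating the q.e.\ inequality against $\mu_0$, which yields exactly $c$, forcing equality $\mu_0$-a.e.; this upgrades to q.e.\ on the support because $\mu_0$ has finite energy and therefore does not charge polar sets. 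Compactness of $\supp\mu_0$ then follows from the equality: on the support, $V(x)/2 = c-h^{\mu_0}(x)$, and since $h^{\mu_0}$ behaves like $g$ at infinity, assumption \textbf{(A2)} forces $\supp\mu_0$ to be bounded. Positive capacity of $\supp\mu_0$ is forced by $\iint g\,d\mu_0 d\mu_0<\infty$. Finally, for the converse characterization I would observe that if $\mu$ satisfies \eqref{EulerLagrange}, then for any competitor $\nu$ of finite energy, writing $I(\nu)-I(\mu)=\iint g\,d(\nu-\mu)d(\nu-\mu)+2\int(h^\mu+\tfrac{V}{2})d(\nu-\mu)$ and using the positivity of the quadratic form from Lemma \ref{convexi} together with \eqref{EulerLagrange} gives $I(\nu)\ge I(\mu)$; strict convexity then forces $\mu=\mu_0$.

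\textbf{Main obstacle.} The delicate part is Step~3: one cannot simply plug $\nu=\delta_x$ into the integrated inequality since point masses have infinite $g$-energy, and the exceptional set has to be measured in capacity (strictly stronger than Lebesgue measure). Arranging simultaneously $\int V\,d\nu<\infty$ and $\int h^{\mu_0}\,d\nu<\infty$ for the test probability, and promoting $\mu_0$-a.e.\ equality on $\supp\mu_0$ to genuine q.e.\ equality, are the two places where real potential-theoretic inputs (the definition of capacity and the fact that finite-energy measures do not charge polar sets) are essential; Steps~1 and~2 are then routine applications of the direct method and convexity.
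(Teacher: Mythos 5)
Your strategy coincides with the paper's: the direct method via Lemmas \ref{coerI} and \ref{infiborne} for existence, strict convexity (Lemma \ref{convexi}) for uniqueness, a first variation along $(1-t)\mu_0+t\nu$ to get the integrated inequality and the value of $c$, a capacity argument to upgrade it to a quasi-everywhere statement, and convexity again for the converse (your version of the converse, expanding $I(\nu)-I(\mu)$ directly, is a minor and correct variant of the paper's interpolation argument).

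There is, however, a genuine gap in the way you handle the compact support of $\mu_0$, and it is not merely cosmetic in dimensions $d=1,2$. You derive compactness of $\supp\mu_0$ at the very end from the equality on the support together with the claim that ``$h^{\mu_0}$ behaves like $g$ at infinity''; but that asymptotic behavior (and even the finiteness of $h^{\mu_0}$, i.e. $\int \log_+|y|\,d\mu_0(y)<\infty$ when $g=-\log$) is exactly what compact support of $\mu_0$ would provide, so the argument assumes its conclusion. The same missing information is already needed earlier: to split $I(\mu_t)$ into individually finite pieces in the first variation, to make sense of $c$ as $\iint g\,d\mu_0\,d\mu_0+\frac12\int V\,d\mu_0$ with both terms finite, and above all in the capacity contradiction, where one must know that $h^{\mu_0}$ is bounded below on the compact test set $K$ in order to conclude both $\int V\,d\nu<\infty$ (the paper's route, via $V<2c-2h^{\mu_0}$ on $K$) and that $\int(h^{\mu_0}+V/2)\,d\nu$ is a genuine number $<c$. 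For $d\ge 3$ this is free since $g\ge 0$, but for $d=1,2$ the logarithm is unbounded below and the standard justification is precisely that $\mu_0$ is compactly supported. The paper resolves this by proving compact support \emph{first}, directly from coercivity: using \eqref{vgrand} one picks a compact $K$ outside of which $g(x-y)+V(x)/2+V(y)/2\ge I(\mu_0)+1$, and if $\mu_0(K)<1$ the restricted and renormalized measure $\mu_0|_K/\mu_0(K)$ is shown to have strictly smaller energy, a contradiction. You should insert this truncation argument between your Steps 1 and 2; with that reordering the remainder of your proof, including the capacity argument and the characterization \eqref{EulerLagrange}, goes through as you describe.
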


\begin{rem} \label{rem8} Note that by (\ref{coulombkernel2}), in dimension $d\ge 2$,  the function $h^{\mu_0}$ solves :
\begin{displaymath}
- \Delta h^{\mu_0} = c_d \mu_0.
\end{displaymath} 
\end{rem}

\begin{example}[Capacity of a compact set] Let $K$ be a compact set of positive capacity,  and let $V = 0$ in $K$ and $V = + \infty$ in $K^{c}$. In that case the minimization of $I$ is the same as the computation of the capacity of $K$ as in \eqref{definitioncapacite2}. The support of the equilibrium measure $\mu_0$ is contained in $K$, and the associated Euler-Lagrange equation (\ref{EulerLagrange}) states that  the electrostatic potential (if in dimension $d\ge 2$) $h^{\mu_0} $   is constant  q.e. on the support of $\mu_0$ (a well-known result in physics). If $K$ is sufficiently regular, one can apply the Laplacian on both sides of this equality, and in view of Remark \ref{rem8},  find that $\mu_0 = 0$ q.e. in $K$, which indicates  that $\mu_0$ is supported on $\partial K$. 
\end{example}

\begin{example}[$C^{1,1}$ potentials and RMT examples]  In general,  the relations \eqref{EulerLagrange} say that the total potential $h^{\mu_0}+ \frac{V}{2}$ is constant on the support of the charges. Moreover, in dimension $d \ge 2$,  applying the Laplacian on both sides of \eqref{EulerLagrange}  and using again Remark~\ref{rem8} gives that, on the interior of the support of the equilibrium measure, if $V\in C^{1,1}$, \be \label{densmu0}
c_d \mu_0 = \f{\Delta V}{2} 
\ee
(where $c_d$ is the constant defined in (\ref{defcd})), i.e. the density of the measure on the interior of its support is given by $\frac{\Delta V}{2c_d}$. This will be proven in Proposition  \ref{proequivpb}. For example if $V$ is quadratic, then the measure $\mu_0 = \f{\Delta V}{2c_d} $ is constant  on the interior of its support.  This corresponds to the important examples of the Hamiltonians that arise in random matrix theory, more precisely~:

\begin{itemize}
\item in dimension $d=2$, for $V(x) = |x|^2$, one may check that $\mu_0 = \f{1}{\pi} \mathbf{1}_{B_1}$  where $\mathbf{1}$ denotes a characteristic function and $B_1$ is the unit ball, i.e.  the equilibrium measure is the normalized Lebesgue measure on the unit disk (by uniqueness, $\mu_0$ should be radially symmetric, and the combination of \eqref{densmu0} with the constraint of being a probability measure imposes the support to be $B_1$). This is known as the {\it circle  law} for the Ginibre ensemble in the context of Random Matrix Theory (RMT). Its derivation (which we will see in Section \ref{LDPsection} below) is attributed to   Ginibre, Mehta, an unpublished paper of Silverstein and  Girko \cite{girko1}.
\item in dimension $d\ge 3$, the same  holds, i.e.  for $V(x)=|x|^2$  we have $\mu_0 = \frac{d}{c_d} \indic_{B_{(d-2)^{1/d}   }}$ by the same reasoning.
\item in dimension $d=1$, with $g =-\log |\cdot |$ and $V(x) = x^2$, the equilibrium measure is $\mu_0 (x)= \f{1}{2\pi} \sqrt{4-x^2} \mathbf{1}_{|x|\leq 2}$, which corresponds in the context of RMT (GUE and GOE ensembles) to {\it   Wigner's semi-circle law}, cf. \cite{wigner,mehta}.
\end{itemize}
\end{example}
We now turn to the proof of Theorem \ref{theoFrostman}, adapted from\cite[Chap. 1]{safftotik}.
\begin{proof}
The existence of a minimizer $\mu_0$ follows directly from Lemmas \ref{coerI} and \ref{infiborne}, its uniqueness from Lemma \ref{convexi}. It remains to show that $\mu_0$ has compact support  of finite capacity and that \eqref{EulerLagrange} holds.
\medskip

\noindent{\bf Step 1.} We prove that $\mu_0$ has compact support.
  Using \eqref{vgrand}, we may find a compact set  $K$ such that 
 $g(x-y)+\frac{V}{2}(x)+\frac{V}{2}(y) \ge  I(\mu_0) +1$ outside of $K\times K$.

Assume that $\mu_0$ has mass outside $ K  $,  i.e. assume $\mu_0(K) <1$, and define the new probability measure 
\begin{equation}
\tilde{\mu} := \f{(\mu_0)_{| K}}{\mu_0(K)}.
\end{equation}
We want to show that $\tilde{\mu}$ has less or equal energy $I(\tilde{\mu})$ than $\mu_0$, in order to get a contradiction. One may compute $I(\mu_0)$ in the following way :
\begin{multline}
I(\mu_0) = \iint_{K \times K} \left[g(x-y) + \f{V}{2}(x) + \f{V}{2}(y)\right] d\mu_0 (x) d\mu_0 (y) \\ +  \iint_{(K \times K)^{c}} \left[g(x-y) + \f{V}{2}(x) + \f{V}{2}(y)\right] d\mu_0(x) d\mu_0(y)  \\ \geq
\mu_0(K)^2 I(\tilde{\mu}) + (1- \mu_0(K)^2) \min_{(K \times K)^{c}}\left[g(x-y) + \f{V}{2}(x) + \f{V}{2}(y)\right].
\end{multline}
By choice of $K$, and since we assumed $\mu_0(K)<1$, this implies  that  
\begin{equation}
 I(\mu_0) \ge  \mu_0(K)^2 I(\tilde{\mu}) + (1- \mu_0(K)^2) (I(\mu_0)  + 1) 
\end{equation} and thus 
$$I(\tilde{\mu}) \le I(\mu_0) + \frac{\mu_0(K)^2 -1}{\mu_0(K)^2} <   I(\mu_0),$$
a contradiction with the minimality of $\mu_0$.
 We thus conclude  that $\mu_0$ has compact support.
 The fact that the support of $\mu_0$ has positive capacity is an immediate consequence of the fact that $I(\mu_0)<\infty$ and the definition of capacity.
\medskip

\noindent
{\bf Step 2.}
 We  turn to the proof of the Euler-Lagrange equations (\ref{EulerLagrange}). For  this, we use the ``method of  variations"  which consists in  continuously deforming $\mu_0$ into  other admissible probability measures. \\
 Let $\nu$ in $\mc{P}(\R^d)$ such that $I(\nu) < + \infty$, and consider the probability measure $(1-t) \mu_0 + t \nu$ for all $t$ in $[0,1]$. Since $\mu_0$ minimizes $I$, we have 
\be
I\((1-t)\mu_0 + t\nu\) \geq I(\mu_0), \mbox{ for all } t \in [0,1].
\ee
By letting $t \rightarrow 0^+$ and keeping only the first order terms in $t$, one obtains the ``functional derivative" of $I$ at $\mu_0$. More precisely, writing  
\be
\iint g(x-y) d((1-t)\mu_0 + t\nu)(x) d((1-t)\mu_0 + t\nu)(y) + \int V(x) d((1-t)\mu_0 + t\nu)(x) \geq I(\mu_0)
\ee
one easily gets that 
\begin{multline}
I(\mu_0) + t \left[ \iint g(x-y) d(\nu- \mu_0)(x) d\mu_0(y) + \iint g(x-y) d(\nu-\mu_0)(y) d\mu_0(x) \right. \\\left. + \int V(x) d(\nu - \mu_0)(x)\right] + O(t^2) \geq I(\mu_0).
\end{multline}
Here, we may cancel the identical order $0$ terms $I(\mu_0)$ on both sides,  and note that 
in view of \eqref{defhmu0} the expression between brackets can be rewritten as $2\int h^{\mu_0}(x)d(\nu - \mu_0)(x) + \int V(x) d(\nu - \mu_0)(x)$. Next, dividing the inequality by $t > 0$, and letting $t \rightarrow 0^+$, it appears that for all $\nu$ in $\mc{P}(\R^d)$ such that $I(\nu) < + \infty$, the following inequality holds :
\be
2 \int h^{\mu_0}(x) d(\nu - \mu_0)(x) + \int V(x) d(\nu- \mu_0)(x) \geq 0
\ee
or equivalently \be \label{EulerLagrange2}
\int \Big(h^{\mu_0}  + \f{V}{2}\Big)(x) d \nu(x) \geq \int \Big(h^{\mu_0} + \f{V}{2}\Big)(x) d\mu_0(x).
\ee
Defining  the constant $c$ by 
\begin{multline} \label{defc2} c = I(\mu_0) - \f{1}{2} \int V d\mu_0= \iint g(x-y) d\mu_0(x)d\mu_0(y) + \f{1}{2} \int V d\mu_0 \\
= \int \Big(h^{\mu_0} + \f{V}{2}\Big) d\mu_0 ,\end{multline}
(\ref{EulerLagrange2}) asserts that
\begin{equation}
\label{EL3}
\int \Big(h^{\mu_0} + \f{V}{2}\Big) d\nu \geq c 
\end{equation}
for all probability measures $\nu$ on $\R^d$ such that $I(\nu) < + \infty$. Note that at this point, if we relax the condition $I(\nu) < + \infty$, then choosing $\nu$ to be a Dirac mass when applying (\ref{EL3}) would yield
\begin{equation} 
\label{EL4}
h^{\mu_0} + \f{V}{2} \geq c
\end{equation} pointwise. However, Dirac masses have infinite energy $I$, and we can only prove that (\ref{EL4}) holds quasi-everywhere, which we do now. 

Assume not, then there exists a set $K$ of positive capacity such that (\ref{EL4}) is false on $K$, and by definition of the capacity of $K$ as as supremum of capacities over compact sets included in $K$, we may in fact suppose that $K$ is compact. By definition, this means that there is a probability measure $\nu$ supported in $K$ such that 
\be \label{ene1}
\iint g(x-y) d\nu(x) d\nu(y) < + \infty.
\ee
Let us observe that $-h^{\mu_0}$ is bounded above on any compact set (this is clear in dimension $d \geq 3$ because the Coulomb kernel is positive and so is $h^{\mu_0}$, and can be easily checked in $d=1,2$ because $\log$ is bounded above on any compact set and $\mu_0$ has compact support). By assumption, equation (\ref{EL4}) is false on $K$, that is $V < 2c - 2 h^{\mu_0}$ on $K$. Integrating this inequality against $\nu$ gives 
\be \label{contrad}
\int V d\nu = \int_K V d\nu < \int_K (2c - 2 h^{\mu_0}) \, d\nu< + \infty
\ee
which, combined with (\ref{ene1}) ensures that $I(\nu)$ is finite. But then  \eqref{contrad} contradicts  (\ref{EL3}). We thus have shown that 
\be \label{EL5}
h^{\mu_0} + \f{V}{2} \geq c\  \mbox{ q.e.}
\ee
which is the first of the relations (\ref{EulerLagrange}). 

For the second one, let us denote by $E$ the set where the previous inequality (\ref{EL5}) fails. We know that $E$ has zero capacity, but since $\mu_0$ satisfies $I(\mu_0) < + \infty$, it does not charge sets of zero capacity (otherwise one could restrict $\mu_0$ on such a set, normalize its mass to $1$ and get a contradiction with the definition of a zero capacity set). Hence we have \be
h^{\mu_0} + \f{V}{2} \geq c \quad \mu_0\mbox{-a.e. } 
\label{EL6}
\ee
Integrating this relation against $\mu_0$  yields 
\be
\int \Big(h^{\mu_0} + \f{V}{2}\Big) d\mu_0 \ge  c,
\ee
but in view of \eqref{defc2} this implies that  
 equality must  hold in (\ref{EL6}) $\mu_0$-almost everywhere. This establishes the second Euler-Lagrange equation.
 \medskip
 
 \noindent
 {\bf Step 3.} We show that the relations \eqref{EulerLagrange} uniquely characterize the minimizer of $I$. 
 Assume that $\mu$ is another probability solving \eqref{EulerLagrange} with some constant $c'$,   and set, for $t \in (0,1)$, $\mu_t := t \mu + (1-t)\mu_0$, hence $ h^{\mu_t} = t h^{\mu} + (1-t) h^{\mu_0}$. We have
 \begin{multline*}
 I(\mu_t) = \int \left( t h^{\mu}(x) + (1-t) h^{\mu_0}(x) + V(x) \right) d\mu_t(x) \\ = \f{t}{2} \int \left(2 h^{\mu}(x) + V(x) \right) d\mu_t(x) \\
+ \f{(1-t)}{2} \int \left(2 h^{\mu_0}(x) + V(x) \right) d\mu_t(x) + \f{1}{2} \int V(x) d\mu_t(x).
\end{multline*}
By assumption, $h^{\mu} + \f{V}{2} \geq c'$ and $h^{\mu_0} + \f{V}{2} \geq c$ almost everywhere. We hence get that 
\begin{multline}
I(\mu_t) \geq t c' + (1-t) c + \f{1}{2} \int V(x) \left( t d\mu(x) + (1-t) d\mu_0(x)\right) \\ = t \left(c' +\hal \int V d\mu\right) + (1-t) \left(c + \hal \int V d\mu_0 \right).
\end{multline}
On the other hand, integrating the second Euler-Lagrange equation in \eqref{EulerLagrange} for both $\mu$ and $\mu_0$,    with respect to $\mu$ and $\mu_0$ respectively, yields, after rearranging terms, 
\begin{displaymath}
I(\mu) = c' +\hal  \int V d\mu  \text{ and } I(\mu_0) = c +\hal \int V d\mu_0.
\end{displaymath}
Hence $I(\mu_t) \geq t I(\mu) + (1-t) I(\mu_0)$, which is impossible by strict convexity of $I$ unless $\mu = \mu_0$. This proves that the  two measures $\mu$ and $\mu_0$ must coincide. 
\end{proof}

\begin{rem} In all this section, we did not use much all the particulars of the Coulomb kernel.  The theorem still holds for a much more general class of $g$'s, say $g$  positive, monotone radial and satisfying $\iint g(x-y)\, dx\, dy<\infty$.  \end{rem}

\begin{defini}\label{def23} From now on, we denote by $\zeta$ the function
\begin{equation} \label{defzeta}
\zeta = h^{\mu_0} + \f{V}{2} - c.
\end{equation}
\end{defini}
\noindent We note that in view of \eqref{EulerLagrange}, $\zeta\ge 0$ a.e. and $\zeta=0$ $\mu_0$-a.e.

\section[The mean field limit]{Proof of the $\Gamma$-convergence and consequences for minimizers of the Hamiltonian : the mean-field limit}\label{seccv}

We now proof Proposition \ref{gammaconvergenceHn}. The  proof uses 
the same ingredients as the proof of the existence of a minimizer of $I$ in the previous section. A statement and a proof with $\Gamma$-convergence in dimension~$2$ for $V$ quadratic  appeared  in \cite[Proposition 11.1]{livre}. It is not difficult to adapt them to higher dimensions and more general potentials. Similar arguments are also found in the large deviations proofs of \cite{bg,bz,cgz}.

In what follows, when considering sequences of configurations $(x_1, \dots, x_n)$  we will make  the slight abuse of notation that consists in neglecting the dependency of the points $(x_1, \dots, x_n)$ on $n$, while one should formally write $(x_{1,n}, \dots, x_{n,n})$.

\begin{proof}[Proof of Proposition \ref{gammaconvergenceHn}]
 In the following, we denote the diagonal of $\R^d \times \R^d$ by $\triangle$ and its complement by $\triangle^{c}$.\\
{\bf Step 1}. We first need to prove that if $\f{1}{n} \sum_{i=1}^n \delta_{x_i} \rightarrow \mu\in \mc{P}(\R^d)$, then $$\liminf_{n\to + \infty} \f{1}{n^2} H_n(x_1, \dots, x_n) \geq I(\mu).$$ Letting $\mu_n$ denote the empirical measure  $\f{1}{n} \sum_{i=1}^n \delta_{x_i}$, we may write 
 \begin{equation} \label{Hnpourempiric}
\f{1}{n^2} H_n(\mu_n) = \iint_{\triangle^{c}} g(x-y) d\mu_n(x) d\mu_n(y) + \int V(x) d\mu_n(x).
\end{equation}
The last term in the right-hand side is harmless~: since $V$ is assumed to be continuous and bounded below, it is lower-semicontinous and bounded below and since the sequence $\{\mu_n\}_n$ converges weakly to $\mu$, we have :
\begin{equation}  \label{convergencesimpleV}
\underset{n\to + \infty}{\liminf} \int V d\mu_n \geq  \int V d\mu.
\end{equation}
In order to treat the first term in the right-hand side of (\ref{Hnpourempiric}), following \cite[Chap. 1]{safftotik}, let us truncate the singularity of $g$ by writing :
\be \label{troncaturediago1}
\iint_{\triangle^{c}} g(x-y) d\mu_n(x) d\mu_n(y) \geq \iint (g(x-y) \wedge M) d\mu_n(x) d\mu_n(y) - \f{M}{n}
\ee
where $M>0$ and  $\wedge$ still denotes the minimum of two numbers. Indeed one has $\mu_n \otimes \mu_n (\triangle) = \f{1}{n}$ as soon as the points of the  configuration $(x_1, \dots, x_n)$ are simple (i.e. $x_i \neq x_j$ for $i\neq j$). The function $(x,y) \mapsto g(x-y) \wedge M$ is continuous, and by taking the limit of (\ref{troncaturediago1}) as $n \rightarrow + \infty$ one gets,  by weak convergence of $\mu_n$ to $\mu$ (hence of $\mu_n \otimes \mu_n$ to $\mu \otimes \mu$)  that for every $M > 0$ :
\begin{equation}
\underset{n \to + \infty}{\liminf} \iint_{\triangle^{c}} g(x-y) d\mu_n(x) d\mu_n(y) \geq \iint (g(x-y) \wedge M) d\mu(x) d\mu(y).
\end{equation}

By the monotone convergence theorem, the (possibly infinite) limit of the right-hand side as $M \rightarrow + \infty$ exists and equals $\iint g(x-y) d\mu(x) d\mu(y)$. Combining with (\ref{convergencesimpleV}) and (\ref{troncaturediago1}), this concludes the proof of the $\Gamma$-$\liminf$ convergence. Let us note that for this part, we really only needed to know that $V$ is lower semi-continuous and bounded below.
\medskip

\noindent
{\bf Step 2.}
We now need to construct a recovery sequence for each measure $\mu$ in $\mc{P}(\R^d)$.  First, we show that we can reduce to measures which are in
$L^{\infty}(\R^d)$, supported in a cube $K$ and such that the density $\mu(x)$ is bounded below by $\alpha > 0$ in $K$.   
Let $\mu $ be an arbitrary measure in $\mc{P}(\R^d)$ such that $I(\mu)<+\infty$. 
Given $\alpha>0$, by tightness of $\mu$,  we may truncate it outside of a compact set which contains all its mass but $\alpha$. Making the compact set larger if necessary, and normalizing the truncated  measure to make it a probability, the argument of Step 1 of the proof of Theorem \ref{theoFrostman} shows that this decreases $I$. In other words we have a family $\mu_\alpha$ with $\mu_\alpha \rightharpoonup \mu$ and $\limsup_{\alpha \to 0 } I(\mu_\alpha) \le I(\mu)$. Thus, by a diagonal argument, it suffices to prove our statement for probability measures $\mu$ which have compact support.

Let us next consider such a probability $\mu$. 
Convoling $\mu$ with smooth mollifiers $\chi_\eta$ (positive of integral $1$), we may approximate $\mu $ by smooth $\mu_\eta$, these converge to $\mu$ in the weak sense of probabilities, as $\eta \to 0$. Let us denote $$\Phi(\mu)= \iint g(x-y)\, d\mu(x)\, d\mu(y).$$ As seen in the proof of 
 Lemma \ref{convexi}, the function $\Phi$ is strictly convex. Writing $\tau_y \mu$ for the translate of $\mu$ by $y$, we deduce, using Jensen's inequality, that
 \begin{equation*}\Phi(\chi_\eta *\mu)= \Phi\( \int \tau_y \mu \chi_\eta (y)  \, dy\) 
 \le \int \chi_\eta(y) \Phi(\tau_y \mu) \, dy.\end{equation*}
  Since $\Phi $ is translation-invariant, we have $\Phi(\tau_y \mu)=\Phi(\mu)$ and thus we have obtained $\Phi(\mu_\eta)\le \Phi(\mu)$.
  On the other hand, $\lim_{\eta\to 0} \int Vd(\mu_\eta) = \int V d\mu$ since $V$ is assumed to be continuous and $\mu_\eta \to \mu$ and they all are supported in the same compact set. We have thus established that $\limsup_{\eta\to 0} I(\mu_\eta) \le I(\mu)$. Thus, by a diagonal argument, it suffices to prove our statement for probability measures $\mu$ which have a smooth density and compact support. 
  
Let us next consider such a probability measure $\mu$. We may find a cube $K$ that contains its support and  consider the probability measure
 $\mu_\alpha=\frac{ \mu +\alpha \indic_K }{1+ \alpha |K|}.$ 
 It is supported in the cube $K$, has an $L^\infty$ density which is bounded below by $\alpha $ in $K$, as desired, and $\mu_\alpha \to \mu$ in the weak sense of probabilities, but also in $L^\infty$. It is easy to deduce from this fact that since $\iint g(x-y)\, dx\, dy<+\infty$ (cf. \eqref{coulombkernel1}) and $V$ is continuous, we have  
$I(\mu_\alpha) \to I(\mu)$.   Again, a diagonal argument allows us to reduce to proving the desired $\Gamma$-limsup statement for such measures.
 \medskip
 
   \noindent
 {\bf Step 3.} Henceforth we assume that $\mu$ is in $L^{\infty}(\R^d)$, supported in a cube $K$ and such that the density $\mu(x)$ is bounded below by $\alpha > 0$ in $K$.  
 Since we are going to construct a configuration of $n$ points in a cube $K$, the typical lengthscale of the distance between two points is $\f{1}{n^{1/d}}$. Let us then choose
a sequence $c_n$ such that\footnote{The notation $a_n \ll b_n$ means that $a_n = o(b_n)$} $\f{1}{n^{1/d}} \ll c_n \ll 1$ as $n \rightarrow + \infty$, and for each $n$, split $K$ into cubes $K_{k}$ (depending on $n$) whose sidelength is in $[c_n, 2c_n]$, cf. Fig. \ref{fig1}.

\begin{figure}[h!]\begin{center}
\includegraphics{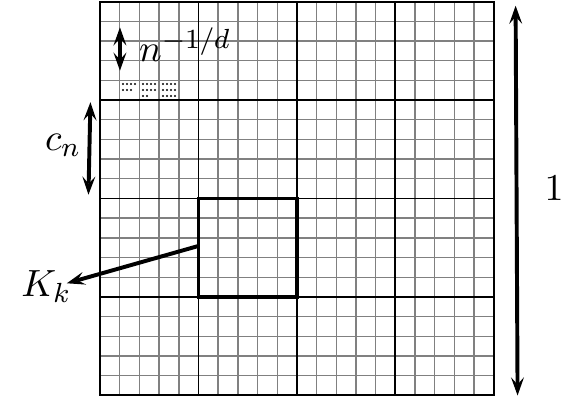}
\caption{Splitting the cube}\label{fig1}
\end{center}
\end{figure}

\begin{claim} We may place $n_k = \lfloor n \mu(K_{k})\rfloor \pm 0, 1$ points in $K_k$ (where $\lfloor \cdot \rfloor$ denotes the integer part),  with $\sum_k n_k = n$, and such that the resulting sets of points $\{x_i\}^n_{i=1}$ satisfy that the balls $B(x_i, \f{4\lambda}{n^{1/d}})$ are disjoint, for some $\lambda > 0$ independent of $n$. 
\end{claim}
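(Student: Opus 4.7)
My plan is to build the configuration cube by cube using a regular sub-lattice, with a buffer near each cube's boundary so that disjointness is automatic across neighboring cubes. The only non-trivial input is the two-sided density bound $\alpha \leq \mu \leq M := \|\mu\|_\infty$ on $K$, together with $c_n \gg n^{-1/d}$.

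First, I would fix the counts. Set $n_k^0 := \lfloor n\mu(K_k)\rfloor$. Since $\sum_k \mu(K_k)=1$, we have $0 \leq n - \sum_k n_k^0 = \sum_k (n\mu(K_k) - n_k^0) \leq N_n$, where $N_n$ is the number of cubes; since the left-hand side is an integer in $\{0,\dots,N_n\}$, we may raise $n_k^0$ by one on some set of cubes to obtain $n_k \in \{n_k^0, n_k^0 + 1\}$ with $\sum n_k = n$, and $n_k = \lfloor n\mu(K_k)\rfloor \pm 0,1$ as desired.

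Next, for each cube $K_k$ of sidelength $\ell_k \in [c_n, 2c_n]$, set $m_k := \lceil n_k^{1/d}\rceil$, so that $m_k^d \geq n_k$. Inside $K_k$ place a regular sub-grid of $m_k^d$ points with spacing $s_k := \ell_k/m_k$, shifted so that every grid point lies at distance $\geq s_k/2$ from $\partial K_k$ (concretely, take the grid $\{y + (j_1+\tfrac12,\dots,j_d+\tfrac12) s_k : 0\leq j_l \leq m_k-1\}$, where $y$ is the corner of $K_k$). Finally pick any $n_k$ of these $m_k^d$ grid points to be $\{x_i\}$.

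To verify the disjointness claim, I need a uniform lower bound $s_k \geq C n^{-1/d}$. From $n_k \leq n\mu(K_k) + 1 \leq M n \ell_k^d + 1$ and $\ell_k \geq c_n$, I get $n_k^{1/d} \leq M^{1/d} n^{1/d}\ell_k + C\ell_k \cdot (n c_n^d)^{-1/d}\cdot \ell_k^{-1}$, and since $nc_n^d\to\infty$, for $n$ large this gives $m_k \leq 2 M^{1/d} n^{1/d}\ell_k$, hence $s_k \geq \tfrac12 M^{-1/d} n^{-1/d}$. Two distinct points inside the same $K_k$ are at distance $\geq s_k$, while two points in distinct cubes $K_k, K_{k'}$ are at distance $\geq s_k/2 + s_{k'}/2$ by the margin construction; in both cases the distance is $\geq \tfrac14 M^{-1/d} n^{-1/d}$, so it suffices to choose $\lambda := M^{-1/d}/64$ to make the balls $B(x_i, 4\lambda/n^{1/d})$ pairwise disjoint.

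The only delicate point is the lower bound $m_k \ll n^{1/d}\ell_k$, which is where both hypotheses $\mu\leq M$ and $\ell_k \geq c_n \gg n^{-1/d}$ are used simultaneously; the lower bound $\mu\geq \alpha$ was in fact not needed for this claim (it will be used later to guarantee $n_k\to\infty$ when constructing the recovery sequence's energy estimate).
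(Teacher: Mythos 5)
Your construction is correct: the counting adjustment (raising $n_k^0=\lfloor n\mu(K_k)\rfloor$ by one on at most $N_n$ cubes), the per-cube grid with spacing $s_k=\ell_k/\lceil n_k^{1/d}\rceil\ge \tfrac12 M^{-1/d}n^{-1/d}$ (which indeed uses only $\mu\le M$ together with $nc_n^d\to\infty$), and the half-spacing margin from $\partial K_k$ that handles pairs of points lying in different cubes all check out. This is essentially the tiling argument the paper delegates to \cite[Lemma 7.4 and below]{livre}, and your side remark that the lower density bound $\mu\ge\alpha$ is not needed for the disjointness claim itself is accurate.
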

It is possible to do so because the density of $\mu$ is bounded above and below on $K$, for a proof see e.g. \cite[Lemma 7.4 and below]{livre}. Then a Riemann sum argument, combined with the facts that $|n_k - n\mu(K_k)| \leq 1$ and $c_n \ll 1$, easily allows to show that the measure $\mu_n := \f{1}{n} \sum_{i=1}^n \delta_{x_i}$ converges weakly to $\mu$. We are then left with estimating $\f{1}{n^2} H_n(x_1, \dots, x_n)$ from above.

For each $0<\eta <1$, let us select $\chi_{\eta}$ a smooth function on $\R^d$, radial and such that $\chi_{\eta} (x) = 0$ if $|x| < \f{1}{2} \eta$ and $\chi_{\eta} = 1$ when $|x| \geq \eta$.
We may write
\begin{multline} \label{resolutionsingularites}
\f{1}{n^2} \iint_{\triangle^{c}} g(x-y) d\mu_n(x) d\mu_n(y) = \f{1}{n^2} \left( \iint_{\triangle^{c}} [(1- \chi_{\eta}) g](x-y) d\mu_n(x) d\mu_n(y) \right. \\ + \left. \iint [\chi_{\eta} g](x-y) d\mu_n(x) d\mu_n(y) \right).
\end{multline}
Since the function $\chi_{\eta} g$ is bounded and  continuous on the cube $K$ where $\mu_n$ and $\mu$ are supported, the last term in the right-hand side converges to $\iint [\chi_{\eta} g](x-y) d\mu(x) d\mu(y)$ by weak convergence of $\mu_n$ to $\mu $. We next turn to the first term in the right-hand side of (\ref{resolutionsingularites}), and show that there is, in fact, no problem near the diagonal because we have sufficient control on the accumulation of points.

Since $1 - \chi_{\eta}$ is bounded by $1$ and vanishes outside $B(0, \eta)$, we may write, by definition of $\mu_n$ and positivity of $g$ in $B(0,1)$ (which is true in all dimensions)~:
\begin{equation}\label{213}
\iint_{\triangle^{c}} [(1- \chi_{\eta}) g] (x-y) d\mu_n(x) d\mu_n(y) \leq \f{1}{n^2} \sum_{i \neq j,\ |x_i - x_j| < \eta} g(x_i - x_j).
\end{equation}

\begin{claim} For all $x, y \in B(x_i, \f{\lambda}{n^{1/d}}) \times B(x_j, \f{\lambda}{n^{1/d}})$, $i\neq j$,  we have $g(x_i - x_j) \leq g\(\hal  (x-y)\)$. 
\end{claim}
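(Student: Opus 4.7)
The plan is to reduce the claim to monotonicity of $g$ applied to a triangle-inequality bound. Recall from the preceding step that the balls $B(x_i,\tfrac{4\lambda}{n^{1/d}})$ are disjoint, which means $|x_i-x_j|\ge \tfrac{8\lambda}{n^{1/d}}$ for all $i\ne j$. Meanwhile $x,y$ are constrained to lie in the much smaller balls of radius $\tfrac{\lambda}{n^{1/d}}$ around $x_i,x_j$ respectively, so I expect $\tfrac12(x-y)$ to be comparable in length to $x_i-x_j$, up to a factor strictly less than $1$.

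First I would write, by the triangle inequality,
\begin{equation*}
\tfrac12|x-y|\;\le\;\tfrac12\bigl(|x_i-x_j|+|x-x_i|+|y-x_j|\bigr)\;\le\;\tfrac12\Bigl(|x_i-x_j|+\tfrac{2\lambda}{n^{1/d}}\Bigr).
\end{equation*}
Using $|x_i-x_j|\ge \tfrac{8\lambda}{n^{1/d}}$, the term $\tfrac{2\lambda}{n^{1/d}}$ is bounded by $\tfrac14|x_i-x_j|$, which yields
\begin{equation*}
\tfrac12|x-y|\;\le\;\tfrac58\,|x_i-x_j|\;<\;|x_i-x_j|.
\end{equation*}

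Since $g(z)$ depends only on $|z|$ and is strictly decreasing as a function of the modulus (both for $g(r)=-\log r$ in $d=1,2$ and for $g(r)=r^{2-d}$ in $d\ge 3$, as is apparent from the formula), the bound $|\tfrac12(x-y)|\le |x_i-x_j|$ immediately gives
\begin{equation*}
g\bigl(\tfrac12(x-y)\bigr)\;\ge\;g(x_i-x_j),
\end{equation*}
which is the claim. The only (minor) obstacle is to notice that the hypothesis on disjoint balls of radius $\tfrac{4\lambda}{n^{1/d}}$ provides the comfortable factor $4$, which is the exact slack one needs so that the displacement allowed in the balls $B(x_i,\tfrac{\lambda}{n^{1/d}})$ shifts distances by at most a quarter and therefore stays inside the monotonicity regime; no further properties of $g$ beyond radial monotonicity are required.
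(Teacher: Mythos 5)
Your proof is correct and follows the same route as the paper: the disjointness of the balls $B(x_i,\tfrac{4\lambda}{n^{1/d}})$ gives $|x_i-x_j|\ge\tfrac{8\lambda}{n^{1/d}}$, the triangle inequality then gives $\tfrac12|x-y|\le|x_i-x_j|$ (the paper states it as $|x_i-x_j|\ge\tfrac12|x-y|$, you get the slightly sharper constant $\tfrac58$), and radial monotonicity of $g$ concludes. No issues.
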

This is due to the fact that the balls $B(x_i, \f{4\lambda}{n^{1/d}})$ are disjoint from each other hence for $i\neq j$,  $|x_i-x_j|\ge \f{8\lambda}{n^{1/d}}$, which implies, by the triangle inequality, that if $x \in  B(x_i, \f{\lambda}{n^{1/d}})$ and $y \in B(x_j, \f{\lambda}{n^{1/d}})$, then $|x_i - x_j| \geq \hal  |x-y|$. But in all the cases we consider (cf. \eqref{coulombkernel1}),  $(x,y) \mapsto g(x-y)$ is a decreasing function of the distance between the two points, hence we may write\footnote{We will always use the notation $\fint_U f$ for the average of $f$ on $U$, that is $\f{1}{|U|} \int_U f$.} :
\begin{multline} \label{diracenboules}
g(x_i - x_j) \leq \fint_{B(x_i, \f{\lambda}{n^{1/d}})} \fint_{B(x_j, \f{\lambda}{n^{1/d}})} g\(\hal  (x-y)\) dx dy \\
\leq \frac{{C_d}}{\lambda^{2d}} n^2 \int_{B(x_i, \f{\lambda}{n^{1/d}})} \int_{B(x_j, \f{\lambda}{n^{1/d}})} g\(\hal  (x-y)\) dx dy
\end{multline}
where ${C_d}$ is a  constant depending only on the dimension $d $. Because the balls do not overlap, one may sum the inequalities  (\ref{diracenboules}) for $i \neq j$ to find, with \eqref{213},
\begin{multline} \label{diracenboules2}
\iint_{\triangle^{c}} (1- \chi_{\eta}) g(x-y) d\mu_n(x) d\mu_n(y) \leq \f{1}{n^2} \sum_{i \neq j,\ |x_i - x_j| < \eta} g(x_i - x_j) \\ 
\le\frac{ {C_d}}{\lambda^{2d}}\sum_{i \neq j,\ |x_i - x_j| < \eta}\int_{B(x_i, \f{\lambda}{n^{1/d}})} \int_{B(x_j, \f{\lambda}{n^{1/d}})} g\( \hal  (x-y)\) dx dy\\
\leq \frac{{C_d}}{\lambda^{2d}} \iint_{|x-y| <2 \eta} g\(\hal (x-y)\) dx dy,
\end{multline} for $n$ large enough.  The last term in (\ref{diracenboules2}) is $o(1)$ when $\eta \rightarrow 0$ because  we have $\iint g(x-y) dx dy < + \infty$. Combining with (\ref{resolutionsingularites}) we deduce 
\begin{equation}
\limsup_{n \to + \infty} \f{1}{n^2} \iint_{\triangle^c} g(x-y) d\mu_n(x) d\mu_n(y) \leq \iint_{\R^d\times \R^d} g(x-y) d\mu(x) d\mu(y) + o_{\eta}(1).
\end{equation}
Also $\int V d\mu_n \rightarrow \int V d\mu$, since $V$ is continuous, $\mu_n \rightharpoonup \mu$,  and $\mu_n$ and $\mu$ are supported in the same compact set, so in fact  we have established that $$\limsup_{n\to + \infty} \f{1}{n^2} H_n(\mu_n) \leq I(\mu) + o_{\eta}(1)$$ and letting $\eta \rightarrow 0$ finally gives us the $\Gamma$-$\limsup$ inequality, which concludes the proof.
\end{proof}

\begin{rem}\label{rem11} Again we have not really used the fact that $g$ is  a Coulombic kernel, rather we only used the fact that $g$ is monotone radial, positive in $B(0,1)$ and $\iint g(x-y)\, dx\,dy<\infty$. This shows that the  result  still holds for all such interaction kernels.
\end{rem}

\begin{rem} \label{rem12} To prove the $\Gamma$-liminf relation, we have only used that $V$ is l.s.c. and bounded below. To prove the $\Gamma$-limsup relation, we have assumed that $V$ is continuous for convenience. In fact the construction works for more general $V$'s, for example it suffices to assume that $V$ continuous on the set where it is finite. \end{rem}

We next derive the consequence of the $\Gamma$-convergence 
 Proposition \ref{gammaconvergenceHn}  given by Proposition    \ref{gammaconvmini}. 
 In order to do so, we must  prove the compactness of sequences with suitably bounded energy, as in Remark \ref{gcvcomplete}.   
  \begin{lemme}\label{lemcpthn}
 Assume that $V$ satisfies  \textbf{(A1)}--\textbf{(A2)}.
 Let $\{(x_1, \dots, x_n)\}_n$  be a sequence of configurations in $(\R^d)^n$, and let $\{\mu_n\}_n$ be the associated empirical measures (defined by $\mu_n= \frac{1}{n} \sum_{i=1}^n \delta_{x_i}$).
 Assume $\{\frac{1}{n^2} H_n(\mu_n) \}_n$ is a bounded sequence. Then  the sequence $\{\mu_n\}_n$ is tight, and as $n \to \infty$, it converges weakly in $\mc{P}(\R^d)$ (up to extraction of a subsequence) to some probability measure $\mu$.
 \end{lemme}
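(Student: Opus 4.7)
The plan is to mirror closely the argument used in Lemma \ref{coerI}, the main new feature being the need to handle the discreteness of $\mu_n$, i.e. the mass that $\mu_n\otimes \mu_n$ puts on the diagonal $\triangle$.

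First I would symmetrize the Hamiltonian. Writing $\mu_n = \f{1}{n}\sum_{i=1}^n \delta_{x_i}$ (assuming without loss of generality that the points are distinct, as otherwise $H_n = +\infty$ and the hypothesis fails), one checks that $\mu_n\otimes\mu_n(\triangle)=\f{1}{n}$ and therefore
\begin{equation*}
\f{1}{n^2}H_n(\mu_n) = \iint_{\triangle^c}\left[g(x-y)+\f{V(x)+V(y)}{2}\right]d\mu_n(x)\,d\mu_n(y) + \f{1}{n}\int V\,d\mu_n.
\end{equation*}
This is the discrete analogue of the rewriting of $I$ performed in Lemma \ref{coerI}.

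Next, I would invoke exactly the pointwise bounds from the proof of Lemma \ref{coerI}: by \textbf{(A1)}--\textbf{(A2)}, the bracketed quantity is bounded below by some constant $-C_3$ on all of $\R^d\times\R^d$, and for any prescribed $C_2>0$ there exists a compact set $K\subset\R^d$ such that
\begin{equation*}
\min_{(K\times K)^c}\left[g(x-y)+\f{V(x)+V(y)}{2}\right]\ge C_2.
\end{equation*}
Since $V$ is bounded below by $-C$, the term $\f{1}{n}\int V\,d\mu_n$ is bounded below by $-C$ uniformly in $n$. Combined with the assumption $\f{1}{n^2}H_n(\mu_n)\le C_1$, and with the observation $\mu_n\otimes\mu_n((K\times K)^c)=1-\mu_n(K)^2\ge \mu_n(K^c)$, one obtains (after subtracting off the diagonal mass $1/n$)
\begin{equation*}
C_1+C_3+C \ge (C_2+C_3)\bigl(\mu_n(K^c)-\tfrac{1}{n}\bigr).
\end{equation*}
Choosing $C_2$ large, this shows that $\mu_n(K^c)$ can be made arbitrarily small, uniformly in $n$ large enough (for the finitely many small $n$ one simply enlarges $K$), which is exactly tightness.

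Finally, tightness together with Prokhorov's theorem yields a subsequence of $\{\mu_n\}$ converging weakly in $\mc{P}(\R^d)$ to some probability measure $\mu$. The only real subtlety in this plan is the bookkeeping of the diagonal contribution: the symmetrization that makes the integrand nonnegative (up to an additive constant) is not exact because $\mu_n$ is atomic, so one must pay attention to the $\f{1}{n}$ diagonal mass and to the residual term $\f{1}{n}\int V\,d\mu_n$, but both are controlled uniformly thanks to \textbf{(A1)}. No compactness input beyond \textbf{(A1)}--\textbf{(A2)} is needed, in particular \textbf{(A3)} plays no role here.
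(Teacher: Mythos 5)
Your proof is correct and follows essentially the same route as the paper: both reduce the claim to the coercivity computation of Lemma \ref{coerI}, using \eqref{vgrand} and the global lower bound $-C_3$ on the symmetrized integrand, together with a uniform-in-$n$ control of the diagonal contribution. The only (harmless) difference is in the bookkeeping of the diagonal: the paper truncates the kernel, bounding $\iint_{\triangle^c} g\,d\mu_n\,d\mu_n \ge \iint (g\wedge M)\,d\mu_n\,d\mu_n - \f{M}{n}$ and then applying \eqref{vgrand} to $g\wedge M$, whereas you keep the integral on $\triangle^c$ and absorb the diagonal via the exact identity for the symmetrized potential and the $\f{1}{n}$ mass of $\mu_n\otimes\mu_n$ on $\triangle$ --- both versions yield the same tightness estimate.
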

 \begin{proof}
 The proof is completely analogous to that of Lemma \ref{coerI}.  First, by assumption, there exists a constant $C_1$ independent of $n$ such that $H_n \le C_1 n^2$, and in view of \eqref{Hnpourempiric}--\eqref{troncaturediago1} we may write, for every $M>0$, 
 \begin{multline}\label{M1}
 C_1\ge  \iint (g(x-y) \wedge M) \, d\mu_n(x)\, d\mu_n(y) - \frac{M}{n} + \int V\, d\mu_n\\
 = \iint \left[g(x-y) \wedge M + \hal V(x) + \hal V(y) \right] \, d\mu_n(x) \, d\mu_n(y)- \frac{M}{n}.\end{multline}
 In view of \eqref{vgrand}, given any constant $C_2>0$, we may find $M$ large enough  and a compact set $K$  such that 
 $$ 
\min_{(K\times K)^c} \left[ g(x-y) \wedge M + \hal V(x) + \hal V(y) \right]>C_2.$$
The rest of the proof is virtually as in Lemma \ref{coerI}.

 \end{proof}

 To conclude, we will make the assumptions on $V$ that ensure both the $\Gamma$-convergence result Proposition \ref{gammaconvmini} and the existence result Theorem \ref{theoFrostman}.  Since we assumed for simplicity that $V$ is continuous and finite, it suffices to assume $\textbf{(A2)}$ to have $\textbf{(A1)} $ and $\textbf{(A3)}$. 
 
 With all the precedes, we may conclude with the following result, which goes back to \cite{choquet}.

 \begin{theo}[Convergence of minimizers and minima of $H_n$] \label{thcvmini} 
  Assume that $V$ is continuous  and satisfies \textbf{(A2)}.   
 Assume that for each $n$,  $\{(x_1, \dots, x_n)\}_n$ is a minimizer of $H_n$. Then, 
\be \label{cvdesmini}\frac{1}{n} \sum_{i=1}^n \delta_{x_i} \to \mu_0 \text{ in the weak sense of probability measures} 
\ee
where $\mu_0$ is the unique minimizer of $I$ as in Theorem \ref{theoFrostman}, and 
\be\label{cvdesmin}
\lim_{n\to + \infty} \frac{H_n(x_1, \dots , x_n) }{n^2} = I(\mu_0).\ee
\end{theo}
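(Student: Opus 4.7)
The plan is to apply the general principle that $\Gamma$-convergence sends minimizers to minimizers (Proposition \ref{gammaconvmini}), combined with the compactness result Lemma \ref{lemcpthn} and the existence/uniqueness of the equilibrium measure $\mu_0$ from Theorem \ref{theoFrostman}. Since Proposition \ref{gammaconvmini} requires a priori convergence of the sequence of minimizers, while here we only know that each $(x_1,\dots,x_n)$ minimizes $H_n$, I would proceed via compactness plus subsequence arguments, using uniqueness of $\mu_0$ to upgrade subsequential convergence to convergence of the entire sequence.

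First, I would prove the upper bound $\limsup_{n\to\infty} \frac{1}{n^2} H_n(x_1,\dots,x_n) \le I(\mu_0)$. By Theorem \ref{theoFrostman}, $\mu_0$ exists and satisfies $I(\mu_0) < +\infty$. By the $\Gamma$-limsup part of Proposition \ref{gammaconvergenceHn}, there is a recovery sequence of empirical measures $\tilde\mu_n = \frac{1}{n}\sum_{i=1}^n \delta_{\tilde x_i}$ converging weakly to $\mu_0$ with $\limsup \frac{1}{n^2} H_n(\tilde x_1, \dots, \tilde x_n) \le I(\mu_0)$. Minimality of $(x_1,\dots,x_n)$ then gives $\frac{1}{n^2} H_n(x_1,\dots,x_n) \le \frac{1}{n^2} H_n(\tilde x_1,\dots,\tilde x_n)$, whence the claimed upper bound.

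Next, since $I(\mu_0) < +\infty$, the sequence $\{\frac{1}{n^2} H_n(\mu_n)\}_n$ is bounded, so Lemma \ref{lemcpthn} applies: the sequence of empirical measures $\mu_n = \frac{1}{n}\sum_{i=1}^n \delta_{x_i}$ is tight, and up to extraction converges weakly to some $\mu \in \mc{P}(\R^d)$. Applying the $\Gamma$-liminf part of Proposition \ref{gammaconvergenceHn} along this subsequence yields $\liminf \frac{1}{n^2} H_n(x_1,\dots,x_n) \ge I(\mu)$. Combining with the previous step, $I(\mu) \le I(\mu_0) = \min_{\mc{P}(\R^d)} I$, so $\mu$ is also a minimizer of $I$. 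By the uniqueness of the minimizer (Lemma \ref{convexi}), $\mu = \mu_0$.

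Since the limit of every weakly convergent subsequence of $\{\mu_n\}_n$ is forced to be $\mu_0$, and $\{\mu_n\}_n$ is tight (hence relatively compact in the weak topology of $\mc{P}(\R^d)$), the whole sequence converges to $\mu_0$, establishing \eqref{cvdesmini}. Finally, applying the $\Gamma$-liminf inequality to the full sequence gives $\liminf \frac{1}{n^2} H_n(x_1,\dots,x_n) \ge I(\mu_0)$, which together with the upper bound from the first step yields \eqref{cvdesmin}. The main (mild) obstacle is really only the bookkeeping to reduce to a subsequence and then use uniqueness to recover convergence of the whole sequence; all the substantial analytic work has already been done in Proposition \ref{gammaconvergenceHn}, Theorem \ref{theoFrostman}, and Lemma \ref{lemcpthn}.
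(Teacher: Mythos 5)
Your proof is correct and follows essentially the same route as the paper: use the $\Gamma$-limsup applied to $\mu_0$ together with minimality to bound the energy, invoke Lemma \ref{lemcpthn} for tightness and subsequential compactness, identify every subsequential limit as $\mu_0$ via the $\Gamma$-liminf and uniqueness of the minimizer, and conclude for the whole sequence. The only cosmetic difference is that you unfold the argument of Proposition \ref{gammaconvmini} explicitly rather than citing it, which changes nothing of substance.
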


\begin{proof}
Applying the  $\Gamma$-limsup part of the definition of $\Gamma$-convergence,  for example to $\mu_0$,  ensures that  $\limsup_{n\to +\infty}\frac{1}{n^2} \min H_n $ is bounded above (by $I(\mu_0)$), hence in particular sequences of minimizers of $H_n$ satisfy  the assumptions of Lemma \ref{lemcpthn}.  It follows that, up to a subsequence, we have 
  $\frac{1}{n} \sum_{i=1}^n \delta_{x_i} \to \mu$ for some $\mu\in \mc{P}(\R^d)$.
  By Propositions \ref{gammaconvmini} and \ref{gammaconvergenceHn}, $\mu$ must minimize $I$, hence, in view of Theorem \ref{theoFrostman}, it must be equal to $\mu_0$. This implies that the convergence must hold along the whole sequence. We also get \eqref{cvdesmin} from Proposition \ref{gammaconvmini}.
\end{proof}
In the language of statistical mechanics or mean field theory, this  result gives the mean-field behavior or average behavior of ground states, and the functional $I$ is called the mean-field energy functional. 
It tells us that  points distribute themselves macroscopically  according to the probability law $\mu_0$ as their number tends to $\infty$, and we have the leading order asymptotic expansion of the ground state energy $$\min H_n \sim n^2 \min I.$$This is not very precise  in the sense that it tells us  nothing 
 about the precise patterns they follow. Understanding this is the object of the following chapters.

\section{Linking the equilibrium measure with the obstacle problem} 
In Section \ref{potentialtheoretic} we described the characterization of the equilibrium measure minimizing $I$ via tools of potential theory. In this section, we return to this question and connect it instead to a well-studied problem in the calculus of variations called the {\it obstacle problem}. This connection is not very much  emphasized in the literature. It is however mentioned in passing in \cite{safftotik} and used intensively in \cite{hedenmakarov2} with the point of view of \cite{sakai}. It allows us to use PDE theory results, such as methods based on the maximum principle methods and regularity theory to obtain
additional information on $\mu_0$. 
\subsection{Short presentation of the obstacle problem}
The obstacle problem is generally formulated over a bounded domain $\Omega \in \R^d$:  given  an $H^1(\Omega)$   function $\psi : \Omega \rightarrow \R$ (called the {\it obstacle}), which is  nonpositive on $\partial \Omega$,  find  the function that achieves 
\begin{equation}
\label{obstaclepb} \min \left\lbrace \int_{\Omega} |\nabla h|^2 , \ h \in H^1_0(\Omega), \ h \geq \psi \right\rbrace.
\end{equation}

For general background and motivation for this problem, see e.g. \cite{ks,friedman,ckinder}. 

Here the space $H^1_0(\Omega)$ is the  Sobolev space of trace-zero functions which is the completion of $C^1_c(\Omega)$ ($C^1$ functions with compact support in $\Omega$) under the $H^1$-Sobolev norm $||h||_{H^1} = ||h||_{L^2} + ||\nabla h||_{L^2}$. The zero trace condition $h \in H^1_0(\Omega)$ may be replaced by different boundary conditions, e.g. a translation $h \in H^1_0(\Omega) + f$, where $f$ is a given function. Note that the minimization problem \eqref{obstaclepb} is a convex minimization problem under a convex constraint, hence it has at most one  minimizer (it is not too hard to show that the minimum is achieved, hence there actually is a unique minimizer).

 An admissible function for \eqref{obstaclepb} has two options at each point : to touch the obstacle or not (and typically uses both possibilities).  If $h$ is the optimizer, the set $\{x\in \Omega| h(x)=\psi(x) \ q.e.\}$ is closed and called the {\it coincidence set} or the {\it contact set}.  It is unknown (part of the problem), and its boundary is called a ``free boundary." The obstacle problem thus belongs to the class of so-called {\it free-boundary problems}, cf. \cite{friedman}.

Trying to compute the Euler-Lagrange associated to this problem by perturbing $h$ by a small function, one is led to two possibilities depending on whether $h = \psi$ or $h > \psi$. 
In a region  where $h > \psi$, one can do infinitesimal variations of $h$ of the form $(1-t)h + tv$ with $v$, say,  smooth  (this still gives an admissible function, i.e. lying above the obstacle, as soon as $t$ is small enough) which shows that $\Delta h =0$  there (since the “functional” derivative of the Dirichlet energy is the Laplacian). 
In the set where  $h=\psi$,  only variations of the same form $(1-t)h + tv$  but with $v\ge \psi$ (equivalent to $v\ge h$ there!) and $t\ge 0$ provide  admissible functions,  and this  only leads to an inequality $-\Delta h\ge 0$ there. These two pieces of information can be grouped in the following more compact form:
\begin{equation}\label{varineq}
\textrm{for all } v \textrm{ in }  H^1_0 \textrm{ such that } v \geq \psi \textrm{ q.e., }  \int_{\Omega} \nabla h \cdot \nabla (v-h) \geq 0.
\end{equation}
This relation is called a {\it variational inequality}, and 
 it uniquely characterizes the solution to \eqref{obstaclepb}, in particular the coincidence set is completely determined as part of the solution.

In Fig. \ref{fig2} below, we describe a few instances of solutions to   one-dimensional obstacle problems, and in Fig. \ref{fig3} to higher dimensional obstacle problems. 

\begin{figure}[h!]
\begin{center}
\includegraphics{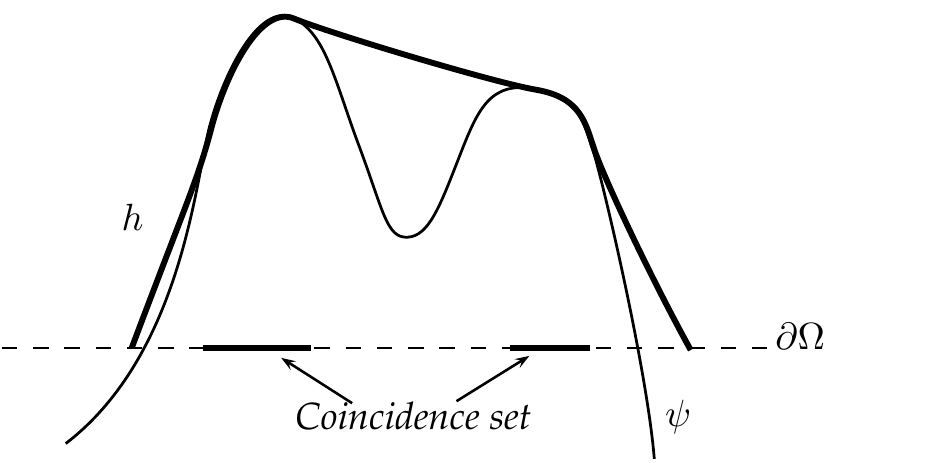}
\includegraphics{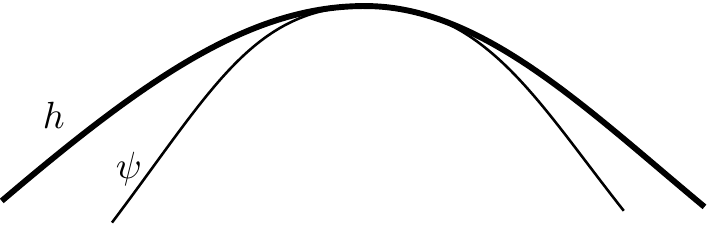}
\caption{The coincidence set for a one-dimensional obstacle problem}
\label{fig2}
\end{center} 
\end{figure}

\begin{figure}[h!]
\begin{center}
\includegraphics[scale=1]{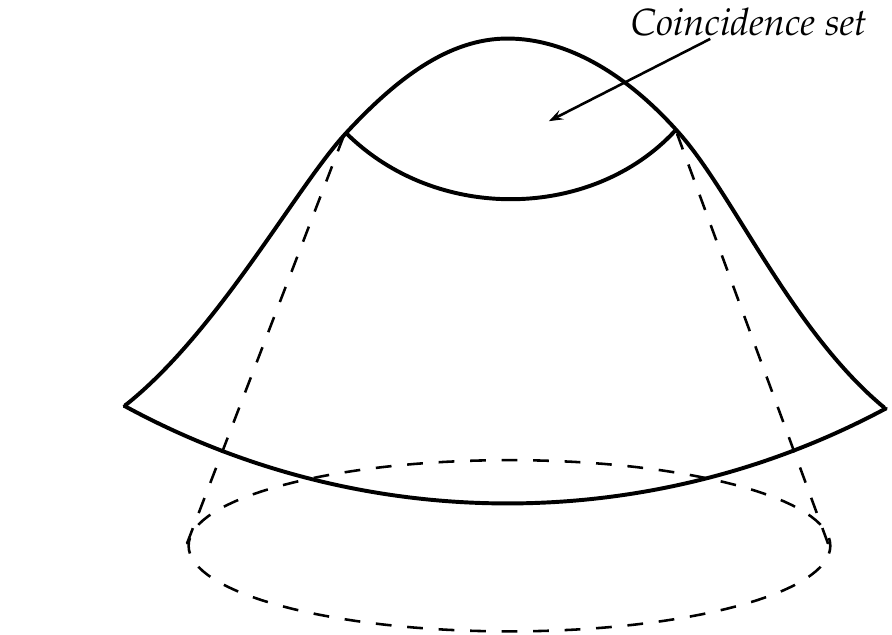}
\end{center}
\caption{A higher-dimensional obstacle problem}\label{fig3}
\end{figure}

The regularity theory of the solutions to obstacle problems and of their coincidence sets has been developed for many years,   culminating with the work of Caffarelli   (for a review see \cite{caff}). This sophisticated PDE theory shows, for example, that the solution $h$ is as regular as $\psi$ up to $C^{1,1}$ \cite{frehse}. The boundary $\partial \Sigma$ of the coincidence set is $C^{1,\alpha}$ except for cusps \cite{caff}. These are points of $\partial \Sigma$ at  which, locally, the coincidence set  can fit  in the region between two parallel planes  separated by an arbitrarily small distance (the smallness of the neighborhood depends of course on this desired distance).
Fig. \ref{fig4} gives examples of coincidence sets, one regular one, and one with cusps.

\begin{figure}[h!]
\begin{center}
\includegraphics[scale=1]{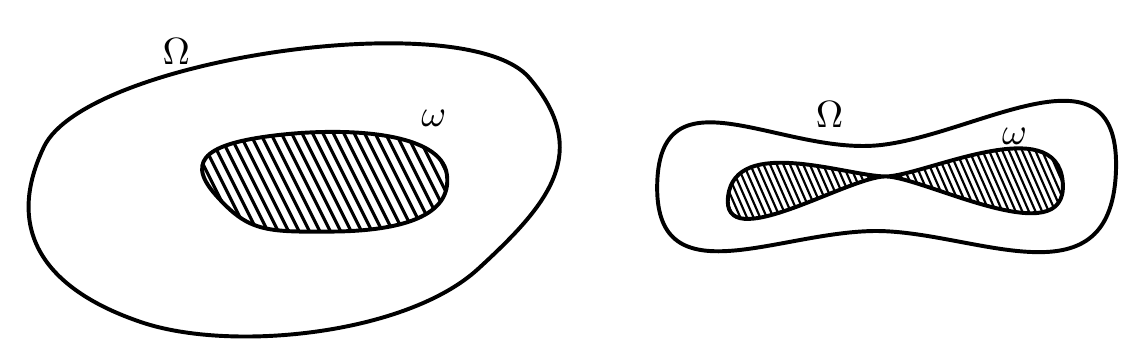}
\caption{Examples of coincidence sets}\label{fig4}
\end{center}
\end{figure}

Moreover, if $\psi$ is $C^{1,1}$, since $\nabla h$ is continuous, the graph of $h$ must leave the coincidence set tangentially. This  formally leads to the following system of equations, where $\omega$ denotes the coincidence set~:
$$\left\{\begin{array}{ll}
-\Delta h =0 & \text{in} \ \Omega\backslash \omega
\\
h=\psi & \text{in} \ \omega\\
\frac{\partial h}{\partial \nu} = \frac{\partial \psi}{\partial \nu} & \text{on} \ \partial \omega\\
h=0 & \text{on} \ \Omega.\end{array}\right.$$
This relation cannot be made rigorous in all cases, because $\omega$ is not an open domain, however it gives the right intuition and is correct when $\omega $ is nice enough. Note that on the boundary of $\Omega \backslash \omega$ we must have a Dirichlet  condition $h=\psi$, together with a Neumann  condition $\frac{\partial h}{\partial \nu} = \frac{\partial \psi}{\partial \nu}$. These two boundary conditions make what is called an {\it overdetermined problem} and this 
overdetermination explains why there is only one possible coincidence set.

\subsection{Connection between the two problems}
The problem we examined, that of the minimization of $I$, is phrased in the whole space, and not in  a bounded domain. While the minimization problem \eqref{obstaclepb} may not have  a meaning over all $\R^d$ (because the integral might not converge), the corresponding variational inequality \eqref{varineq} can still be given a meaning over $\R^d$ as follows~:
given $\psi \in H^1_{\mathrm{loc}}(\R^d)$ solve for $h$ such that 
\begin{equation}\label{obspbrd}
h\ge \psi \ \text{q.e.  and \ } \forall  v\in \mathcal{K}, \quad  \int_{\R^d} \nab h \cdot \nab (v-h) \ge 0\end{equation}where 
 $$\mathcal{K}= 
\left\{ H^1_{\mathrm{loc}}(\R^d) \ \textrm{such that} \   v-h \textrm{ has bounded support and} \  v \ge \psi  \  \text{q.e.}\right\}.$$
Solving this is in fact equivalent to the statement that  for every $R>0$, $h$ is the unique solution to 
$$\min\left\{\int_{B_R} |\nab v|^2, v\in H^1(B_R), v -h\in H^1_0(B_R), v\ge \psi \ \textrm{in} \ B_R\right\},$$
which replaces \eqref{obstaclepb}.
The problem \eqref{obspbrd} is easily seen to have a unique solution~: if there are two solutions $h_1$ and $h_2$  it suffices to apply \eqref{obspbrd} for $h_1$, with $h_2$ as a test-function, and then reverse the roles of the two and add the two relations to obtain $h_1=h_2$.

Let us now  compare  the two problems side by side~:
\begin{description}
\item[Equilibrium measure.]
 $\mu_0$ is characterized by the relations 
 \begin{equation} \left\lbrace \begin{array}{cl} h^{\mu_0} + \f{V}{2} \geq c & \mbox{quasi everywhere} \vspace{2mm} \\ 
h^{\mu_0} + \f{V}{2} = c & \textrm{q.e. in  the support of } \ \mu_0.
\end{array} \right.
\end{equation}

\item[Obstacle problem.]
\begin{equation}
\left\lbrace \begin{array}{cl} 
h \geq \psi & \mbox{q.e.} \\
h = \psi & \mbox{q.e.  in the coincidence set}
\end{array}\right.
\end{equation}
\end{description}

It is then not surprising to expect a correspondence between the two settings, once one chooses the obstacle to be  $\psi = c - \f{V}{2}$. 
\begin{prop}[Equivalence between the minimization of $I$ and the obstacle problem]\label{proequivpb}
\mbox{}
Assume $d\ge 2$, $V$ is continuous and satisfies \textbf{(A2)}. If $\mu_0$ is the equilibrium measure associated to the potential $V$ as in Theorem \ref{theoFrostman}, then its potential $h^{\mu_0}$, as defined in \eqref{defhmu0},  is the unique solution to the obstacle problem with obstacle $\psi = c - \f{V}{2}$ in the sense of \eqref{obspbrd}. 
If in addition  $V\in C^{1,1}$ then $\mu_0=( \hal \Delta V) \indic_{\omega}$.
% where $\omega$ is the coincidence set $\{h^{\mu_0}= c- \f{V}{2}\}$.
\end{prop}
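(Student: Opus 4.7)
The plan is to translate the two Euler--Lagrange relations of Theorem \ref{theoFrostman} directly into the obstacle-problem characterization \eqref{obspbrd}, and then read off the density formula by computing $-\Delta h^{\mu_0}$ on each side of the free boundary. First I would collect some preliminaries: since $\mu_0$ has compact support (Theorem \ref{theoFrostman}), $h^{\mu_0}$ is smooth and harmonic off $\mathrm{supp}(\mu_0)$, and distributionally satisfies $-\Delta h^{\mu_0} = c_d \mu_0$ by \eqref{coulombkernel2} (this is where $d\ge 2$ enters), and from $I(\mu_0)<\infty$ one gets $h^{\mu_0}\in H^1_{\mathrm{loc}}(\R^d)$. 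The first line of \eqref{EulerLagrange} already says $h^{\mu_0}\geq \psi$ q.e., so the obstacle constraint comes for free.

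For the variational inequality, I take an arbitrary test function $v \in \mathcal{K}$. Since $v - h^{\mu_0}$ has compact support, integration by parts yields
\begin{equation*}
\int_{\R^d}\nabla h^{\mu_0}\cdot\nabla(v-h^{\mu_0})\,dx \;=\; c_d\int_{\R^d}(v-h^{\mu_0})\, d\mu_0.
\end{equation*}
Because $\mu_0$ has finite energy, it does not charge zero-capacity sets, so the q.e.\ statements $h^{\mu_0}=\psi$ on $\mathrm{supp}(\mu_0)$ and $v\geq \psi$ on $\R^d$ both upgrade to $\mu_0$-a.e.\ (in)equalities. Hence the integrand on the right is nonnegative $\mu_0$-a.e., which gives the variational inequality. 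Uniqueness of the solution to \eqref{obspbrd} was already observed just after its statement.

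For the last assertion, if $V\in C^{1,1}$ then the obstacle $\psi=c-V/2$ is $C^{1,1}$ and the classical regularity result of Frehse for the obstacle problem gives $h^{\mu_0}\in C^{1,1}_{\mathrm{loc}}(\R^d)$. Off $\omega=\{h^{\mu_0}=\psi\}$, $h^{\mu_0}$ is harmonic, so $\Delta h^{\mu_0}=0$ there; on the interior of $\omega$ it agrees with $c-V/2$, whence $-\Delta h^{\mu_0}=\hal \Delta V$ pointwise almost everywhere. Since $|\partial\omega|=0$, combining with $-\Delta h^{\mu_0}=c_d\mu_0$ in the distributional sense yields the announced density formula $\mu_0 = \frac{1}{2c_d}\Delta V\,\indic_\omega$ (consistent with \eqref{densmu0}).

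I expect the only real subtlety to be the careful passage between quasi-everywhere and $\mu_0$-almost everywhere statements, together with the justification of integration by parts in the unbounded setting; both are routine once one exploits compactness of $\mathrm{supp}(v-h^{\mu_0})$ and the fact that finite-energy measures ignore sets of zero capacity. The $C^{1,1}$ step is then essentially just plugging in $h^{\mu_0}=c-V/2$ and distinguishing interior/exterior of $\omega$.
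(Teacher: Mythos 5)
Your overall strategy is the one the paper follows: translate \eqref{EulerLagrange} into the variational inequality by integration by parts against $\mu_0$, use that finite-energy measures do not charge sets of zero capacity to upgrade q.e.\ statements to $\mu_0$-a.e.\ ones, and then read off the density from Frehse's $C^{1,1}$ regularity. The $C^{1,1}$ step and the q.e.\ bookkeeping are fine. However, there is one genuine gap: the assertion that $I(\mu_0)<\infty$ gives $h^{\mu_0}\in H^1_{\mathrm{loc}}(\R^d)$, which you treat as a throwaway preliminary. This is precisely where the paper spends its Step 1, and in dimension $2$ it is not routine. The identity $\iint g(x-y)\,d\rho(x)\,d\rho(y)=\frac{1}{c_d}\int|\nabla h^\rho|^2$ that would let you pass from finite Coulomb energy to a Dirichlet bound only holds for \emph{neutral} charge distributions ($\int d\rho=0$); for a probability measure in $d=2$ the global Dirichlet integral $\int_{\R^2}|\nabla h^{\mu_0}|^2$ is in fact infinite ($|\nabla h^{\mu_0}|\sim 1/r$ at infinity), and even local square-integrability near the support does not follow directly from $\iint(-\log|x-y|)\,d\mu_0\,d\mu_0<\infty$ because the logarithmic kernel is not positive. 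The paper's fix is to subtract a smooth reference probability measure $\bar\mu$ (the circle law), apply the energy identity to the neutral difference $\rho=\mu_0-\bar\mu$, and conclude $\nabla h^{\mu_0}=\nabla h^{\rho}+\nabla h^{\bar\mu}\in L^2_{\mathrm{loc}}$. Without this (or an equivalent argument), $h^{\mu_0}$ is not known to be admissible for \eqref{obspbrd} and your variational inequality is not even well posed.

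A smaller but real point: your integration by parts is stated for arbitrary $v\in\mathcal{K}$, i.e.\ for $\varphi=v-h^{\mu_0}$ that is merely compactly supported $H^1$. The identity $\int\nabla h^{\mu_0}\cdot\nabla\varphi=c_d\int\varphi\,d\mu_0$ and the sign of the right-hand side require an approximation of $\varphi$ by smooth functions that \emph{remain above the obstacle}; the paper builds these explicitly via mollification and a slight downward scaling ($v_{\eps,\delta}$ with $\eps$ small relative to $\delta$). You flag this as ``routine,'' and it is doable, but the constraint-preserving part of the approximation is the step that actually needs to be written.
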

Note that the converse might not be true, because a solution of the obstacle problem can fail to provide a {\it probability} measure, however it does in general when shifting $c$ appropriately.

When one works on a bounded domain, this result can be obtained by observing that the problem of minimizing $I$ and that of minimizing \eqref{obstaclepb} are essentially convex duals of each other (see \cite{brezis,breziss}). When working in an infinite domain, the correspondence is probably folklore and could also be worked out by convex duality,  but we were not able to find it completely written in the literature, except for \cite{hedenmakarov2} who follow a slightly different formulation. Here, for the proof, we follow the approach of  \cite{asz} where the result is established in dimension 2 for the particular case of $V$ quadratic (but with more general constraints), the adaptation to any dimension and to general $V$'s is not difficult.

\begin{proof}
{\bf Step 1.} We show that $\nab h^{\mu_0}$ is in $L^2_{\mathrm{loc}} (\R^d, \R^d)$. It is a consequence of the fact that $I(\mu_0)<\infty $ hence, in view of the assumptions on $V$, $\iint g(x-y) \, d\mu_0(x) \, d\mu_0(y)$.
In the case $d\ge 3$, it can be proven that this implies  $\nab h^{\mu_0}\in L^2(\R^d)$  for example by using the Fourier transform, first approximating $\mu_0$ by smooth measures, and combining Corollary 5.10 and Theorem 7.9 in \cite{liebloss}.

%let us use relation \cite[(4), Sec. 8.3. p. 203]{liebloss}~: for any $f$ smooth and compactly supported we find 
%$$\left|\int f\, d\mu_0\right|\le \|\nab f\|_{L^2(\R^d)} \( \iint g(x-y) \, d\mu_0(x) \, d\mu_0(y)\)^{\hal}.$$
%This can be proven as in \cite{liebloss} by using Fourier transform, first approximating $\mu_0$ by smooth functions, or using the Cauchy-Schwarz relation 
%$\iint gd\mu_0 d\ro \le \sqrt{ \iint g d\mu_0 d\mu_0 \iint g d\ro d\ro}$ where $\ro = - \Delta f$, \cite[Theorem 9.8]{liebloss}. Since $\int f\, d\mu_0= \int \nab f \cdot \nab h^{\mu_0}$

In the case $d=2$, we need to consider a reference probability  measure $\bar{\mu}$ for which $h^{\bar\mu}$ is $C^1_{\mathrm{loc}}(\R^2)$. It suffices to consider for example $\bar{\mu}= \frac{1}{\pi } \indic_{B_1}$, the circle law, for which $h^{\bar\mu}$ is radial and can be computed explicitly.
Then, let us  consider $\ro = \mu_0 - \bar{\mu}$. Using the fact that $\int \, d\rho=0$, $\rho$ is compactly supported, and $\iint g(x-y) \, d\mu(x)\, d\mu(y)<\infty$ holds for both $\mu=\mu_0$ and $\mu= \bar\mu$,   we have the following statement
\begin{equation}\label{e.hro}
\iint_{\R^2\times\R^2} - \log |x-y|\, d\rho(x)\, d\rho(y)= \frac{1}{2\pi}\int_{\R^2} |\nabla h^\rho(x) |^2\,dx,
\end{equation} where $h^{\rho}(x)= \int g(x-y) \, d\ro(y)$. 
Indeed, in  the proof of \cite[Lemma 1.8]{safftotik}  it is shown that 
\begin{equation*}
\int_{\R^2} \int_{\R^2} - \log |x-y| \, d\rho(x) \, d\rho(y)= \frac{1}{2\pi} \int_{\R^2} \left(\int_{\R^2} \frac{1}{|x-y|} \, d \rho(y)\right)^2 \, dx,
\end{equation*}
and 
\eqref{e.hro} follows, since $\int_{\R^2} \frac{(x-y) \,d \rho(y)}{|x-y|^2} = - \nab h^{\rho}(x)$ in the distributional sense. 
This shows that $\nab h^{\rho} \in L^2(\R^2)$ and thus, since $ h^{\bar\mu}$ is $C^1_{\mathrm{loc}}(\R^2)$, we deduce that $\nab h^{\mu_0}= \nab h^{\rho}+ \nab h^{\bar \mu}$ is also in $L^2_{\loc} (\R^2, \R^2)$, as desired.
\medskip

\noindent {\bf Step 2.}
Let $v$ be admissible in \eqref{obspbrd}, i.e belong to $\mathcal{K}$, and set $\vp= v- h^{\mu_0}$.  If $\vp$ is smooth and compactly supported, then 
\be \label{vi1}
\int_{\R^d}  \nabla h^{\mu_0} \cdot \nabla (v - h^{\mu_0}) = c_d \int_{\R^d}   \vp \, d \mu_0\ge 0.
\ee 
Indeed, by \eqref{EulerLagrange}, we know that $h^{\mu_0} = \psi$ q.e. in the support of $\mu_0$ and by assumption $v \geq \psi$ q.e. in $\R^d$. Hence $\vp$ is q.e. nonnegative on the support of $\mu_0$  and the inequality \eqref{vi1} follows, since $\mu_0$ does not charge sets of zero  capacity.
To obtain \eqref{vi1} for any   $v\in \mathcal{K}$,  it suffices to show that the subset of $\mathcal{K}$ consisting of $v$'s for which $v-h^{\mu_0}$ is smooth and compactly supported is dense in $\mathcal{K}$  for the topology of $H^1$.   Fix some $v$ in the admissible set and $R>1$ such that $v-h^{\mu_0}$ is supported in $B_{R/2}$. Let $\eta_\eps$ be a standard mollifier and $\chi_R$ a smooth function supported in $B_{2R}$ with $0\le \chi_R \le 1$ and $\chi_R \equiv 1$ in $B_R$. One may check that  
$$v_{\eps,  \delta}= h^{\mu_0}+ (v-h^{\mu_0}) * \eta_\eps  \delta \chi_R$$ satisfies  that $v_{\eps,\delta}- h^{\mu_0}$ is smooth and  approximates  $v$ arbitrarily well in $H^1$ when $\delta $ is small enough, and is $\ge \psi$ when $\eps $ is chosen small enough relative to $\delta$.
This concludes the proof of \eqref{vi1}.
\medskip

\noindent{\bf Step 3.} We prove the statements about $\mu_0$.
First, since the coincidence set $\omega$ is closed, its complement is open, and the function $h^{\mu_0}$ is harmonic on that set. One can note also that in view of   
%If $\vp$ is a smooth function supported in its complement, $v=h+ t\vp$ belongs to $\mathcal{K}$ for $t$ small enough, and thus in view of \eqref{obspbrd} we must have $ t \int_{\R^d} \nab h^{\mu_0}  \cdot \nab \vp \ge 0 $. Applying this $t$ small and positive, then $t$ small and negative yields that 
%$\int \nab h^{\mu_0} \cdot \nab \vp=0$ and we deduce that $-\Delta h^{\mu_0}=0$ in the complement of $\omega$. This proves that $\supp (\mu_0)\subset \omega$.
%Shorter alternate proof: by
 \eqref{EulerLagrange} and the definition of the coincidence set $\omega$, the support of $\mu_0$ is included in $\omega$ up to a set of capacity $0$.

If we assume that  $V\in C^{1,1}_{\mathrm{loc}}$, then by Frehse's regularity theorem mentioned above, it follows that $h^{\mu_0}$ is also $C^{1,1}_{\mathrm{loc}}$. In particular $h^{\mu_0}$ is continuous, and so is $V$, so the relations \eqref{EulerLagrange} hold pointwise and not only q.e. This means that we have 
\be \label{eag} h^{\mu_0} + \hal V= c \ \text{ on} \  \omega\ee  and $\supp(\mu_0)\subset \omega$.
Also $C^{1,1}_{\mathrm{loc}}= W^{2,\infty}_{\mathrm{loc}}$ hence $\Delta h^{\mu_0}$ and $\Delta V$  both make sense as  $L^\infty_{\mathrm{loc}}$ functions, and it suffices to determine $\mu_0$ up to sets of measure $0$. We already know that $\mu_0=0$ in the complement of $\omega$ since $h^{\mu_0}$ is harmonic there, and it suffices to determine it in $\interieur{\omega}$.  But taking the Laplacian on both sides  of \eqref{eag},
since $\mu_0= - \frac{1}{c_d} \Delta h^{\mu_0}$,  one finds
$$\mu_0=  \frac{1}{ 2 c_d} \Delta V \ \text{in} \ \interieur{\omega},$$
and the results follows.

\end{proof}

By definition of $\zeta$ \eqref{defzeta}, we have that 
\be \label{zerozeta}
\{x\in \R^d | \zeta(x)=0\}= \omega. \ee
Since $\mu_0$ is a compactly supported probability measure, we have that $h^{\mu_0} = \int g(x-y)\, d\mu_0(y)$ asymptotically behaves like $g(x)$ as $|x|\to \infty$. Since $h^{\mu_0}+ \hal V = c $ q.e. in $\omega$ and since \textbf{(A2)} holds, it follows that $\omega $ must be a bounded, hence compact, set.

We have seen that $\omega$ contains, but is not always equal to,  the support of $\mu_0$.  The latter is called the {\it droplet}   in \cite{hedenmakarov2}, where similar results to this proposition are established. There, it is also discussed how $\supp(\mu_0) $ differs from $\omega$ (they are equal except at ``shallow points", cf. definition there).

\begin{rem}[Note on dimension one] For $d =1$ and  $g = - \log| \cdot |$, as seen before $h^{\mu_0}$ solves 
$$
- \Delta^{1/2} h^{\mu_0} = c_1 \mu_0,$$
and the equivalent of the obstacle problem is instead  a fractional obstacle problem for which a good theory also exists  \cite{caffss}.  One could also write the analogue of Proposition \ref{proequivpb}.
\end{rem}

We have seen how the correspondence between the minimization of $I$ and the obstacle problem thus allows, via the regularity theory of the obstacle problem, to identify the equilibrium measure in terms of $V$ when the former  is regular enough. The known techniques on the obstacle problem \cite{caff} also allow for example  to analyze the rate at which the solution leaves the obstacle (they  say it is subquadratic), which gives us information on the size of the function $\zeta$, defined in \eqref{defzeta}.

\section{Large deviations for the Coulomb gas with temperature} \label{LDPsection}
At this point, we know the $\Gamma$-convergence of $\frac{1}{n^2} H_n$ and its consequence, Theorem~\ref{thcvmini} for ground states  of the Coulomb gas. In this section, we turn for the first time to states with temperature and  derive rather easy consequences of the previous sections on the Gibbs measure, which we recall is defined by 
\be \label{gibbsagain}
d\P_{n, \beta}(x_1, \dots, x_n) := \f{1}{Z_{n, \beta}} e^{-\f{\beta}{2} H_n(x_1, \dots, x_n)} dx_1 \dots dx_n
\ee
with \be \label{defz}
Z_{n, \beta} = \int e^{-\f{\beta}{2} H_n(x_1, \dots, x_n)} dx_1 \dots dx_n.
\ee
Pushing $\mathbb{P}_{n, \beta}$ forward by the map $(x_1, \dots, x_n) \mapsto \f{1}{n} \sum_{i=1}^n \delta_{x_i}$, we may view it as a probability measure on $\mc{P}(\R^d)$, called the Gibbs measure at (inverse) temperature $\beta$. Note that considering $\beta$ that depends on $n$ can correspond to other temperature regimes (very high or very low temperatures) and is also interesting.

Formally, taking $\beta = + \infty$ above reduces to the study to the minimizers of $H_n$, whose behavior when $n \rightarrow + \infty$ we already established  (weak convergence of the empirical measure to the equilibrium measure $\mu_0$). For $\beta < + \infty$ we will see that the behavior of a “typical” configuration" $(x_1, \dots, x_n)$ under the measure $\P_{n, \beta}$ is not very different, and we can even characterize the probability of observing a “non-typical” configuration. 
The sense given to “typical” and “non-typical” will be that  of the theory of large deviations, which we first briefly introduce. For more reference, one can see the textbooks \cite{denholl,deuschel,dz}.

\begin{defini}[Rate function] \label{ratefun} Let $X$ be a metric space (or  a topological space). A rate function is a l.s.c. function $I : X \rightarrow [0, + \infty]$, it is called a “good rate function” if its sub-level sets $\{x, I(x) \leq \alpha\}$ are compact (see Remark \ref{gconvcom}). 
\end{defini}

\begin{defini}[Large deviations]\label{definiLDP} Let $\{P_n\}_n$ be a sequence of Borel probability measures on $X$ and $\{a_n\}_n$ a sequence of positive real numbers diverging to $+ \infty$. Let also $I$ be a (good) rate function on $X$. The sequence $\{P_n\}_n$ is said to satisfy a large deviation principle (LDP) at speed $a_n$ with (good) rate function $I$ if for every Borel set $E \subset X$ the following inequalities hold : 
\be 
- \inf_{\overset{\circ}{E}}I \leq \underset{n \to + \infty}{\liminf} \f{1}{a_n} \log P_n(E) \leq  \underset{n \to + \infty}{\limsup} \f{1}{a_n} \log P_n(E) \leq - \inf_{\bar{E}} I
\ee
where $\overset{\circ}{E}$ (resp. $\bar{E}$) denotes the interior (resp. the closure) of $E$ for the topology of $X$.
\end{defini}
Formally, it means that $P_n(E)$  should behave roughly like $e^{-a_n \inf_{E} I}$. The rate function $I$ is the rate of exponential decay of the probability of rare events, and the events with larger probability are the ones on which $I$ is smaller.

\begin{rem} \label{rem13} At first sight, Definition \ref{definiLDP} looks very close to the $\Gamma$-convergence 
\begin{displaymath}
\frac{\log p_n}{a_n} \overset{\Gamma}{\rightarrow} - I
\end{displaymath} where $p_n$ is the density of the measure $P_n$. However, in general there is no equivalence between the two concepts. For example, in order to estimate the quantity
\be
\log P_n(E) = \log \int_E p_n(x) dx
\ee
it is not sufficient to know the asymptotics of $p_n$, one really also needs to know the size of the volume element $\int_E dx$, which plays a large role in large deviations and usually comes up as an entropy term. There are however some rigorous connections between $\Gamma$-convergence and LDP (see\cite{mar}).
\end{rem}

We will need an additional assumption on $V$~:
\begin{description}
\item[(A4)] There exists $\alpha > 0$ such that
\be \label{conditionintegrabilite}
\int_{\R^d} e^{-\alpha V(x)} dx < + \infty.
\ee
\end{description}

%\begin{rem}
%We have already made assumption \textbf{(A4)} in order to prove the $\Gamma$-convergence result in Proposition \ref{gammaconvergenceHn}. This is stronger than the lower semi-continuity of assumption \textbf{(A1)}, and the lower bound on $V$ assumed in \textbf{(A1)} is implied by the continuity of $V$ and the growth assumption contained in \textbf{(A5)}. Assumption \textbf{(A2)} about the set $\{x\in \R^d\ |\ V(x) < + \infty\}$ having positive capacity is trivially true since $V$ cannot take the value $+\infty$ anymore. Consequently, combining \textbf{(A4)} and \textbf{(A5)} is enough to fulfill the conditions of Theorem \ref{theoFrostman}, in particular we get the existence of a unique minimizer $\mu_0$ for the functional $I$.
%\end{rem}
We will also keep the other assumptions that $V$ is continuous and \textbf{(A2)} holds, which ensure the existence of the equilibrium measure  $\mu_0$, and the $\Gamma$-convergence of $\frac{H_n}{n^2}$ to $I$.
 In dimension $d = 2$, the growth assumption \textbf{(A2)}  $\f{V}{2} - \log \rightarrow + \infty$ ensures that the condition \textbf{(A4)} is also satisfied, however in dimension $d \geq 3$ we need to assume (\ref{conditionintegrabilite}), which is a slight strengthening of \textbf{(A2)}, in order to avoid very slow divergence of $V$ such as $V(x) \sim \log \log x$ at infinity.
Note in particular that  \textbf{(A4)} ensures that the integral in \eqref{defz} is convergent, hence
 $Z_{n, \beta}$ well-defined, as soon as $n$ is large enough.

We may now state the LDP for the Gibbs measure associated to the Coulomb gas Hamiltonian. This result is due to \cite{hiaipetz} (in dimension 2),  \cite{bg} (in dimension $1$) and\cite{bz} (in dimension 2) for the particular case of a quadratic potential (and $\beta = 2$), see also \cite{berman} for results in a more general (still determinantal) setting of multidimensional complex manifolds. \cite{cgz} recently treated more general singular $g$'s and $V$'s.    We present here the proof for the Coulomb gas in any dimension and general potential, which is not more difficult. 
\begin{theo}[Large deviations principle for the Coulomb gas at speed $n^2$] \label{LDP}
\mbox{}
Assume $V$ is continuous and satisfies \textbf{(A2)} and \textbf{(A4)}. 
For any $\beta > 0$, the sequence $\{\mathbb{P}_{n, \beta}\}_n$ of probability measures on $\mc{P}(\R^d)$ satisfies a large deviations principle at speed $n^2$ with good rate function $\f{\beta}{2}\hat{I}$ where $\hat{I} = I - \min_{\mc{P}(\R^d)}  I= I - I(\mu_0)$. Moreover 
\begin{equation}
\lim_{n\to + \infty} \f{1}{n^2} \log Z_{n, \beta} = - \f{\beta}{2} I(\mu_0) = - \f{\beta}{2} \min_{\mc{P}(\R^d)} I.
\end{equation}
\end{theo}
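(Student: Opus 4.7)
My plan is to combine the $\Gamma$-convergence of Proposition \ref{gammaconvergenceHn} with sharp estimates on the partition function $Z_{n,\beta}$. Since the LDP speed is $n^2$ while the configuration-space volume element contributes only at scale $n\log n$, entropy is subdominant and the rate function is the $\Gamma$-limit $I$ shifted to have minimum zero. I will establish in parallel (i) $\tfrac{1}{n^2}\log Z_{n,\beta} \to -\tfrac{\beta}{2} I(\mu_0)$, (ii) $\limsup \tfrac{1}{n^2} \log \P_{n,\beta}(F) \le -\tfrac{\beta}{2}\inf_F \hat I$ for $F$ closed, and (iii) $\liminf \tfrac{1}{n^2} \log \P_{n,\beta}(O) \ge -\tfrac{\beta}{2}\inf_O \hat I$ for $O$ open. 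Goodness of the rate function $\tfrac{\beta}{2}\hat I$ follows from Lemma \ref{coerI} via lower semicontinuity and tightness of sub-level sets.

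For the upper bounds, I truncate the interaction exactly as in Step 1 of the proof of Proposition \ref{gammaconvergenceHn}: $H_n \ge n^2 I_M(\mu_n) - nM$, where $I_M(\mu) = \iint(g\wedge M)\,d\mu\,d\mu + \int V\,d\mu$ is weakly l.s.c.\ and monotonically increases to $I$. To keep the resulting integral finite, I assume (after a harmless constant shift) $V \ge 0$ and split $V = (1-\theta)V + \theta V$ for fixed $\theta \in (0,1)$. For $n$ large enough that $\tfrac{\beta n\theta}{2} \ge \alpha$, assumption \textbf{(A4)} bounds $\int e^{-\frac{\beta n\theta}{2}V}\,dx$ by $\int e^{-\alpha V}\,dx$ uniformly in $n$, so that
\be
\int_{\{\mu_n \in F\}} e^{-\frac{\beta}{2}H_n}\,dx \le e^{\frac{\beta n M}{2}}\, e^{-\frac{\beta n^2}{2}\inf_F I_M^{(\theta)}}\, \Bigl(\int_{\R^d} e^{-\frac{\beta n\theta}{2}V}\,dx\Bigr)^n
\ee
with $I_M^{(\theta)}(\mu) = \iint(g\wedge M)\,d\mu\,d\mu + (1-\theta)\int V\,d\mu$, the last factor being $e^{o(n^2)}$. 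Setting $F = \mc{P}(\R^d)$ gives the upper bound in (i). For general closed $F$, dividing by $Z_{n,\beta}$ and letting $M \to \infty$ then $\theta \to 0$ (using tightness of sub-level sets of $I^{(\theta)}$ to justify the interchange $\lim_M \inf_F I_M^{(\theta)} = \inf_F I$) yields (ii).

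For the matching lower bounds in (i) and (iii), I invoke the recovery construction in Step 3 of the proof of Proposition \ref{gammaconvergenceHn}. Given $\mu \in \mc{P}(\R^d)$ with $I(\mu)<\infty$ (reduced first to be compactly supported with bounded-below density, as in Step 2 there), the construction produces points $x^* = (x_1^*,\ldots,x_n^*)$ whose empirical measure converges weakly to $\mu$, with $n^{-2}H_n(x^*) \to I(\mu)$ and the minimum separation $|x_i^* - x_j^*| \ge 8\lambda n^{-1/d}$ for $i\neq j$. Fattening to $y \in \prod_i B(x_i^*, r_n)$ for any sequence $r_n \ll n^{-1/d}$ preserves the separation, and an argument analogous to \eqref{diracenboules2}, using local Lipschitz regularity of $g$ away from the diagonal together with continuity of $V$ on the common compact support, shows $|H_n(y) - H_n(x^*)| = o(n^2)$ uniformly over such $y$. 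Hence
\be
\int_{\prod_i B(x_i^*, r_n)} e^{-\frac{\beta}{2}H_n}\,dx \ge |B(0,r_n)|^n\, e^{-\frac{\beta n^2}{2}I(\mu) + o(n^2)} = e^{-\frac{\beta n^2}{2}I(\mu) + o(n^2)},
\ee
since $|B(0,r_n)|^n = e^{O(n\log n)} = e^{o(n^2)}$. Because these configurations $y$ lie in any prescribed weak neighborhood of $\mu$ for $n$ large, $\mu = \mu_0$ gives the lower bound in (i), and letting $\mu \in O$ approach the infimum of $I$ on $O$ gives (iii). The principal technical obstacle is coordinating the parameters $M$, $\theta$ and the confinement integral cleanly so that $\inf_F I_M^{(\theta)}$ can be replaced by $\inf_F I$ in the limit—this interchange is the place where tightness induced by the confining potential must be used carefully—while a secondary subtlety is verifying the perturbation bound $|H_n(y)-H_n(x^*)| = o(n^2)$, which crucially exploits both the minimum separation in $x^*$ and the local regularity of $g$ away from the diagonal.
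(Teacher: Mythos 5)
Your overall architecture is sound and your lower bound is essentially the paper's argument: the recovery configuration from Step 3 of the proof of Proposition \ref{gammaconvergenceHn}, fattened into a product of small balls around the well-separated points $x_i^*$, with the volume contributing only $e^{o(n^2)}$; taking $\mu=\mu_0$ gives the lower bound on $\log Z_{n,\beta}$ and general $\mu\in\interieur{E}$ gives the LDP lower bound. Where you diverge is the upper bound. The paper does not truncate $g$ at level $M$ nor reserve a fixed fraction $\theta V$ of the potential; it writes $\widehat H_n = H_n - \sqrt n\sum_i V(x_i)$, i.e.\ reserves only the vanishing fraction $n^{-1/2}V$ for integrability (so that $(\int e^{-\frac{\beta}{2}\sqrt n V})^n = e^{O(n)}$ by \textbf{(A4)}), and bounds $\widehat H_n \ge n^2(\inf_{\bar E} I + o(1))$ on $\{\mu_n\in \bar E\}$ directly from the already-proved $\Gamma$-liminf together with the compactness Lemma \ref{lemcpthn}. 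No auxiliary functional $I^{(\theta)}$ or $I_M^{(\theta)}$ ever appears, so no interchange of limits is needed.

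This difference is not cosmetic: the step you yourself flag as the main obstacle is a genuine gap under the stated hypotheses. For fixed $\theta>0$, the functional $I^{(\theta)}(\mu)=\iint g\,d\mu\,d\mu+(1-\theta)\int V\,d\mu$ can satisfy $\inf_{\mc P(\R^d)} I^{(\theta)}=-\infty$ when $V$ has the borderline growth allowed by \textbf{(A2)}--\textbf{(A4)} in $d=1,2$ (take $V(x)=2\log(1+|x|)+3\log\log(e+|x|)$ in $d=2$ and test with $\mu = p\,\delta_0^{\mathrm{smooth}}+(1-p)\mu_R$, $\mu_R$ uniform on an annulus of radius $R$: one finds $I^{(\theta)}(\mu)\approx(1-p)(1-p-2\theta)\log R\to-\infty$ for $p>1-2\theta$). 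Then $\inf_F I_M^{(\theta)}\le\inf_F I^{(\theta)}=-\infty$, your bound on $Z_{n,\beta}$ becomes vacuous, and the claimed interchange $\lim_M\inf_F I_M^{(\theta)}=\inf_F I$ cannot hold; the sub-level sets of $I^{(\theta)}$ are not tight in this regime, precisely because $(1-\theta)V$ fails \textbf{(A2)}. Your route is the classical one of \cite{bg,bz} and closes without difficulty when $d\ge 3$ (where $g\ge 0$ gives $I^{(\theta)}\ge(1-\theta)I$) or when $V$ grows strictly supercritically in $d=1,2$, but for the theorem in the generality stated you need the paper's device of a potential fraction that vanishes with $n$, so that the functional being lower-bounded $\Gamma$-converges to $I$ itself rather than to a weakened $I^{(\theta)}$.
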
 Here of course, the underlying topology is still that of weak convergence on $\mc{P}(\R^d)$.

The heuristic reading of the LDP is that 
\be  \label{heurisldp}
\P_{n, \beta}(E) \approx e^{-\f{\beta}{2} n^2 (\min_{E} I - \min I)}. \ee 
As a consequence, the only likely configurations of points (under $\P_{n, \beta}$) are those for  which the empirical measures $\mu_n = \frac{1}{n} \sum_{i=1}^n \delta_{x_i}$ converge to $\mu=\mu_0$, for otherwise $I(\mu) > I(\mu_0)$ by uniqueness of the minimizer of $I$, and the probability decreases exponentially fast according to \eqref{heurisldp}. Thus, $\mu_0$ is not only the limiting distribution  of  minimizers of $H_n$, but also the limiting distribution for all ``typical" (or likely) configurations, at finite temperature. 
 Moreover, we can estimate the probability under $\P_{n,\beta}$ of the “non-typical” configurations and see that it has exponential decay at speed $n^2$.
Recall that the cases of the classic random  matrix ensembles GOE, GUE and Ginibre correspond respectively to $d=1$, $\beta =1 ,2$ and $V(x)=x^2$, and $d=2$, $\beta=2$, and $V(x)=|x|^2$. 
The corresponding equilibrium measures were given in Example 2 above. As a consequence of Theorem \ref{LDP}, we have  a proof -- modulo the fact that their laws are given by \eqref{loigue}, \eqref{loiGinibre} -- that the distribution of eigenvalues (more precisely  the spectral or empirical measure) 
 has to follow  Wigner's semi-circle law  $\mu_0 =\frac{1}{2\pi} \sqrt{4-x^2} \indic_{|x|<2}$ for  the GUE and GOE, and the circle law $\mu_0=\frac{1}{\pi}\indic_{B_1} $ for the Ginibre ensemble, in the sense of the LDP (which is in fact stronger than just establishing these laws).  These are the cases originally treated in \cite{hiaipetz,bz,bg}.

In addition, knowing the partition function $Z_{n,\beta}$ is important because it gives access to many physical quantities associated to the system (for e.g. by differentiating $Z_{n,\beta}$ with respect to $\beta$ yields  the average energy, etc), see statistical mechanics textbooks such as \cite{huang}.
In particular $-\frac{2}{\beta}{\log Z_{n,\beta}}$ in our context is physically the {\it free energy} of the system, and 
the existence of a limit for $\frac{1}{n}\log Z_{n,\beta}$ (or for the free energy per particle)  is called the {\it existence of thermodynamic limit}. 
 For the one-dimensional log gas, the value of $Z_{n, \beta}$ is known explicitly for all $\beta>0$ when $V(x)=x^2$ via 
 the exact computation  of  the integral in \eqref{defz}, which uses so-called {\it Selberg integrals} (see e.g. \cite{mehta}). For more general $V$'s an expansion in $n$ to any order is also known \cite{borotguionnet}. In dimension $2$ however, no equivalent  of the 
 Selberg integral  exists and the exact value of $Z_{n,\beta}$ is only known for the Ginibre case  $\beta=2$ and $V(x)=|x|^2$ (there are a few other exceptions). In dimension $d \ge 3$, we know of no such explicit computation or expansion. 

\begin{proof}[Proof of the theorem]
The intuition behind the LDP  might be that since the sequence $\f{1}{n^2}H_n$ $\Gamma$-converges to $I$, we should have 
\begin{displaymath}
\P_{n, \beta} \approx \f{1}{Z_n} e^{-\f{\beta}{2}n^2I}
\end{displaymath} however such an approach is too naive to work directly, for the  reasons  explained in Remark \ref{rem13}. We will use the $\Gamma$-convergence result in a more precise way, also estimating the size of the appropriate sets in configuration space.
\medskip

\noindent
{\bf Step 1.} We first prove the large deviations upper bound, that is : 
\be \label{limsupLDP}
\limsup_{n\to + \infty}\f{\log \P_{n,\beta} (E)}{n^2} \leq -\inf_{\bar{E}} \hat{I}
\ee
up to an estimate on $\log Z_{n,\beta}$, then we will turn to the proof of the lower bound and get in passing the missing estimate on $\log Z_{n,\beta}$.

Let us define $\widehat{H}_n(x_1, \dots, x_n) = H_n(x_1, \dots, x_n) - \sqrt{n} \sum_{i=1}^n V(x_i)$. This amounts to changing $V$ to $(1- \f{1}{\sqrt{n}})V$ in the definition of $H_n$. Of course, $\f{\widehat{H}_n}{n^2}$ still $\Gamma$-converges to $I$, by the same proof as Proposition \ref{gammaconvergenceHn}.  We want to show that 
\be \label{gammaconvsurE}
\liminf_{n \to + \infty} \inf_{E} \f{ \widehat{H}_n}{n^2}  \geq \inf_{\bar{E}} I.
\ee
We may assume that the left-hand side is finite, otherwise there is nothing to prove. Upon passing to a subsequence, suppose that $\mu_n = \frac{1}{n}\sum_{i=1}^n \delta_{x_i}$ is a minimizer (or almost minimizer) of $\widehat{H_n}$ on $E$, more precisely satisfies
\begin{displaymath}
\underset{n \to + \infty}{\lim} \f{ \widehat{H}_n(\mu_n)}{n^2} = \liminf_{n \to + \infty} \inf_{E} \f{ \widehat{H}_n}{n^2}. 
\end{displaymath}
%\begin{claim}
By applying Lemma \ref{lemcpthn} (applied to $\widehat{H_n}$ instead of $H_n$ but this induces no change), 
the sequence $\{\mu_n\}_n$ is tight and  has a subsequence which converges to some $\mu$ in the sense of weak convergence in $\mc{P}(\R^d)$. 
Then by the $\Gamma$-convergence of $\f{\widehat{H}_n}{n^2}$ to $I$, it follows that 
\begin{displaymath}
\liminf_{n \to + \infty} \f{ \widehat{H}_n(\mu_n)}{n^2}  \geq I(\mu) \geq \inf_{\bar{E}} I
\end{displaymath}
and (\ref{gammaconvsurE}) is proven. In particular, (\ref{gammaconvsurE}) implies that $\f{1}{n^2} \widehat{H}_n  \geq \inf_{\bar{E}} I + o(1)$ on $E$, the $o(1)$ being uniform on $E$. Inserting this inequality into the definition of $\P_{n, \beta}$, one gets 
\begin{multline}
\P_{n, \beta}(E) \leq \f{1}{Z_{n, \beta}} \int e^{-\f{\beta}{2} n^2 \inf_{\bar{E}} I + o(n^2)} e^{-\f{\beta}{2} \sqrt{n} \sum_{i=1}^n V(x_i)} dx_1 \dots dx_n
\\ =  \f{1}{Z_{n, \beta}} e^{-\f{\beta}{2} n^2( \inf_{\bar{E}} I + o(1))} \left(\int_{\R^d} e^{-\f{\beta}{2} \sqrt{n} V(x)} dx\right)^n.
\end{multline}
The last integral in the right-hand side is  bounded by a constant for $n$ large enough by the assumption \textbf{(A4)}. Hence, taking the logarithm of both sides, we find 
\be \label{LDPestimee1}
\log \P_{n, \beta}(E) \leq - \log Z_{n, \beta} - \f{\beta}{2} n^2 \inf_{\bar{E}}{I} + o(n^2) + O(n)\quad \text{as} \ n\to +\infty,
\ee
and the $\limsup$ of  (\ref{LDPestimee1}) when $n \rightarrow + \infty$ gives, for each $E \subset \mc{P}(\R^d)$ 
\be \label{limsup1}
\limsup_{n \to + \infty} \f{1}{n^2} \log \P_{n, \beta}(E) \leq \limsup_{n\to +\infty}\Big( - \f{1}{n^2} \log Z_{n, \beta}\Big) - \f{\beta}{2} \inf_{\bar{E}} I.
\ee
This is not exactly the large deviations upper bound relation, because we cannot yet bound the term $- \f{1}{n^2}\log Z_{n, \beta} $. However, by taking $E$ to be the whole space $\mc{P}(\R^d)$ in (\ref{LDPestimee1}), we already get that 
\be \label{estimeeZn1}
\f{1}{n^2} \log Z_{n, \beta} \leq - \f{\beta}{2} I(\mu_0) + o(1)\quad \text{as } \ n\to +\infty.
\ee

\noindent
{\bf Step 2.} We prove the large deviations lower bound.
Let $\mu$ be in the interior $\interieur{E}$. By the $\Gamma$-$\limsup$ result of Proposition 
\ref{gammaconvergenceHn}, there exists a sequence of $n$-tuples $(x_1, \dots, x_n)$ such that the empirical measures $\f{1}{n} \sum_{i=1}^n \delta_{x_i}$ converges weakly to $\mu$ and 
\be
\limsup_{n\to + \infty} \f{H_n}{n^2}(x_1, \dots, x_n) \leq I(\mu).
\ee
Moreover, in the construction for the $\Gamma$-limsup made in Step 2 of the proof of Proposition \ref{gammaconvergenceHn}, the balls  $B(x_i, \f{4\lambda}{n^{1/d}})$ were disjoint. Consequently, if $(y_1, \dots, y_n)$ is such that for each $i = 1 \dots n$, the point $y_i$ is in the (Euclidean) ball $B(x_i, \f{\lambda}{n^{1/d}})$, then the same will hold  with $2\lambda$ instead of $4\lambda$, and one can check that the proof carries through in the same way, yielding that  the empirical measure $\f{1}{n} \sum_{i=1}^n \delta_{y_i}$ converges weakly to $\mu$ and we have 
\be \label{majorationconstruction1}
\limsup_{n + \infty} \f{H_n}{n^2}(y_1, \dots, y_n) \leq I(\mu).
\ee
This is helpful because it shows that there are configurations whose Hamiltonian is not too large (the upper bound (\ref{majorationconstruction1}) holds) and, as shown below, that these configurations occupy enough volume in phase space to contribute significantly to the partition function $Z_{n,\beta}$. 
%First, note that for $n$ large enough the empirical measures $\f{1}{n} \sum_{i=1}^n \delta_{y_i}$ are in the interior of $E$, because  $\f{1}{n} \sum_{i=1}^n \delta_{y_i}$ converges weakly to $\mu\in \interieur{E}$. 
Denote by $\mc{V}$ the set 
\be \label{defmcV}
\mc{V} := \bigcup_{\sigma \in \mathbb{S}_n} \prod_{i=1}^n B\Big(x_{\sigma(i)}, \f{\lambda}{n^{1/d}}\Big)
\ee
where $\mathbb{S}_n$ is the set of all permutations of $\{1, \cdots, n\}$. Since the Hamiltonian $H_n$ is symmetric, it is invariant under the action of permutation of points in phase space. Thus, in view of (\ref{majorationconstruction1}), we have
\be \label{majorationconstruction2}
\limsup_{n + \infty} \max_{\mc{V}} \f{H_n}{n^2} \leq I(\mu).
\ee
Moreover,  for $n$ large enough the measures $\f{1}{n} \sum_{i=1}^n \delta_{y_i}$ (with $(y_1, \dots, y_n) \in \mc{V}$) are in $E$,  because  $\f{1}{n} \sum_{i=1}^n \delta_{y_i}$ converges weakly to $\mu\in \interieur{E}$. Therefore, we may write
\begin{multline} \label{liminf1}
\P_{n, \beta}(E) \geq \f{1}{Z_{n, \beta}} \int_{\mc{V}} e^{-\f{\beta}{2} H_n(x_1,\dots, x_n)} dx_1 \dots dx_n \\
\ge \frac{|\mc{V}|}{Z_{n, \beta}} e^{-\f{\beta}{2}(  n^2 I(\mu)+ o( n^2))} \quad \text{as } \ n\to +\infty.
\end{multline}
We need to  estimate the volume $|\mc{V}|$ of $\mc{V}$, and  it is easy to see  that (for a certain constant $C$ depending only on the dimension)
\begin{displaymath} 
|\mc{V}| = n! \left(\f{C\lambda^d}{n}\right)^n 
\end{displaymath} hence
\be \label{volumemcV} \log |\mc{V}| = \log n! - n\log n + O(n) = O(n), \ee where we used that, by Stirling's formula, $\log n! = n\log n + O(n)$. 
Taking the logarithm of (\ref{liminf1}) and inserting \eqref{volumemcV}, one gets  
%\be
%\log \P_{n, \beta}(E) \geq - \log Z_{n,\beta} + \log |\mc{V}| - \f{\beta}{2}n^2(I(\mu) + o(1))
%\ee
%and, using (\ref{volumemcV}) :
\be \label{LDPestimee2}
\f{1}{n^2} \log \P_{n,\beta}(E) \geq - \f{1}{n^2} \log Z_{n,\beta} - \f{\beta}{2}I(\mu) + o(1).
\ee
Since this is true for any $\mu\in \interieur{E}$, taking the supremum with respect to $\mu \in \interieur{E}$ and the $\liminf_{n\to +\infty}$ of (\ref{LDPestimee2}) gives, for each $E \subset \mc{P}(\R^d)$ :
\be \label{liminf2}
\liminf_{n \to + \infty} \f{1}{n^2} \log \P_{n,\beta}(E) \geq \liminf_{n\to +\infty} \Big(-\f{1}{n^2} \log Z_{n,\beta} \Big)- \f{\beta}{2} \inf_{\interieur{E}} I.
\ee
Moreover, choosing $E = \mc{P}(\R^d)$ and $\mu=\mu_0$  in (\ref{LDPestimee2}), we obtain \be \label{estimeeZn2}
\f{\log Z_{n,\beta}}{n^2} \geq - \f{\beta}{2} I(\mu_0) + o(1).
\ee
Combining the two inequalities (\ref{estimeeZn1}) and (\ref{estimeeZn2}) we get the second part of the theorem, that is the thermodynamic limit at speed $n^2$ :
\be\label{limitethermo}
\lim_{n \to + \infty} \f{\log Z_{n,\beta}}{n^2} = - \f{\beta}{2} I(\mu_0) = -\f{\beta}{2} \min_{\mc{P}(\R^d)} I.
\ee
Finally, inserting (\ref{limitethermo}) into (\ref{limsup1}) and (\ref{liminf2}) completes the proof of Theorem~\ref{LDP}.
\end{proof}

%%%%%%%%%%%%%%%%%%%%%%%%%%%%%%%%%%%%%%%%%%%%%%%%%%%%%%%%%%%%%%%%%%%%%%%%%%%%%%%%%%%%%%%%%%%%%%%%%%%CHAPTER 3
%%%%%%%%%%%%%%%%%%%%%%%%%%%%%%%%%%%%%%%%%%%%%%%%%%%%%%%%%%%%%%%%%%%%%%%%%%%%%%%%%%%%%%%%%%%%%%%%%

\chapter[Splitting the Hamiltonian]{The next order behavior : splitting the Hamiltonian, first lower bound}\label{chapsplit}
In Chapter \ref{leadingorder}, we examined the leading order behavior of the Hamiltonian $H_n$ of the Coulomb gas, which can be summarized by~: 
\begin{itemize}
\item The minimal energy $\min H_n$ behaves like $n^2 \min I$, where $I$ is the mean-field limit energy, defined on the set of probability measures of $\R^d$.
\item If each $(x_1, \dots, x_n)$ minimizes $H_n$, the empirical measures $\f{1}{n} \sum_{i=1}^n \delta_{x_i}$ converge weakly to the unique minimizer $\mu_0$ of $I$, also known as Frostman's equilibrium measure, which can be characterized via an obstacle problem.
\item This behavior also  holds  when $\beta < + \infty$, except with a very small probability determined by a large deviation principle.
%\item We also know something about the optimal macroscopic distribution $\mu_0$. 
\end{itemize}
The following questions  thus arise naturally~: 
\benu
\item What lies beyond  the term $n^2 I(\mu_0)$ in the asymptotic expansion of $\min H_n$ as $n \to + \infty$ and in the expansion of the partition function $\log Z_{n,\beta} = \f{\beta}{2} n^2 I(\mu_0) + o(n^2)$ ? Is the next term of order $n$ ? 
\item What is the optimal {\it microscopic} distribution of the points ?
\eenu
To study these questions, we wish to  zoom or blow-up the configurations by the factor $n^{1/d}$ (the inverse of the typical distance between two points), so that the points are well-separated (typically with distance $O(1)$), and find a way of expanding the Hamiltonian to next order. 

Henceforth, we will keep the same notation as in Chapter \ref{leadingorder} and in this chapter  we will make the following assumptions:
$V$ is such that the unique Frostman equilibrium  measure $\mu_0$ exists (for example, as we saw it suffices to require that $V$ is continuous and satisfies \textbf{(A3)}), and $\mu_0$ is absolutely continuous with respect to the Lebesgue measure, with a density $m_0(x)$  which is bounded above, or in $L^\infty(\R^d)$. By abuse of notation, we will often write $\mu_0(x)$ instead of  $m_0(x)$.

The results we present originate in \cite{ss1,rs}, but we follow in large part the simplified approach of \cite{ps}, which is also valid for more general interactions than Coulomb.

%In this chapter we use only the existence of $\mu_0$, and the fact that it has an $L^\infty$ density.

\section{Expanding the Hamiltonian} \label{sectionaref} The ``splitting" of the Hamiltonian  consists in an exact formula that separates the leading ($n^2$) order term  in $H_n$ from next order terms, by using the quadratic nature of the Hamiltonian. 

 The starting point is the following : given a configuration of points $(x_1, \dots , x_n) \in (\R^d)^n$,  let us set $\nu_n = \sum_{i=1}^n \delta_{x_i}$ (which is not a probability measure anymore, but merely a purely atomic Radon measure). Since we expect $\f{1}{n}\nu_n$ to converge to $\mu_0$, let us expand $\nu_n$  as
\begin{equation} \label{splitting1}
\nu_n = n\mu_0 + (\nu_n - n\mu_0).
\end{equation}
The first term in the right-hand side gives the leading order, and the second one describes the fluctuation of $\nu_n$ around it.  Note that in contrast to the equilibrium measure $\mu_0$ assumed to be a nice measure with a bounded density, the fluctuation $\nu_n - n\mu_0$ is still singular, with an atom at each point of the configuration.

Inserting the splitting \eqref{splitting1} into the definition of $H_n$, one finds  
\begin{eqnarray}
\nonumber H_n(x_1, \dots, x_n)  & = &  \sum_{i \neq j} g(x_i- x_j) + n \sum_{i=1}^n V(x_i)  \\ 
\nonumber
& = & \iint_{\triangle^c} g(x-y) d\nu_n(x) d\nu_n(y) + n \int V d\nu_n \\
\nonumber 
& = &  n^2 \iint_{\triangle^c} g(x-y) d\mu_0(x) d\mu_0(y) + n^2 \int V d\mu_0 
 % \mbox{     } \right\rbrace \mbox{ terms in } \mu_0 
\\ 
\nonumber
 %\left. 
 & + &  2n \iint_{\triangle^c} g(x-y) d\mu_0(x) d(\nu_n - n \mu_0)(y)
 % \mbox{     } \right\rbrace \mbox{ term in } n\mu_0 \times (\nu_n - n \mu_0) 
+ n \int V d(\nu_n - n \mu_0) \\ 
%\left. 
\label{finh}
& + & \iint_{\triangle^c} g(x-y) d(\nu_n - n\mu_0)(x)d(\nu_n - n\mu_0)(y).
%\mbox{     } \right\rbrace \mbox{ terms in } d(\nu_n - n\mu_0)(x)d(\nu_n - n\mu_0)(y).
\end{eqnarray}
We now recall that $\zeta$ was defined in \eqref{defzeta} by
%\begin{eqnarray}
%h^{\mu_0} + \f{V}{2}= c & \mu_0 \textrm{-q.e. in } \Sigma, \\ 
\be
\zeta = h^{\mu_0} + \f{V}{2} - c = \int g(x-y)\, d\mu_0(y)  + \f{V}{2} - c\ee
and that $\zeta=0$  in $\Sigma$, the support of $\mu_0$  (with the assumptions we made, one can check that  $\zeta$ is continuous, so the q.e. relation can be upgraded to everywhere).

With the help of this we may rewrite the medium line in the right-hand side of \eqref{finh} as  
\begin{multline*}
2n \iint_{\triangle^c} g(x-y) d\mu_0(x) d(\nu_n - n \mu_0)(y) + n \int V d(\nu_n - n \mu_0) \\
 = 2n  \int (h^{\mu_0} + \f{V}{2}) d(\nu_n - n\mu_0) = 2n  \int (\zeta + c) d(\nu_n - n\mu_0)  \\
 = 2n \int \zeta d\nu_n - 2n^2 \int \zeta d\mu_0 + 2 n c \int  d(\nu_n - n\mu_0) = 2n \int \zeta d\nu_n.
\end{multline*}
The last equality is due to the facts that  $\zeta \equiv 0$ on the support of $\mu_0$ and that  $\nu_n$ and $n \mu_0$ have the same mass $n$. We also have to notice that since $\mu_0$ has a $L^{\infty}$ density with respect to the Lebesgue measure, it does not charge the diagonal $\triangle$ (whose Lebesgue measure is zero) and we can include it back in the domain of integration.
By that same argument, one may recognize in the first line of the right-hand side of \eqref{finh}, the quantity $n^2 I(\mu_0)$, cf. \eqref{definitionI}.

We may thus rewrite \eqref{finh} as
\begin{multline} \label{splitting2}
H_n(x_1, \dots, x_n) = n^2 I(\mu_0) + 2n \sum_{i=1}^n \zeta(x_i) \\ + \iint_{\triangle^c} g(x-y) d(\nu_n - n\mu_0)(x)d(\nu_n - n\mu_0)(y).
\end{multline}
Note that this is an exact relation, valid for any configuration of points.
The first term in the right-hand side gives the leading order, i.e.  the energy of the equilibrium measure. In the second term, $\zeta$  plays the role  of an effective confining potential, which is active only outside of $\Sigma $ (recall $\zeta \ge 0$, and $\zeta=0$ in $\Sigma$). 
The last term in the right-hand side  is the most interesting, it 
measures the discrepancy between the diffuse equilibrium measure $\mu_0$ and the discrete empirical measure  $\f{1}{n} \nu_n$. It is an electrostatic (Coulomb) interaction between a ``negatively charged background" $-n\mu_0$ and  the $n$ positive discrete charges at the points $x_1, \dots , x_n$. In the sequel, we will express this energy term in another fashion, and show that it is indeed a lower-order term.

To go further, we  introduce  $h_n$, the potential generated by the distribution of charges $\nu_n - n\mu_0$, defined by
\be \label{defhn}
h_n := g * (\nu_n - n\mu_0) = \int g(x-y) d(\nu_n - n\mu_0)(y).\ee
In dimension  $d \ge 2$ this is equivalent to
 %\\ \textrm{ or equivalently
%\footnote{Recall that $g$ is such that $-\Delta g = c_d \delta_0$.} 
\be \label{defhnequiv}
 h_n = - c_d \Delta^{-1}(\nu_n - n\mu_0),\ee and in dimension $1$ to 
 \be\label{defhn1d}
 h_n = - c_1 \Delta^{-1/2} (\nu_n-n\mu_0).\ee
 
Note that $h_n$ decays at infinity, because the charge distribution $\nu_n - n\mu_0$ is compactly supported and has zero total charge, hence, when seen from infinity behaves like a dipole. More precisely, $h_n$ decays like $\nabla g$ at infinity, that is $O(\f{1}{r^{d-1}})$ and its gradient $\nabla h_n$ decays like the second derivative $D^2g$, that is $O(\f{1}{r^{d}})$ (in dimension $1$, like $1/r$ and $1/r^2$).
Formally, using Green's formula (or Stokes' theorem) and the definitions, one would like to say that, at least in dimension $d \ge 2$,
\begin{multline} \label{formalcomputation}
\iint_{\triangle^c} g(x-y) d(\nu_n - n\mu_0)(x)d(\nu_n - n\mu_0)(y) = \int h_n d(\nu_n - n\mu_0)\\ = \int h_n (- \frac{1}{c_d} \Delta h_n) \approx \frac{1}{c_d} \int |\nabla h_n|^2 \end{multline}This is the place where we really use for the first time in a crucial manner the Coulombic  nature of the interaction kernel $g$.
Such a computation allows to replace the sum of pairwise interactions of all the charges and ``background" by an integral (extensive) quantity, which is easier to handle in some sense. 
However, \eqref{formalcomputation} does not make sense because  $\nabla h_n$ fails to be in $L^2$ due to the presence of Dirac masses.  Indeed, near each atom $x_i$ of $\nu_n$, the vector-field $\nabla h_n$ behaves like $\nabla g$ and the integrals $\int_{B(0,\eta)} |\nabla g|^2$  are divergent in all dimensions.   Another way to see this is that the Dirac masses charge the diagonal $\triangle$ and so $\triangle^c$ cannot be reduced to the full space.

\section{The truncation procedure and splitting formula}

In this section we give a rigorous meaning to the formal computation \eqref{formalcomputation} and thus to \eqref{splitting2}.  We restrict for now to the case of $d\ge 2$. The case $d=1$ will be dealt with in the next section.

Given $\eta>0$, we truncate the potential $g$ by setting 
\begin{equation}\label{feta}
f_\eta(x)= \max(g(x)- g(\eta),0).\end{equation}
We note that $\min(g, g(\eta))  = g -f_\eta$, 
and observe that $f_\eta$ solves 
\begin{equation}\label{divf}-\Delta f_\eta = c_d (\delta_0- \delta_0^{(\eta)})\end{equation}
where $\delta_0^{(\eta)}$ denotes the uniform measure of mass $1$ on $\pa B(0, \eta)$. Indeed, it is clear that $-\Delta f_\eta - c_d \delta_0$ can only be supported on $\pa B(0, \eta)$ and has to be uniform since $f_\eta$ is radial. It then suffices to verify that it is a positive measure of mass $1$, which is easily done by integrating against a test-function.

By analogy with \eqref{defhn}--\eqref{defhnequiv}, we may then
define the truncated potential 
\begin{equation}\label{tp}h_{n,\eta}(x) = h_n(x) - \sum_{i=1}^n f_\eta(x-x_i)\end{equation}
and note that it solves 
\begin{equation}\label{hneta}
-\Delta h_{n,\eta}= c_d(\sum_{i=1}^n \delta_{x_i}^{(\eta)} - n\muv).\end{equation}
This way, the truncation of the potential is simply equivalent to ``smearing out" each Dirac charge uniformly onto the sphere of radius $\eta$ centered at the  charge.
 
 We then have the right quantity to make sense of \eqref{formalcomputation}, as shown by the following exact formula~:
\begin{lemme}\label{lem31} Let $d \ge 2$, and $\mu_0$ be a measure with a bounded density. For any configuration of distinct  points $(x_1, \dots, x_n)\in (\R^d)^n$, $h_{n,\eta}$ being defined in \eqref{tp}, the following identity holds~:
\begin{equation} \label{formula}
\iint_{\triangle^c} g(x-y) d(\nu_n - n\muv)(x)d(\nu_n - n\muv)(y)=  \lim_{\eta\to 0}\left(\frac{1}{c_d}\int_{\R^{d}} |\nab h_{n,\eta}|^2 -  n \g(\eta)\right).\end{equation}
\end{lemme}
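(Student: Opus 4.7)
The plan is to rewrite the $L^2$ norm via Green's formula applied to $h_{n,\eta}$, whose construction precisely absorbs the Dirac singularities of $h_n$ into uniform measures on the spheres $\pa B(x_i,\eta)$, to evaluate the resulting integral by the mean value property of $g$, and to send $\eta\to 0$. Throughout, I fix $\eta < \tfrac{1}{2}\min_{i\ne j}|x_i-x_j|$ so that the balls $B(x_i,\eta)$ are disjoint; then inside $B(x_i,\eta)$ one has $h_{n,\eta}(x)=g(\eta)+\sum_{j\ne i}g(x-x_j)-nh^{\mu_0}(x)$, so $h_{n,\eta}$ is continuous on $\R^d$, smooth off $\bigcup_i \pa B(x_i,\eta)$, and Lipschitz across those spheres.

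First I would check that $\nab h_{n,\eta}\in L^2(\R^d)$ and justify an integration by parts globally on $\R^d$. Local square-integrability is immediate from the above. The neutrality $\int d(\nu_n-n\muv)=0$, combined with the compact support of $\nu_n-n\muv$, yields via a standard multipole expansion that $h_n(x)=O(|x|^{1-d})$ and $\nab h_n(x)=O(|x|^{-d})$ (and the same for $h_{n,\eta}$ far from the points), whence $|\nab h_{n,\eta}|^2=O(|x|^{-2d})$ is integrable at infinity for any $d\ge 2$. Applying Green's formula on $B_R$ and letting $R\to\infty$ (the boundary term is $O(R^{-d})$), together with \eqref{hneta}, gives
\begin{equation*}
\frac{1}{c_d}\int_{\R^d}|\nab h_{n,\eta}|^2 = \sum_{i=1}^n \fint_{\pa B(x_i,\eta)} h_{n,\eta} - n\int_{\R^d} h_{n,\eta}\,d\muv.
\end{equation*}

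Next I would evaluate each piece. On $\pa B(x_i,\eta)$ every truncation $f_\eta(\cdot - x_j)$ vanishes (including $j=i$, since $f_\eta\equiv 0$ on $\pa B(0,\eta)$), so $h_{n,\eta}=h_n$ there; the mean value property applied to each $g(\cdot - x_j)$ for $j\ne i$ (harmonic in $B(x_i,\eta)$) then yields
\begin{equation*}
\fint_{\pa B(x_i,\eta)} h_{n,\eta} = g(\eta) + \sum_{j\ne i}g(x_i-x_j) - n\fint_{\pa B(x_i,\eta)}h^{\mu_0}.
\end{equation*}
As $\eta\to 0$, continuity of $h^{\mu_0}$ (which follows from $\muv$ having a bounded, compactly supported density) gives $\fint_{\pa B(x_i,\eta)}h^{\mu_0}\to h^{\mu_0}(x_i)$, while $\int h_{n,\eta}\,d\muv \to \int h_n\,d\muv = \sum_i h^{\mu_0}(x_i) - n\iint g(x-y)\,d\muv(x)\,d\muv(y)$, the correction $\int f_\eta(\cdot-x_i)\,d\muv$ being bounded by $\|m_0\|_\infty\int_{B(0,\eta)}(g-g(\eta))\to 0$. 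Summing over $i$, subtracting $ng(\eta)$, and rearranging identifies the limit with $\sum_{i\ne j}g(x_i-x_j)-2n\sum_i h^{\mu_0}(x_i)+n^2\iint g\,d\muv\,d\muv$, which is exactly the expansion of the left-hand side of \eqref{formula} (since $\muv$ is absolutely continuous, the diagonal carries no mass of $\muv\otimes\muv$ or $\nu_n\otimes\muv$).

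The main obstacle is not any individual estimate but the bookkeeping that exposes the precise cancellation~: the $n$ contributions of $g(\eta)$ arising from the self-interaction averages $\fint_{\pa B(x_i,\eta)} g(\cdot - x_i)=g(\eta)$ sum to exactly $ng(\eta)$, which is what makes the subtraction on the right-hand side of \eqref{formula} produce a finite limit. The two substantive prerequisites feeding this bookkeeping are (i) charge neutrality, which simultaneously legitimizes the integration by parts at infinity and lets us pass between $\nu_n$ and $n\muv$ modulo the diagonal, and (ii) continuity of $h^{\mu_0}$, which allows the spherical averages to converge to pointwise values.
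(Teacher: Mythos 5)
Your proposal is correct and follows essentially the same route as the paper's proof: Green's formula on $B_R$ with the boundary term killed by the decay coming from charge neutrality, the observation that $f_\eta(\cdot-x_i)$ vanishes on the spheres $\pa B(x_i,\eta)$ so the spherical averages of $h_{n,\eta}$ can be evaluated (your use of the exact mean value property for the harmonic pieces $g(\cdot-x_j)$, $j\neq i$, is a slightly cleaner variant of the paper's ``$h_n^i$ is regular near $x_i$'' argument), and the same $O(\eta^2|\log\eta|)$ control of $\int f_\eta\,d\mu_0$. The final bookkeeping identifying the limit with the off-diagonal double integral also matches the paper's.
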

\begin{proof}
Let us  compute the right-hand side of this relation. Let us choose $R$ so that all the points are in $B(0,R-1)$ in $\mr^{d}$, and $\eta$ small enough that $2\eta<\min_{i\neq j}|x_i-x_j|$.    We note that $f_\eta$ vanishes outside of $B(0,\eta)$ thus  $h_{n,\eta}= h_n$ at distance $\ge \eta$ from the points.  By Green's formula and \eqref{tp}, we thus  have
\begin{multline}\label{greensplit1}
\int_{B_R} |\nabla \hne |^2 = \int_{\pa B_R} h_n \frac{\pa  h_n}{\pa \nu} -
\int_{B_R } \hne    \Delta \hne
  \\= \int_{\pa B_R} h_n \frac{\pa  h_n}{\pa \nu} +
c_d \int_{B_R } \hne    \left(\sum_{i=1}^n \delta_{x_i}^{(\eta)} -n{\muv}\right).
\end{multline}
In view of the decay of $h_n$ and $\nab h_n$ at infinity mentioned above, the boundary integral tends to $0$ as  $R\to \infty$. We thus find
\begin{multline}\label{greensplit2}
\int_{\R^{d} }  |\nabla \hne |^2 =
c_d \int_{\R^{d} }  \hne    \left(\sum_{i=1}^n \delta_{x_i}^{(\eta)} -n\muv\right)\\  =
c_d \int_{\R^{d} }  \left(h_n - \sum_{i=1}^n f_\eta(x-x_i)\right)   \left(\sum_{i=1}^n \delta_{x_i}^{(\eta)} -n\muv\right)  .\end{multline}
 Since $f_\eta(x-x_i) =0 $ on $\pa B(x_i, \eta)=\supp (\delta_{x_i}^{(\eta)})$ and outside of $B(x_i, \eta)$, and since the balls $B(x_i, \eta)$ are disjoint by choice of $\eta$, we may write 
$$
\int_{\R^{d} }   |\nabla \hne |^2 =c_d  \int_{\R^{d}}   h_n\left(  \sum_{i=1}^n \delta_{x_i}^{(\eta)}  - n\muv\right) -
n c_d  \int_{\R^{d}} \sum_{i=1}^n f_\eta(x-x_i) \muv.$$
Let us now use (temporarily) the notation $h^i_n(x) = h_n(x) - \g(x -x_i)$ (for the potential generated by the distribution  bereft of the point $x_i$).
The function $h_n^i $ is regular near $x_i$, hence $\int h_n^i \delta_{x_i}^{(\eta)} \to h_n^i(x_i)$ as $\eta \to 0$.
It follows that 
\begin{multline}\label{estdistinct}
c_d \int_{\R^{d}}  h_n\left(  \sum_{i=1}^n \delta_{x_i}^{(\eta)}  -n \muv\right)-n c_d \int_{\R^{d}} \sum_{i=1}^n f_\eta(x-x_i) \muv
 \\= nc_d  \g(\eta)  + c_d\sum_{i=1}^n h_n^i(x_i)- n c_d \int_{\R^{d}}  h_n\muv  + O(n^2 \|\muv\|_{L^\infty} ) \int_{B(0, \eta)} |f_\eta| +o_\eta(1).
\end{multline}
We can check that    \begin{equation}\label{intfeta}
\int_{B(0, \eta)} |f_\eta|\le C \max(\eta^2, \eta^2 \log \eta) \end{equation}
 according to whether $d\ge 3$ or $d=2$, because $f_\eta$ is bounded by $2|g|$. Thus
\begin{equation}\label{Wsplit11}\lim_{\eta \to 0} \frac{1}{c_d}  \int_{\R^{d} }   |\nabla \hne |^2- n\g(\eta)  = \sum_{i=1}^n h_n^i(x_i)- n \int_{\R^{d}} h_n{\muv}.\end{equation}
 Now, from the definitions it is easily seen that 
\begin{equation}
h^i_n(x_i) = \int_{\R^{d}\backslash \{x_i\}} \g(x_i-y) d(\nu_n - n\muv)(y),
\end{equation}
from which it follows that 
\begin{multline*}
\iint_{\triangle^c} \g(x-y) d(\nu_n - n\muv)(x)d(\nu_n - n\muv)(y)
\\ = \sum_{i=1}^n  \int_{\R^d \backslash \{x_i\}}  \g(x_i-y) d(\nu_n - n\muv)(y) - n \int_{\R^{d}} h_n \, \muv
= \sum_{i=1}^n h_n^i(x_i) - n \int_{\R^{d}} h_n \muv.
\end{multline*} In view of  \eqref{Wsplit11}, we conclude that the formula holds. \end{proof}

The quantity  appearing in the right-hand side of \eqref{formula}  thus provides  a way of computing $\int |\nab h|^2 $ in a ``renormalized" fashion, by truncating the divergent parts of $\nabla h$  and subtracting off from $\int|\nab h|^2$ the expected divergence  $c_d g(\eta)$ corresponding to each point.  
We may write  that near each point $x_i$, we have
\begin{displaymath}
h (x)=   c_d g(x-x_i) + \vp(x) \textrm{ with } \vp \textrm{ of class } C^1.
\end{displaymath}
Thus  $\int_{B(p,\eta)^c}  |\nabla h|^2 \approx c_d^2 \int_{B(p,\eta)^c}  |\nabla g|^2 \approx c_d g(\eta)$ as $\eta \to 0$,  which is the contribution that appears in the second term of the right-hand side in \eqref{formula}. We will sometimes call this quantity a ``precursor to the renormalized energy" which will itself be defined in Chapter \ref{chapdefw}.
The name {\it renormalized energy} originates in \cite{bbh} where a related  way of computing such integrals was first introduced in the context of two-dimensional Ginzburg-Landau vortices, based on cutting out holes instead of truncating. More precisely,  if  one uses the way of ``renormalizing"  of \cite{bbh} as was originally done in \cite{ss1}, then one writes instead of \eqref{formula}
\begin{multline}\label{formulabbh}
\iint_{\triangle^c} g(x-y) d(\nu_n - n\muv)(x)d(\nu_n - n\muv)(y)
\\=
 \lim_{\eta \rightarrow 0} \left( \frac{1}{c_d} \int_{\R^d \backslash \cup_{i=1}^n B(x_i, \eta)}  |\nabla h_n|^2 - n  g(\eta)\right).
\end{multline} The right-hand side is thus equal to that of \eqref{formula}.
 
\begin{rem}\label{mu0faible}
When examining the proof, we see that one does not really need  $\mu_0$ to have bounded density, but only that $\int_{B(x,\eta)}  g(x-y)\, d\mu_0(y)\to 0$ as $\eta \to 0$. A sufficient condition is for example that  for any $x$, $\mu_0(B(x,\eta))$ grows no faster than $\eta^\alpha$ with $\alpha>  d-2$ so that $g(\eta) \eta^\alpha \to 0$ as $\eta\to 0$. This  encompasses a whole class of singular measures, such as measures supported on a rectifiable $(d-1)$-dimensional set. \end{rem}
Before stating the final formula, we want to blow-up  at a scale where the points  are well-separated. The convergence of the empirical measure of the $n$ points to a fixed compactly supported measure suggests that there are typically $n$ points in a bounded domain, so that the distance between two points should be of order $n^{-1/d}$. To get a $O(1)$ distance, one thus  has to change the scale by a factor $n^{1/d}$. We will use a $'$ (prime) symbol to denote the blown-up quantities :
\bm
x' := n^{1/d} x \mbox{ for all } x \in \R^d \mbox{, in particular } x'_i = n^{1/d} x_i.
\em 
Let us also define :
\be\label{hn'}
h_n' := g * \Big(\sum_{i=1}^n \delta_{x'_i} - \mu_0'\Big) \mbox{, so that } h_n'(x') =n^{2/d-1} h_n(x)
\ee
where $\mu_0'$ is the blown-up measure associated to $\mu_0$ :  
\bm
d\mu_0'(x') = m_0(x)\, dx' = m_0(x' n^{-1/d}) \, dx'.
\em
Note that  $\mu_0'$ has total mass $n$.
The function  $h_n'$  is the potential generated by the blown-up distribution $\sum_{i=1}^n \delta_{x'_i} - \mu_0'$   of support $\Sigma' = n^{1/d} \Sigma$. 
As above, we define the truncated version of this potential
\begin{equation}\label{1to1}
h_{n,\eta}'= h_n'-\sum_{i=1}^n f_\eta(x-x_i'),\end{equation}
and of course we have
\begin{equation}\label{eqhnp}
-\Delta h_n' = c_d \( \sum_{i=1}^n \delta_{x_i'} - \muv'\)\qquad - \Delta h_{n,\eta}' = c_d \(\sum_{i=1}^n \delta_{x_i'}^{(\eta)}- \muv'\).\end{equation}

The effect of the rescaling can be seen by 
a change of variables,  indeed
\bm
\int_{\R^d} |\nabla h_{n, n^{-1/d}\eta }|^2 = n^{1-2/d} \int_{\R^d}|\nabla h'_{n, \eta}|^2,
\em so that we have 
\begin{multline}\label{320} \lim_{\eta\to 0}\left(\int_{\R^{d}} |\nab \hne|^2 -  nc_d  \g(\eta)\right)
= \lim_{\eta \to 0} \left(\int_{\R^{d}} |\nab h_{n,  n^{-1/d} \eta}|^2 - nc_d \g(n^{-1/d} \eta) \right)
\\ = \lim_{\eta \to 0} \left( n^{1-2/d} \int_{\R^{d}} |\nab h'_{n,   \eta}|^2 -  nc_d \g(n^{-1/d} \eta)   \right) 
\end{multline}
For $d = 2$, one has $g(\eta n^{-1/d} ) = \f{1}{2} \log n + g(\eta)$ whereas for $d \geq 3$, $g(\eta n^{-1/d} ) = n^{1-2/d} g(\eta)$. Consequently we get  the following formulae for the  change of scale on the expression of  the precursor to the renormalized energy in \eqref{formula}~:  
\begin{multline} \label{identificationenergiechgt}
\lim_{\eta\to 0}\left(\int_{\R^{d}} |\nab h_{n,\eta}|^2 -  n c_d \g(\eta)\right) 
\\=   \lim_{\eta\to 0}\left(\int_{\R^{d}} |\nab h_{n,\eta}'|^2 -  nc_d \g(\eta)\right) 
-\(  \f{c_2}{2}n \log n\)\indic_{d=2} .\end{multline}
Equation \eqref{splitting2} and  Lemma \ref{lem31}, together with (\ref{identificationenergiechgt}) yield  the first important conclusion~:
\begin{prop}[Splitting formula] \label{developpementasympto}
Let $V$ be such that a unique equilibrium measure $\mu_0$ exists, and $\mu_0$ has an $L^\infty$ density (or is as in Remark \ref{mu0faible}).
For any $n\ge 1$, for  any configuration of distinct points  $x_1, \dots, x_n$ in $\R^d$, $d\ge 2$, the following identity holds : 
\bm\boxed{H_n(x_1, \dots, x_n) = n^2I(\mu_0) + 2n \sum_{i=1}^n \zeta(x_i) +
 \(  - \f{n}{2}\log n\) \indic_{d=2}  +  \f{n^{1-2/d}}{c_d} \mathcal H_n(x_1', \dots, x_n') , } 
\em
where  $\zeta$ is as in \eqref{defzeta}, $x_i'=n^{1/d} x_i$, and $\mathcal H_n$ is defined with the help of  \eqref{hn'}--\eqref{1to1} by 
\begin{equation}\label{Wncal}
\mathcal H_n(x_1', \dots, x_n')= \lim_{\eta \to 0} \left(  \int_{\R^{d}} |\nab h'_{n,   \eta}|^2 -  nc_d \g( \eta)\right) .\end{equation}
\end{prop}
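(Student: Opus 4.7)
The plan is to simply combine three ingredients that are already laid out in the preceding discussion: the algebraic splitting identity \eqref{splitting2}, the renormalization lemma (Lemma \ref{lem31}) for the interaction term, and the change-of-scale computation \eqref{identificationenergiechgt}. Since the assumptions on $\mu_0$ match those of Lemma~\ref{lem31}, nothing new has to be proved; everything reduces to an accounting exercise.

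First I would recall \eqref{splitting2}, which gives the exact pointwise identity
\begin{equation*}
H_n(x_1,\dots,x_n) = n^2 I(\mu_0) + 2n\sum_{i=1}^n \zeta(x_i) + \iint_{\triangle^c} g(x-y)\,d(\nu_n - n\mu_0)(x)\,d(\nu_n-n\mu_0)(y).
\end{equation*}
The derivation of this relation used only the definition of $\zeta$, the fact that $\zeta\equiv 0$ on $\supp\mu_0$, that $\mu_0$ and $\tfrac{1}{n}\nu_n$ have the same total mass, and that $\mu_0$ does not charge the diagonal (which is guaranteed by the $L^\infty$ hypothesis, or more generally by the condition of Remark~\ref{mu0faible}). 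So nothing more is required for this piece.

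Next I would invoke Lemma~\ref{lem31}, whose formula \eqref{formula} rewrites the double integral above as
\begin{equation*}
\iint_{\triangle^c} g(x-y)\,d(\nu_n-n\mu_0)(x)\,d(\nu_n-n\mu_0)(y) = \lim_{\eta\to 0}\left(\frac{1}{c_d}\int_{\R^d}|\nabla h_{n,\eta}|^2 - n g(\eta)\right).
\end{equation*}
Plugging this into the previous identity already produces a splitting formula in the unrescaled variables; the only remaining task is to switch to the blown-up coordinates $x_i' = n^{1/d}x_i$, where the points are separated on scale $O(1)$ and the $\mathcal{H}_n$ functional lives.

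The last step is the rescaling \eqref{identificationenergiechgt}, obtained by the change of variables $y = n^{1/d}x$ together with the explicit behavior of $g$ under dilation: in $d\ge 3$ one has $g(\eta n^{-1/d}) = n^{1-2/d}g(\eta)$, so the factors conspire to produce exactly $\frac{n^{1-2/d}}{c_d}\mathcal{H}_n(x_1',\dots,x_n')$; in $d=2$ one has instead $g(\eta n^{-1/d}) = \tfrac{1}{2}\log n + g(\eta)$, which produces the extra $-\tfrac{n}{2}\log n$ correction recorded by the indicator $\mathbf{1}_{d=2}$. Substituting \eqref{identificationenergiechgt} into the right-hand side of Lemma~\ref{lem31} and collecting the $n$-dependent constants yields the boxed identity. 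The only genuinely delicate step is the renormalization in Lemma~\ref{lem31}, but by hypothesis we are allowed to use it; everything else is bookkeeping of scaling factors, and the dimension-two logarithmic correction is the one place to be careful not to drop a term.
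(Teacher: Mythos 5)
Your proposal is correct and follows exactly the route the paper takes: the boxed identity is obtained by inserting Lemma \ref{lem31} into \eqref{splitting2} and then applying the change-of-scale formula \eqref{identificationenergiechgt}, with the $-\tfrac{n}{2}\log n$ term arising precisely from $g(\eta n^{-1/2})=\tfrac12\log n+g(\eta)$ in dimension $2$. Nothing is missing.
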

We emphasize that here again there is no error term, it is an equality for every $n$ and every configuration.
This formula was first established in \cite{ss1} in dimension $d=2$, and generalized (with the same proof) to higher dimension in \cite{rs}.

Since $\zeta$ plays no other role than confining the points to $\Sigma$ (the support of $\mu_0$), this formula shows that it suffices to analyze the behavior of $\mathcal H_n(x_1', \dots, x_n').$ 
We will show later that for good configurations (those that do not have too much energy)  $ \mathcal H_n(x_1', \dots, x_n')  $ is proportional to $n$, the number of points.  We have thus separated orders in the expansion of $H_n$ : after the leading order term $n^2 I(\mu_0)$ and  an exceptional term $-\hal n \log n$ in dimension $2$, a  next term of order $n^{2-2/d}$ appears.

%Proposition \ref{developpementasympto} indicates that the following orders appear in the asymptotics of $H_n$ :  \textcolor{red}{il faut parler de $\zeta$}
%\bm
%\begin{array}{ccccc}
%n^2 & n \log n & n & & \mbox{for } d = 2 \\
%n^2 &  &  & n^{2-2/d} & \mbox{for } d \geq 3
%\end{array}.
%\em
The analysis of $ \mathcal H_n(x_1', \dots, x_n')$ as $n$ goes to infinity will be the object of Chapter~\ref{derivingw}, but prior to that we will introduce its limit $\mathcal W$ in Chapter~\ref{chapdefw}.

\section{The case $d=1$} 
With our choice of $g(x)=-\log |x|$ in dimension $1$, $g$ is no longer the Coulomb kernel, so the formal computation \eqref{formalcomputation} does not work. 
However $g$ is the kernel of the half-Laplacian, and it is known that the half-Laplacian can be made to correspond to the Laplacian by adding one extra space dimension.
In other words, we should imbed the one-dimensional space $\R$ into the two-dimensional space $\R^2$ and consider the harmonic extension of $h_n$, defined in \eqref{defhn}, to the whole plane. That extension will solve an appropriate Laplace equation, and we will reduce dimension $1$ to a special case of dimension $2$. This is the approach  proposed in \cite{ss2}. 

Let us now get more specific.
 Let us consider $\mu_0$ the one-dimensional equilibrium measure associated to $I$ (with $g=-\log$) as in Theorem \ref{theoFrostman}, and assume it has an $L^\infty$ density $m_0(x)$ with respect to the (one-dimensional) Lebesgue measure (this happens for example when $V(x) = x^2$, the corresponding equilibrium measure being the semi-circle law, cf. Example 2 in Chapter \ref{leadingorder}).  We may now view $\mu_0$ as a singular measure on $\R^2$ by setting 
   \be \label{defembedding}
d\mu_0(x,y) = m_0(x) \delta_{\R}
\ee where $\delta_{\R}$ is the measure of length on the real axis. More precisely, we define $\delta_{\R}$  by its action against smooth test functions $\varphi\in C^0_c(\R^2)$ by  
\be \label{deltar}
\lg \delta_{\R}, \varphi \rg := \int_{R} \varphi(x,0) dx
\ee
which makes  $\delta_{\R}$  a Radon measure on $\R^2$, supported on the real axis. 
Given $x_1, \dots, x_n \in \R$, let us also identify them with  the  points $(x_1, 0), \dots, (x_n,0)$ on the real axis of $\R^2$. 
We may then define the potential $h_n$ and the truncated potential $h_{n,\eta}$  on $\R^2$ by 
\bm
h_n = g * \left(\sum_{i=1}^n \delta_{(x_i,0)} - n m_0 \delta_{\R}\right)\qquad h_{n,\eta} = g * \left(\sum_{i=1}^n \delta_{(x_i,0)}^{(\eta)} - n m_0 \delta_{\R}\right).
\em 
with $g(x)=- \log |x|$ in $\R^2$, which is nothing else than the harmonic extension to $\R^2$, away from the real axis,  of the potential $h_n$ defined in dimension $1$ in \eqref{defhn}. This closely related to  the {\it Stieltjes transform}, a commonly used object in Random Matrix Theory.

Viewed as a function in  $\R^2$,  $h_n$  solves
\bm
-\Delta h_n = c_2 \left(\sum_{i=1}^n \delta_{(x_i,0)} - n m_0 \delta_{\R}\right) \quad \mbox{in}\ \R^2
\em
which is now a local equation, in contrast with \eqref{defhn1d}.
We then observe that, letting $\nu_n= \sum_{i=1}^n \delta_{(x_i,0)}$, we  may also write  
\bm
H_n(x_1,\dots, x_n) = \iint_{\triangle^c \subset \R^2 \times \R^2} g(x-y) d\nu_n(x) d\nu_n(y) + \int_{\R^2} V d\nu_n
\em
where $g$ is always $-\log |x|$, and $V$ is arbitrarily extended to $\R^2$. This is thus formally the same as in dimension $2$, so 
returning to the setting of Section \ref{sectionaref}  an viewing $\nu_n$ and $\mu_0$  as  measures in $\R^2$ thanks to \eqref{defembedding}, we may carry  on with the  proof of the splitting formula as in the case $d = 2$.  We need to use the result of Remark \ref{mu0faible}, which applies because precisely $\mu_0$ is a singular measure but absolutely  continuous and with bounded density with respect to the Hausdorff measure on the real axis. 
The proof  then goes through with no other change and yields the same first splitting formula 
\be \label{splitting1D}
H_n(x_1, \dots, x_n) = n^2 I(\mu_0) + 2n \sum_{i=1}^n \zeta(x_i) + \f{1}{c_2}\( \lim_{\eta \to 0}  \int_{\R^{2}} |\nab h'_{n,   \eta}|^2 -  n c_2 \g( \eta)\) .
\ee
Continuing on with the blow-up procedure, the natural change of scale is then 
 \be\label{dehn'1}
 x'=nx \quad \mu_0'(x')= m_0(x' n^{-1}) \, \delta_{\R} \qquad    h_{n,\eta}'(x')= g* \( \sum_{i=1}^n \delta_{(x_i', 0)}^{(\eta)} - \mu_0'\),
 \ee
   and thus  in the computation analogous to \eqref{320} we obtain the term $g(\eta' n^{-1})=  \log n+g(\eta')  $ which yields the formula for the  change of scales 
 $$ \lim_{\eta\to 0}\left(\int_{\R^{2}} |\nab h_{n,\eta}|^2 -  nc_2 \g(\eta)\right)= 
 \lim_{\eta\to 0}\left(\int_{\R^{2}} |\nab h_{n,\eta}'|^2 -  nc_2 \g(\eta)\right)  -c_2
  n\log n.$$
 We conclude with the following splitting formula for $d=1$ : 
\begin{prop}\label{pro1d} Let $d=1$ and  $V$ be such that a unique equilibrium measure  $\mu_0$ minimizing $I$ exists, and $\mu_0$ has an $L^\infty$ density. Then, for any $n \ge 1$ and any configuration of distinct points $x_1, \dots, x_n\in \R$, the following identity holds :
\bm
\boxed{H_n(x_1, \dots, x_n) = n^2 I(\mu_0) + 2n \sum_{i=1}^n \zeta(x_i) - n \log n + \f{1}{c_2}\mathcal H_n(x_1', \dots, x_n'),}
\em
with $\zeta $  as in \eqref{defzeta}, $x_i'=nx_i$, and 
$$\mathcal H_n(x_1', \dots, x_n')= \lim_{\eta\to 0} \left(\int_{\R^{2}} |\nab h_{n,\eta}'|^2 -  nc_2 \g(\eta)\right)$$ defined via \eqref{dehn'1}.

\end{prop}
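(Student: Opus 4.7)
The plan is to reduce the one-dimensional statement to the two-dimensional splitting computation from Section 3.1 by viewing everything as living in $\mathbb{R}^2$. Concretely, I identify each point $x_i\in\mathbb{R}$ with $(x_i,0)\in\mathbb{R}^2$, and promote $\mu_0$ to the singular measure $m_0(x)\,\delta_{\mathbb{R}}$ defined via \eqref{defembedding}--\eqref{deltar}. Since $g(x)=-\log|x|$ is the Coulomb kernel on $\mathbb{R}^2$, the potential
\[
h_n \;=\; g*\bigl(\nu_n - n\mu_0\bigr), \qquad \nu_n=\sum_{i=1}^n\delta_{(x_i,0)},
\]
now viewed as a function on all of $\mathbb{R}^2$, satisfies the \emph{local} equation $-\Delta h_n = c_2(\nu_n - n\mu_0)$. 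The Hamiltonian itself is unchanged: since all masses sit on the real axis, $\iint_{\triangle^c\subset\mathbb{R}^2\times\mathbb{R}^2} g\,d\nu_n\,d\nu_n + \int V\,d\nu_n$ coincides with the original $H_n(x_1,\dots,x_n)$.

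Once everything is recast this way, the algebraic manipulation leading to \eqref{splitting2} is formal and applies verbatim, giving
\[
H_n = n^2 I(\mu_0) + 2n\sum_{i=1}^n\zeta(x_i) + \iint_{\triangle^c} g(x-y)\,d(\nu_n-n\mu_0)(x)\,d(\nu_n-n\mu_0)(y).
\]
To convert the last term into the truncated-gradient form, I invoke Lemma \ref{lem31} in dimension two. The only subtle point is that $\mu_0=m_0\,\delta_{\mathbb{R}}$ is no longer absolutely continuous with respect to two-dimensional Lebesgue measure; however, it is precisely the case covered by Remark \ref{mu0faible}, since $\mu_0(B(x,\eta))\le 2\|m_0\|_{L^\infty}\eta$ grows like $\eta^1$ with $\alpha=1>d-2=0$, and thus $g(\eta)\eta\to 0$ as $\eta\to 0$. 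All the uses of $\|m_0\|_{L^\infty}$ in the proof of Lemma \ref{lem31} go through with this bound on the mass of small balls, yielding formula \eqref{splitting1D}.

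The final step is the blow-up $x'=nx$, carried out in the ambient $\mathbb{R}^2$ by setting $X'=nX$. A direct change of variables shows that the gradient energy $\int_{\mathbb{R}^2}|\nabla h_n|^2$ is scale-invariant in dimension two, and, crucially, that the truncation function satisfies $f_{\eta/n}(X'/n-X_i)=f_\eta(X'-X_i')$, so that $h_{n,\eta/n}(X'/n)=h'_{n,\eta}(X')$. Reparametrizing $\eta\mapsto\eta/n$ in the limit yields
\[
\lim_{\eta\to 0}\Bigl(\int_{\mathbb{R}^2}|\nabla h_{n,\eta}|^2 - nc_2\,g(\eta)\Bigr)
= \lim_{\eta\to 0}\Bigl(\int_{\mathbb{R}^2}|\nabla h'_{n,\eta}|^2 - nc_2\,g(\eta/n)\Bigr),
\]
and using the relation $g(\eta/n)=g(\eta)+\log n$ produces the extra term $-nc_2\log n$. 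Dividing by $c_2$ and combining with \eqref{splitting1D} gives the announced formula. I do not anticipate a serious obstacle beyond the bookkeeping around $\mu_0$ being a singular measure in $\mathbb{R}^2$; the delicate part is really checking that Remark \ref{mu0faible}'s hypotheses are met, after which the $d=2$ machinery applies essentially unchanged.
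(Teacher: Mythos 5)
Your proposal is correct and follows essentially the same route as the paper: embed the configuration and $\mu_0=m_0\,\delta_{\R}$ into $\R^2$, run the two-dimensional splitting of Lemma \ref{lem31} using Remark \ref{mu0faible} to handle the singular background measure, and then perform the blow-up $x'=nx$, extracting the $-n\log n$ term from $g(\eta/n)=g(\eta)+\log n$. Your verification of the hypothesis of Remark \ref{mu0faible} ($\mu_0(B(x,\eta))\le 2\|m_0\|_{L^\infty}\eta$, hence $g(\eta)\eta\to 0$) and the identity $f_{\eta/n}(X'/n-X_i)=f_\eta(X'-X_i')$ are exactly the points the paper relies on.
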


There remains again to understand the term $\mathcal H_n$, as before, except for the  particularity that $h'_n$ solves :
\be \label{eq1d1}
-\Delta h'_n = c_2 \left(\sum_{i=1}^n \delta_{(x_i',0)} - \mu_0' \delta_{\R}\right)
\ee
with the extra $ \delta_{\R}$ term. In the rest of these notes, we will not expand much on the one-dimensional case. The main point is that with the above transformations, it can almost be treated  like the two-dimensional case. The interested reader can refer to \cite{ss2}, and \cite{ps} where we showed that  this dimension extension approach can also be used to generalize all  our study to the case of Riesz interaction potentials $g(x)=|x|^{-s}$ with $d-2\le s< d$, via the Caffarelli-Silvestre extension formula for fractional Laplacians.

\section[Almost monotonicity]{Almost monotonicity of the truncation procedure}
In this section, we return to the Coulomb case with $d\ge 2$ and  we  prove that applying the $\eta $ truncation to the energy is essentially  decreasing  in $\eta$. This is natural since one may expect that smearing out the charges further and further should decrease their total interaction energy. 
All the results can easily be adapted to the one-dimensional logarithmic case, cf. \cite{ps}.

\begin{lemme}[\cite{ps}]\label{prodecr} Assume that $\mu_0$ is a measure with a bounded density. 
For any $n$, any  $x_1, \dots, x_n \in \R^d$, and any $1>\eta>\alpha>0$ we have
\begin{multline*}
-C n \|\muv\|_{L^\infty} \eta\le \left(\int_{\R^{d}} |\nab h'_{n,\alpha}|^2 - n c_d\g(\alpha) \right) -
\left(\int_{\R^{d}}|\nab h'_{n,\eta}|^2 - nc_d \g(\eta) \right)
\end{multline*}where  $C$ depends only on $d$.  Moreover, equality holds if $\min_{i\neq j} |x_i'-x_j'| \geq  2\eta$.

%Thus 
%$$\lim_{\eta\to 0} \left(\int_{\R^{d+k}}\yg |\nab h'_{n,\eta}|^2 - n\c  \g(\eta) \right)$$
%exists.
\end{lemme}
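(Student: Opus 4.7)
The plan is to expand the difference of squared gradients and use integration by parts against the Dirac-difference PDE satisfied by $u := h'_{n,\alpha} - h'_{n,\eta} = \sum_{i=1}^n (f_\eta - f_\alpha)(\cdot - x_i')$. By subtracting two instances of \eqref{divf}, one has $-\Delta u = c_d \sum_{i=1}^n (\delta_{x_i'}^{(\alpha)} - \delta_{x_i'}^{(\eta)})$. Thus setting $F(\eta) := \int_{\R^d} |\nabla h'_{n,\eta}|^2 - n c_d g(\eta)$, I will compute
\[
\int|\nabla h'_{n,\alpha}|^2 - \int|\nabla h'_{n,\eta}|^2 = 2\int \nabla h'_{n,\eta}\cdot\nabla u + \int|\nabla u|^2 = -2\int h'_{n,\eta} \Delta u - \int u\Delta u,
\]
which reduces everything to spherical averages of $h'_{n,\eta}$ and $u$ on the spheres $\partial B(x_i',\alpha)$ and $\partial B(x_i',\eta)$.

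Under the separation hypothesis $\min_{i\ne j}|x_i'-x_j'|\ge 2\eta$, each such sphere lies outside all balls $B(x_j',\eta)$ for $j\ne i$, so only the $j=i$ summand of $u$ and of $\sum_j f_\eta(\cdot-x_j')$ contributes on it. The explicit values $f_\eta|_{|y|=\alpha} = g(\alpha)-g(\eta)$ and $f_\eta|_{|y|=\eta} = 0$ then yield $\int|\nabla u|^2 = n c_d(g(\alpha)-g(\eta))$ together with $\int\nabla h'_{n,\eta}\cdot\nabla u = c_d\sum_{i=1}^n \bigl(\fint_{\partial B(x_i',\alpha)} h'^i_n - \fint_{\partial B(x_i',\eta)} h'^i_n\bigr)$, where $h'^i_n(x):= h'_n(x)-g(x-x_i')$ is the regular part at $x_i'$. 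The $g(\alpha),g(\eta)$ terms cancel exactly against those in the definition of $F$, producing the exact identity
\[
F(\alpha) - F(\eta) = 2 c_d\sum_{i=1}^n \Bigl(\fint_{\partial B(x_i',\alpha)} h'^i_n - \fint_{\partial B(x_i',\eta)} h'^i_n\Bigr),
\]
which is the ``equality'' case claimed in the lemma.

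To bound the right-hand side, observe that under separation $-\Delta h'^i_n = -c_d \mu_0'$ inside $B(x_i',\eta)$ (no Diracs in the ball), so $h'^i_n$ is subharmonic there with Laplacian bounded pointwise by $c_d\|\mu_0\|_{L^\infty}$. The mean-value differentiation identity $\frac{d}{ds}\fint_{\partial B(x_i',s)} h'^i_n = \frac{1}{|\partial B_s|}\int_{B(x_i',s)}\Delta h'^i_n$ then gives $0 \le \fint h'^i_n|_\eta - \fint h'^i_n|_\alpha \le C\|\mu_0\|_{L^\infty}(\eta^2-\alpha^2)$; summing over the $n$ points and using $\eta\le 1$ yields the stated bound $F(\alpha)-F(\eta) \ge -Cn\|\mu_0\|_{L^\infty}\eta$. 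For the general case without separation, the spheres now possibly intersect the supports of $f_\eta(\cdot-x_j')$ for $j\ne i$, so cross-terms must be retained in the averages; I expect that pointwise estimates on $f_\eta$ together with the bounded total measure of the overlap regions give the same order-$n\eta$ control, and this combinatorial bookkeeping over overlapping balls is the main technical obstacle.
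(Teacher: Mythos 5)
Your treatment of the well-separated case is correct and matches the paper's strategy (expand the difference of energies using the truncation functions and integrate by parts against the smeared-Dirac equations); your identity $F(\alpha)-F(\eta)=2c_d\sum_i\bigl(\fint_{\partial B(x_i',\alpha)}h'^i_n-\fint_{\partial B(x_i',\eta)}h'^i_n\bigr)$ and the subharmonicity bound $0\le \fint_\eta h'^i_n-\fint_\alpha h'^i_n\le C\|\muv\|_{L^\infty}\eta^2$ are both right, and in fact give a slightly cleaner error ($\eta^2$ rather than $\eta^2|\log\eta|$) than the paper. The problem is that the lemma asserts the lower bound for \emph{arbitrary} configurations, and that is exactly the case you leave open. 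Your proposed repair --- ``pointwise estimates on $f_\eta$ together with the bounded total measure of the overlap regions'' --- cannot work: each cross term involves $f_\eta-f_\alpha$ evaluated near another point, and $\|f_\eta-f_\alpha\|_{L^\infty}=g(\alpha)-g(\eta)\to+\infty$ as $\alpha\to 0$, while the number of overlapping pairs can be as large as $n^2$ (think of $n$ points clustered in a ball of radius $\alpha$). No absolute-value bookkeeping yields a bound of order $n\eta$ uniformly in $\alpha$ and in the configuration.

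The missing ingredient is a \emph{sign} observation, not an estimate. In the paper's bookkeeping, the pair term that survives is $c_d\sum_{i\neq j}\int f_{\alpha,\eta}(x-x_i)\,(\delta_{x_j}^{(\alpha)}+\delta_{x_j}^{(\eta)})$ with $f_{\alpha,\eta}=f_\alpha-f_\eta\le 0$, so it is nonpositive and can simply be discarded on the side of the inequality where this helps. In your decomposition the same mechanism appears as follows: writing $g_s=g-f_s=g*\delta_0^{(s)}$, the combined off-diagonal contribution of $2\int\nab h'_{n,\eta}\cdot\nab u+\int|\nab u|^2$ equals
\begin{equation*}
c_d\sum_{i\neq j}\Bigl(g*\delta_0^{(\alpha)}*\delta_0^{(\alpha)}-g*\delta_0^{(\eta)}*\delta_0^{(\eta)}\Bigr)(x_i'-x_j'),
\end{equation*}
which is nonnegative because $g_\alpha\ge g_\eta$ pointwise and spherical averages of the superharmonic function $g_\eta$ decrease with the radius; since it sits on the side $F(\alpha)-F(\eta)\ge\cdots$, it can be dropped, and only the background term (controlled exactly as in your separated case, by superharmonicity of $g*\muv'$) remains. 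Without this monotonicity argument your proof does not establish the stated lemma, and the lemma is needed precisely for general configurations in \eqref{lbwnc} and Proposition \ref{propRS}.
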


\begin{proof}
We first let $\fae:= f_\alpha-f_\eta$ and note that  $\fae$  vanishes outside $B(0,\eta)$, $\fae\le 0$ and  
 it  solves (cf. \eqref{divf})
\begin{equation}\label{eqfae}
-\Delta \fae= c_d (\delta_0^{(\eta)}- \delta_0^{(\alpha)}).
\end{equation}
In view of
\eqref{tp}, we have
$\nab \hne'= \nab h_{n, \alpha}' + \sum_{i=1}^n\nab \fae(x-x_i)$
and hence
\begin{multline}
\int_{\R^{d}}  |\nab \hne'|^2 = \int_{\R^{d}}  |\nab h_{n, \alpha}'|^2 + \sum_{i, j} \int_{\R^{d}}  \nab \fae (x-x_i) \cdot \nab \fae (x-x_j) 
\\ +  2 \sum_{i=1}^n \int_{\R^{d}}  \nab \fae(x-x_i)\cdot \nab h_{n, \alpha}'.\end{multline}
We first examine
\begin{align}\nonumber
&  \sum_{i, j} \int_{\R^{d}}  \nab \fae (x-x_i) \cdot \nab \fae (x-x_j) 
\\ \label{faeterm}
& = - \sum_{i,j} \int_{\R^{d}}  \fae (x-x_i) \Delta \fae(x-x_j)= c_d \sum_{i,j} \int_{\R^{d}} \fae (x-x_i) (  \delta_{x_j}^{(\eta)}- \delta_{x_j}^{(\alpha)}) 
.
\end{align}
Next, 
\begin{multline}\label{fae2}
2 \sum_{i=1}^n \int_{\R^{d}}  \nab \fae(x-x_i)\cdot \nab h_{n, \alpha}'
 = -2 \sum_{i=1}^n \int_{\R^{d}} \fae(x-x_i) \Delta  h_{n,\alpha}'
\\=  2
c_d \sum_{i=1}^n\int_{\R^{d}} \fae(x-x_i) \Big( \sum_{j=1}^n \delta_{x_j}^{(\alpha)} - \muv'\Big) . 
\end{multline}
These last two equations add up  to give a right-hand side equal to 
\begin{equation}\label{rhs}
c_d\sum_{i\neq j}  \int_{\R^{d}} \fae (x-x_i) (  \delta_{x_j}^{(\alpha)}+ \delta_{x_j}^{(\eta)}) 
- 2c_d \sum_{i=1}^n \int\fae(x-x_i) \muv'
+ nc_d \int_{\R^{d}} \fae ( \delta_{0}^{(\alpha)} + \delta_0^{(\eta)} )\, . \end{equation}
We then note that $\int\fae( \delta_0^{(\alpha)} + \delta_0^{(\eta)} ) = -\int f_\eta \delta_0^{(\alpha)}=-( \g(\alpha)-\g(\eta))  $ by definition of $f_\eta$ and the fact that $\delta_0^{(\alpha)}$ is a measure supported on $\pa B(0, \alpha)$ and of mass $1$.
Secondly, we bound  $\int_{\R^{d}}\fae(x-x_i) \muv'$ by $$\|\muv\|_{L^\infty} \int_{\R^{d}} |f_\eta| \le  C \|\muv\|_{L^\infty} \max(\eta^{2}, \eta^2 |\log \eta|) $$
according to the cases, as seen in \eqref{intfeta}.
Thirdly, we observe that the first term in \eqref{rhs} is nonpositive and vanishes  if $\min_{i\neq j} |x_i'-x_j'| \geq  2\eta$,  and 
we conclude that 
\begin{multline*}  
- C n \|\muv\|_{L^\infty} \max( \eta^{2}, \eta^2 |\log \eta|)
\\ \le \(\int_{\R^d}  |\nab h_{n, \alpha}'|^2  -nc_d  \g(\alpha)\) -\( \int_{\R^{d}} |\nab \hne'|^2 - nc_d \g(\eta)\)\end{multline*} with equality if $\min_{i\neq j} |x_i'-x_j'| \geq  2\eta$.
Combining all the elements finishes the proof, noting that in all cases we have $\max(\eta^{2}, \eta^2  |\log \eta|) \le \eta$.
\end{proof}
This lemma proves in another way that the limit defining \eqref{Wncal} exists and  provides an immediate lower bound for $\mathcal H_n$: taking the limit $\alpha \to 0$ in the result, we obtain that  for any $0<\eta<1$,  and any $x_1, \dots, x_n$, it holds that 
\begin{equation}\label{lbwnc}
\mathcal H_n(x_1', \dots, x_n') \ge \int_{\R^d} |\nab h_{n,\eta}'|^2 - nc_d g(\eta) - C n \eta \|\muv\|_{L^\infty} \end{equation}
where $C$ depends only on the dimension.

\section{The splitting lower bound}
Combining \eqref{lbwnc} with the splitting formula in Proposition \ref{developpementasympto}, we obtain the following
\begin{prop}[Splitting lower bound \cite{rs,ps}]\label{propRS} 
Assume the equilibrium measure $\mu_0$ exists and has a bounded density.
For every $n\ge 1$,  and for  all configurations of  points $x_1, \dots, x_n$ in $\R^d$, the following inequality holds for all $1>\eta>0$~: 
\begin{multline}
H_n(x_1, \dots, x_n) \geq n^2 I(\mu_0)  + 2n \sum_{i=1}^n \zeta (x_i)-\Big( \f{n}{2} \log n \Big) \indic_{d = 2}  \\
+ \f{n^{2-2/d}}{c_d}\Big(  \frac{1}{n} \int_{\R^d} |\nabla h'_{n, \eta}|^2  -    c_d g(\eta)  - C \eta  \|\mu_0\|_{L^\infty}\Big) ,
\end{multline} where $h_{n,\eta}'$ is as in \eqref{1to1} and $C>0$ depends only on the dimension.
Moreover, there is equality  if $\min_{i\neq j} |x_i'-x_j'| \geq  2\eta$.
\end{prop}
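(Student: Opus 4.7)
The plan is to derive this as an essentially immediate consequence of the two main tools already established in the section: the exact splitting formula (Proposition \ref{developpementasympto}) and the almost-monotonicity of the truncation procedure (Lemma \ref{prodecr}). Since all the real work is encapsulated in those two results, there is no further hard step — the proposition is the combined statement obtained by plugging the monotonicity bound into the splitting identity.

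First I would write down the exact identity from Proposition \ref{developpementasympto},
\[
H_n(x_1,\dots,x_n) = n^2 I(\mu_0) + 2n\sum_{i=1}^n \zeta(x_i) + \Bigl(-\tfrac{n}{2}\log n\Bigr)\indic_{d=2} + \frac{n^{1-2/d}}{c_d}\,\mathcal H_n(x_1',\dots,x_n'),
\]
which requires no hypothesis beyond the assumption that $\mu_0$ has a bounded density (so that Remark \ref{mu0faible} and Lemma \ref{lem31} apply). This identity is exact, so any lower bound on the last term immediately transfers to a lower bound on $H_n$ with the same additive error.

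Next I would invoke Lemma \ref{prodecr} applied at fixed $\eta\in(0,1)$ and let $\alpha\to 0^+$. Since $\mathcal H_n(x_1',\dots,x_n') = \lim_{\alpha\to 0}\bigl(\int_{\R^d}|\nabla h'_{n,\alpha}|^2 - nc_d g(\alpha)\bigr)$ by definition \eqref{Wncal}, this yields precisely the inequality \eqref{lbwnc}:
\[
\mathcal H_n(x_1',\dots,x_n') \ \ge\  \int_{\R^d}|\nabla h'_{n,\eta}|^2 - nc_d\,g(\eta) - C n\eta\,\|\mu_0\|_{L^\infty},
\]
with $C$ depending only on $d$. Note the error is linear in $n$ because each of the $n$ charges contributes an $O(\eta)$ discrepancy when smeared further, and the maximum of $\eta^2$ and $\eta^2|\log\eta|$ is majorized by $\eta$ for $\eta<1$.

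Substituting this inequality into the splitting identity and factoring out $n^{2-2/d}/c_d$ (writing $\frac{n^{1-2/d}}{c_d}\cdot n = \frac{n^{2-2/d}}{c_d}$ and pulling $1/n$ inside the parentheses to produce $\frac{1}{n}\int |\nabla h_{n,\eta}'|^2$) gives exactly the stated inequality. Finally, the equality case follows because the splitting identity is an \emph{exact} equality and Lemma \ref{prodecr} becomes an equality precisely when $\min_{i\neq j}|x_i'-x_j'|\ge 2\eta$ (the condition under which the balls $B(x_i',\eta)$ are pairwise disjoint, so that the cross terms in \eqref{rhs} involving $f_{\alpha,\eta}(x-x_i)\delta_{x_j}^{(\eta)}$ vanish). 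Thus equality propagates through both steps under that separation hypothesis, and the proof is complete.
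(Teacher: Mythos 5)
Your proof is correct and is exactly the paper's argument: the paper obtains Proposition \ref{propRS} precisely by combining the lower bound \eqref{lbwnc} (itself the $\alpha\to 0$ limit of Lemma \ref{prodecr}) with the exact splitting identity of Proposition \ref{developpementasympto}, with the equality case traced back to the vanishing of the nonpositive cross terms in Lemma \ref{prodecr} when the blown-up points are $2\eta$-separated. Nothing further is needed.
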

In \cite{rs} we gave a different proof of this result which was based on Newton's theorem and similar to that of    Onsager's lemma \cite{ons}, a tool which has been much used in the proof of stability of matter in quantum mechanics  (see \cite{LO,ls} and references therein).  The proof we have presented here, from \cite{ps}, is somewhat easier and also works in the more general case of Riesz interactions.

\section{Consequences}
The result of   Proposition \ref{propRS}, for fixed $n$ and $\eta$, already yields some easy consequences, such as a lower bound for $H_n$. 
Indeed, taking say $\eta = 1/2$, and using the fact that $\int |\nab h_{n,\eta}'|^2 \ge 0$,  we get as a corollary 
\begin{coroll}[An easy lower bound for $H_n$]
\label{corlb}
Under the same assumptions, we have 
\be \label{lowerboundhamilton}
H_n(x_1, \dots, x_n) \geq n^2 I(\mu_0) + 2n \sum_{i=1}^n \zeta(x_i) -  \Big( \f{n}{2} \log n \Big) \indic_{d = 2}  -  C \|\mu_0\|_{L^\infty}   n^{2-2/d}
\ee
where the constant $C$ depends only on  the dimension. \end{coroll}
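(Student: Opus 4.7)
The statement is a direct consequence of Proposition \ref{propRS} obtained by specializing the free parameter $\eta$ to a convenient fixed value and discarding a manifestly nonnegative contribution. I would take $\eta = 1/2$ (any fixed $\eta\in(0,1)$ would work equally well), for which $g(\eta)$ is a pure constant depending only on $d$: namely $\log 2$ when $d=2$ and $2^{d-2}$ when $d\ge 3$. Substituting this choice into Proposition \ref{propRS} yields the lower bound
$$H_n(x_1,\dots,x_n) \ge n^2 I(\mu_0) + 2n\sum_{i=1}^n \zeta(x_i) - \Big(\f{n}{2}\log n\Big)\indic_{d=2} + \f{n^{2-2/d}}{c_d}\left(\f{1}{n}\int_{\R^d}|\nabla h'_{n,1/2}|^2 - c_d\,g(1/2) - \f{C}{2}\|\mu_0\|_{L^\infty}\right).$$

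The second step is to simply drop the nonnegative integral $\int_{\R^d}|\nabla h'_{n,1/2}|^2\ge 0$; this only weakens the inequality and leaves, inside the bracket, a quantity bounded below by a constant depending only on $d$ and on $\|\mu_0\|_{L^\infty}$. The additive term $-c_d g(1/2)$ is then absorbed into the coefficient of $\|\mu_0\|_{L^\infty}$ (reading $C\|\mu_0\|_{L^\infty}$ in the conclusion as shorthand for a combination $C_1+C_2\|\mu_0\|_{L^\infty}$, or equivalently invoking any crude positive lower bound on $\|\mu_0\|_{L^\infty}$ coming from the fact that $\mu_0$ is a compactly supported probability measure), producing the claimed error $-C\|\mu_0\|_{L^\infty}n^{2-2/d}$ with $C$ depending only on the dimension.

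There is essentially no obstacle to overcome: the entire analytic content of the corollary is already built into Proposition \ref{propRS}, whose substance in turn is the exact splitting formula of Proposition \ref{developpementasympto} combined with the almost-monotonicity of the truncation procedure from Lemma \ref{prodecr}. The only point to keep in mind is that the confinement term $2n\sum_i\zeta(x_i)$ is automatically nonnegative by \eqref{EulerLagrange}, so one could, if desired, also drop it, yielding a cruder but uniform lower bound; we leave it present in the statement because it is what gives the effective confinement of the points to $\Sigma$.
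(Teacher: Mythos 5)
Your proof is correct and is exactly the paper's argument: the text preceding the corollary deduces it from Proposition \ref{propRS} by taking $\eta=1/2$ and using $\int|\nabla h'_{n,\eta}|^2\ge 0$. Your remark about the constant $-c_d g(1/2)$ needing to be absorbed into the $C\|\mu_0\|_{L^\infty}$ term (or the statement read as $C_1+C_2\|\mu_0\|_{L^\infty}$) is a fair point of bookkeeping that the paper glosses over.
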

Another way of stating this is that $\mathcal H_n(x_1', \dots, x_n') \ge - C n $ and we see that the next order term in the expansion of $H_n$ can  indeed be expected to be of  order $n^{2-2/d}$ --- at least it is bounded below by it.

For illustration, let us show how this lower bound easily  translates  into an upper bound for the partition function $Z_{n, \beta}$ (defined in \eqref{defz})  in the case with temperature.
\begin{coroll}[An easy upper bound for  the partition function]\label{boundlogz}
Assume that $V$ is continuous, such that $\mu_0$ exists, and satisfies \textbf{(A4)}  (see  \eqref{conditionintegrabilite} in Section \ref{potentialtheoretic}).  Assume that $\mu_0$ has an $L^\infty$ density. Then for all $\beta >0$, and for $n$ large enough, we have
$$\log Z_{n,\beta}  \leq - \f{\beta}{2} n^2 I(\mu_0) + \Big(\f{\beta}{4} n \log n\Big) \indic_{d=2} + C\beta n^{2-2/d} + Cn$$
where $C$ depends only on $\mu_0$ and the dimension.
\end{coroll}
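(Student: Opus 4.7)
The plan is to combine the splitting lower bound from Corollary \ref{corlb} with the integrability assumption \textbf{(A4)} to control the partition function. Start from the definition
\begin{displaymath}
Z_{n,\beta} = \int_{(\R^d)^n} e^{-\frac{\beta}{2} H_n(x_1,\dots,x_n)} \, dx_1 \cdots dx_n,
\end{displaymath}
and insert the pointwise lower bound
\begin{displaymath}
H_n(x_1,\dots,x_n) \geq n^2 I(\mu_0) + 2n \sum_{i=1}^n \zeta(x_i) - \Big(\tfrac{n}{2}\log n\Big)\indic_{d=2} - C\|\mu_0\|_{L^\infty} n^{2-2/d}
\end{displaymath}
provided by Corollary \ref{corlb}. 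This pulls all of the configuration-independent constants out of the integral, and because the remaining integrand factorizes across the $x_i$'s, it yields
\begin{displaymath}
Z_{n,\beta} \leq \exp\!\Big(-\tfrac{\beta}{2} n^2 I(\mu_0) + \tfrac{\beta}{4} n\log n \, \indic_{d=2} + C\beta n^{2-2/d}\Big)\cdot \Big(\int_{\R^d} e^{-\beta n \zeta(x)}\, dx\Big)^{\!n}.
\end{displaymath}

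The main analytic step is therefore to show that for $n$ sufficiently large the one-body integral $\int_{\R^d} e^{-\beta n \zeta(x)}\, dx$ is bounded by a constant depending only on $V$ and $\beta$. Here I would use two facts about $\zeta = h^{\mu_0} + V/2 - c$: first, by the Euler--Lagrange relations of Theorem \ref{theoFrostman} one has $\zeta \geq 0$ quasi-everywhere, hence $e^{-\beta n \zeta}\leq 1$ everywhere, giving a trivial bound on any fixed ball; second, since $h^{\mu_0}(x)$ behaves like $g(x)$ at infinity (being the potential of a compactly supported mass-one measure), assumption \textbf{(A2)} implies $\zeta(x)/V(x) \to 1/2$ as $|x|\to\infty$, so there exists a bounded set $K$ outside of which $\zeta(x) \geq V(x)/4$. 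Then for any $n$ with $\beta n /4 \geq \alpha$ (where $\alpha$ is the constant from \textbf{(A4)}), we have $e^{-\beta n \zeta(x)} \leq e^{-\alpha V(x)}$ on $K^c$, and \textbf{(A4)} makes the tail integral finite. Combining the bound $|K|$ on $K$ with the tail bound gives $\int e^{-\beta n \zeta} \leq C$ uniformly in large $n$.

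Taking logarithms of the preceding inequality and using $n \log C = O(n)$ then immediately produces the claimed estimate
\begin{displaymath}
\log Z_{n,\beta} \leq -\tfrac{\beta}{2} n^2 I(\mu_0) + \Big(\tfrac{\beta}{4} n\log n\Big)\indic_{d=2} + C\beta n^{2-2/d} + C n.
\end{displaymath}
The only delicate point is the one just described, the tail control of $\int e^{-\beta n \zeta}$; once one is willing to use that $\zeta$ inherits the growth of $V/2$ at infinity (which follows transparently from \textbf{(A2)} and the asymptotic behavior of $h^{\mu_0}$), the rest is a direct manipulation. Note also that no additional smoothness of $V$ is needed beyond what already guarantees existence of $\mu_0$ and the bound of Corollary \ref{corlb}, since $\zeta$ is then automatically continuous and bounded below by $0$.
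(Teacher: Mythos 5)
Your overall strategy is exactly the paper's: insert the pointwise lower bound of Corollary \ref{corlb} into the definition of $Z_{n,\beta}$, pull out the configuration-independent terms, factorize the remaining integral over the $x_i$'s, and reduce everything to showing that the one-body integral $\int_{\R^d} e^{-\beta n \zeta}\,dx$ stays bounded for $n$ large. The paper encapsulates that last step in Lemma \ref{asymptozeta} (which even identifies the limit of $\big(\int e^{-\lambda\beta n\sum_i\zeta(x_i)}\big)^{1/n}$ as $|\omega|$); you reprove only the upper bound, which is all the corollary needs.

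The one place where your argument genuinely deviates is the tail control, and there it has a gap in the logarithmic cases. You claim that \textbf{(A2)} forces $\zeta(x)/V(x)\to 1/2$, hence $\zeta\ge V/4$ outside a compact set. This is correct for $d\ge 3$, where $g\to 0$ at infinity and $\zeta\sim V/2-c$. But for $d=1,2$ one has $h^{\mu_0}(x)\sim g(x)=-\log|x|\to-\infty$, so $\zeta$ only inherits the growth of $g+\f{V}{2}-c$, which \textbf{(A2)} allows to be far smaller than $V/4$: take for instance $V(x)=2\log|x|+\log\log|x|$ in $d=2$, which satisfies \textbf{(A2)} and \textbf{(A4)}, yet gives $\zeta\sim\hal\log\log|x|$, so that $\zeta/V\to 0$ and the comparison $e^{-\beta n\zeta}\le e^{-\alpha V}$ fails everywhere near infinity. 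The correct statement, and the one the paper actually uses, is that $\zeta$ grows like $g+\f{V}{2}-c$ and that \textbf{(A4)} then yields integrability of $e^{-\alpha\zeta}$ itself (Lemma \ref{asymptozeta}), after which $0\le e^{-\beta n\zeta}\le e^{-\alpha\zeta}$ for $\beta n\ge\alpha$ finishes the bound. So: for $d\ge3$ your proof is complete and coincides with the paper's; for $d=1,2$ you should replace the pointwise comparison $\zeta\ge V/4$ by the comparison of $e^{-\beta n\zeta}$ with $e^{-\alpha\zeta}$ and justify $\int e^{-\alpha\zeta}<\infty$ from the actual asymptotics of $\zeta$, as in Lemma \ref{asymptozeta}.
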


To prove this, let us  state a lemma that we will use repeatedly and that exploits assumption \textbf{(A4)}.

\begin{lemme} \label{asymptozeta} Assume that $V$ is continuous, such that $\mu_0$ exists, and satisfies \textbf{(A4)}. For any $\lambda > 0$  we have 
\be
\lim_{n\to  + \infty} \left(\int_{(\R^d)^n} e^{-\lambda \beta n \sum_{i=1}^n \zeta(x_i)} dx_1 \dots dx_n\right)^{1/n} = |\omega|\ee where $\omega=\{\zeta=0\}$, uniformly in $\beta \in [\beta_0, +\infty)$, $\beta_0>0$.
\end{lemme}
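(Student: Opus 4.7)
The plan is to exploit the product structure of the integrand. Since $\zeta$ is a function of a single variable, Fubini gives immediately
\[
\int_{(\R^d)^n} e^{-\lambda \beta n \sum_{i=1}^n \zeta(x_i)}\, dx_1 \cdots dx_n = \left(\int_{\R^d} e^{-\lambda \beta n \zeta(x)}\, dx\right)^n,
\]
so after taking $n$th roots the claim reduces to showing that
$J_n(\beta) := \int_{\R^d} e^{-\lambda \beta n \zeta(x)}\, dx$
converges to $|\omega|$ as $n\to\infty$, uniformly in $\beta\in[\beta_0,+\infty)$. Recall from the discussion following Proposition \ref{proequivpb} that $\omega=\{\zeta=0\}$ is compact, so in particular $|\omega|<+\infty$, and that $\zeta$ is continuous under the standing hypotheses on $V$.

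For the lower bound, since $\zeta\ge 0$ (Definition \ref{def23}) and $\zeta\equiv 0$ on $\omega$, the integrand equals $1$ on $\omega$, so $J_n(\beta)\ge |\omega|$ for every $n$ and every $\beta>0$.

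For the matching upper bound, I first decouple from $\beta$: since $\zeta\ge 0$ and $\beta\ge\beta_0$,
\[
J_n(\beta)-|\omega| = \int_{\omega^c} e^{-\lambda\beta n\zeta(x)}\,dx \;\le\; \int_{\omega^c} e^{-\lambda\beta_0 n\zeta(x)}\,dx,
\]
so it suffices to show that the last integral tends to $0$, which will give the required uniformity automatically. On $\omega^c$, $\zeta$ is continuous and strictly positive, hence $e^{-\lambda\beta_0 n\zeta(x)}\to 0$ pointwise; all that is missing is an $n$-uniform integrable majorant. On any bounded set the integrand is $\le 1$, so the issue is purely at infinity. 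Here I invoke \textbf{(A4)} together with \textbf{(A2)}: since $\mu_0$ is compactly supported, the potential $h^{\mu_0}(x)=\int g(x-y)\,d\mu_0(y)$ behaves like $g(x)$ at infinity (it decays in $d\ge 3$, grows only like $g$ in $d=1,2$, and in every case $g(x)=o(V(x))$ by \textbf{(A2)}). Consequently there exists $R_0>0$ such that $\zeta(x)\ge V(x)/4$ whenever $|x|>R_0$. Choosing $n_0=\lceil 4\alpha/(\lambda\beta_0)\rceil$, for every $n\ge n_0$ one has
\[
e^{-\lambda\beta_0 n\zeta(x)}\,\mathbf{1}_{\{|x|>R_0\}}(x) \;\le\; e^{-\alpha V(x)},
\]
and the right-hand side lies in $L^1(\R^d)$ by \textbf{(A4)}. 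Dominated convergence then yields $\int_{\omega^c} e^{-\lambda\beta_0 n\zeta}\,dx\to 0$, completing the argument.

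The only mildly delicate step is the tail comparison $\zeta(x)\ge V(x)/4$ for $|x|$ large, which amounts to unpacking \textbf{(A2)} and the compactness of $\supp\mu_0$; everything else is a Fubini factorization followed by a standard Laplace-type dominated-convergence argument, and the absence of $\beta$ in the dominating bound makes the uniformity in $\beta\in[\beta_0,+\infty)$ transparent.
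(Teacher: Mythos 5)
Your strategy is the same as the paper's: separate variables to reduce to $\int_{\R^d} e^{-\lambda\beta n\zeta}$, get the lower bound $\ge|\omega|$ for free from $\zeta\ge 0$ and $\zeta=0$ on $\omega$, and handle the upper bound by dominated convergence using \textbf{(A4)} to produce an integrable, $n$- and $\beta$-independent majorant. The one point where you deviate is the construction of that majorant, and that step contains a genuine error in dimensions $1$ and $2$. You assert that \textbf{(A2)} gives $g(x)=o(V(x))$ and hence $\zeta(x)\ge V(x)/4$ for $|x|>R_0$. This is fine for $d\ge 3$ (there $g\ge 0$, $h^{\mu_0}\ge 0$, and \textbf{(A2)} forces $V\to+\infty$, so $\zeta\ge V/2-c\ge V/4$ eventually). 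But for $d=1,2$ one has $h^{\mu_0}(x)=-\log|x|+O(1/|x|)$, so $\zeta(x)=V(x)/2-\log|x|-c+o(1)$, and \textbf{(A2)} only guarantees $V(x)/2-\log|x|\to+\infty$, not $V(x)\ge 4\log|x|$. Concretely, $V(x)=(2+\epsilon)\log|x|$ for large $|x|$ (with $\epsilon>0$ small) satisfies \textbf{(A2)} and \textbf{(A4)}, yet $\zeta(x)\sim\tfrac{\epsilon}{2}\log|x|$ while $V(x)/4\sim\tfrac{2+\epsilon}{4}\log|x|$, so $\zeta\ge V/4$ fails on all of the far field, and with it your domination $e^{-\lambda\beta_0 n\zeta}\le e^{-\alpha V}$ for $n\ge\lceil 4\alpha/(\lambda\beta_0)\rceil$. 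In short, $g$ is comparable to $V$, not negligible against it, in the logarithmic cases.

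The paper avoids this by taking the majorant to be $e^{-\alpha\zeta}$ itself: one first argues from the asymptotics $\zeta\sim g+V/2-c$ and \textbf{(A4)} that $\int e^{-\alpha\zeta}<+\infty$ for some $\alpha>0$, and then, since $\zeta\ge 0$, one has $e^{-\lambda\beta n\zeta}\le e^{-\alpha\zeta}$ as soon as $\lambda\beta_0 n\ge\alpha$, which yields the uniformity in $\beta\ge\beta_0$ exactly as in your write-up. If you want to repair your argument while keeping its structure, replace $e^{-\alpha V}$ by $e^{-\alpha'\zeta}$ as the dominating function and prove its integrability directly from the tail behavior of $\zeta$ (rather than trying to pass through a pointwise comparison of $\zeta$ with a fixed fraction of $V$, which is what breaks in $d=1,2$); in $d\ge 3$ your version already goes through as written.
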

\begin{proof} First, by separation of variables, we have
$$ \left(\int_{(\R^d)^n}  e^{-\lambda \beta n \sum_{i=1}^n \zeta(x_i)} dx_1 \dots dx_n\right)^{1/n}= \int_{\R^d}
e^{-\lambda \beta n \zeta(x)} \, dx.$$
Second, we recall that know that since $\mu_0$ is a compactly supported probability measure, $h^{\mu_0}$ must asymptotically behave like $g(x)$ as $|x|\to \infty$, thus $\zeta= h^{\mu_0} + \f{V}{2}-c $  grows like $g(x)+ \hal V -c$.  The assumption \textbf{(A4)} thus ensures that 
there exists some $\alpha>0$ such that $\int e^{-\alpha \zeta(x)}\, dx<+\infty$ and hence
for $\lambda>0$, $\beta>0$,  and $n$ large enough, $\int e^{- \lambda \beta n\zeta(x)}\, dx<+\infty.$ 
Moreover, by definition of $\omega$ (cf. \eqref{zerozeta}),    
$$e^{-\lambda \beta n \zeta}\to \indic_{\omega} \quad \text{as} \ n\to+\infty$$
  pointwise, and $\omega$ has finite measure in view of the growth of $h^{\mu_0}$ and thus of  $\zeta$. Moreover    these functions are dominated by $e^{-\alpha\zeta}$ for $n$ large enough, which is integrable,  so the dominated convergence theorem applies and allows to conclude.
\end{proof}

\begin{proof}[Proof of the corollary]
By definition \eqref{defz} we have 
$$\log Z_{n,\beta}= \log \int e^{ -\frac{\beta}{2}H_n(x_1,\dots, x_n)} \, dx_1 \dots dx_n$$
and inserting \eqref{lowerboundhamilton}, we are led to 
\begin{multline}
\log Z_{n,\beta} \leq - \f{\beta}{2} n^2 I(\mu_0) + \Big(\f{\beta}{4} n \log n\Big)\indic_{d=2} + C\beta n^{2-2/d} \\+ \log \(\int e^{-n\beta \sum_{i=1}^n \zeta(x_i)} dx_1 \dots dx_n\).\end{multline} Using Lemma \ref{asymptozeta} to handle the last term, we deduce that 
$$
\log Z_{n, \beta} \leq - \f{\beta}{2} n^2 I(\mu_0) + 
\Big(\f{\beta}{4} n \log n\Big)\indic_{d=2} + C\beta n^{2-2/d} + n(\log |\omega|+o_n(1))
$$
which gives the conclusion.
\end{proof}
  \section{Control of  the potential and charge fluctuations }
 In this section, we show that $\mathcal H_n$ has good coercivity properties. We note that   $\mathcal H_n (x_1', \dots, x_n') $  can itself be controlled via Proposition \ref{developpementasympto} if a suitable  upper  bound for $H_n$ is known. 
 More precisely, we will see  that the method of truncation, even at a fixed scale $\eta$ (which does not need to go to zero), allows to obtain in a simple way some control on $\nab h_n' $ itself and also on the point discrepancies.
 
 \begin{lemme}[Control of the potential via truncation] \label{lem:fluctu field}
 Assume $\mu_0$ is a measure with an $L^\infty$ density. 
For any $n$, any  $x_1, \dots, x_n\in \R^d$,
 any $1\le q <\frac{d}{d-1} $, any $0<\eta <1$ and any $R>0$ and $K_R=[-R,R]^d$, denoting $\nu_n'= \sum_{i=1}^n \delta_{x_i'}$,
and  letting $h_{n,\eta}'$, $h_n'$ be as in  \eqref{defhn}, \eqref{1to1}, 
we have
\be\label{eq:control Lq field}
 \left\Vert \nabla h'_n \right\Vert_{L ^q (K_R)} \leq |K_R | ^{1/q - 1/2} \left\Vert \nabla h_{n,\eta}'     \right\Vert_{L ^2 (K_R)} + C_{q,\eta} \nu_n ' (K_{R+\eta})
\ee
where $C_{q,\eta}$ depends only on $q$, $\eta$ and $d$ and satisfies $C_{q,\eta}\to 0$ when $\eta \to 0$ at fixed $q$. 
\end{lemme}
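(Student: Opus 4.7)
The plan is to start from the identity $\nabla h_n' = \nabla h_{n,\eta}' + \sum_{i=1}^n \nabla f_\eta(\cdot - x_i')$, which follows directly from the definition \eqref{1to1} of $h_{n,\eta}'$, and bound each piece separately on $K_R$ using the triangle inequality in $L^q$.

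For the first piece, I would simply apply Hölder's inequality on $K_R$ with exponents $\frac{q}{2}$ and its conjugate to obtain
\begin{equation*}
\|\nabla h_{n,\eta}'\|_{L^q(K_R)} \le |K_R|^{1/q - 1/2}\,\|\nabla h_{n,\eta}'\|_{L^2(K_R)},
\end{equation*}
which is allowed because $q < d/(d-1) \le 2$. This gives the first term of the right-hand side of \eqref{eq:control Lq field}.

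For the second piece, the key observation is that $f_\eta$ (as defined in \eqref{feta}) is supported in $B(0,\eta)$, so $\nabla f_\eta(\cdot - x_i')$ is supported in $B(x_i',\eta)$. Therefore only indices with $x_i' \in K_{R+\eta}$ can contribute to the integral over $K_R$, and applying the triangle inequality in $L^q$ gives
\begin{equation*}
\Bigl\|\sum_{i=1}^n \nabla f_\eta(\cdot - x_i')\Bigr\|_{L^q(K_R)} \le \sum_{i\colon x_i'\in K_{R+\eta}}\|\nabla f_\eta\|_{L^q(\R^d)} = \nu_n'(K_{R+\eta})\,\|\nabla f_\eta\|_{L^q(\R^d)}.
\end{equation*}
I would then set $C_{q,\eta} := \|\nabla f_\eta\|_{L^q(\R^d)}$ and finish by a direct radial computation. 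Since $|\nabla f_\eta(x)| = |\nabla g(x)|\mathbf{1}_{|x|<\eta}$ and $|\nabla g(x)| \sim |x|^{1-d}$ in all cases covered by \eqref{coulombkernel1}, one gets
\begin{equation*}
C_{q,\eta}^q = |\mathbb{S}^{d-1}|\int_0^\eta r^{(1-d)q+d-1}\,dr \;=\; \frac{|\mathbb{S}^{d-1}|}{d-(d-1)q}\,\eta^{d-(d-1)q},
\end{equation*}
which is finite and tends to $0$ as $\eta\to 0$ precisely under the assumption $q < d/(d-1)$.

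There is no real obstacle; the only point requiring attention is the integrability of $|\nabla g|^q$ near the origin, which is exactly what forces the range $q < d/(d-1)$ in the statement. Combining the two estimates above by the triangle inequality yields \eqref{eq:control Lq field}.
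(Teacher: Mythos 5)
Your proof is correct and follows essentially the same route as the paper's: split $\nabla h_n'$ via \eqref{1to1}, apply H\"older on $K_R$ to the truncated field, and bound the sum of $\nabla f_\eta(\cdot-x_i')$ by $\nu_n'(K_{R+\eta})\,\|\nabla f_\eta\|_{L^q(\R^d)}$ using the support of $f_\eta$. The only addition is your explicit radial computation of $C_{q,\eta}=\|\nabla f_\eta\|_{L^q(\R^d)}$, which the paper leaves implicit; it is correct and cleanly exhibits why $q<d/(d-1)$ is needed and why $C_{q,\eta}\to 0$.
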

\begin{rem}\label{rem41}
The reason for the condition $q<\frac{d}{d-1}$ is that near each $x_i'$, $h_n'$ has a singularity in $g(x-x_i')$, hence $\nab h_n'$ blows up like $|x-x_i'|^{1-d}$ and this is in $L^q$ if and only if $q<\frac{d}{d-1}$. In other words, while $\nab h_{n,\eta}'$ belongs to $L^2$, $\nab h_n'$ belongs at best to such $L^q$ spaces. \end{rem}

\begin{proof}
By \eqref{1to1},  we have 
\[
\nabla h'_n = \nabla h_{n,\eta}' - \sum_{i=1} ^n \nabla f_{\eta} (x-x_i) 
\]
and thus 
\[
\left\Vert \nabla h'_n \right\Vert_{L ^q (K_R)} \leq \left\Vert \nabla h_{n,\eta}' \right\Vert_{L ^q (K_R)} + \nu_n' (K_{R+\eta}) \left\Vert \nabla f_\eta\right\Vert_{L ^q (\R ^d)}
\]
where we used that if $x\in K_R$ and $\eta<1$, then $f_{\eta} (x-x_i) = 0$ if $x_i \in (K_{R+\eta})^c$. A simple application of H\"older's inequality then yields 
\[
\left\Vert \nabla h_{n,\eta}' \right\Vert_{L ^q (K_R)} \leq |K_R| ^{1/q-1/2} \left\Vert \nabla h_{n,\eta}' \right\Vert_{L ^2 (K_R)} 
\]
and concludes the proof of the inequality, with 
$C_{q,\eta} :=  \left\Vert \nabla f_\eta\right\Vert_{L ^q (\R ^d)}.$
%
%Bounding then  $\nu_n'(K_{R+\eta})$ exactly as we did for \eqref{contnu}, we deduce 
%$$ \left\Vert \nabla h'_n \right\Vert_{L ^q (K_R)} \leq |K_R | ^{1/q - 1/2} \left\Vert \nabla \het \right\Vert_{L ^2 (K_R)} + C_{q,\eta} \|\mu_0\|_{L^\infty} |K_R| + C   R^{\frac{d-1}{2}}\|\nab \het\|_{L^2(K_{R+2})},$$%
%and \eqref{controlLq2} follows.
\end{proof}

 Since  $$-\Delta h_{n,\eta}'= c_d \Big(\sum_{i=1}^n \delta_{x_i'}^{(\eta)} - \mu_0'\Big),$$ 
 controlling $\nab h_{n,\eta}'$ in $L^2$ for some $\eta$ (as we do via  Proposition \ref{propRS}) gives a control on $\sum_i \delta_{x_i'}^{(\eta)}$  (more precisely it gives a control on $\|\sum_i \delta_{x_i'}^{(\eta)}- \mu_0'\|$ in the Sobolev space $H^{-1}_{\mathrm{loc}}$),
   which suffices, say, to control the number of points in a given region, since controlling the Dirac masses or the smeared out Dirac masses is not much different.
 These controls in weak spaces of  the fluctuations $\sum_i \delta_{x_i'}- \mu_0'$, also called discrepancies in numbers of points,  can in fact be improved, again via the smeared out charges at fixed scale $\eta$, and we have the following result~:

 \begin{lemme}[Controlling charge fluctuations or discrepancies,  \cite{rs}]\label{lem:fluctu charge}\mbox{}\\
Assume $\mu_0$ is a measure with an $L^\infty$ density. 
For any $n$, for any $x_1, \dots, x_n\in \R^d$,  let  $\nu_n'=\sum_{i=1}^n \delta_{x_i'}$ and  
$$D(x', R): =  \nu'_n (B(x',R)) - \int_{B(x',R)} d \mu_0'$$ be the point discrepancy.
Then,  $h_{n,\eta}'$ being given by \eqref{1to1}, 
for any $0 < \eta < 1$, $R>2$ and $x' \in \R ^d$, we have
\begin{equation}\label{eq:control fluctu charge}
\int_{B(x',2R)} |\nabla h_{n,\eta}' | ^2 \geq C \frac{D (x',R) ^2 }{R ^{d-2}} \min\left( 1, \frac{D(x',R)}{R ^d} \right),
\end{equation} 
where $C$ is a constant depending only on $d$ and $\|\mu_0\|_{L^\infty}$.
\end{lemme}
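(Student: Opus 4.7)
The plan is to apply Gauss's divergence theorem on concentric spheres $\partial B(x',R')$ for $R'$ in an appropriate interval near $R$, and to integrate the resulting flux inequality in $R'$. We need only consider $D := D(x',R) > 0$, since the right-hand side of the claimed inequality is nonpositive when $D \leq 0$ while $|\nabla h_{n,\eta}'|^2 \geq 0$ trivially. Since $h_{n,\eta}'$ satisfies the local equation $-\Delta h_{n,\eta}' = c_d\bigl(\sum_i \delta_{x_i'}^{(\eta)} - \mu_0'\bigr)$, we have for any $R'>0$
$$\int_{\partial B(x',R')} \nabla h_{n,\eta}' \cdot \nu\, d\sigma \;=\; c_d\, Q(R'),\qquad Q(R') := \int_{B(x',R')}\Bigl(\sum_i \delta_{x_i'}^{(\eta)} - \mu_0'\Bigr),$$
so that Cauchy--Schwarz on the sphere gives
$$\int_{\partial B(x',R')} |\nabla h_{n,\eta}'|^2\, d\sigma \;\geq\; \frac{c_d^2\, Q(R')^2}{|\mathbb{S}^{d-1}|\, R'^{\,d-1}}.$$

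The crux is to select a range of $R'$ on which $Q(R') \gtrsim D$. I would take $R' \in [R+\eta, 2R]$, which is nonempty since $R > 2 > 2\eta$. For such $R'$, every smeared Dirac $\delta_{x_i'}^{(\eta)}$ with $x_i' \in B(x',R)$ is entirely contained in $B(x',R')$, so $\int_{B(x',R')} \sum_i \delta_{x_i'}^{(\eta)} \geq \nu_n'(B(x',R))$. Combining this with the annular bound $\mu_0'(B(x',R')\setminus B(x',R)) \leq C\|\mu_0\|_{L^\infty} R^{d-1}(R'-R)$ coming from $\|\mu_0\|_{L^\infty}<\infty$, one obtains
$$Q(R') \;\geq\; D \;-\; C\|\mu_0\|_{L^\infty} R^{d-1}(R'-R).$$
Hence $Q(R') \geq D/2$ on $R' \in [R+\eta,\, R+\eta+L]$ with $L := \min\bigl(R/2,\, D/(2C\|\mu_0\|_{L^\infty} R^{d-1})\bigr)$, the $\eta$-shift being absorbed using $\eta<1<R/2$.

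Integrating the surface inequality over this range, which sits inside $B(x',2R)$, gives
$$\int_{B(x',2R)} |\nabla h_{n,\eta}'|^2 \;\geq\; \int_{R+\eta}^{R+\eta+L} \frac{c_d^2 (D/2)^2}{|\mathbb{S}^{d-1}|(2R)^{d-1}}\,dR' \;\gtrsim\; \frac{D^2\,L}{R^{d-1}} \;\gtrsim\; \frac{D^2}{R^{d-2}}\min\!\Bigl(1,\,\frac{D}{R^d}\Bigr),$$
which is exactly the stated bound, the two cases $L\sim R$ (when $D \gtrsim R^d$) and $L \sim D/R^{d-1}$ (when $D \lesssim R^d$) producing respectively the $1$ and $D/R^d$ terms inside the minimum. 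The main technical obstacle is the $\eta$-collar around $\partial B(x',R)$: the smeared charges from points at distance $<\eta$ from $\partial B(x',R)$ cannot be cleanly counted, which forces the lower endpoint $R' = R+\eta$. This makes the argument vacuous in the borderline regime $D \lesssim C\|\mu_0\|_{L^\infty} R^{d-1}\eta$, where $Q(R+\eta)$ may already fail to be positive; handling this tiny regime requires either a separate (trivial) estimate or an appropriate calibration of the constant $C$ in the lemma, since there the claimed right-hand side is itself correspondingly small.
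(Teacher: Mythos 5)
Your argument for $D>0$ is essentially the paper's proof. The paper likewise applies Green's theorem on $B(x',t)$ to identify the flux through $\partial B(x',t)$ with $c_d Q(t)$ (up to a sign that is immaterial after squaring), uses Cauchy--Schwarz on each sphere to get $\int_{\partial B(x',t)}|\nabla h'_{n,\eta}|^2\gtrsim Q(t)^2/t^{d-1}$, and integrates over $t\in[R+\eta,T]$ with $T:=\min\bigl(2R,\ ((R+\eta)^d+\tfrac{D}{2C})^{1/d}\bigr)$ — which is your $R+\eta+L$ in disguise — so that $Q(t)\ge D/2$ on that range; the dichotomy $T-R\sim R$ versus $T-R\sim D/R^{d-1}$ produces exactly your two branches of the minimum. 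The borderline regime you flag at the end, where $D\lesssim \|\mu_0\|_{L^\infty}R^{d-1}\eta$ and the $\mu_0'$-mass of the collar $B(x',R+\eta)\setminus B(x',R)$ can already consume $D$, is present in the paper's proof as well and is swept into the phrase ``with $C$ well-chosen''; you are right that it deserves explicit calibration, and identifying it is to your credit rather than a defect relative to the paper.

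The one place you genuinely diverge is $D\le 0$. As the inequality is literally typeset (no absolute value), the right-hand side is indeed nonpositive there and your dismissal is licit; but the intended statement — the form proved in the cited reference \cite{rs} — carries $|D(x',R)|$ inside the minimum, so that the lemma controls deficits of points as well as excesses. For that version the paper runs a mirror argument: it integrates the same spherical Cauchy--Schwarz bound over \emph{shrinking} radii $t$ between $R-\eta$ and a threshold $T$ chosen so that $t^d\ge R^d-|D|/(2C)$, on which the flux satisfies $Q(t)\le D/2<0$ because passing from $B(x',R)$ to $B(x',t)$ removes nonnegative background mass and at most the points of $B(x',R)\setminus B(x',t-\eta)$. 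If you want the two-sided lemma, you should append this half; it is a line-by-line reflection of your own argument, with the $\eta$-collar now sitting just inside $\partial B(x',R)$.
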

\begin{proof}The proof relies on a  Cauchy-Schwarz inequality argument which is the basis of  the Ginzburg-Landau  ball construction that we will see  in Chapter~\ref{toolsgl}. 
 We first consider the case that $D:= D(x', R)>0$. We first  note that if 
\begin{equation}\label{defLe}
R+\eta\le t \le T:= \min \left(  2R, \Big((R+\eta)^d + \frac{D}{2C}\Big)^{\frac1d}\right)\end{equation}
with $C$ well-chosen,  we have 
\begin{eqnarray*}
 -\int_{\pa  B(x',t)  } \frac{\pa  h_{n,\eta}'}{\pa \nu} &=&-\int_{ B(x',t)}\Delta h_{n,\eta}' = c_d \int_{B(x',t)}\Big(\sum_{i=1}^n\delta_{x_i'}^{(\eta)} -\mu_0'(x)\Big)\\
 &\ge &c_{d}\left(D  -   \int_{B(x',t)\backslash B(x',R) } \mu_0'(x)\, dx\right)\ge c_d D- C \left( t^{d}-R^{d}\right)\\
&  \ge & \frac{c_d}{2} D   ,
\end{eqnarray*} with $C$ depends on $d$ and $\|\mu_0\|_{L^\infty}$, if we choose the same $C$ in \eqref{defLe}. 
By the Cauchy-Schwarz inequality, the previous estimate, and explicit integration, there holds 
\begin{eqnarray*}
 \int_{B(x',2R)}|E_\eta|^2&\ge&\int_{R+\eta}^{T}\frac{1}{|\pa B(x',t)|}\left(\int_{\pa  B(x',t)} \frac{\pa h_{n,\eta}'}{\pa \nu}\right)^2dt\\
 &=&C D^2 \int_{R+\eta}^{T}t^{-(d-1)}  \, dt=  CD^2 \left( g( R+\eta)-g(T)\right) .  \end{eqnarray*}
 Inserting the definition of $T$ and rearranging terms, one easily checks that we obtain \eqref{eq:control fluctu charge}.
There remains to treat the case where $D\le 0$.  This time, we let 
$$T \le  t \le R-\eta, \qquad    T:= \Big( (R-\eta)^d - \frac{D}{2C}\Big)^{\frac1d}$$
and if $C$ is  well-chosen  we have \begin{eqnarray*}
 -\int_{\pa  B(x',t)}\frac{\pa h_{n,\eta}' } {\pa \nu} &=&-\int_{ B(x',t)}\Delta h_{n,\eta}' = c_d \int_{B(x',t)}\Big(\sum_{i=1}^n \delta_{x_i'}^{(\eta)} - \mu_0'(x)\Big)\\
 &\le &c_{d}\left(D  -   \int_{B(x',R)\backslash B(x',t) } \mu_0'(x)\, dx\right)\le \frac{c_d}{2} D   ,
\end{eqnarray*}and the rest of the proof is analogous, integrating from $T$ to $R-\eta$. 
\end{proof}

The discrepancy in the number of points measures how regular a point distribution is, and, together with its variance, is a very important quantity from the point of view of the analysis of point processes, see e.g.  \cite{torsti}.
 Also in  approximation theory, the discrepancy is exactly the measure of the accuracy (or error) in the approximation, see  \cite{grabner,brauchartgrabner}.
%%%%%%%%%%%%%%%%%%%%%%%%%%%%%%%%%%%%%%%%%%%%%%%%%%%%%%%%%%%%%%%%%%%%%%%%%%%%%%%%%%%%%%%%%%%%%%% CHAP 4 
%%%%%%%%%%%%%%%%%%%%%%%%%%%%%%%%%%%%%%%%%%%%%%%%%%%%%%%%%%%%%%%%%%%%%%%%%%%%%%%%%%%%%%%%%%%%
 
 %%%%%%%%%%%%%%%%%%%%%%%%%%%%%%%%%%%%%%%%%%%%%%%%%%%%%%%%%%%%%%%%%%%%%%%%%%%%%%%%%%%%%%%%%%%%%%%% CHAP 5
 %%%%%%%%%%%%%%%%%%%%%%%%%%%%%%%%%%%%%%%%%%%
 
\chapter{Definition(s) and properties of the renormalized energy}\label{chapdefw}
In the previous chapter, we have seen   a  splitting of the Hamiltonian  (Proposition    \ref{developpementasympto}) where a lower order term  in the form of a function $\mathcal H_n$, the ``precursor to the renormalized energy", appears.  We have worked so far at fixed $n$.  The ultimate goal is to find the asymptotic limit of this lower order term as $n\to +\infty$: a limiting object will appear which we call the {\it renormalized energy}. This energy is the total Coulomb interaction energy of an infinite configuration  of points in the whole space  in a constant neutralizing background. Such a system in called in physics a {\it jellium}.
This chapter is devoted to the definition(s) of this limiting object  itself and the study of some of its properties, before we proceed in the next chapter with deriving it as the $n\to +\infty$ limit. Thus this chapter can be read independently from the rest.

\section{Motivation and definitions} 
The goal of this chapter is to define a total Coulomb interaction for an infinite system of discrete point ``charges" in a constant neutralizing background of  fixed density $m>0$, related to a potential $h$ that solves (in the sense of distributions)
\begin{equation}\label{eqh}
-\Delta h=c_d \Big (\sum_{p\in \Lambda} N_p \delta_p - m\Big) \quad \text{in}\ \R^d, \ \text{ for} \ d\ge 2
\end{equation}
where $\Lambda $ is a discrete set of points in $\R^d$, and $N_p$ are  positive integers (the multiplicities of the points), respectively
\begin{equation}\label{eqh1d}
-\Delta h= c_d\Big (\sum_{p\in \Lambda} N_p \delta_p - m\delta_\R \Big) \quad \text{in} \ \R^2, \ \text{for} \ d=1\end{equation}
with $\delta_\R$ defined in \eqref{deltar}.

Again, such a system is often called a (classical) {\it jellium} in physics. The jellium model  was first introduced by \cite{wigner1} in the quantum case,  and can be viewed as a toy model for matter~: the points charges are then atoms, which interact (via electrostatic forces) with a cloud of electrons of density $m$.
 
 The reason why we need to consider such systems is that in the  previous chapter we dealt with  functions $h_n'$ that solved  a linear  equation of the  type :
\begin{equation}
\label{defhneta1}
-\Delta h'_n = c_d \Big( \sum_{i=1}^n \delta_{x'_i} - \mu_0'\Big) \quad \text{for } \ d\ge 2,
\end{equation}  in which it is easy, at least formally, to pass to the limit $n\to \infty$. 
Previously, we had chosen to center the blow-up at, say, the origin $0$ (respectively \eqref{eq1d1} for $d=1$).
We note that, in that case, the density $\mu_0'(x')$ equals by definition $\mu_0( x' n^{-1/d})$, so that, at least if $\mu_0$ is sufficiently regular,  $\mu_0'(x') \to \mu_0(0)$ pointwise as $n\to +\infty$, i.e. $\mu_0'$ converges to a constant. It is constant because $\mu_0$ varies much slower than the scale of the configuration of discrete points.
If  we had chosen to blow up around a different point, say $x_0$, then we would obtain instead the constant $\mu_0(x_0)$ as the limit.  This constant is the local density of the neutralizing background charge. As $n\to +\infty$, the number of points  becomes infinite  and they fill up the whole space, so that if we blow-up around an origin which lies in the support of $\mu_0$ (the droplet $\Sigma$), we obtain as a (at least formal) limit  as $n \to + \infty$ of \eqref{defhneta1} an equation of the form \eqref{eqh} (resp. \eqref{eqh1d} for $d=1$). Figure \ref{fig11} illustrates this blow up procedure around a point $x_0$ in $\Sigma$, the support of $\mu_0$. The final goal is to derive $\mathcal{W}$ as the governing interaction  for the limiting infinite point configurations.

\begin{figure}[h!]
\begin{center}
\includegraphics[scale=0.7]{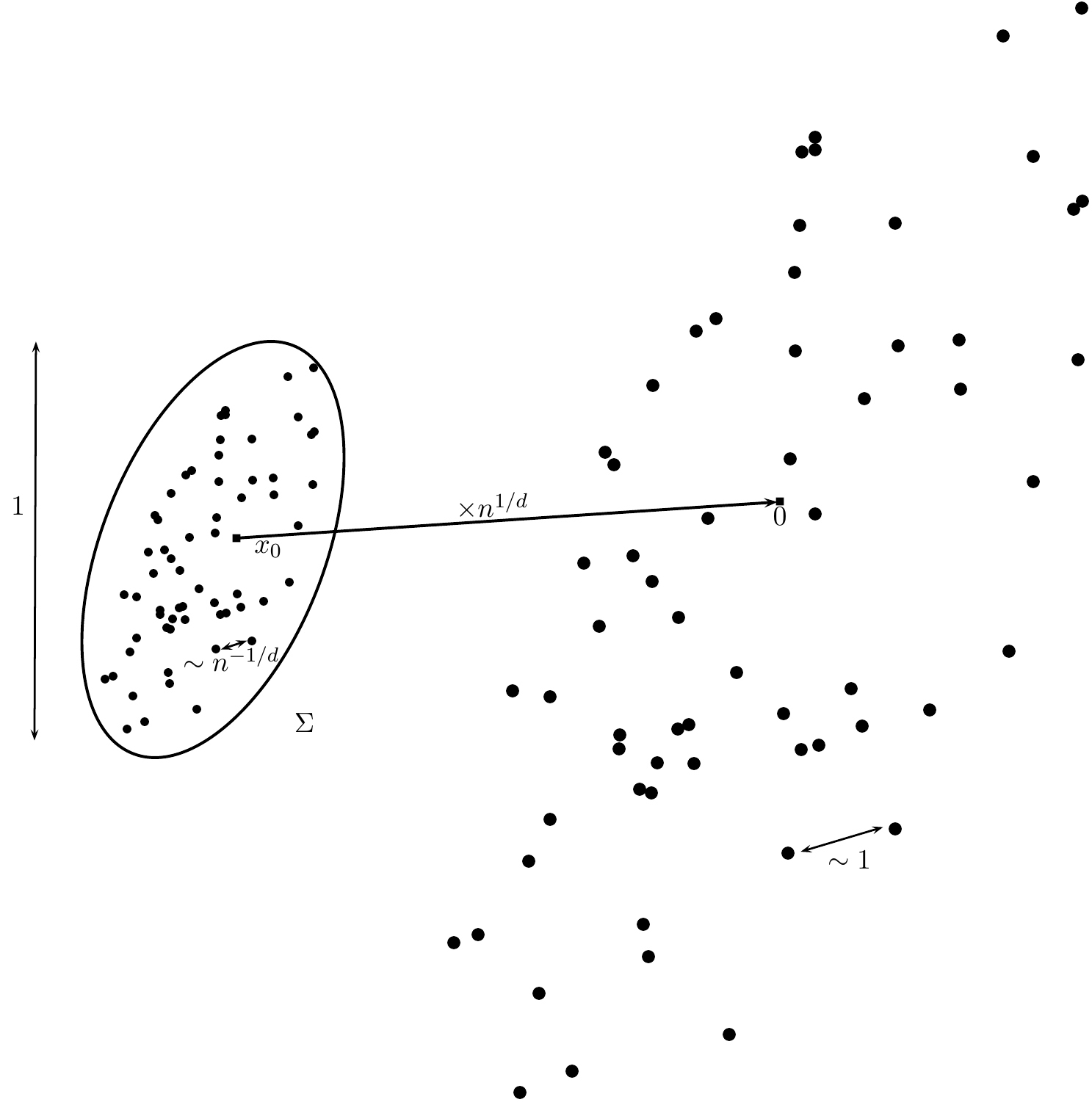}
\caption{An arbitrary blown-up configuration}\label{fig11}
\end{center}\end{figure}

We may  observe that although we do have controls on quantities
\begin{displaymath}
\int |\nabla h'_{n,\eta}|^2
\end{displaymath}
 which in turn give controls on the gradient $\nabla h'_{n,\eta}$ and $\nab h_n'$ (as seen in Lemma \ref{lem:fluctu field}), we do not fully control $h_n'$ itself, and thus we will not know  its limit $h$ itself. Also note that \eqref{eqh} determines $h$ from the data of the points only up to a harmonic function.

 Since we will work a lot with the gradient of $ h$, it is sometimes convenient to denote it by $E$, standing for  ``electric field." Indeed, $\nab h$  physically corresponds to the electric field generated by the  charge distribution  $\sum_p N_p \delta_p- m$. The equation \eqref{eqh} can then be rewritten with left-hand side $-\div E$, since $\div \nab =\Delta$.
 
We will give two definitions of the renormalized energy. They rely on the two different ways of subtracting off the self-interaction energy of each charge, that we have already encountered.  One definition, \`a la Bethuel-Brezis-H\'elein \cite{bbh},  was first introduced in \cite{ssgl} for the study of vortices in the Ginzburg-Landau  model and   was only originally written down  in dimensions 1 and  2. The other one was later introduced in \cite{rs}, it relies on the method of smearing out the charges (or truncating the potential) and works in any dimension $d\ge 2$. It was extended to Riesz interaction kernels $|x|^{-s}$ with $d-2\le s <d$ and to the $d=1$ logarithmic case in \cite{ps}.

 In the sequel, $K_R$ will denote the $d$-dimensional cubes $[-R,R]^d$.
 \begin{defini}[Admissible electric fields] \label{definitionsA} Let $m$ be a positive number. 
If $d \ge 2$,  we let $\bar{\mathcal{A}}_m$ be the set of gradient vector-fields $E = \nabla h$ that belong to $L^{q}_{\mathrm{loc}}(\R^d, \R^d)$ for all $q< \f{d}{d-1}$, \footnote{this is simply the best integrability of the gradient of the Coulomb kernel, as seen in Remark~\ref{rem41}} and such that 
 \begin{equation} \label{definitionabar}
 - \div E = c_d \Big(\sum_{p \in \Lambda} N_p \delta_p - m  \Big) \quad \text{in } \ \R^d
 \end{equation}
 for some discrete set $\Lambda \in \R^d $, and $N_p$ positive integers; resp. if $d=1$, the set of gradient vector-fields $E=\nab h \in L^q_{\mathrm{loc}}(\R^2, \R^2)$ for all $q<2$ such that
 $$ - \div E = c_d \Big(\sum_{p \in \Lambda} N_p \delta_p - m  \delta_\R \Big) \quad \text{in } \ \R^2.$$ 
 We also let $\mathcal{A}_m$ be the set of gradient vector-fields satisfying the same conditions but with all coefficients $N_p \equiv 1$ (i.e. there are no multiple points). 
 \end{defini}
% Note that, given a discrete set $\Lambda$ and coefficients $N_p$ there is no uniqueness of the vector-field $E$ satisfying (\ref{definitionabar}), because one can always add the gradient of an harmonic function to any such $E$. 

\begin{defini}[Renormalized energy by smearing out the  charges \cite{rs}] \label{def1} Let $d \ge 2$. For any  $E\in \bar{\mathcal{A}}_m$, and any $\eta>0$ 
we define 
\be \label{esmeared}
E_\eta= E- \sum_{p\in \Lambda} N_p \nab f_\eta (\cdot -p)\ee
where $\Lambda $ and $N_p$ are associated to $E$ via \eqref{definitionabar}, and $f_\eta$ is as in \eqref{feta} ; and we let \be \label{Weta}
\mathcal W_\eta (E) = \limsup_{R\to +\infty} \dashint_{K_R} |E_\eta|^2 - m c_d g(\eta).\ee
We then define the renormalized energy $\mathcal{W}$ by 
\begin{equation} \label{definitionW1}
\mathcal{W}(E) := \underset{\eta \rightarrow 0}{\lim}\,  \mathcal W_\eta(E). 
\end{equation}
%If $d=1$ the definition is exactly the same as if $d=2$.
\end{defini} 
We will see below in Proposition \ref{promono} that this limit exists. 
We note that one may  equivalently define $h_{\eta}$ to be  the truncated  version of the electrostatic potential, that is
\begin{equation}\label{heta}
\nabla h_{\eta} = \nabla h - \sum_{p \in \Lambda} N_p \nabla f_{\eta}(\cdot-p)\end{equation} which solves 
\be \label{divee} -\Delta h_{\eta} = c_d\Big(\sum_{p \in \Lambda} N_p \delta_p^{(\eta)} - m\Big),\ee
 and then $E_\eta= \nab h_\eta$.

\begin{defini}[Renormalized energy by cutting out holes \cite{ssgl,ss2}]\label{def2}
Let $d\ge 2$. For any element $E$ of $\mathcal{A}_m$, we define 
the renormalized energy $W$ by \begin{equation} \label{definitionW2}
W(E) = \underset{R \to + \infty}{\limsup} \f{W(E, \chi_R)}{|K_R|}\end{equation}
where
\be \label{defiW} W(E,\chi_R): =  \underset{\eta \rightarrow 0}{\lim} \int_{\R^d \backslash  \cup_{p \in \Lambda} B(p, \eta)} \chi_R |E|^2 -  c_d  g(\eta)\sum_{p \in \Lambda} \chi_R(p)  
\end{equation} and 
 $\{\chi_R\}_{R > 0}$ is any family of cutoff functions on $\R^2$ such that 
\be\label{condchi}
\chi_R \equiv 1 \mbox{ on } K_{R-1}, \  \chi_R\equiv 0  \mbox{ in  } K^{c}_R \mbox{ and } ||\nabla \chi_R||_{\infty} \mbox{ is uniformly bounded. }
\ee
For  $d=1$, we define $W(E)$ in the same way, except $K_R=[-R,R]$,  $\{\chi_R\}$ is a  family of functions depending only on the first coordinate in $\R^2$, such 
 that $\chi_R \equiv 1 $ in $K_{R-1} \times \R$ and $\chi_R\equiv 0$  on $(K_R \times \R)^c$, and 
\begin{equation} W(E,\chi_R) =  \underset{\eta \rightarrow 0}{\lim} \int_{\R^2 \backslash  \cup_{p \in \Lambda} B(p, \eta)} \chi_R |E|^2 + 2\pi \log   \eta\sum_{p \in \Lambda} \chi_R(p)  
\end{equation} 
\end{defini}

%With our notations, an equivalent formulation of \ref{definitionW2} is : 
%\begin{equation}
%W(E) := \underset{R + \infty}{\limsup} \f{W(E, \chi_R)}{|K_R|}.
%\end{equation}
%The definition is to be considered only when $d = 2$.
%\end{defini}

%\begin{rem} In these definitions, the only remains of the initial potential $V$ are in the equilibrium measure $\mu_0$ (which depends on $V$), from which we only keep the value at $x_0$ (center of the blow-up), which ultimately gives the constant $m$ in definition \ref{definitionabar}.
%\end{rem
We wrote  down here  an equivalent for   dimension $d \ge 3$ of the $W$ defined in \cite{ssgl} in dimension $2$.  It sufficed to replace $2\pi$ by $c_d$ and $\log \eta$  by $-g(\eta)$.   However, the  good properties we need for this energy, such as  the fact that it is bounded below and its minimum is achieved, have not been written down anywhere. 
It is likely that the methods we present here for showing this for  $\mathcal{W}$ extend to $W$ at least up to dimension $2$. However, it is not completely clear that we would be able
 to derive $W$ from the Coulomb gas Hamiltonian $H_n$ in dimension $d\ge 3$.

\section{First properties}
Let us now  make a few remarks on the definitions and the  comparison between them.
\begin{itemize}
\item Both definitions correspond to computing an average energy per unit volume in the limit of a large box size. It is necessary to do so because the system is infinite and otherwise 
would have an infinite energy.
\item It is not a priori clear how to define a  total Coulomb interaction of such a  ``jellium system", first because of the infinite size of the system as we just saw, second because of the lack of local charge neutrality of the system.
The definitions we presented avoid  having to go through computing  the sum of pairwise interaction between particles (it would not even  be clear how to sum them), but instead replace it with (renormalized variants of) the extensive quantity $\int |\nab h|^2$  (see \eqref{formalcomputation} and the comments following it).
\item In the definition of $W$, the need for the cut-off functions $\chi_R$ is due to the fact that if $\partial K_R$ intersects some ball $B(p, \eta)$, then the value of $W$ oscillates wildly between $+\infty $ and $-\infty$. Cutting off by a $C^1$ function removes the contribution of the points near the boundary, which is generally  negliglible compared to the total volume anyway.  
\item The two definitions correspond to two different ways of ``renormalizing" i.e. subtracting off the infinite contribution of the Dirac masses to the energy, but more importantly they correspond to reversing the order of  the limits $\eta\to 0$ and $R\to +\infty$. As a result the values of $\mathcal{W}$ and $W$ may differ.
This is already seen in the fact that $\mathcal{W}$ accepts multiple points (i.e. $N_p>1$), while $W$ is (formally) infinite for multiple points. Indeed,
let us suppose that $N_p \geq 2$ for some point $p \in \Lambda$ at distance at least $2r_0$ from its neighbors. Then 
\begin{equation*}
\int_{B(p, r_0)} \left|E_{\eta}\right|^2 \approx \int_{B(0, r_0)} |N_p \nabla g_\eta |^2  \approx  N_p^2 c_d g(\eta).
\end{equation*}
A few such multiple points add in $\mathcal{W}$ an extra contribution of $(N_p^2 - N_p)c_d   g(\eta)$, which disappears as $R \rightarrow + \infty$ when dividing by $|K_R|$. 
On the other hand, in the second definition, 
\begin{displaymath}
\int_{B(p, r_0) \backslash B(p, \eta)} |E|^2 \approx N_p^2 c_d g(\eta)
\end{displaymath}
which gives an extra contribution to $\mathcal W$ of order $(N_p^2 - N_p) c_d g(\eta)$ and this term diverges to $+ \infty$ as $\eta$ tends to $0$. 
\item However,  as we will see below, $W$ and $\mathcal{W}$ agree for configurations of points which are ``well-separated" i.e. for which all the points are simple and separated by a fixed minimum distance, because in that case the order of the limits can be reversed.
We will see below  that when dealing with minimizers of these energies, we can reduce to such well-separated configurations.

%\item $W$ is in some sense more precise, but more difficult to bound from below (detailler?)

\item
The functions $W$ and $\mathcal{W}$ are functions of $\nab h$ and not only of the points (recall that $h$ may vary by addition of a harmonic function), however one  can make them functions of the points only by setting, if $\nu = \sum_{p \in \Lambda} N_p \delta_p$,  
\begin{displaymath}
\W(\nu) = \inf \left\lbrace \mathcal{W}(E), - \div E = c_d (\nu - m) \right\rbrace
\end{displaymath}
and the same for $W$, that is taking the infimum of $\mathcal{W}(E)$ on the set of gradient vector-fields $E$ that are compatible with $\nu$. 
Fortunately, these  are still measurable as functions of the points, thanks to a   measurable selection theorem. For more details, we refer to \cite[Sec. 6.6]{ss1}. 

\end{itemize}

We next gather some properties of $W$ and $\mathcal{W}$ whose proofs can be found in \cite{ssgl}, \cite{rs} respectively.
\begin{prop}[Properties of $W$ and $\mc{W}$] \mbox{}
\label{propertiesoftheenergies}
\\
\begin{enumerate}
\item For $d=1,2$, the value of $W$ does not depend on the choice of the family $\{\chi_R\}_{R > 0}$ as long as it satisfies \eqref{condchi}.
\item Both $W$ and $\mathcal{W}$ are Borel-measurable on $\mathcal{A}_m$ (resp. $\bar{\mathcal{A}}_m$) (and over $L^q_{loc}$ when extending the functions by $+ \infty$ outside their domains of definition). 
\item Scaling property : if $E$ belongs to $\mathcal{A}_m$ (resp. $\bar{\mathcal{A}}_m$),  then 
\begin{displaymath}
\widehat{E} := m^{1/d - 1} E\left(\f{\cdot}{m^{1/d}} \right) \in \mathcal{A}_1, \  \text{resp.} \ \bar{\mathcal{A}}_1.
\end{displaymath}
Moreover, we have, if $d\ge 3$,
\begin{equation} \label{scalingW}
\begin{cases}\mathcal{W} (\j) = m ^{2-2/d} \mathcal{W} (\hat{E}) \\
\mathcal{W}_\eta (E) = m ^{2-2/d} \mathcal{W}_{\eta m^{1/d}}  (\hat{E}) \end{cases}
\end{equation} and if $d=1,2$,
\be\label{scalingW2d}
\begin{cases}W(E)= mW(\hat{E}) -\frac{2\pi}{d} m \log m \\
\mathcal{W} (E) = m \mathcal{W} (\hat{E})-\frac{2\pi}{d} m \log m \\
\mathcal{W}_\eta (E) = m \mathcal{W}_{\eta m^{1/d}}  (\hat{E})-\frac{2\pi}{d} m  \log m  \\
\end{cases}\ee
% & \ \mathcal{W} (E) = m \left(\mathcal{W} (\hat{E})-\pi \log m \right)  \: \mbox{ if } d=2,\\
%W(E)= mW(\hat{E}) -\pi m \log m  \: \mbox{ if } d=2.
%\end{cases}
%t\end{equation}
One may  thus  reduce to studying  $W$ and  $\mathcal{W}$ on $\mc{A}_1$, resp. $\bar{\mc{A}_1}$. 
\item 
$\min_{\bar{\mathcal{A}}_1} \mathcal{W}$ is finite and achieved for any $d\ge 2$, and $\min_{\mathcal{A}_1} W$ is finite and achieved for $d=1,2$. Moreover, for $d=2$ the values of these two minima coincide.
\item
The minimum of $\mathcal{W}$ on $\bar{\mathcal{A}}_1$, resp. of $W$ on $\mathcal{A}_1$ for $d=1,2$, coincides with the limit as $N \rightarrow + \infty$ of the minima of $\mathcal{W}$ on vector-fields that are $(N\mathbb{Z})^d$-periodic (i.e. that live on the torus $\T_N = \R^d / (N \mathbb{Z})^d$). 
\end{enumerate}
\end{prop}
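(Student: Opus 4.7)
Items (1)--(3) reduce to routine verification. For (1), two cutoff families satisfying \eqref{condchi} differ only in a boundary layer of $K_R$ of width $O(1)$; the energy contribution from such a layer is $O(R^{d-1})$ by the local $L^2$ bounds that follow from the monotonicity Lemma \ref{prodecr}, and is negligible after division by $|K_R|$. Item (2) follows from writing $W$ and $\mathcal{W}$ as countable $\limsup/\liminf$ of the finite-$R$, finite-$\eta$ functionals $W(\cdot, \chi_R)$ and $\dashint_{K_R} |E_\eta|^2 - m c_d g(\eta)$, each of which is continuous in $L^q_{\loc}$. The scalings in (3) follow from the change of variable $\hat{x} = m^{1/d} x$ under which, setting $\hat{E}(\hat{x}) := m^{1/d-1} E(\hat{x}/m^{1/d})$, the divergence equation is preserved (sending $\bar{\mathcal{A}}_m$ to $\bar{\mathcal{A}}_1$); the identities then result from a direct change of variables in the integrals, with the extra logarithmic term in $d=1,2$ coming from the fact that $g(\eta m^{1/d}) \neq m^{2/d-2} g(\eta)$ in those dimensions.

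The substance lies in (4)--(5). My plan is to first establish existence for the periodic problem: for each $N$, minimizing $\mathcal{W}$ among $(N\Z)^d$-periodic $E$ reduces, via Fourier analysis on $\T_N$ and the explicit periodic Coulomb Green's function, to a finite-dimensional problem of minimizing a continuous Hamiltonian over configurations of $m N^d$ points in $\T_N$ modulo permutations. A splitting analogous to Proposition \ref{developpementasympto} on the torus shows this functional is bounded below and blows up near the diagonal, yielding a periodic minimizer $E_N$. Since any $(N\Z)^d$-periodic admissible field lies in $\bar{\mathcal{A}}_1$, this gives $\inf_{\bar{\mathcal{A}}_1} \mathcal{W} \leq \liminf_N \mathcal{W}_{\mathrm{per},N}$. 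A global minimizer is then obtained as a weak $L^q_{\loc}$ limit (up to translation and subsequence) of the $E_N$: local $L^2$ bounds on $E_{N,\eta}$ (via Lemma \ref{prodecr}) together with discrepancy bounds (Lemma \ref{lem:fluctu charge}) give the compactness, the divergence condition passes to the limit, and weak lower semicontinuity of the $L^2$ norm of $E_\eta$ for each fixed $\eta$, combined with the monotonicity of $\mathcal{W}_\eta$ in $\eta$, transfers the energy bound.

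The main obstacle is the reverse inequality $\inf_{\bar{\mathcal{A}}_1}\mathcal{W} \geq \liminf_N \mathcal{W}_{\mathrm{per},N}$, requiring a \emph{screening} construction: given a near-minimizer $E$ on $\R^d$, one modifies $E$ in a thin boundary layer $K_N \setminus K_{N-\ell}$ to obtain a vector field with zero normal flux through $\partial K_N$, which then extends periodically to $\T_N$. This is done by solving a Neumann problem in the layer for a corrector with prescribed flux matching the compatibility condition; by standard elliptic estimates its $L^2$ cost is $O(\ell N^{d-1})$, and one must additionally insert or remove $O(N^{d-1})$ points in the layer to enforce overall charge neutrality of the periodic configuration. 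Choosing $\ell$ with $1 \ll \ell \ll N$ makes the per-unit-volume error $o(1)$ as $N \to \infty$. Once (5) holds, the first part of (4) follows immediately. For the last sentence of (4), $\min W = \min \mathcal{W}$ in $d=2$, I would argue that minimizers can be taken with interpoint distance bounded below by a universal $r_0 > 0$ (a cluster of too-close points can be spread out without increasing energy using convexity of $g$); on such well-separated configurations $f_\eta$ is supported in disjoint balls for $\eta < r_0/2$, so the $\eta \to 0$ and $R \to \infty$ limits can be exchanged and the two definitions coincide.
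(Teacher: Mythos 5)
Your overall architecture — scaling by change of variables, measurability via countable limits of continuous functionals, existence via periodic approximation plus screening, and equality of $W$ and $\mc{W}$ in 2D via reduction to well-separated points — is the same as the one the paper relies on (the proofs are deferred to \cite{ssgl,rs}, but the key ingredients appear here as Proposition \ref{promono}, Propositions \ref{avantscreening}--\ref{propscreening}, Corollary \ref{corollscreen} and Corollary \ref{propequiv}). However, there are two genuine gaps. First, for item (1) you invoke ``local $L^2$ bounds that follow from the monotonicity Lemma \ref{prodecr}'' to discard the boundary layer, but that lemma concerns the smeared-out energies $\mc{W}_\eta$, whereas $W$ takes the limit $\eta\to 0$ \emph{before} $R\to\infty$: the local energy density defining $W(E,\chi_R)$ is not bounded below pointwise (a cluster of points near $\pa K_R$ contributes a large negative term $-c_d g(\eta)\sum_p\chi_R(p)$ that is not compensated locally), so controlling the layer $K_R\setminus K_{R-1}$ from below requires the ball-construction lower bounds of \cite{ssgl} (valid only for $d=1,2$, which is precisely why item (1) is stated only there). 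The paper is explicit that the two renormalization procedures cannot be interchanged on this point.

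Second, the screening step is where the real work lies and your sketch elides its two essential difficulties. (a) The corrector in the boundary layer cannot be an arbitrary solution of a Neumann problem: the modified field must still satisfy $-\div E = c_d(\sum_p\delta_p-1)$ with genuine Dirac charges, so one must partition the layer into cells of bounded geometry, place one point per cell, solve compatible Neumann problems cell by cell so the normal traces match, and then restore the gradient structure by a Hodge decomposition (Step 4 of Proposition \ref{propscreening}); adding a flux corrector and ``inserting $O(N^{d-1})$ points'' as two separate operations does not produce an element of $\bar{\mathcal{A}}_1$. (b) An arbitrary near-minimizer of $\mc{W}$ on $\R^d$ may have badly clustered points near $\pa K_N$, and the elliptic estimate giving cost $O(\ell N^{d-1})$ requires both a mean-value selection of a ``good boundary'' with controlled flux and a prior reduction to well-separated configurations (Lieb's argument, Proposition \ref{avantscreening} — which rests on superharmonicity of the external potential rather than convexity of $g$, as you suggest for the last part of item (4)). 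The paper sidesteps your harder direction by screening near-minimizers of the local Neumann problem $F_{\eta,R}$ of \eqref{fetar} and periodizing those, rather than screening a near-minimizer of $\mc{W}$ itself; with your route you would need the more general screening of \cite{ps} that handles non-separated configurations.
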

It can be expected that  in order to balance charges, the constant $m$, which is the density of the neutralizing background, is also the density of points associated to an $E \in \bar{\mathcal{A}}_m$. This is in fact true on average for configurations with finite energy. Let us show it in the case of $\mathcal{W}$ which is easier.

\begin{lemme} \label{neutralite0} Let $E \in \bar{\mathcal{A}}_m$ be such that $\mathcal{W}(E)<+\infty$. Then, letting  $$\nu = -\div E +m= \sum_{p\in \Lambda } N_p \delta_p,$$ we have
\begin{displaymath}
\lim_{R\to  + \infty} \f{\nu(K_R)}{|K_R|} = m.
\end{displaymath}
\end{lemme}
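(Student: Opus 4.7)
From $\mathcal{W}(E)<+\infty$ and the definition \eqref{definitionW1}, I would first fix some small $\eta>0$ such that $\mathcal{W}_\eta(E)<+\infty$. By \eqref{Weta}, this immediately yields the global linear-in-volume bound
\[
\int_{K_R} |E_\eta|^2 \le C R^d \quad \text{for all large } R,
\]
with a constant depending only on $\eta$, $m$ and $d$.

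\textbf{Ball discrepancy.} Lemma \ref{lem:fluctu charge} was stated for the finite-$n$ potential $h'_{n,\eta}$, but its proof is purely local, relying only on the Poisson equation \eqref{divee} and Cauchy--Schwarz on concentric spheres. It therefore transfers verbatim to the present infinite configuration. Inserting the previous energy bound into \eqref{eq:control fluctu charge} at the origin with radius $R$, and ruling out the regime $|D(0,R)|\ge R^d$ (incompatible with a linear-volume energy bound as soon as $R$ is large), one deduces
\[
\bigl| \nu(B(0,R)) - m |B(0,R)| \bigr| \le C R^{d-2/3},
\]
and the same estimate holds for balls centered at any other fixed point.

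\textbf{Passage from balls to cubes.} To bridge from balls to the cubes $K_R$ I would use an integration-by-parts argument. Pick a smooth cutoff $\chi_R$ with $\chi_R\equiv 1$ on $K_R$, $\mathrm{supp}\,\chi_R\subset K_{R+1}$ and $|\nab\chi_R|\le C$. Multiplying $-\div E_\eta=c_d(\tilde\nu_\eta-m)$, where $\tilde\nu_\eta=\sum_p N_p\delta_p^{(\eta)}$, by $\chi_R$ and integrating by parts gives
\[
c_d \Bigl( \int \chi_R \, d\tilde\nu_\eta - m \int \chi_R \Bigr) = \int \nab\chi_R \cdot E_\eta ,
\]
whose right-hand side is controlled by $\|\nab\chi_R\|_{L^2}\,\|E_\eta\|_{L^2(K_{R+1}\setminus K_R)}\le C R^{(d-1)/2}\cdot(CR^d)^{1/2}=O(R^{d-1/2})$. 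Replacing $\int\chi_R$ by $|K_R|$ costs only $O(R^{d-1})$, and replacing $\int\chi_R\, d\tilde\nu_\eta$ by $\nu(K_R)$ costs at most the point count in the boundary layer of thickness $O(1+\eta)$ around $\partial K_R$. That layer is covered by $O(R^{d-1})$ balls of some fixed radius $r>2$, and summing the bound of the previous paragraph via a Cauchy--Schwarz argument using $\sum_i\int_{B(x_i,2r)}|E_\eta|^2\le CR^d$ yields that the total count in the layer is $O(R^{d-1/2})$. Collecting everything gives $|\nu(K_R)-m|K_R||=O(R^{d-1/2})=o(R^d)$, whence $\nu(K_R)/|K_R|\to m$.

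\textbf{Main difficulty.} The genuinely non-automatic part is the third paragraph: the smearing parameter $\eta$ creates a boundary layer across $\partial K_R$ whose contribution has to be shown negligible, and this is why it pays to first derive the ball-level discrepancy estimate of the second paragraph as an intermediate tool. Once this bookkeeping is in place, the rest is a direct combination of Green's formula, Cauchy--Schwarz, and the already-established Lemma~\ref{lem:fluctu charge}.
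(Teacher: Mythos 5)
Your proof is correct, and its core mechanism is the same as the paper's: integrate the equation $-\div E_\eta = c_d(\sum_p N_p\delta_p^{(\eta)}-m)$ over a large region, convert the charge imbalance into a boundary flux of $E_\eta$ via Green's formula, and bound that flux by Cauchy--Schwarz using $\int_{K_R}|E_\eta|^2\le C_\eta R^d$, yielding a discrepancy of order $R^{(d-1)/2}\cdot R^{d/2}=R^{d-1/2}=o(R^d)$. Where you diverge from the paper is in the treatment of the smeared charges straddling the boundary. The paper does not use a smooth cutoff: it selects, by a mean-value argument, a good slice $t\in[R-1,R]$ with $\int_{\pa K_t}|E_\eta|^2\le\int_{K_R}|E_\eta|^2$, integrates over $K_t$, and then simply observes that since $\eta\le 1$ every $\delta_p^{(\eta)}$ with $p\in K_{R-2}$ is entirely inside $K_t$ while every one charging $K_t$ has $p\in K_{R+1}$; this sandwiches $\nu(K_{R-2})\le\int_{K_t}\sum_p N_p\delta_p^{(\eta)}\le\nu(K_{R+1})$ and closes the argument with no point-counting at all. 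Your cutoff version instead needs to bound the number of points in an $O(1)$-thick layer around $\pa K_R$, for which you correctly import Lemma \ref{lem:fluctu charge} (whose proof is indeed purely local and applies to infinite configurations --- the paper itself uses it this way in the proof of Proposition \ref{promono}, in the form $\mathcal N_j^2\le C+Ce_j$ for fixed-radius balls, which combined with Cauchy--Schwarz over the $O(R^{d-1})$ covering balls gives your claimed $O(R^{d-1/2})$). So your argument is sound but carries an extra layer of machinery; also note that your second paragraph (the large-$R$ ball discrepancy $|D(0,R)|\le CR^{d-2/3}$) is never actually needed for the cube statement --- only the fixed-radius version of the discrepancy lemma enters your third paragraph --- so it could be deleted without loss.
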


\begin{proof} First we show that 
\begin{equation}\label{contnu}
\begin{cases}
\nu(K_{R-2})\le  m |K_R|+ C R^{\frac{d-1}{2}} \|E_\eta\|_{L^2(K_R)} \\
\nu(K_{R+1}) \ge m |K_{R-1}| - C R^{\frac{d-1}{2}} \|E_\eta\|_{L^2(K_R)}. \end{cases}\end{equation}
Indeed, by a mean value argument, we may   find 
$t \in [R-1,R]$ such that
\begin{equation}\label{bbord}
\int_{\pa K_t} |E_\eta|^2 \le  \int_{K_R} |E_\eta|^2.\end{equation}
Let us next  integrate \eqref{divee} over $K_t$ and use Green's formula to find
\begin{equation}
\int_{ K_t} \sum_{p \in \Lambda } N_p \delta_p^{(\eta)}  - m |K_t| =  - \int_{\pa K_t} E _\eta\cdot \nu,
\end{equation}where $\nu$ denotes the outer unit normal.
Using the Cauchy-Schwarz inequality and \eqref{bbord}, we deduce that
\begin{equation}\label{feg}
\left|\int_{ K_t} \sum_{p \in \Lambda } N_p \delta_p^{(\eta)}  - m |K_t| \right|\le C R^{\frac{d-1}{2}} \| E_\eta\|_{L^2(K_R)} .\end{equation}
Since $\eta\le 1$, by definition of $\nu$ and since the $\delta_p^{(\eta)}$'s are supported on the $\pa B(p,\eta)$'s, we have $\nu(K_{R-2})\le \int_{ K_t} \sum_{p \in \Lambda } N_p \delta_p^{(\eta)} \le \nu(K_{R+1}) $.  The claim \eqref{contnu} follows. 

Since $\mathcal{W}(E)<+\infty$ then, by definition of $\mathcal{W}$, we have $\mathcal{W}_\eta(E) <+\infty$ for some $\eta<1$ (this is all we  really  use) and it follows
that  $\int_{K_R} |E_\eta|^2 \le C_\eta R^d$ for any $R>1$.  Inserting this into \eqref{contnu}, dividing  by $|K_R|$ and letting $R \to \infty$, we easily get the result.\end{proof}

\section{Almost monotonicity of $\mathcal{W}_\eta$ and lower bound for $\mathcal W$}
In this section, we prove the analogue of Lemma \ref{prodecr} but at the level of the limiting object. By the scaling formula \eqref{scalingW}--\eqref{scalingW2d}, it suffices to prove it over the class $\bar{\mathcal A}_1$.

\begin{prop}\label{promono}
Let  $E \in \bar{ \mathcal{A}}_1$. For any $1>\eta>\alpha>0$ such that $\mathcal W_\alpha(E) <+\infty$  we have 
$$\mathcal W_\eta(E) \le \mathcal W_\alpha (E )+ C\eta$$
where $C$ depends only on  $d$; and thus   $\lim_{\eta\to 0}\mathcal W_\eta(E)= \mathcal W(E)$ always exists. Moreover,
$\mathcal W_\eta$ is bounded below on $\bar{\mathcal A}_1$ by  a constant depending only on $d$.
\end{prop}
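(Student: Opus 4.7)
\textbf{Proof proposal for Proposition \ref{promono}.}

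The plan is to repeat the pointwise computation of Lemma \ref{prodecr} on a large cube $K_R$, control the resulting boundary terms, and then average (divide by $|K_R|$) and let $R\to\infty$, using Lemma \ref{neutralite0} to convert the number of points in $K_R$ into the volume $|K_R|$. Fix $E\in\bar{\mathcal A}_1$ with $\mathcal W_\alpha(E)<+\infty$, and set $\fae=f_\alpha-f_\eta$, which satisfies \eqref{eqfae} and vanishes outside $B(0,\eta)$. From \eqref{heta} we have $E_\alpha=E_\eta+\sum_{p\in\Lambda}N_p\nabla \fae(\cdot-p)$, where only points $p$ with $\mathrm{dist}(p,K_R)\le\eta$ contribute inside $K_R$.

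By the mean-value trick of Lemma \ref{neutralite0}, I would pick $t\in[R,R+1]$ with $\int_{\partial K_t}(|E_\eta|^2+|E_\alpha|^2)\le C\int_{K_{R+1}}(|E_\eta|^2+|E_\alpha|^2)$, and work on $K_t$ instead of $K_R$. Expanding $\int_{K_t}|E_\alpha|^2$ and integrating by parts exactly as in \eqref{faeterm}--\eqref{rhs}, one obtains an identity
\begin{equation*}
\int_{K_t}|E_\alpha|^2-\int_{K_t}|E_\eta|^2
= c_d\!\!\sum_{\substack{p\ne q\\ p,q\in\Lambda_t}}\!\! N_pN_q\!\int \fae(\cdot-p)\bigl(\delta_q^{(\alpha)}+\delta_q^{(\eta)}\bigr)
-2c_d\!\sum_{p\in\Lambda_t}\!N_p\!\int \fae(\cdot-p)
+\nu_t c_d\bigl(g(\alpha)-g(\eta)\bigr)+\mathcal R_t,
\end{equation*}
where $\Lambda_t=\Lambda\cap K_t$, $\nu_t=\sum_{p\in\Lambda_t}N_p$, and $\mathcal R_t$ collects the boundary integrals on $\partial K_t$ together with the contributions of points within distance $\eta$ of $\partial K_t$. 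The pairwise sum is nonpositive since $\fae\le0$, and the single-sum term is bounded by $C\eta\,\nu_t$ using $|\int\fae|\le\max(\eta^2,\eta^2|\log\eta|)\le C\eta$ as in \eqref{intfeta}.

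Dividing by $|K_t|$ and letting $R\to+\infty$, one has $\nu_t/|K_t|\to 1$ by Lemma \ref{neutralite0} (applied with $m=1$, which uses only that $\mathcal W_\alpha(E)<\infty$). The boundary remainder satisfies $|\mathcal R_t|\le CR^{d-1}+CR^{(d-1)/2}(\|E_\eta\|_{L^2(K_{R+1})}+\|E_\alpha\|_{L^2(K_{R+1})})$, which is $o(|K_R|)$ since the hypothesis $\mathcal W_\alpha(E)<\infty$ forces $\|E_\alpha\|_{L^2(K_R)}^2=O(|K_R|)$ and hence, via \eqref{eq:control Lq field} applied to $E_\alpha-E_\eta$, the same for $E_\eta$. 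Taking $\limsup_{R\to\infty}$ and subtracting $c_d g(\eta)$ on both sides yields $\mathcal W_\eta(E)\le \mathcal W_\alpha(E)+C\eta$. The existence of $\lim_{\eta\to0}\mathcal W_\eta(E)$ follows from this almost-monotonicity: taking $\alpha\to0$ gives $\mathcal W_\eta(E)\le\liminf_{\alpha\to0}\mathcal W_\alpha(E)+C\eta$, hence $\limsup_{\eta\to0}\mathcal W_\eta(E)\le\liminf_{\alpha\to 0}\mathcal W_\alpha(E)$.

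The main obstacle is the uniform lower bound. The almost-monotonicity alone is not enough, since it only provides one-sided comparison in $\eta$. My plan is to prove directly that for some fixed $\eta_0\in(0,1)$ and every $E\in\bar{\mathcal A}_1$,
\begin{equation*}
\limsup_{R\to+\infty}\dashint_{K_R}|E_{\eta_0}|^2\ge c_d g(\eta_0)-C(d),
\end{equation*}
from which the almost-monotonicity applied with $\eta<\eta_0$ promotes the bound to all small $\eta$ at the cost of an additive constant. To obtain this, I would use a Ginzburg--Landau-type ball construction (in the spirit of \cite{ssgl,rs} and Section~\ref{toolsgl}) to extract, from any finite-energy admissible configuration, disjoint balls $B(p,r_p)$ on which $\int|E_{\eta_0}|^2\ge c_d N_p^2(g(\eta_0)-g(r_p))-C$, via the local decomposition $h_{\eta_0}=c_dN_p G_{\eta_0}(\cdot-p)+v_p$ with $v_p$ smooth. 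Close points are merged into super-points with total multiplicity $N$, which is favorable because $N^2\ge N$. Summing over balls and dividing by $|K_R|$, together with $\nu_t/|K_t|\to 1$, produces the desired universal $d$-dependent lower bound. Carrying out this ball-construction carefully (including the geometric merging step) is the technically delicate part, but it parallels well-established arguments and does not introduce conceptual difficulties.
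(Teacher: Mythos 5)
Your proof of the almost-monotonicity inequality follows essentially the paper's route: write $E_\eta=E_\alpha+\sum_p N_p\nabla\fae(\cdot-p)$ (note that your decomposition has the roles of $\alpha$ and $\eta$ reversed, a sign slip), expand the square, integrate by parts, use that the pairwise term is nonpositive because $\fae\le 0$, bound the single sum by $C\eta$ times the number of points via \eqref{intfeta}, and invoke Lemma \ref{neutralite0} together with $\frac{1}{R^d}\int_{K_R\setminus K_{R-5}}|E_\alpha|^2\to 0$ (a consequence of $\mathcal W_\alpha(E)<\infty$) to dispose of the boundary contributions. The paper uses a smooth cutoff $\chi_R$ rather than a good slice $\partial K_t$, and controls the points in the boundary layer through the discrepancy estimate of Lemma \ref{lem:fluctu charge}; your stated bound on $\mathcal R_t$ is too optimistic as written (it omits the factor $g(\alpha)-g(\eta)$ coming from $\|\fae\|_{L^\infty}$, and the near-boundary point contributions are not controlled by $R^{(d-1)/2}\|E_\alpha\|_{L^2}$ without an argument of the type of Lemma \ref{lem:fluctu charge}), but since $\alpha$ and $\eta$ are fixed while $R\to\infty$ these are repairable technicalities and the first part is essentially correct.

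The second part contains a genuine error. You assert that ``the almost-monotonicity alone is not enough, since it only provides one-sided comparison in $\eta$,'' but the one-sided comparison goes exactly the right way. For $\alpha<1/2$ it gives $\mathcal W_\alpha(E)\ge \mathcal W_{1/2}(E)-C/2$ (vacuous if $\mathcal W_\alpha(E)=+\infty$), and
$$\mathcal W_{1/2}(E)=\limsup_{R\to\infty}\fint_{K_R}|E_{1/2}|^2-c_d\,g(1/2)\ \ge\ -c_d\,g(1/2),$$
trivially, because the truncated energy is nonnegative and $g(1/2)$ is a finite constant depending only on $d$; for $\eta\in[1/2,1)$ the bound $\mathcal W_\eta(E)\ge -c_d\,g(\eta)\ge -c_d\,g(1/2)$ is immediate for the same reason. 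This is precisely how the paper concludes. The inequality you set out to prove, $\limsup_R\fint_{K_R}|E_{\eta_0}|^2\ge c_d g(\eta_0)-C(d)$, is weaker than the trivial bound $\limsup_R\fint_{K_R}|E_{\eta_0}|^2\ge 0$ once you observe that $g(\eta_0)$ is just a constant at a fixed truncation scale: the whole difficulty of the renormalization lies in $g(\eta)\to+\infty$ as $\eta\to 0$, and that divergence is exactly what the almost-monotonicity tames. The ball-construction detour you propose is, moreover, the method the paper explicitly sets aside for $\mathcal W$ (it is the route used for $W$, where the limits $\eta\to0$ and $R\to\infty$ are taken in the opposite order, and it only works in dimensions $1$ and $2$); importing it here would add serious, and entirely unnecessary, difficulty.
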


\begin{proof} 
Let   $E\in \bar{\mathcal{A}}_1$ and let $E_\eta$ be associated via \eqref{esmeared}. Assume $1>\eta>\alpha>0$. 
Let $K_R= [-R/2,R/2]^d $.
Let $\chi_R$ denote a smooth cutoff function equal to $1$ in $ K_{R-3}$  and vanishing outside $K_{R-2}$, with $|\nab \chi_R|\le 1$. As in the  proof of Lemma \ref{prodecr}  we note that 
$$E_\eta= E_\alpha + \sum_{p\in \Lambda} N_p \nab \fae(x-p)$$
and insert to expand
\begin{multline}
\int_{\R^{d}} \chi_R   |E_\eta|^2- \int \chi_R  |E_\alpha|^2 
\\=\sum_{p, q\in \Lambda} N_p N_q \int_{\R^{d}}  \chi_R  \nab \fae(x-p)\cdot \nab \fae(x-q) + 2\sum_{p\in \Lambda} N_p \int_{\R^{d}} \ \chi_R \nab \fae(x-p)\cdot E_\alpha.\end{multline} Using an integration by parts, we may write 
\begin{equation}\label{errormain}
\int_{\R^{d}}\chi_R   |E_\eta|^2- \int_{\R^{d}} \chi_R  |E_\alpha|^2 \le Error+ Main\, , 
\end{equation} where \begin{multline*}
Main:=
- \sum_{p, q \in\Lambda}N_pN_q  \int_{\R^{d}}  \chi_R \fae(x-p)\Delta \fae(x-q) \\- 2\sum_{p\in\Lambda}N_p \int_{\R^{d}}  \chi_R   \fae(x-p)\div E_\alpha.\end{multline*}
and 
\begin{multline*}
Error:= C\sum_{p, q\in K_{R-1} \backslash  K_{R-4}} N_p N_q \int_{\R^{d}} |\fae(x-p)||\nab \fae(x-q)|
\\+ \sum_{p \in K_{R-1} \backslash  K_{R-4}}N_p  \int_{\R^{d}}    | \fae(x-p)| |E_\alpha|.\end{multline*}
We will work on controlling $Error$ just below, and for now, using the fact that $\fae $ is supported in $B(0,\eta)$,  we may write similarly  as in the proof of Lemma \ref{prodecr}:
\begin{multline*}
Main
= c_d\sum_{p,q \Lambda} N_pN_q\int \chi_R  \fae (x-p) ( \delta_q^{(\eta)}- \delta_q^{(\alpha)} ) 
\\+ 2 c_d \sum_{p,q\in  \Lambda}N_p N_q  \int \chi_R \fae(x-p) \delta_q^{(\alpha)}    -2 c_d \sum_{   p \in  \Lambda} \int\chi_R  \fae(x-p) \\
\le c_d \sum_{p,q\in \Lambda} N_pN_q\int  \chi_R \fae (x-p) ( \delta_q^{(\eta)}+ \delta_q^{(\alpha)} )  +C \eta \sum_{p\in K_{R-2} \cap \Lambda} N_p
\end{multline*} where we have used that $\int |\fae|\le C \eta$ in view of  \eqref{intfeta}.
Since $\fae\le 0$ it follows that 
\begin{equation}\label{m01}
Main \le C  \eta \sum_{p\in K_{R-2} \cap \Lambda} N_p  .\end{equation}

Next, to control $Error$, we partition $K_{R-1}\backslash K_{R-4}$ into disjoint cubes $\mathcal C_j$  of sidelength centered at points $y_j$ and we denote by $\mathcal N_j =\sum_{p  \in \Lambda \cap \mathcal C_j} N_p$.
By Lemma \ref{lem:fluctu charge},     we have that 
$\mathcal N_j^2  \le C + C e_j$ where $$e_j:= \int_{B_2(y_j)}  |E_\alpha|^2 .$$
Using that the overlap of the $B_2(y_j)$ is bounded, we may write 
$$ \sum_j \mathcal N_j^2  \le C R^{d-1} + \sum_j e_j \le CR^{d-1} + \int_{K_{R}\backslash K_{R-5}} |E_\alpha|^2 .$$
We then may deduce, by separating the contributions in each $\mathcal C_j$  and using the Cauchy-Schwarz inequality and 
$\int |f_{\alpha, \eta} |^2  \le C \eta^d \g^2(\alpha)$, 
that
\begin{multline}
Error\le  C \eta^d \g^2(\alpha) \sum_j \mathcal N_j^2 + C \sum_j \eta^{d/2}\g( \alpha)\mathcal  N_j e_j^{1/2}
 \le C \eta^d \g^2( \alpha) \left( \sum_j \mathcal  N_j^2 + \sum_j e_j\right) \\
\le C \eta^d \g^2( \alpha)      \left( R^{d-1} + \int_{ K_{R}\backslash  K_{R-5}}  |E_\alpha|^2 \right)  .\end{multline}
Returning to \eqref{errormain} and \eqref{m01}  we have found that 
\begin{multline}\label{estm} 
\(\int   \chi_R  |E_\alpha|^2 - c_d \sum_{p \in\Lambda\cap K_{R-4}} N_p  \g(\alpha)\)-\( \int \chi_R  |E_\eta|^2 - c_d \sum_{p \in\Lambda\cap K_{R-4}} N_p  \g(\eta)\)\ge
\\
- C  \eta  \sum_{p \in\Lambda\cap K_{R-4}} N_p - C\eta^d  \g^2( \alpha)     \left( R^{d-1} + \int_{ K_{R}\backslash  K_{R-5}} |E_\alpha|^2 \right),
\end{multline} where $C$ depends only on $d$. 
In view of Lemma \ref{neutralite0} we have that $$ \lim_{R\to \infty} \frac{1}{R^d}\sum_{p\in \Lambda \cap K_R} N_p =m \qquad \lim_{R \to \infty}\frac{1}{R^d} \sum_{p\in  \Lambda \cap (K_R\backslash K_{R-5})}N_p =0.$$
In addition,  since $\mathcal W_\alpha(E)  <\infty$ and by definition of $\mathcal W_\alpha$,  we must have $$\lim_{R\to \infty} \frac{1}{R^d} \int_{ K_{R}\backslash  K_{R-5}}  |E_\alpha|^2=0.$$ Dividing \eqref{estm} by $R^d$ and letting $R \to +\infty$, we deduce the desired result $$\mathcal W_\alpha(E)- \mathcal W_\eta(E)  \ge -    C \eta .$$
It then immediately follows that $\mathcal W_\eta$ has a limit (finite or infinite) as $\eta \to 0$, and that $\mathcal W_\eta(E)$ is bounded below by, say, $\mathcal W_{1/2}(E) - Cm$, which in view of its definition is obviously bounded below by $-c_d - C$. 
\end{proof}
We deduce the following~:
\begin{coroll}\label{minoW}
$\mathcal W$ is bounded below on $\bar{\mathcal A}_m$ by a constant depending only on $m$ and $d$.\end{coroll}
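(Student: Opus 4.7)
The plan is to reduce the bound on $\bar{\mathcal{A}}_m$ to the already-established bound on $\bar{\mathcal{A}}_1$ by means of the scaling property stated in item 3 of Proposition \ref{propertiesoftheenergies}. First I would take an arbitrary $E \in \bar{\mathcal{A}}_m$ and form the rescaled field $\hat{E} = m^{1/d-1} E(\cdot/m^{1/d})$, which belongs to $\bar{\mathcal{A}}_1$.

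Next I would invoke Proposition \ref{promono}, which asserts that $\mathcal{W}_\eta$ is bounded below on $\bar{\mathcal{A}}_1$ by a constant depending only on $d$, and that $\mathcal{W}_\eta(\hat{E}) \to \mathcal{W}(\hat{E})$ as $\eta \to 0$. Passing to the limit gives $\mathcal{W}(\hat{E}) \geq -C(d)$ for some constant $C(d)$ depending only on the dimension.

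Finally I would apply the scaling formulas \eqref{scalingW} and \eqref{scalingW2d}: for $d \geq 3$, $\mathcal{W}(E) = m^{2-2/d}\mathcal{W}(\hat{E}) \geq -m^{2-2/d} C(d)$, while for $d=1,2$, $\mathcal{W}(E) = m\,\mathcal{W}(\hat{E}) - \tfrac{2\pi}{d} m \log m \geq -m C(d) - \tfrac{2\pi}{d} m \log m$. Either way, the right-hand side depends only on $m$ and $d$, which is exactly the content of the corollary.

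There is no real obstacle here: the only thing one needs to be slightly careful about is that the scaling relations are indeed stated for $\mathcal{W}$ on the full class $\bar{\mathcal{A}}_m$ (not merely on $\mathcal{A}_m$) and that the limit $\eta \to 0$ in the definition of $\mathcal{W}$ commutes with the rescaling, both of which are immediate from the definitions in Section 4.1 and from Proposition \ref{promono}.
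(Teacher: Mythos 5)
Your argument is correct and is essentially the paper's own: the text reduces the claim to $\bar{\mathcal{A}}_1$ via the scaling relations \eqref{scalingW}--\eqref{scalingW2d} and then invokes the uniform lower bound on $\mathcal{W}_\eta$ from Proposition \ref{promono}, exactly as you do. The only cosmetic remark is that $\mathcal{W}$ (Definition \ref{def1}) is only defined for $d\ge 2$, so the $d=1$ branch of your scaling discussion is not needed here.
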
This property, in addition to its intrinsic interest, will turn out  to be crucial for us  in the next chapter when deriving rigorously $\mathcal{W}$ from the Coulomb gas  Hamiltonian, in the limit $n \to \infty$.

The almost monotonicity property of $\mathcal W_\eta$ has allowed for a rather simple proof of the boundedness  from below of $\mathcal W$. This is a place where the analysis differs a lot from that developed for $W$ in \cite{ssgl}: there, it is also proved that $W$ is bounded below, in dimension $d=2$ (it also works for $d=1$) but by a different method relying on a ``ball construction", \`a la Jerrard \cite{jerrard} and Sandier \cite{sandier}, which only worked in dimensions $1$ and $2$. The reason why the methods cannot be interchanged is that the order of the limits in the definitions of $W$ and $\mathcal W $ are reversed.

\section{Well separated and periodic configurations}
In this section, we are going to see how to compute explicitly the renormalized energies for periodic configurations of points. We  first remark the equivalence of the two ways of computing the renormalized energy for configurations of points which are well-separated.

\begin{lemme}[The energy of well-separated configurations]\label{lemequiv}\mbox{}\\
Let $d\ge 2$. Assume that $h$ solves  a relation of the form
\be \label{eqconsti2}
- \Delta h = c_d \Big(\sum_{p\in \Lambda} \delta_{p} - \mu(x)\Big) \mbox{ in an open set } \Omega \subset \R^d
\ee in the sense of distributions, 
for some discrete set $\Lambda$, and $\mu\in L^\infty(\Omega)$. 
Assume that the points are well-separated in the sense that for some $r_0>0$, 
\begin{equation}\label{wellsep}\min\left(\min_{p\neq p'\in \Lambda\cap \Omega} |p-p'|, \min_{p\in \Lambda\cap \Omega} \dist (p, \partial \Omega)\right)\ge 2 r_0>0.\end{equation}
Then, letting $h_\eta= h+ \sum_{p\in \Lambda} f_\eta(\cdot -p)$, we have
\begin{multline}\label{lequi}
 \int_{\Omega} |\nab h_\eta|^2 -  \# ( \Lambda\cap \Omega)c_d g(\eta)\\
 = \int_{\Omega \backslash \cup_{p\in \Lambda} B(p,\eta)} |\nab h|^2-  \# ( \Lambda\cap \Omega)c_d g(\eta)  + \# (\Lambda \cap \Omega) o_\eta(1)\|\mu\|_{L^\infty(\Omega)}, \end{multline}
where $o_\eta(1) \to 0$ as $\eta \to 0$ is a function that depends only on the dimension. \end{lemme}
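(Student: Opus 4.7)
My plan is to exploit the well-separation hypothesis to localize the whole comparison inside disjoint balls around each atom, where an explicit computation is available. For $\eta<r_0$, condition \eqref{wellsep} ensures that the balls $\{B(p,\eta)\}_{p\in\Lambda\cap\Omega}$ are pairwise disjoint and compactly contained in $\Omega$. Since $f_\eta$ is supported in $\overline{B}(0,\eta)$, the defining relation for $h_\eta$ gives $h_\eta\equiv h$, hence $\nabla h_\eta\equiv\nabla h$, on $\Omega\setminus\bigcup_p B(p,\eta)$. The integral then splits cleanly as
\begin{displaymath}
\int_\Omega|\nabla h_\eta|^2=\int_{\Omega\setminus\cup_p B(p,\eta)}|\nabla h|^2+\sum_{p\in\Lambda\cap\Omega}\int_{B(p,\eta)}|\nabla h_\eta|^2,
\end{displaymath}
so that, after subtracting $\#(\Lambda\cap\Omega)\,c_d\,g(\eta)$ from both sides, \eqref{lequi} reduces to showing that the last sum is a small error as $\eta\to 0$.

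Next, I would perform the local analysis on each ball. Fix $p\in\Lambda\cap\Omega$ and decompose $h=g(\cdot-p)+\tilde h_p$, so that $\tilde h_p$ is the ``regular part'' near $p$. By the separation assumption, no other atom of $\sum_{p'}\delta_{p'}$ lies in $B(p,2r_0)$, hence $\tilde h_p$ solves the classical PDE $-\Delta\tilde h_p=-c_d\mu$ in $B(p,2r_0)$, with bounded right-hand side. On $B(p,\eta)$, formula \eqref{feta} gives $f_\eta(x-p)=g(x-p)-g(\eta)$, so a direct substitution (with the sign convention of \eqref{tp}, where $h_\eta=h-\sum_p f_\eta(\cdot-p)$) yields $h_\eta=\tilde h_p+g(\eta)$, and in particular $\nabla h_\eta=\nabla\tilde h_p$ on $B(p,\eta)$. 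Interior elliptic regularity then provides $\nabla\tilde h_p\in L^\infty_{\mathrm{loc}}(B(p,2r_0))$, and consequently $\int_{B(p,\eta)}|\nabla h_\eta|^2=O(\eta^d)$ as $\eta\to 0$.

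The main technical obstacle will be producing the precise form of the error claimed in \eqref{lequi}, namely $\#(\Lambda\cap\Omega)\,o_\eta(1)\|\mu\|_{L^\infty}$ with $o_\eta(1)$ depending only on the dimension, rather than on a priori uncontrolled boundary data of $\tilde h_p$. My plan for this step is to decompose $\tilde h_p=u_p+v_p$, where $u_p$ is the Newton potential of $-c_d\mu\,\mathbf{1}_{B(p,r_0)}$ (satisfying $\|u_p\|_{C^{1,\alpha}}\le C(d,r_0)\|\mu\|_{L^\infty}$ by Calder\'on--Zygmund estimates) and $v_p$ is harmonic on $B(p,r_0)$. The contribution of $u_p$ to $\int_{B(p,\eta)}|\nabla\tilde h_p|^2$ is then bounded by $C\eta^d\|\mu\|_{L^\infty}^2$, which after summation has exactly the claimed form; the harmonic part $v_p$ contributes through the interior gradient estimate for harmonic functions combined with $|B(p,\eta)|=O(\eta^d)$, producing an $o_\eta(1)$ factor depending only on the dimension (its coefficient being absorbed by the global regularity of $h$ in applications). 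Summing the contributions over $p\in\Lambda\cap\Omega$ then yields \eqref{lequi}.
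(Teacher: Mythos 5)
Your reduction is exact: for $\eta<r_0$ the identity \eqref{lequi} is equivalent to the single estimate $\sum_{p}\int_{B(p,\eta)}|\nab h_\eta|^2=\#(\Lambda\cap\Omega)\,o_\eta(1)\,\|\mu\|_{L^\infty}$, and your identification $\nab h_\eta=\nab\tilde h_p$ on $B(p,\eta)$, with $\tilde h_p=h-g(\cdot-p)$ solving $-\Delta\tilde h_p=-c_d\mu$ there, is correct (as is your reading of the sign convention, which should indeed be that of \eqref{tp}). The gap is the last step: the harmonic part $v_p$ of $\tilde h_p$ is \emph{not} controlled by $\|\mu\|_{L^\infty}$ and the dimension. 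Its gradient at $p$ is the field exerted on $p$ by the rest of the configuration and by the boundary data of $h$, which is unconstrained by the hypotheses: take $\mu\equiv 0$, $\Lambda=\{0\}$, $h=g+Mx_1$ on a large ball; all assumptions hold, yet $\int_{B(0,\eta)}|\nab h_\eta|^2=cM^2\eta^d$ while your target error is $o_\eta(1)\cdot\|\mu\|_{L^\infty}=0$. Your parenthetical ``absorbed by the global regularity of $h$ in applications'' concedes exactly this but appeals to information outside the hypotheses; moreover the Newton-potential piece contributes $C\eta^d\|\mu\|_{L^\infty}^2$, quadratic rather than linear in $\|\mu\|_{L^\infty}$, so even that part is not literally of the claimed form.

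The paper's proof takes a route that never isolates $\int_{B(p,\eta)}|\nab\tilde h_p|^2$ as an error. It introduces a second radius $\alpha<\eta$, expands $|\nab h_\eta|^2=|\nab h|^2-2\,\nab h\cdot\sum_p\nab f_\eta(\cdot-p)+\bigl|\sum_p\nab f_\eta(\cdot-p)\bigr|^2$ on $\Omega\setminus\bigcup_pB(p,\alpha)$, and integrates the cross term by parts; the only inputs are $\Delta h=c_d\mu$ away from the points, the flux identity $-\int_{\pa B(p,\alpha)}\frac{\pa h}{\pa\nu}=c_d+O(\|\mu\|_{L^\infty}\alpha^d)$, and $\int|f_\eta|\le C\max(\eta^2,\eta^2|\log\eta|)$, so every error is genuinely $\#(\Lambda\cap\Omega)\,\|\mu\|_{L^\infty}\,o_\eta(1)$ with a dimensional $o_\eta(1)$. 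Sending $\alpha\to 0$ then compares $\int_\Omega|\nab h_\eta|^2-\#\,c_dg(\eta)$ with $\lim_{\alpha\to0}\bigl(\int_{\Omega\setminus\cup_pB(p,\alpha)}|\nab h|^2-\#\,c_dg(\alpha)\bigr)$, and in that comparison the dangerous quantity $\sum_p\int_{B(p,\eta)}|\nab\tilde h_p|^2$ appears identically on both sides and cancels; the only surviving term is the cross term $\int_{B(p,\eta)}\nab g\cdot\nab\tilde h_p=-c_d\bigl(\overline{\tilde h_p}(\eta)-\tilde h_p(p)\bigr)$ (spherical average on $\pa B(p,\eta)$), which is $O(\|\mu\|_{L^\infty}\eta^2)$ since $\frac{d}{dr}\overline{\tilde h_p}(r)=|\pa B_r|^{-1}\int_{B_r}\Delta\tilde h_p$. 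If you keep the holes at the fixed radius $\eta$ on the right-hand side, the two renormalizations differ by exactly your residual sum, so the obstruction you ran into is intrinsic to that formulation; to repair your argument you should replace the right-hand side by the $\alpha\to0$ limit of the cut-out energy (which is what the applications of the lemma actually use) or reproduce the paper's cancellation.
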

\begin{rem} The result is in fact true with appropriate modification for $\mu$ as in Remark \ref{mu0faible}.
\end{rem}
\begin{proof}This is quite similar to the proof of Lemma \ref{lem31} or Lemma \ref{prodecr}.

Since $h_\eta$ is defined by \eqref{heta},  the $B(p, r_0) $ are disjoint and included in $\Omega$, and $f_{\eta}$ is identically $0$  outside of $B(0, \eta)$ we may write for any $\eta<r_0$, and any $0<\alpha<\eta$,
\begin{multline}\label{pu12}
\int_{\Omega\backslash \cup_{p\in \Lambda} B(p, \alpha)}  |\nab h_\eta|^2 = \int_{\Omega
 \backslash \cup_{p\in \Lambda} B(p, \alpha)}
 |\nab h|^2 + \# (\Lambda \cap \Omega) \int_{B(0,\eta)\backslash B(0,\alpha)}  |\nab f_{\eta} |^2 \\ - 2 \sum_{p \in \Lambda} \int_{B(p, \eta)\backslash B(p,\alpha)}  \nab f_\eta(x-p) \cdot \nab h.
 \end{multline}
First we note that since $f_\eta = g(\alpha)- g(\eta)$ on $\pa B(0, \alpha)$ and $f_\eta=0$ on $\pa B(0, \eta)$, using Green's formula and \eqref{divf},   ${\nu}$ denoting the outwards pointing unit normal to $\pa B(0, \alpha)$, we have
\begin{equation}
\label{nabfe}
\int_{B(0,\eta)\backslash B(0,\alpha)} |\nab f_{\eta}|^2 = -\int_{\pa B(0, \alpha)} (g(\alpha)- g(\eta))\f{\pa  f_\eta }{\pa \nu}= c_d(g(\alpha)- g(\eta) ).\end{equation} Next,
  using Green's formula and \eqref{eqconsti2} we have
\begin{multline*}\int_{B(p, \eta)\backslash B(p,\alpha)}  \nab  f_\eta(x-p) \cdot \nab h\\= - c_d \int_{B(p,\eta)\backslash B(p,\alpha) }  f_\eta(x-p) \mu(x)\, dx
- (g(\alpha) -g(\eta))\int_{\pa B(p, \alpha)}  \frac{\pa h}{\pa \nu} .\end{multline*}
For the first term in the right-hand side we write  as in \eqref{intfeta}\begin{equation*}
\left|\int_{B(p,\eta)\backslash B(p,\alpha) }  f_\eta(x-p) \mu(x)\, dx
\right|
 \le
 \|\mu\|_{L^\infty}o_\eta(1),\end{equation*}
where $o_\eta(1)$ depends only on  $d$. For the second term, by \eqref{eqconsti2} and Green's theorem again,  we have
$$- \int_{\pa B(p, \alpha)}\f{\pa h }{\pa \nu}=  c_d + O(\|\mu\|_{L^\infty} \alpha^d)  .$$
Inserting these two facts we deduce
\begin{equation*}\int_{B(p, \eta)\backslash B(p,\alpha)}  \nab f_\eta(x-p) \cdot \nab h= c_d    (g(\alpha)-g(\eta)) +  O(\|\mu\|_{L^\infty} \alpha^d g(\alpha) ) +  \|\mu\|_{L^\infty}o_\eta(1).\end{equation*}
Combining this with  \eqref{pu12} and  \eqref{nabfe}, we find
\begin{multline*}
\int_{\Omega\backslash \cup_{p\in \Lambda} B(p, \alpha)}  |\nab h_\eta|^2 = \int_{\Omega
 \backslash \cup_{p\in \Lambda} B(p, \alpha)}
 |\nab h|^2\\ + \# ( \Lambda\cap \Omega) \( c_d g (\eta)  -c_d g (\alpha) +o_\alpha(1) +\|\mu\|_{L^\infty} o_\eta(1) +  \|\mu\|_{L^\infty}O( \alpha^d g (\alpha) ) \).
\end{multline*}Letting $\alpha \to 0$, in view of the definition of $W$ in \eqref{defiW}, we obtain the result.

\end{proof}

\begin{coroll}[$W$ and $\mathcal{W}$ coincide in 2D for well-separated points]\label{propequiv}\mbox{}\\
Assume $d=2$, and let $E \in \mathcal{A}_1$ be such that $\mathcal{W}(E) <+\infty$ and the associated set of points satisfies $\min_{p\neq p' \in \Lambda } |p-p'| \ge 2r_0>0$ for some $r_0>0$. Then 
$\mathcal{W}(E)= W(E)$.
\end{coroll}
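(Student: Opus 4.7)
The plan is to exploit Lemma \ref{lemequiv} directly: it reconciles, for well-separated configurations, the smeared-charge integral $\int |E_\eta|^2$ with the hole-cutting integral $\int_{\R^2\setminus\cup B(p,\eta)} |E|^2$, with a per-point error $o_\eta(1)$ that is independent of the domain. The only structural difference between $\mathcal W$ and $W$ is the order of the limits $\eta\to 0$ and $R\to\infty$; the uniform per-point bound supplied by Lemma \ref{lemequiv} is precisely what is needed to swap them.

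Concretely, I would fix $\eta<r_0$ and apply Lemma \ref{lemequiv} (with $\mu\equiv m=1$) on the cube $\Omega_R\subset K_R$ obtained by retracting by $2r_0$, so that every point of $\Lambda\cap\Omega_R$ lies at distance $\ge 2r_0$ from $\partial\Omega_R$. Passing from $\Omega_R$ back to $K_R$ costs only an annular contribution over $K_R\setminus\Omega_R$ of volume $O(R)$; using the bound $\int_{K_R}|E_\eta|^2=O(R^2)$, which follows from the assumption $\mathcal W(E)<\infty$ combined with Proposition \ref{promono}, this annular contribution is $O(R)$, and likewise for the hole-cut integral and for the point counts $N_R-N_{\Omega_R}$. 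Dividing by $|K_R|=(2R)^2$, these corrections are $o_R(1)$. Replacing $|K_R|$ by $N_R$ in the term $|K_R|c_d g(\eta)$ introduces an error $(|K_R|-N_R)c_d g(\eta)/|K_R|\to 0$ by Lemma \ref{neutralite0}, and swapping $\mathbf 1_{K_R}$ for $\chi_R$ shifts matters only in the shell $K_R\setminus K_{R-1}$ by another $O(R)$ term. After these reductions I obtain, for each fixed $\eta<r_0$,
\begin{equation*}
\mathcal W_\eta(E)=\limsup_{R\to+\infty}\frac{F_\eta(R)}{|K_R|}+o_\eta(1),\qquad F_\eta(R):=\int_{\R^2\setminus\cup_p B(p,\eta)}\chi_R|E|^2-c_d g(\eta)\sum_p\chi_R(p).
\end{equation*}

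It remains to show that $F_\eta(R)\to W(E,\chi_R)$ as $\eta\to 0$ uniformly in $R$, which is what allows the swap of limits. For $0<\eta'<\eta<r_0$, the difference $F_{\eta'}(R)-F_\eta(R)$ decomposes as a sum over points $p$ with $\chi_R(p)\neq 0$ of integrals on $B(p,\eta)\setminus B(p,\eta')$. On such a ball $h(x)=g(x-p)+\varphi_p(x)$ with $\varphi_p$ of class $C^1$ (since the other points lie outside $B(p,r_0)$ and $\mu\equiv 1$), and a direct expansion yields $\int_{B(p,\eta)\setminus B(p,\eta')}|\nabla h|^2=c_d(g(\eta')-g(\eta))+O(\eta)$; the $c_d(g(\eta')-g(\eta))$ contributions cancel the renormalizers exactly, leaving a cross-term of order $O(\eta)$ per point. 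Summing over the $O(R^2)$ points in $\mathrm{supp}(\chi_R)$ and dividing by $|K_R|$ gives $|F_{\eta'}(R)/|K_R|-F_\eta(R)/|K_R||=O(\eta)$ uniformly in $R$. Combining,
\begin{equation*}
\mathcal W(E)=\lim_{\eta\to 0}\limsup_{R\to+\infty}\frac{F_\eta(R)}{|K_R|}=\limsup_{R\to+\infty}\lim_{\eta\to 0}\frac{F_\eta(R)}{|K_R|}=W(E).
\end{equation*}
The only real obstacle is the bookkeeping around $\partial K_R$: making sure that the hypothesis of Lemma \ref{lemequiv} (points at distance $\ge 2r_0$ from the boundary) can be arranged while all the discarded boundary contributions remain $O(R)$ and thus vanish after averaging by $|K_R|=(2R)^2$; the well-separation hypothesis is used both for Lemma \ref{lemequiv} and for the uniform-in-$R$ rate $O(\eta)$ in the $\eta$-Cauchy estimate, without which the $O(\log\eta^{-1})$ divergences from different points could not be cancelled pointwise.
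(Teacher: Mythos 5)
Your strategy is the intended one: the paper itself defers the proof to \cite[Prop. 3.3]{rs}, and Lemma \ref{lemequiv} is stated just before the corollary precisely so that, for well-separated points, the smeared and hole-cut energies can be matched with an error that is $o_\eta(1)$ \emph{per point}; after averaging over $K_R$ this is what legitimizes exchanging $\eta\to 0$ and $R\to\infty$. Your two-step structure (match $\mathcal W_\eta$ with the $\chi_R$-weighted hole-cut quantity at scale $\eta$, then prove an $\eta$-Cauchy estimate uniform in $R$) is the right skeleton.

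However, one step is not justified as written. You claim that the energy carried by the boundary layers --- $\int_{K_R\setminus\Omega_R}|E_\eta|^2$ for the retraction, and the swap between $\indic_{K_R}$ and $\chi_R$ over $K_R\setminus K_{R-1}$ --- is $O(R)$ ``because the region has area $O(R)$ while the total integral is $O(R^2)$.'' This is a non sequitur: a set of area $O(R)$ may carry a positive fraction of an $O(R^2)$ integral, and since both $\mathcal W_\eta$ and $W$ are defined through $\limsup_R$ rather than limits, you are not free to discard the radii where this happens. Well-separation does give $O(R)$ \emph{points} in such a layer (disjoint balls of radius $r_0$), hence controls the renormalization terms, but it gives no bound on the \emph{field energy} there. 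The repair is to avoid any shell-energy estimate: since $|E_\eta|^2\ge 0$ and $\indic_{K_{R-1}}\le\chi_R\le\indic_{K_R}$, the quantities $\fint_{K_{R-1}}|E_\eta|^2$, $|K_R|^{-1}\int\chi_R|E_\eta|^2$ and $\fint_{K_R}|E_\eta|^2$ all have the same $\limsup_R$ (reindex $R\mapsto R-1$ and use $|K_{R-1}|/|K_R|\to1$), while the corresponding point counts are interchangeable by Lemma \ref{neutralite0}. A second, lesser point: your ``$O(\eta)$ per point'' in the $\eta$-Cauchy step tacitly assumes a bound on $\nabla\varphi_p$ over $B(p,\eta)$ that is uniform in $p$, and it drops the quadratic term $\int_{B(p,\eta)\setminus B(p,\eta')}|\nabla\varphi_p|^2$, which coincides with the integral of $|E_\eta|^2$ there and is not $O(\eta)$ pointwise in $p$. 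Both are recoverable (harmonicity of $\varphi_p+\frac{c_d}{4}|x-p|^2$ on $B(p,r_0)$ together with the global energy bound gives the needed control after averaging, uniformly in $R$), but the cleaner route is to deduce the Cauchy property of the hole-cut quantity from Lemma \ref{lemequiv} combined with the equality case of Lemma \ref{prodecr}, whose per-point errors depend only on $d$ and $\|\mu\|_{L^\infty}$ and are therefore uniform over the configuration and over $R$.
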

For the proof, see \cite[Prop. 3.3]{rs}. It is very likely that this can be extended at least to dimension $d=3$.

We now turn to periodic configurations, and show that, for them, $W $ or $\mathcal{W}$ can be computed and expressed  as a sum of pairwise Coulomb-like interactions between the points. By periodic configuration, we mean a configuration on the fundamental cell of a torus, repeated periodically, which can be viewed as a configuration of $N$ points on a torus (cf. Fig. \ref{fig6}).

\begin{figure}[h!]
\begin{center}
\includegraphics[scale=0.9]{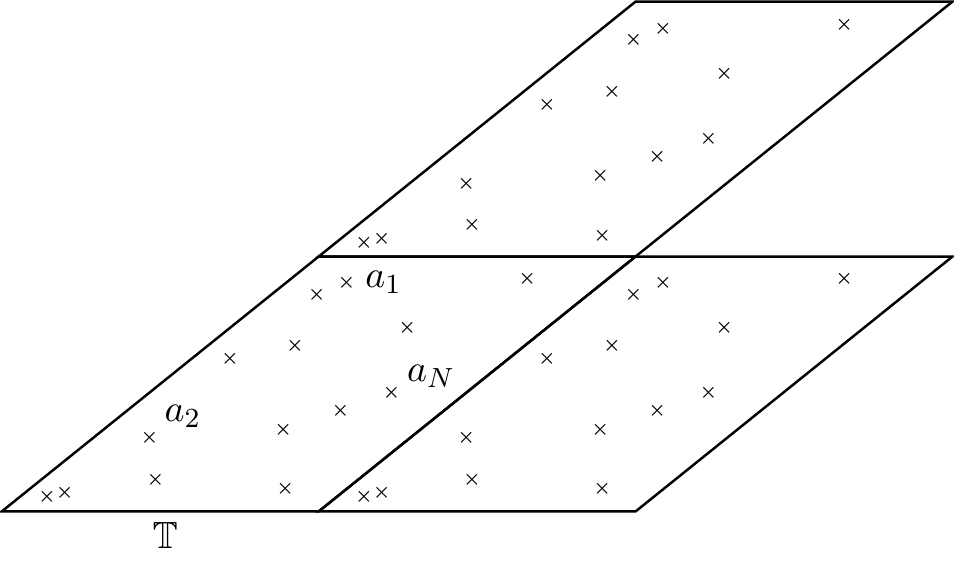}
\caption{Periodic configurations}\label{fig6}
\end{center}
\end{figure}
%\section{The case of periodic configurations}
 
\begin{prop} \label{propperiodic} Let $a_1, \dots, a_N$ be $N$ points in a torus $\T$ of volume $N$ in $\R^d$, $d\ge 2$. 
\begin{enumerate}
\item If there is a multiple point, then for any $E$ compatible with the points (i.e. such that $-\div E = c_d \left(\sum_{i=1} \delta_{a_i} - 1 \right)$), we have  $ \mathcal{W}(E) = + \infty$.
\item If all points are distinct, letting $H$ be the periodic solution to 
\be\label{perH}
- \Delta H = c_d \Big(\sum_{i=1}^N \delta_{a_i} -1 \Big), \quad \int_{\T} H = 0;
\ee
then any other periodic $E$ compatible with the points  satisfies 
\begin{equation}\label{compweh}
\mathcal{W}(E)\geq \mathcal{W}(\nabla H).
\end{equation}
 Moreover, 
\begin{equation} \label{energieperiodique}
\boxed{
 \mathcal{W}(\nabla H) = \f{c_d^2}{N} \sum_{i \neq j} G(a_i - a_j) + c_d^2 \lim_{x \rightarrow 0} \Big(G - \f{g}{c_d}\Big)}
\end{equation}
where $G$, the Green function of the torus, solves 
\begin{equation}\label{527}
-\Delta G = \delta_0  - \f{1}{|\T|} \mbox{ over } \T,\quad  \int_{\T} G  = 0.
\end{equation}
\item If $d=2$, the same results hold true with $W$ instead of $\mathcal{W}$.
\item If $d=1$ and $g(x)=- \log |x|$, the same results hold true with $W$,  $c_d=2\pi$ and $H$ the solution on $\R/(N\mathbb{Z})\times \R$ with  zero-mean on the real axis  of
\begin{displaymath}
- \Delta H =2\pi  \Big(\sum_{i=1}^N \delta_{a_i} -\delta_{\R} \Big),
\end{displaymath}
\be \label{formulG} 
G(x)= -\frac{1}{2\pi} \sum_{i\neq j} \log \left|2 \sin \frac{\pi x}{N}\right| , \ee i.e. we have 
\begin{equation} \label{energieperiodique1d}
\boxed{ W(\nabla H) = - \f{2\pi }{N} \sum_{i \neq j}   \log \left|2 \sin \frac{\pi (a_i-a_j) }{N}\right| -  2\pi  \log \frac{2\pi}{N}.}
\end{equation}

\end{enumerate}
\end{prop}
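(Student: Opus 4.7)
The plan is to work with the periodic Green's function $G$ of $\T$ throughout, and for the minimality statement to use a Helmholtz-type orthogonality between $\nabla H$ and divergence-free periodic remainders.

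For part (1), suppose some point $p$ has multiplicity $N_p \ge 2$. Near $p$, $h = N_p g(\cdot - p) + \varphi$ with $\varphi$ locally of class $C^{1,\alpha}$, hence $E_\eta = \nabla h - \sum_q N_q \nabla f_\eta(\cdot - q)$ satisfies $|E_\eta|^2 \sim N_p^2 |\nabla g_\eta(\cdot - p)|^2$ outside $B(p,\eta)$. Repeating the integration by parts of Lemma \ref{lem31} on $\T$ yields $\int_\T |E_\eta|^2 = c_d \bigl(\sum_q N_q^2\bigr) g(\eta) + O(1)$ as $\eta \to 0$, whereas the renormalization in $\mathcal{W}_\eta$ only subtracts $c_d g(\eta)$ per unit background density, i.e. $c_d \bigl(\sum_q N_q\bigr) g(\eta)/|\T|$ after averaging. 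The positive excess $c_d \sum_q N_q(N_q-1) g(\eta)/|\T|$ then diverges to $+\infty$; the same argument applies verbatim to $W$ in dimension $2$.

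For the minimality in part (2), any periodic $E \in \bar{\mathcal{A}}_1$ compatible with the same points decomposes as $E = \nabla H + F$ with $F$ periodic and $\div F = 0$ (in fact $F$ is a constant vector, since it is a periodic gradient which is also divergence-free, so harmonic). Smearing only affects the singular part, so $E_\eta = (\nabla H)_\eta + F$. Integration by parts on $\T$ using $\div F = 0$, periodicity, and $\int_\T H = 0$ gives $\int_\T (\nabla H)_\eta \cdot F = 0$, whence $\int_\T |E_\eta|^2 \ge \int_\T |(\nabla H)_\eta|^2$ for every $\eta > 0$. Averaging over $K_R$ (the limit is reached already at finite scale by periodicity), subtracting $c_d g(\eta)$ and letting $\eta \to 0$ produces \eqref{compweh}. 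For the explicit formula \eqref{energieperiodique}, write $H(x) = c_d \sum_i G(x-a_i)$ and decompose $H = \sum_i g(\cdot-a_i) + c_d \sum_i R(\cdot-a_i)$ with $R := G - g/c_d$ continuous at $0$. Integration by parts on $\T$ yields
\[
\int_\T |\nabla H_\eta|^2 \;=\; c_d \sum_i \fint_{\partial B(a_i,\eta)} H \;-\; c_d \int_\T H_\eta,
\]
since $H_\eta = H$ on each sphere $\partial B(a_i,\eta)$ for $\eta$ smaller than half the minimum spacing. On each such sphere the singular term averages to exactly $g(\eta)$, the continuous terms converge to their values at $a_i$, and $\int_\T H_\eta = o_\eta(1)$ by $\int_\T H = 0$ together with \eqref{intfeta}. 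Substituting $R(a_i-a_j) = G(a_i-a_j) - g(a_i-a_j)/c_d$ collapses the sums into
\[
\sum_i \fint_{\partial B(a_i,\eta)} H \;=\; N g(\eta) + N c_d R(0) + c_d \sum_{i\neq j} G(a_i-a_j) + o_\eta(1);
\]
dividing by $N$ and subtracting $c_d g(\eta)$ gives \eqref{energieperiodique}.

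For (3), a periodic configuration of distinct points on $\T$ is automatically well-separated (its minimum spacing on the compact torus is strictly positive), so Corollary \ref{propequiv} gives $W(\nabla H) = \mathcal{W}(\nabla H)$, and likewise $W(E) = \mathcal{W}(E)$ for any compatible $E$ of finite energy. For (4), the same scheme is implemented on the cylinder $\R/(N\mathbb{Z}) \times \R$ via the dimension-extension of Section 3.3: a Fourier series in the periodic variable (or equivalently the classical product identity $\prod_{k=1}^{N-1} 2\sin(\pi k/N) = N$) produces the explicit Green's function of \eqref{formulG}, and the constant $c_2^2 \lim_{x\to 0}(G - g/c_2) = -2\pi \log(2\pi/N)$ is read off from the expansion of $-\tfrac{1}{2\pi}\log|2\sin(\pi x/N)|$ at $x=0$. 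The main obstacle throughout is a careful bookkeeping of the $o_\eta(1)$ error terms — namely the smallness of $\int_\T H_\eta$ via \eqref{intfeta}, the continuity (not mere boundedness) of the regular part $R$ at the origin so that the spherical averages indeed converge to limit values at $a_i$, and the rigorous justification of the orthogonality $\int_\T (\nabla H)_\eta \cdot F = 0$ for the truncated fields via integration by parts on the compact torus rather than by formal Fourier arguments.
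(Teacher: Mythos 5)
Your proof is correct and follows essentially the same route as the paper's: integration by parts on the torus against the smeared charges to obtain the explicit formula \eqref{energieperiodique}, the observation that two compatible periodic gradients differ by a constant vector (orthogonal to $\nabla H_\eta$ on $\T$) for the minimality \eqref{compweh}, and divergence of the excess self-interaction $\sum_q N_q(N_q-1)\,g(\eta)$ for multiple points. The only cosmetic difference is in part (1), where the paper derives the needed lower bound by a Cauchy-Schwarz argument on spheres as in Lemma \ref{lem:fluctu charge} rather than by the exact integration-by-parts asymptotics you use, a variant the paper itself notes is available.
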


\begin{proof} We start with the first assertion : let $N_i$ be the multiplicity of $a_i$. We have $\sum_i N_i=N$.  On the torus, for any $E$ compatible with the points, we have the basic lower bound
\begin{equation}\label{lbbasic}
\int_{\T}  |E_{\eta}|^2 \geq c_d\sum_{i} N_i^2 g(\eta) - CN.\end{equation} This can be proven with calculations similar to the proof of Lemma \ref{lemequiv}, but we give instead one relying on a similar argument to 
the proof of Lemma \ref{lem:fluctu charge}.
Letting $r>0$ be the minimal distance from $a_i$ to the other points, we may write, using the Cauchy-Schwarz inequality and Stokes's theorem
\begin{eqnarray*}
\int_{B(a_i, r)} |E_\eta|^2 & \ge &  \int_{\eta}^r \( \frac{1}{|\pa B(0, t)|}  \( \int_{\pa B(a_i, t)} E_\eta \cdot \nu \)^2 \)\, dt\\
 & = & \frac{1}{|\pa B(0,1)|} \int_\eta^r \frac{1}{t^{d-1}} (c_d( N_i- |B(0,1)| t^d) )^2 \, dt,
\end{eqnarray*}
where we used the relation  $-\div E_\eta=c_d \(\sum_{i=1}^n \delta_{a_i}^{(\eta)}-1\)$.
Using then \eqref{defcd} and integrating explicitly, we  obtain
\begin{equation}\label{minocs}
\int_{B(a_i, r)} |E_\eta|^2 \ge\frac{d-2}{c_d}  c_d^2 \frac{g(\eta) - g(r) }{d-2}(N_i^2 -CN_i)\end{equation}
where $C$ depends on $d$ and $r$, and this implies \eqref{lbbasic}.

Meanwhile,  by periodicity, we  have that 
\begin{displaymath}
\mathcal{W}(E) = \underset{\eta \rightarrow 0} \liminf \left( \f{1}{|\T|} \int_{\T} |E_{\eta}|^2 - c_d g(\eta)    \right).
\end{displaymath}
Since the volume $|\T| = N$, it follows that \begin{displaymath}
\mathcal{W}(E) \geq \f{1}{N} \Big( \sum_i N_i^2 - N\Big) c_d g(\eta) - C
\end{displaymath}
and the limit as $\eta\to 0$  of this quantity is $+ \infty$ unless $\sum_i N_i^2 = N$, which imposes that   all the multiplicities $N_i$ be  equal to $1$. The same goes for $W$ (for which we already know that multiple points give infinite value).

Let us now prove the second assertion.

Assume now that $E_1$ and  $E_2$ are two admissible periodic gradient vector-fields, with $E_1 = \nabla h_1$ and $E_2 = \nabla h_2$. Since $\Delta (h_1 - h_2) = 0$,  $E_1$ and $E_2$ differ by the gradient of a harmonic function, but they are also periodic so this difference must  in fact be a constant vector $\vec{c}$, and the same for $(E_1)_\eta$ and $(E_2)_\eta$. We can then compute 
\begin{equation*}
 \int_{\T} |(E_1)_\eta|^2 - \int_{\T}|(E_2)_\eta |^2 
=   \int_{\T } |\vec{c}|^2 + 2 \vec{c} \cdot \int_{\T} (E_2)_\eta  .
\end{equation*}
If $E_2 = \nabla H$ for some $H$ periodic, then $\int_{\T} (E_2)_\eta = \int_{\T} \nabla H_\eta = 0$, hence we 
deduce \eqref{compweh}.

Let us now turn to the proof of \eqref{energieperiodique}. Let $H$ be the periodic solution with mean zero. It is easy to see that $H(x) = c_d \sum_{i=1}^N G(x-a_i)$ with $G$ the Green function defined in the proposition, and thus $ H_\eta(x)= c_d \sum_{i=1}^N G(x-a_i)- \sum_{i=1}^n f_\eta(x-a_i)$. Also $G=\frac{1}{c_d}g + \phi$ with $\phi $ a continuous function.  Inserting  all this and using Green's formula, we find 
\begin{multline}
\int_{\T} |\nab H_\eta|^2 = - \int_{\T} H_\eta \Delta H_\eta\\= c_d \int_\T\( c_d \sum_{i=1}^N G(x-a_i)  - \sum_{i=1}^N f_\eta(x-a_i)\) \Big(\sum_{j=1}^N \delta_{a_j}^{(\eta)} - \frac{1}{N}\Big)\\
= N c_d (g(\eta)+c_d \phi(0))  +c_d^2  \sum_{i\neq j} G(a_i-a_j) +c_d  \int_{B(0,\eta)} f_\eta+o(1)
\end{multline}
as $\eta \to 0$, where we have used that $f_\eta $ vanishes on $\pa B(0, \eta)$ where $\delta_0^{(\eta)}$ is supported, and that $\int_{\T} G=0$. Using \eqref{intfeta}, letting  $\eta \rightarrow 0$ and dividing by $N$ gives (\ref{energieperiodique}).

For the third  assertion, if all the points are simple and the configuration is periodic, it follows that the points are well-separated, i.e. satisfy the assumptions of Lemma \ref{lemequiv}. Thus we know that 
\begin{equation}
\int_{\T} |E_{\eta}|^2 - N c_d g(\eta) =
\lim_{\eta \to 0} \int_{\T\backslash \cup_{i=1}^N B(a_i, \eta)}  |E|^2  - N c_d g(\eta)+ o_{\eta}(1)
\end{equation} 
and   this  immediately proves the identity 
\begin{equation}\label{identw}
\mathcal{W}(E) = \f{1}{|\T|} \(\lim_{\eta \to 0} \int_{\T\backslash \cup_{i=1}^N B(a_i, \eta)}  |E|^2   - N c_d g(\eta)\)= W(E, \indic_{\T}),
\end{equation}with the notation of \eqref{defiW}.
At this point, one can  also check that in dimension $d=2$, the right hand side is also equal to $W(E)$ in this setting  (this requires a little more care to show that the effect of the cut-off function is negligible, see\cite{ssgl} for details).   This implies the results of the third item.

 The proof of item 4 is very similar, we refer to \cite{ssgl,ss2} for details.
In the case of dimension $1$, the suitable Green function $G$ can be explicitly solved by Fourier series, and one finds the  formula \eqref{formulG}.
\end{proof}

In dimension $d \ge 2$, equation \eqref{527} can also be solved somewhat explicitly. For   a torus $\T = \R^d / (\mathbb{Z} \vec{u}_1 + \dots + \mathbb{Z} \vec{u}_d)$ of volume $N$, corresponding to the lattice $\Lambda = \mathbb{Z} \vec{u}_1 + \dots \mathbb{Z} \vec{u}_d$ in $\R^d$, one may first  express $G$ solving \begin{equation} \label{fourierreseau}
- \Delta G = \delta_0 - \f{1}{|\T|} = \delta_0 - \frac{1}{N}, \quad \int_{\T} G = 0
\end{equation}
as a Fourier series : 
\begin{equation}
G = \sum_{\vec{k} \in \Lambda^*} c_{\vec{k}} e^{2i\pi \vec{k} \cdot \vec{x}},
\end{equation}
where $\Lambda^*$ is the dual lattice of $\Lambda$, that is 
\begin{displaymath}
\Lambda^* = \{\vec{q} \in \R^d, \vec{q} \cdot \vec{p} \in \mathbb{Z} \textrm{ for all } \vec{p} \in \Lambda \}.
\end{displaymath}
Plugging this  into \eqref{fourierreseau}, one sees that the coefficients $c_{\vec{k}}$ must satisfy the relations 
\begin{displaymath}
-(2i \pi)^2 |\vec{k}|^2 c_{\vec{k}} = 1 - \delta_{\vec{k},0}
\end{displaymath} where $\delta_{\vec{k},0}$ is $0$ unless $\vec{k}=0$,
and $c_0  = \int_{\T} G = 0$ by assumption. This is easily solved by $c_{\vec{k}} = \f{1}{4\pi^2 |\vec{k}|^2}$ for $\vec{k} \neq 0$, hence  the formula
\begin{equation}\label{Geisenstein}
G(\vec{x}) = \sum_{\vec{k} \in \Lambda^* \backslash \{0\}} \f{e^{2i \pi \vec{k} \cdot \vec{x}}}{4\pi^2 |\vec{k}|^2}.
\end{equation}
Such a series is called an Eisenstein series, cf. \cite{lang} for reference and formulas on Eisenstein series.

%Note that this Fourier serie is not absolutely convergent. 
%\textcolor{red}{Eisenstein serie ?}

\section{Partial results on the minimization of $W$ and $\mathcal{W}$, and the crystallization conjecture} \label{sec54}
We have seen in item 5 of Proposition \ref{propertiesoftheenergies} that the minima of $\mathcal{W}$ and $W$ can be achieved as limits of the minima over periodic configurations (with respect to larger and larger tori). On the other hand, Proposition \ref{propperiodic} provides  a more explicit expression for periodic configurations. 
In dimension $d=1$ (and in that case only) we know how to use this expression \eqref{energieperiodique1d}  to identify 
 the minimum over periodic configurations~:  a convexity argument (for which we refer to \cite[Prop. 2.3]{ss2}) shows that the minimum is achieved when the points are equally spaced, in other words for the lattice or crystalline distribution $\mathbb{Z}$ (called ``clock distribution" in the context of orthogonal polynomials, cf. \cite{simon}).  Combining with the result of item~5  of Proposition \ref{propertiesoftheenergies} allows to identify $\min_{\mathcal{A}_1} W$~:
\begin{theo}[The regular lattice is the minimizer in 1D \cite{ss2}]\label{thm1d}
If $d=1$, we have $$\min_{\mathcal{A}_1} W= -2\pi \log (2\pi)$$ and this minimum is achieved by gradients of periodic potentials $h$  associated to the lattice (or clock) distribution $\Lambda = \mathbb{Z}$.  
\end{theo}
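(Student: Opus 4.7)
The plan is to combine the reduction to periodic minimizers (item 5 of Proposition \ref{propertiesoftheenergies}) with the explicit formula \eqref{energieperiodique1d}, and then carry out an explicit computation that uses a classical trigonometric identity.

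First, I would invoke item 5 of Proposition \ref{propertiesoftheenergies} to reduce the problem to computing
\begin{equation*}
\lim_{N\to\infty} \inf \bigl\{ W(E) : E \text{ is } (N\mathbb{Z})\text{-periodic, compatible with } N \text{ points on } \mathbb{R}/N\mathbb{Z}\bigr\}.
\end{equation*}
For each fixed $N$, item 4 of Proposition \ref{propperiodic} tells me two things: among all periodic $E$ compatible with a given set of simple points $a_1,\dots,a_N \in \mathbb{R}/N\mathbb{Z}$, the minimum of $W$ is achieved by $\nabla H$ with $H$ the periodic solution of \eqref{perH} adapted to $d=1$; and this minimum value is given by the explicit expression
\begin{equation*}
W(\nabla H) = -\frac{2\pi}{N}\sum_{i\neq j}\log\Bigl|2\sin\frac{\pi(a_i-a_j)}{N}\Bigr| - 2\pi \log\frac{2\pi}{N}.
\end{equation*}
Hence the minimization over $\mathcal{A}_1$-compatible periodic electric fields reduces to minimizing the finite-dimensional function $F_N(a_1,\dots,a_N):= -\sum_{i\neq j}\log|2\sin(\pi(a_i-a_j)/N)|$ over the torus.

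The next step, which is the main obstacle, is to prove that $F_N$ is minimized precisely at the equally spaced configuration $a_i = i$ (up to rigid translation). This is the convexity argument referenced in \cite{ss2} (Proposition 2.3 there); the idea is to exploit the specific form of the pair interaction $-\log|2\sin(\pi x/N)|$, which is (minus the) Green function of $-\Delta$ on $\mathbb{R}/N\mathbb{Z}$, to show that any non-equispaced configuration can be strictly decreased by a symmetrization/variational perturbation. I would simply quote this result from \cite{ss2}, since all the work required lies there.

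Granted that the minimizer is the equispaced configuration, I would evaluate $F_N$ explicitly there using the classical identity $\prod_{k=1}^{N-1} 2\sin(\pi k/N) = N$. This gives
\begin{equation*}
F_N(1,2,\dots,N) = -\sum_{i\neq j}\log\Bigl|2\sin\tfrac{\pi(i-j)}{N}\Bigr| = -N \sum_{k=1}^{N-1}\log|2\sin(\pi k/N)| = -N\log N,
\end{equation*}
so that
\begin{equation*}
\min W = -\frac{2\pi}{N}(-N\log N) \cdot(-1) - 2\pi\log\frac{2\pi}{N} = -2\pi\log N - 2\pi\log(2\pi) + 2\pi\log N = -2\pi\log(2\pi).
\end{equation*}
(I must be careful with the sign: the displayed formula gives $W = -\frac{2\pi}{N}\sum_{i\neq j}\log|2\sin(\pi(a_i-a_j)/N)| - 2\pi \log(2\pi/N) = -\frac{2\pi}{N}\cdot N\log N - 2\pi \log(2\pi/N) = -2\pi\log(2\pi)$.) Since this value is independent of $N$, passing to the limit $N \to \infty$ yields $\min_{\mathcal{A}_1} W = -2\pi\log(2\pi)$. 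Moreover, the equispaced configuration on $\mathbb{R}/N\mathbb{Z}$ with $N$ points converges, as $N\to\infty$, to the lattice $\Lambda = \mathbb{Z}$, whose associated periodic potential $H$ achieves this value; this shows that the minimum is attained, proving the theorem.
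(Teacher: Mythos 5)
Your proposal is correct and follows essentially the same route as the paper: reduction to periodic configurations via item 5 of Proposition \ref{propertiesoftheenergies}, the explicit formula \eqref{energieperiodique1d}, the convexity argument of \cite[Prop. 2.3]{ss2} to identify the equispaced configuration as optimal, and the evaluation via $\prod_{k=1}^{N-1}2\sin(\pi k/N)=N$. The final computation is right (modulo the stray factor $(-1)$ in your first display, which you correct in the parenthetical).
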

Of course, the minimum over any $\mathcal{A}_m$ is deduced from this by scaling (cf. \eqref{scalingW2d}).
There is no uniqueness of minimizers, however a uniqueness result can be proven when viewing $W$ as a function of stationary point processes, cf. \cite{leble}.

In higher dimension, determining the value of $\min W$ or $\min \mathcal{W}$  is an open question, even though it would suffice to be able to minimize in the class of periodic configurations with larger and larger period, using the formula \eqref{energieperiodique}. The only question that we can answer so far is that of the minimization over the restricted class of pure lattice configurations, in dimension $d=2$ only,
i.e.   vector fields which are gradient of functions that are periodic with respect to a lattice $\mathbb{Z} \vec{u} + \mathbb{Z} \vec{v}$
 with $det(\vec{u}, \vec{v}) = 1$, corresponding to configurations of points that can be identified with $\mathbb{Z} \vec{u} + \mathbb{Z} \vec{v}$. In this case, we have :
\begin{theo}[The triangular lattice is the minimizer over lattices in 2D] \label{minimisationreseau} \mbox{}
The minimum of $W$, or equivalently $\mathcal{W}$, over this class of vector fields is achieved uniquely by the one corresponding to the  triangular ``Abrikosov"  lattice.
\end{theo}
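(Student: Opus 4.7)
The plan is to exploit the explicit formula \eqref{energieperiodique} of Proposition \ref{propperiodic}, specialized to a single point per fundamental cell. For a unimodular lattice $\Lambda = \mathbb{Z}\vec{u} + \mathbb{Z}\vec{v}$ one has $N=1$, so the double sum over $i \neq j$ is empty and we are left with
\begin{equation*}
\mathcal{W}(\nabla H_\Lambda) = 4\pi^2 \lim_{x \to 0}\Bigl(G_\Lambda(x) + \tfrac{1}{2\pi}\log|x|\Bigr),
\end{equation*}
where $G_\Lambda$ is the Green's function of the flat torus $\mr^2/\Lambda$ defined by \eqref{527}. The minimization problem thus reduces to identifying this ``renormalized value'' of $G_\Lambda$ at the origin as a function of the shape of the lattice, and optimizing over that shape.

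I would proceed by zeta regularization. Parametrize unimodular lattices by their modular parameter $\tau = x + iy \in \Hh$ via $\Lambda_\tau = (\Im\tau)^{-1/2}(\mathbb{Z} + \tau\mathbb{Z})$, and use the Fourier/Eisenstein expansion \eqref{Geisenstein} to write $G_{\Lambda_\tau}$ as a lattice sum over the dual lattice. The renormalized limit
\begin{equation*}
A(\tau) := \lim_{x\to 0}\Bigl(G_{\Lambda_\tau}(x) + \tfrac{1}{2\pi}\log|x|\Bigr)
\end{equation*}
is the finite part at $s=1$ of the Epstein zeta function $\zeta_{\Lambda_\tau}(s) = \sum_{p\in \Lambda_\tau\setminus\{0\}} |p|^{-2s}$, which has a simple pole there. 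The key analytic input is Kronecker's first limit formula,
\begin{equation*}
\sum_{(m,n)\neq (0,0)} \frac{y^s}{|m+n\tau|^{2s}} = \frac{\pi}{s-1} + 2\pi\Bigl(\gamma - \log 2 - \log\bigl(\sqrt{y}\,|\eta(\tau)|^2\bigr)\Bigr) + O(s-1),
\end{equation*}
where $\eta$ denotes the Dedekind eta function. Substituting this expansion into the expression for $\mathcal{W}$ yields, up to an additive constant independent of $\tau$,
\begin{equation*}
\mathcal{W}(\nabla H_{\Lambda_\tau}) = -2\pi \log\bigl(\sqrt{\Im\tau}\,|\eta(\tau)|^2\bigr) + \mathrm{const}.
\end{equation*}

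It then remains to maximize the modular-invariant function $\Phi(\tau) = \sqrt{\Im\tau}\,|\eta(\tau)|^2$ on $\Hh$. By $SL(2,\mathbb{Z})$-invariance one may restrict to the standard fundamental domain, and a classical result of Rankin and Ennola asserts that $\Phi$ attains its unique maximum there at $\tau_0 = e^{i\pi/3}$, corresponding precisely to the triangular Abrikosov lattice (up to $SL(2,\mathbb{Z})$-equivalence and rotation). The main obstacle is the analytic number-theoretic passage: deriving and applying Kronecker's limit formula to convert the divergent Eisenstein-type lattice series into a closed expression involving $\eta(\tau)$, together with careful bookkeeping to confirm that the $\tau$-independent constants produced by the regularization do not interfere with the minimization. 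By contrast, the final extremal statement for $\Phi$, while nontrivial, is a classical fact that may be invoked.
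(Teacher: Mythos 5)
Your proposal is correct and follows essentially the same route as the paper: specialize the periodic formula \eqref{energieperiodique} to one point per unimodular cell, apply Kronecker's first limit formula to the Eisenstein/Epstein series, and invoke the classical extremal result of Cassels--Rankin--Ennola--Diananda. The only (cosmetic) difference is your final reduction target: you land on maximizing $\sqrt{\Im\tau}\,|\eta(\tau)|^2$ (the ``height of flat tori'' formulation of \cite{osp}, which the paper acknowledges as equivalent), whereas the paper phrases the endpoint as minimizing the Epstein zeta function $\zeta_\Lambda(x)$ for $x>0$ and concludes via self-duality of the triangular lattice.
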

Here the triangular lattice means $ \mathbb{Z} + \mathbb{Z} e^{i \pi/3}$, properly scaled, i.e. what is called the Abrikosov lattice in the context of superconductivity, cf. Chap. \ref{chap-intro}.

When restricted to lattices, 
$W$ corresponds to a ``height" of the associated flat torus (in Arakelov geometry). With that point of view, 
the  result was already known since  \cite{osp}, a fact we had not been aware of. The same  result was also obtained  in \cite{chen-oshita} for a similar energy.
We next give a sketch of the proof from \cite{ssgl}, which is not very difficult thanks to the fact that  it reduces (as \cite{osp} does) to 
the same question for a certain modular function, which was solved by number theorists in the 50's and 60's.

\begin{proof}[Proof of Theorem \ref{minimisationreseau}] 
Proposition \ref{propperiodic}, more specifically \eqref{energieperiodique},    provides an explicit formula for the renormalized energy of such  periodic configurations. Using  \eqref{Geisenstein} to express $G$, and denoting by $H_\Lambda$ the periodic  solution associated with \eqref{perH}, we find that 
\be\label{viaeis}
\mathcal{W}(\nab H_\Lambda) = \lim_{x \rightarrow 0} \left( \sum_{\vec{k} \in \Lambda^* \backslash \{0\}  } \f{e^{2i \pi \vec{k} \cdot \vec{x}}}{4\pi^2 |\vec{k}|^2}  + 2\pi \log x\right).
\ee
By using either the “first Kronecker limit formula” (cf. \cite{lang}) or a direct computation, one shows that in fact
\begin{equation}\label{viaepsteinzeta}
\mathcal{W}(\nab H_\Lambda) =C_1 + C_2 \lim_{x \rightarrow 0, x > 0} \left( \sum_{\vec{k} \in \Lambda^* \backslash \{ 0\} } \f{1}{|\vec{k}|^{2+x}} - \int_{\R^2}\f{dy}{1 + |y|^{2+x} }\right),
\end{equation}
where $C_1$ and $C_2>0$ are constants.
The series $\sum_{\vec{k} \in \Lambda^* \backslash \{ 0\} } \f{1}{|\vec{k}|^{2+x}} $  that appears is now the ``Epstein Zeta function" of the dual lattice $\Lambda^*$. The first Kronecker  limit formula allows to pass from one modular function, the Eisenstein series, to another, the Epstein Zeta function. Note that both formulas \eqref{viaeis} and \eqref{viaepsteinzeta}, when $x \to 0$,  correspond to two different ways of regularizing the divergent series $\sum_{p \in \Lambda^*\backslash \{0\}} \frac{1}{|p|^2}$, and they are in fact explicitly related.

The question of minimizing $\mathcal{W}$ among lattices is then reduced to minimizing the Epstein Zeta function 
\begin{displaymath}
\Lambda \mapsto \zeta_{\Lambda} (x) : =  \sum_{\vec{k} \in \Lambda \backslash \{0\} 0} \f{1}{|k|^{2+x}}
\end{displaymath}
as $x \rightarrow 0$. But results from \cite{cassels,rankin,ennola,ennola2,diananda,montgomery} assert that 
\begin{equation}
\zeta_{\Lambda} (x) \geq \zeta_{\Lambda_{triang}} (x), \ \forall x > 0
\end{equation}
 and the equality holds if and only if $\Lambda = \Lambda_{triang}$ (the triangular lattice).  Because that lattice is self-dual, it follows that it is the  unique minimizer.
 \end{proof}

One may ask  whether this triangular lattice does achieve the global minimum of $W$ and $\mathcal{W}$.  The fact that the Abrikosov lattice is observed in superconductors, combined with the fact -- which we will see later -- that $W$ can be derived  as 
 the limiting minimization problem of Ginzburg-Landau, justify to conjecture this~:
\begin{conjecture}
In dimension $d=2$, the value of $\min_{\mathcal{A}_1} W= \min_{\bar{\mathcal{A}}_1}\mathcal{W} $ is equal to the value at the vector field associated to  the triangular lattice of volume 1. 
\end{conjecture}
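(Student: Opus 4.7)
The plan is to reduce the problem to the periodic case, where Theorem \ref{minimisationreseau} already resolves the question among pure lattice configurations, and then to try to extend the result from lattices to arbitrary $N$-point configurations on a torus. First, using item 5 of Proposition \ref{propertiesoftheenergies}, one writes
\begin{equation*}
\min_{\bar{\mathcal{A}}_1} \mathcal{W} = \lim_{N\to +\infty} m_N,
\end{equation*}
where $m_N$ denotes the infimum of $\mathcal{W}$ taken over $(N\mathbb{Z})^2$-periodic admissible vector fields, i.e.\ over all configurations of $N^2$ points on the torus $\T_N = \R^2/(N\mathbb{Z})^2$. By Proposition \ref{propperiodic}, each such configuration $(a_1,\dots,a_{N^2})$ gives an explicit value of $\mathcal{W}$ via the formula (\ref{energieperiodique}), namely a renormalized sum of pairwise interactions through the Green function $G$ of $\T_N$. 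The conjecture is thus reduced to the assertion that, for each $N$, the minimum of this explicit finite-sum functional is achieved by a triangular sublattice of $\T_N$ (at least along a subsequence of $N$'s for which such a sublattice exists and has density~$1$), together with passage to the limit.

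The passage-to-the-limit step should be relatively routine: a uniform lower bound on $m_N$ follows from Corollary \ref{minoW}, while the value of $\mathcal{W}$ on the unit triangular lattice is finite and independent of $N$, computable through Kronecker's first limit formula as in (\ref{viaepsteinzeta}), which reduces it to the Epstein zeta value $\zeta_{\Lambda_{\mathrm{triang}}}(0)$ already known from \cite{cassels,rankin,montgomery}. So the real question concentrates on the torus step: show that, among all $N^2$-point configurations on $\T_N$, the minimum of (\ref{energieperiodique}) is attained at the triangular lattice. This is the heart of a genuine crystallization problem for the two-dimensional logarithmic jellium, and no approach is currently known to settle it. A natural first attempt would be a linear programming method à la Cohn--Kumar: search for an auxiliary function $f$ bounded above by the (renormalized) interaction $G(x)+\tfrac{1}{2\pi}\log|x|$, whose Fourier transform on $\T_N$ has nonnegative coefficients and whose Poisson summation identity over the triangular lattice is tight. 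An alternative, more analytic route, would try to show directly that any admissible $E\in\bar{\mathcal{A}}_1$ with $\mathcal{W}(E)$ close to the infimum must be asymptotically periodic, then invoke Theorem \ref{minimisationreseau}; this would require a quantitative rigidity statement for minimizers which is not available with the current PDE tools.

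The hard part is precisely this last step, and it is why the statement appears as a conjecture rather than a theorem. All the reductions sketched above are essentially by-products of the framework developed in this chapter (almost monotonicity, the periodic-approximation result, the explicit Eisenstein formula (\ref{viaeis})), so any proof of the conjecture would have to contribute a new discrete optimization or duality argument specific to the logarithmic interaction in two dimensions. In the absence of such an argument, the most one can currently do is prove weaker statements — optimality within richer but still restricted classes (perturbations of lattices, configurations invariant under large symmetry groups, stationary point processes as in \cite{leble}), or sharp lower bounds that match the triangular value up to an explicit constant — which would be natural intermediate targets for the strategy outlined here.
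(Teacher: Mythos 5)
This statement is an open conjecture in the paper, not a theorem, and the paper offers no proof of it; you have correctly recognized this rather than manufacturing a fallacious argument. Your sketch of the available reductions (periodic approximation via item 5 of Proposition \ref{propertiesoftheenergies}, the explicit formula \eqref{energieperiodique}, the lattice-restricted result of Theorem \ref{minimisationreseau}, and the identification of the torus crystallization step as the genuinely open core) matches exactly the state of the art as described in the surrounding text, including the paper's own pointers to Cohn--Kumar type linear programming and to the scarcity of crystallization results for long-range interactions.
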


It was recently proven in \cite{betermin} that this conjecture is equivalent to a conjecture of Brauchart-Hardin-Saff \cite{bhs} on the next order term in the asymptotic expansion of the minimal logarithmic energy on the sphere (an important problem in approximation theory, also related to Smale's ``7th problem for the 21st century"), which is obtained by  formal analytic continuation, hence by very different arguments. This thus reinforces the plausibility of this conjecture.

 In dimension $d\ge 3$ the  computation of the renormalized energy restricted to the class of lattices holds but the meaning  of \eqref{viaepsteinzeta} is not clear. The minimization of the Epstein Zeta function over lattices is then  an open question (except in dimensions 8 and 24). In dimension $3$, both the FCC (face centered cubic) and BCC (boundary centered cubic) lattices  (cf. Fig. \ref{fig8}) could play the role of the triangular lattice, but it is only conjectured that FCC is a local minimizer (cf. \cite{sarns}), and so by duality BCC can be expected to  minimize $\mathcal W$.
 
 \begin{figure}[h!]
 \begin{center}
 \includegraphics[scale=0.5]{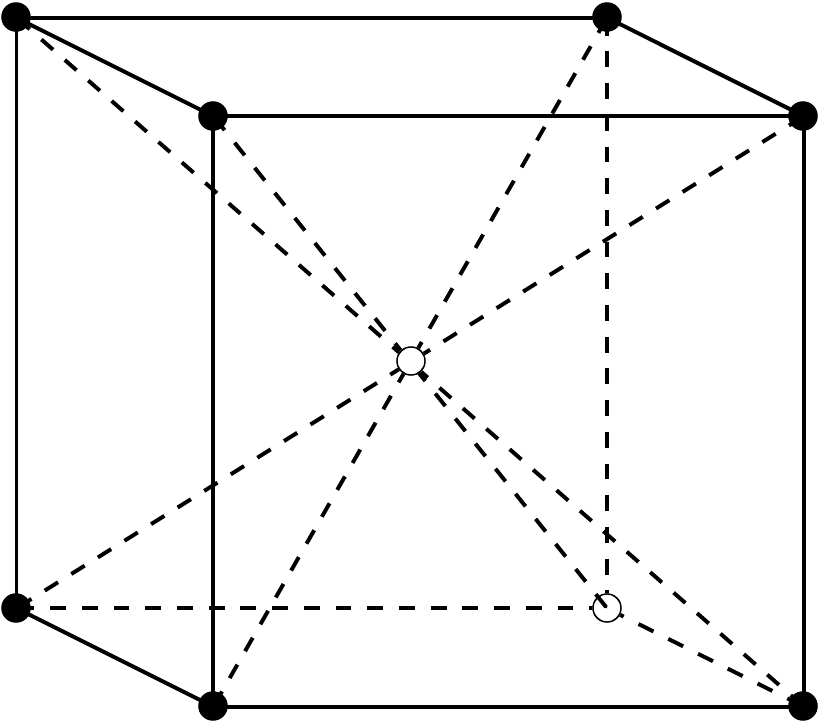}
 \includegraphics[scale=0.5]{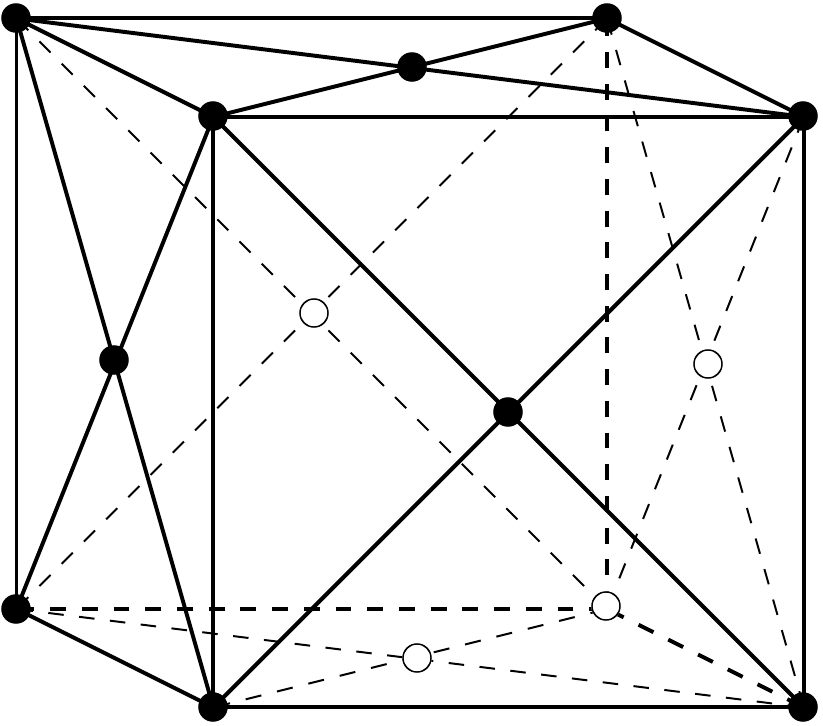}
\caption{BCC and FCC lattices}\label{fig8}
\end{center}
\end{figure}

Similarly, one may conjecture that, at least (and probably only) in low dimensions, the minimum of $\mathcal W$ is achieved by some particular lattice.

Proving the conjecture belongs to the wider class of crystallization problems. A typical question of this sort is,  given a potential $V$ in any dimension, to determine the point positions that minimize
$$\sum_{i\neq j} V(x_i-x_j)$$ (with some kind of boundary condition), or rather
$$\lim_{R\to \infty}\frac{1}{|B_R|}\sum_{i\neq j, x_i, x_j \in B_R} V(x_i-x_j),$$
and to determine whether the minimizing configurations are perfect lattices.  Such questions are fundamental in order to understand
the  crystalline structure of matter.
 They also arise in  the arrangement of   Fekete points \cite{sk} and the  ``Cohn-Kumar conjecture" \cite{ck}.
One should immediately stress that there are very few positive results in that direction in the literature  (in fact it is very rare to have a proof  that the solution to any minimization problem is periodic). Some exceptions include the two-dimensional sphere packing  problem, for which Radin \cite{radin} showed that the minimizer is the triangular lattice, and an extension of this by Theil \cite{theil} for a class of very  short range Lennard-Jones potentials.
The techniques used there do not apply to Coulomb  interactions, which are much longer range. Let us mention  another recent positive result. 
 The question of minimization of $\mathcal{W}$ can also  be very informally rephrased as that  of finding
$$\min`` \|\sum_p \delta_{p}-1\|_{(H^{1})^*}"  $$
where the quantity is put between brackets to recall that $\delta_{p}$ does not really belong to the dual of the Sobolev space  $H^{1}$ but rather has to be computed in the renormalized way that defines $\mathcal{W}$. A closely related problem is to find
 $$ \min \|\sum_p \delta_{p} -1\|_{\mathrm{Lip}^*},$$ and it  turns out to be  much easier. It is  shown by Bourne-Peletier-Theil in \cite{bpt} with a relatively short proof that again the triangular lattice achieves the minimum.

 We finish by referring to  some extra results. 
  
With Rota Nodari, in \cite{rns}, we  showed the equivalence between several ways of phrasing the  minimization of $W$ in dimension $2$ over a finite size box~: minimization with prescribed boundary trace and minimization among periodic configurations. In all cases, we were  able to prove, in the spirit of \cite{aco},  that the energy density and the points were uniformly distributed at any scale $\gg 1$, in good agreement with (but of course much weaker than!) the conjecture of periodicity of the minimizers.

Even though the minimization of $\mathcal W$ is only conjectural, it is natural to  view it as (or expect it to be)  a quantitative ``measure of disorder" of a configuration of points in the plane.
    In this spirit,  with Borodin   \cite{bs},
    we used $W$ (or rather a variant of it) in dimensions $1$ and $2$ to quantify and compute explicitly  the disorder of some classic random point processes in the plane and on the real line.

%%%%%%%%%%%%%%%%%%%%%%%%%%%%%%%%%%%%%%%%%%%%%%%%%%%%%%%%%%%%%%%%%%%%%%%%%%%%%%%%%%%%%%%%% CHAP 6
%%%%%%%%%%%%%%%%%%%%%%%%%%%%%%%%%%%%%%%%%%%%%%%%%%%%%%%%%%%%%%%%%%%%%%%%%%%%%%%%%%
\chapter{Deriving $\mathcal{W}$ as the large $n$ limit~: lower bound via a general abstract method}
\label{derivingw}

Our goal in this chapter is to pass to the limit $n \to \infty$ in the results obtained in Chapter \ref{chapsplit}, starting from Proposition \ref{propRS}, in order to extract $\mathcal{W}$ as a limiting energy. The main task is to obtain a lower bound in the limit $n\to \infty$, which is expressed in terms of an average of  $\mathcal{W}$ with respect to a suitable measure that encodes all the possible blow-up profiles. This is accomplished via a general method which can be formulated abstractly, and which we start by presenting.

\section{Lower bound for $2$-scales energies} \label{frameworkergodique}
In this section we present the abstract framework  which serves to  prove lower bounds on energies  containing two scales (one much smaller than the other). The question is to deduce from a $\Gamma$-convergence (as defined in section \ref{sectiongammaconvergence}) result at a certain scale a statement at a larger scale. The framework can thus be seen as a type of $\Gamma$-convergence result for 2-scale energies. The lower bound is expressed in terms of a probability measure, which can be seen as a Young measure on profiles (i.e. limits of the configuration functions viewed in the small scale).  The method is similar in spirit to that of Alberti-M\"uller \cite{am}, where they introduce what they call  ``Young measures on micropatterns," but  differs a bit, in particular in the fact that it is based on the use of Wiener's multiparameter ergodic theorem, following a suggestion of S. R. S. Varadhan.

 Let us first give a rough idea of the type of situation we wish to consider.
Let us assume we want to bound from below an energy which is the average over large (as $\varepsilon \to 0$) domains $\Omega_{\varepsilon}$ of some nonnegative energy density $e_{\varepsilon}(u)$, defined on a space of functions $X$ (functions over $\R^n$), $\fint_{\Omega_{\varepsilon}} e_{\varepsilon}(u(x))\, dx$, and we know the $\Gamma$-liminf behavior of $e_{\varepsilon}(u)$ on small (i.e. here, bounded) scales --- here the two scales are the finite scale $1$ and the large scale corresponding to the diameter of the large domain $\Omega_\eps$.
  By this we mean that  we know how to obtain bounds from below independent of $\eps$, say for example  we can prove that \be\label{lbb}\liminf_{\varepsilon \to 0}\int_{B_R}e_{\varepsilon}(u) \, dx \geq \int_{B_R} e(u)\, dx.\ee 

However we cannot always directly apply such a knowledge to obtain a lower bound on the average over large domains :  A natural idea is to cut the domain $\Omega_{\varepsilon}$ into boxes of fixed size $R$, to obtain lower bounds on each box (say of the type \eqref{lbb}) and add them together. By doing so, we may lose some  information on the behavior of the function on  the boundary of the boxes, which would be necessary to obtain a nontrivial lower bound. Moreover, we only get a lower bound by  a number (related to the minimal value that the lower bound can take, say e.g.  $\min_u \int_{B_R} e(u) \, dx$), while we would prefer instead a lower bound which is still a function depending on the $u$'s, i.e. on limits of the configuration $u_{\varepsilon}$. This is achieved by using the multiparameter ergodic theorem, as we shall now describe.

Let us  turn to more precise statements. 
Let $\Omega$ be a compact set of positive measure in $\R^d$, satisfying
\be\label{assomt}
 \lim_{\eps \to 0} \frac{|(\Omega + \eps x) \triangle \Omega|}{|\Omega|}=0\ee
(where $\triangle$ denotes the symmetric difference between sets).  For each $\varepsilon$, let $f_{\varepsilon}(x,u)$ be a functional depending on $x$, defined on a space of functions on $\R^d$, assumed to be a Polish space, and  denoted $X$. We require  $f_\eps$ to be  measurable functions on $\Omega \times X$.  Let us emphasize that $u$ lives on the blown-up sets $\frac{1}{\eps} \Omega$, i.e. on  the large scale, whereas $x\in \Omega$ lives  on the small scale.
\begin{example}\label{ex3}
The function given by 
\begin{displaymath}
f_{\varepsilon}(x,u) = \int p(x) e_{\varepsilon}(u) \chi(y) dy
\end{displaymath}
where $\chi$ is a cut-off function supported in $B(0,1)$, $e_{\varepsilon}$ is the energy density, and $p$ is a function on $\Omega$. The function $p$ can be interpreted as a weight depending on $x$, if $p$ is constant then the functionals $f_{\varepsilon}(x,\cdot)$ do not depend on $x$.
\end{example}

We denote by $\theta_{\lambda}$ the action of $\R^d$ on the space $X$ by translations, i.e.  $\theta_{\lambda}u = u(\lambda + \cdot)$ (it could be a more general action, but for the applications we have in mind, the action of translations is really what we need), and we require that $(\lambda, u) \mapsto \theta_{\lambda} u$ is continuous with respect to each variable. We also define the following groups of  transformations on $ \R^d \times X$~:
\begin{displaymath}
T_{\lambda}^{\varepsilon} (x,u) = (x + \varepsilon \lambda, \theta_{\lambda}u), \qquad T_{\lambda}(x,u) = (x, \theta_{\lambda}u).
\end{displaymath}
We assume we are looking at a global energy of the form
\begin{equation}
F_{\varepsilon} (u) = \fint_{\Omega} f_{\varepsilon}(x, \theta_{\f{x}{\varepsilon}} u) dx.
\end{equation}
\begin{example} If the local functional $f_{\varepsilon}(x, u)$ is given by
\begin{displaymath}
f_{\varepsilon}(x,u) = \int_{y \in \R^d} e_{\eps}(u)\chi(y)  dy
\end{displaymath}
where $\chi(y)$ is a cut-off function of integral $1$ supported in say $B(0,1)$, and $e_\eps$ is the local energy density (this is the simpler case of Example \ref{ex3} where $f_{\eps}(x,u)$ does not depend on $x$), then, with the previous definition, $F_{\eps}$ is equal to 
\begin{eqnarray*}
F_{\eps}(u) & = &  \fint_{x \in \Omega} \left[ \int_{y \in \R^d} \chi(y) e_{\eps}\left( u\left(\f{x}{\eps} + y\right)\right) dy \right] dx \\& 
= & \f{\eps^d}{|\Omega|} \int_{y \in \R^d} \int_{\frac{1}{\eps}\Omega + y} \chi(y) e_{\eps}\left(u(z)\right) dy dz \\
& \approx & \f{\eps^d}{|\Omega|} \int_{\frac{1}{\eps}\Omega} \left[ \int_{y} \chi(y) dy \right] e_{\eps}(u(z)) dz=  \fint_{\frac{1}{\eps}\Omega} e_{\eps}(u(z)) dz.
\end{eqnarray*}
The first equality is simply a change of variables $z = \f{x}{\eps} + y$. Between the second and the third line, we note that  the sets $\frac{1}{\eps}\Omega + y$ over which we integrate are almost constant : $\frac{1}{\eps}\Omega$ is of size $\f{1}{\eps} \gg 1$ and we translate it by a small $y \in B(0,1)$. Therefore, an application of Fubini's theorem and the use of \eqref{assomt} allow us to exchange the integration over $y \in \R^d$ and the one over $z \in \frac{1}{\eps}\Omega + y \approx \frac{1}{\eps} \Omega$. 
Writing $\fint_{\frac{1}{\eps}\Omega} e_{\eps}(u(z)) dz$ in this fashion  can be seen as a way to use a smooth partition of unity.
\end{example}

We will make the following assumptions : 
\begin{description}
\item[(i)] (bound from below) The functionals $f_{\eps}$ are bounded below by a constant independent of $\eps$ (for convenience we  suppose, up to adding a constant,  that $f_{\eps} \geq 0$). 
\item[(ii)] (coercivity and $\Gamma$-liminf)  There exists a nonnegative measurable function  $f$ on $\Omega \times X$, such that the following holds~: 
if the quantities
\begin{displaymath}
\int_{K_R} f_{\eps}\left(T^{\eps}_{\lambda}(x_\eps, u_\eps)  \right)d\lambda 
\end{displaymath}
are bounded (when $\eps \rightarrow 0$) for any $R$, then $(x_\eps, u_\eps)$ has a convergent subsequence, converging to some $(x,u)$ and 
$$ \liminf_{\eps \rightarrow 0} f_{\eps}(x_{\eps}, u_{\eps}) \geq f(x,u).$$
\end{description}

The next step is to define what we announced as ``Young measures on profiles."
For $u$ in $X$, we let $P_{\eps}$ be the probability measure on $\Omega \times X$ obtained by pushing forward the normalized Lebesgue measure on $\Omega$ by the map 
\begin{equation*}\left\{\begin{array}{l}
\Omega \rightarrow \Omega \times X\\
\ x \mapsto (x, \theta_{\f{x}{\eps}}u).\end{array}\right.
\end{equation*}
It is equivalent to define $P_{\eps}$ as the probability measure such that for any $\Phi \in C^0(\Omega \times X)$ : 
\begin{equation}
\int \Phi(x,v) dP_{\eps}(x,v) = \fint_{\Omega} \Phi(x, \theta_{\f{x}{\eps}} u) dx.
\end{equation}
We are thus  considering the probability measures on the  translates of the blow-ups of a given  function $u$ with  the average obtained by centering the blow-up uniformly over the points of $\Omega$. Formally one can write : 
\begin{displaymath}
P_{\eps} = \fint_{\Omega} \delta_{\left( x, \theta_{ \f{x}{\eps}} u\right) }dx.
\end{displaymath}
One can also be more precise by viewing $u\mapsto P_{\eps}$ as an embedding
\begin{equation}\label{defii}
\left\{\begin{array}{rl}
 \phi_\eps: & X\to \mathcal{P}(\Omega\times X)\\
  & u\mapsto \displaystyle  \fint_{\Omega} \delta_{\left( x, \theta_{ \f{x}{\eps}} u\right) }dx,
\end{array}\right.
\end{equation}
where $\mathcal{P} (S)$ denotes the space of  Borel probability measures on    $S$.
Note that the first  variable  $x$ is just there to keep the memory of the blow-up center. The first marginal of $P_\eps$ is always equal to the normalized Lebesgue measure on $\Omega$, regardless of the function $u$. Also, the probability here is that of an analyst~: the embedding $\phi_\eps$ is completely deterministic.

If $P_\eps =\phi_\eps(u_\eps) $ for a sequence of  functions $u_\eps$  has a limit  as $\eps \to 0$, that limit can be seen as a Young measure, but  encoding the whole blown-up profiles $u= \lim_{\eps \to 0}  \theta_{ \f{x}{\eps}} u_\eps$ rather than  only the limiting values of $u_\eps$ at $x$, as is the case with the usual definition of Young measures (for which we refer to 
\cite{evanswc}).  For example, if the functions $u$ represent distributions of  points, and if these  form a lattice packed at scale $\varepsilon$, the result is the average over a fundamental domain of the lattice “seen” from every possible origin. In a more general situation, $P$ encodes the respective weights of the possible point patterns that emerge locally. One could imagine for example  in dimension 2 a probability with weigth $p$ on triangular lattice configurations and weight $1-p$ on square lattice configurations. 

By definition of $P_\eps$,  we can rewrite the global energy $F_{\eps}$ as the integral of the local energy $f_{\eps}$ with respect to $P_{\eps}$ :
\begin{equation}
F_{\eps}(u_\eps) = \fint_{\Omega} f_{\eps}(x, \theta_{\f{x}{\eps}}  u_\eps) dx = \int_{\Omega \times X} f_{\eps}(x,v) dP_{\eps}(x,v).
\end{equation}
Now, if we are able to find a limit $P$ to the probability measures $P_{\eps}$ as $\eps \rightarrow 0$, we may hope to write 
\begin{displaymath}
\liminf_{\eps \rightarrow 0} \int f_{\eps} dP_{\eps} \geq \int f dP
\end{displaymath}
where $f$ is given by  assumption \textbf{(ii)}. This will indeed hold and is  reminiscent of Fatou's lemma (indeed the sequence $\{f_{\eps}\}$ is, by assumption, bounded below).  
The last step is to combine this with the   multiparameter ergodic theorem of Wiener (see\cite{becker}), whose statement we recall~:
\begin{theo}[Multiparameter ergodic theorem]\label{th6.1} Let $X$ be a Polish (complete separable metric)  space with a continuous $d$-parameter group $\Theta_{\lambda}$ acting on it. Assume $P$ is a $\Theta$-invariant probability measure on $X$. Then for all $f \in L^1(P)$,  we have 
\begin{displaymath}
\int f(u) dP (u)= \int f^*(u) dP(u)
\end{displaymath}
where 
\begin{displaymath}
f^*(u): = \lim_{R \to + \infty} \fint_{K_R} f(\theta_{\lambda} u) d\lambda \quad \text{$P$-a.e.}
\end{displaymath}
 We may replace the cubic domains $K_R$ by any  family of reasonable shapes, such as balls, etc (more precisely a Vitali family, see\cite{becker} for the conditions). 
\end{theo}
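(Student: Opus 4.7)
The plan is to follow the classical route for ergodic theorems: establish a maximal inequality for the averaging operators $A_R f(u) := \fint_{K_R} f(\theta_\lambda u)\, d\lambda$, prove pointwise a.e. convergence of $A_R f$ on a dense subclass of $L^1(P)$, and combine these two ingredients via the standard $\varepsilon$-approximation argument. The integrated identity then follows from the $\Theta$-invariance of $P$ and a uniform integrability argument.

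First, I will define the maximal function $Mf(u) := \sup_{R > 0} |A_R f(u)|$ and prove Wiener's maximal inequality: for every $t > 0$,
\begin{equation*}
P\bigl(\{u : Mf(u) > t\}\bigr) \leq \frac{C_d}{t}\, \|f\|_{L^1(P)},
\end{equation*}
with $C_d$ depending only on $d$. The idea is to transfer the problem to the orbit: for $P$-a.e.\ $u$, the map $\lambda \mapsto f(\theta_\lambda u)$ is locally integrable on $\R^d$ and $A_R f(u)$ coincides with the usual averaging of this orbital function over the cube $K_R$. Applying the Hardy--Littlewood cube maximal inequality on $\R^d$ to this orbital function, integrating against $P$, and using Fubini together with the $\Theta$-invariance of $P$, delivers the claimed weak-type estimate.

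Second, I will identify a dense subclass of $L^1(P)$ on which $A_R f(u) \to f^*(u)$ for $P$-a.e.\ $u$. A convenient choice is the sum of two pieces: $\Theta$-invariant bounded functions (for which $A_R f = f$ trivially) and ``coboundaries'' of the form $h - \fint_{K_1} h\circ\theta_\mu\, d\mu$, whose cube averages decay like $R^{-1}$ by a direct telescoping and $L^\infty$ estimate. Density of their sum in $L^2(P)$, hence in $L^1(P)$, follows from von Neumann's mean ergodic theorem, which decomposes $L^2(P)$ orthogonally into invariants and the closure of coboundaries. Combined with the maximal inequality through the standard truncation argument, this upgrades a.e.\ convergence from the dense class to all of $L^1(P)$.

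Third, to pass from a.e.\ convergence to the integrated identity $\int f^*\, dP = \int f\, dP$, I will note that $\|A_R f\|_{L^1(P)} \leq \|f\|_{L^1(P)}$ by invariance of $P$, and that $|f^*| \leq Mf$, which lies in weak $L^1$. Splitting $f = f\mathbf{1}_{|f|\leq N} + f\mathbf{1}_{|f|>N}$ and using dominated convergence on the bounded part together with the weak-type control on the tail, one exchanges limit and integral to obtain $\int f^*\, dP = \lim_{R\to\infty} \int A_R f\, dP = \int f\, dP$.

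The main obstacle is the multiparameter maximal inequality. In one dimension one can appeal to the rising-sun lemma or a rearrangement trick, but in dimension $d \geq 2$ one must invoke a Vitali-type covering lemma for the family of cubes $\{K_R(\lambda)\}$ and transfer it from $\R^d$ to $X$ through the orbit map. Proving this covering lemma and verifying its compatibility with the abstract Vitali hypotheses invoked at the end of the theorem statement (so as to allow balls and other reasonable shapes in place of cubes) is the technical heart of the argument; everything else is a routine application of the classical scheme.
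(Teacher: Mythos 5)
The paper does not prove this theorem: it is quoted as a known result, with the proof delegated entirely to the cited reference of Becker, so there is no in-paper argument to compare yours against. That said, your outline is the standard and correct route to Wiener's multiparameter ergodic theorem: a weak-type $(1,1)$ maximal inequality obtained by transference of the Hardy--Littlewood cube maximal inequality along orbits, pointwise convergence on the dense subclass (invariants plus bounded coboundaries, dense by the mean ergodic theorem), the usual $\varepsilon$-approximation to upgrade to all of $L^1(P)$, and the integrated identity from $\|A_R f\|_{L^1(P)}\le\|f\|_{L^1(P)}$. Two points deserve explicit care rather than being folded into ``routine'': (i) the transference step requires joint measurability of $(\lambda,u)\mapsto f(\theta_\lambda u)$ to apply Fubini, which for $f$ only defined $P$-a.e.\ needs the standard (but not free) fact that the action map is jointly measurable modulo null sets on a Polish space; (ii) the covering lemma you identify as the technical heart is, for cubes or balls in $\R^d$, the elementary Vitali/Wiener covering lemma, so the genuine content is the transference itself and the verification that more general averaging shapes satisfy the Vitali-family hypotheses, exactly as the theorem's final sentence indicates. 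With those caveats the proposal is sound and is essentially the argument the cited literature carries out.
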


Let us now  give the statement of the abstract result. It originally appeared in \cite{ssgl} in the case where the energy density does not depend on the blow-up center $x$, and was then generalized in \cite{ss1}.

\begin{theo}[Lower bound for two-scale energies \cite{ss1}] \label{thabstrait} Assume $\Omega, X,\{\theta_\lambda\}, \{f_\eps\}_\eps$,  $f$, $\{F_\eps\}_{\eps}$  are as above and    satisfy assumptions \textbf{(i)}--\textbf{(ii)}.
Assume $\{u_\eps\}_{\eps} $, a family of elements of $X$, is such that $\{F_\eps (u_\eps)\}_{\eps}$ is bounded, and  let $P_\eps= \phi_\eps(u_\eps)$.
Then one may extract a subsequence $\{P_\eps\}_\eps $ such that 
\begin{enumerate}
\item  $ \{P_\eps\}_{\eps}$ converges weakly in the sense of  probabilities, to some probability measure $P\in \mathcal{P}(\Omega \times X)$, whose first marginal is the normalized Lebesgue measure on $\Omega$. 
\item The limit $P$ is $T_{\lambda}$-invariant.
\item For $P$-almost every point $(x,u)$, there is some  $x_\eps$ such that  $(x_{\eps}, \theta_{\f{x_\eps}{\eps}} u_{\eps})\to (x,u)$. (Thus $P$ is indeed an average over possible local limits.) 
\item The following $\liminf$ holds : 
\begin{equation}
\liminf_{\eps \rightarrow 0} F_{\eps}(u_{\eps}) \geq \int f(x,u) dP(x,u) = \int f^*(x,u) dP(x,u)
\end{equation}
with 
\begin{displaymath}
f^*(x,u) := \lim_{R \to + \infty} \fint_{K_R} f(T_{\lambda}(x,u))d\lambda =  \lim_{R \to + \infty}\fint_{K_R} f(x, \theta_{\lambda}u) d\lambda.
\end{displaymath}
\end{enumerate}
\end{theo}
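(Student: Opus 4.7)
The plan is to establish the four conclusions in order, each building on the previous. For the compactness and first marginal in item 1, I would note that by construction of the embedding $\phi_\eps$, the first marginal of every $P_\eps$ is the normalized Lebesgue measure on $\Omega$, so tightness in the $x$-component is automatic. For tightness in the $u$-component, I would combine the bound $\int f_\eps \, dP_\eps = F_\eps(u_\eps) \le C$ with a Markov inequality on orbits: for each $R$, except on a set of arbitrarily small $P_\eps$-measure, the quantity $\fint_{K_R} f_\eps(T^\eps_\lambda(x,v)) \, d\lambda$ is uniformly bounded, so the coercivity half of assumption \textbf{(ii)} yields relative compactness of the corresponding $(x,v)$'s. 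Prokhorov's theorem then extracts a weakly convergent subsequence $P_\eps \rightharpoonup P$.

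For the $T_\lambda$-invariance of $P$ (item 2), the key computation is, for any $\Phi \in C^0_b(\Omega \times X)$,
\[
\int \Phi \circ T_\lambda \, dP_\eps = \fint_\Omega \Phi\bigl(x, \theta_{x/\eps+\lambda} u_\eps\bigr) \, dx
= \fint_{\Omega + \eps\lambda} \Phi\bigl(y - \eps\lambda, \theta_{y/\eps} u_\eps\bigr) \, dy
\]
after the change of variables $y = x + \eps\lambda$. Assumption \eqref{assomt} on $\Omega$ and uniform continuity of $\Phi$ on compacts show the right-hand side equals $\int \Phi \, dP_\eps + o_\eps(1)$, so passing to the limit gives $\int \Phi \circ T_\lambda \, dP = \int \Phi \, dP$. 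Item 3 then follows from the general fact that the support of a weak limit of probability measures is contained in the Kuratowski limsup of their supports; since $P_\eps$ is supported on the image of $\phi_\eps$, $P$-a.e. point $(x,u)$ is the limit of some $(x_\eps, \theta_{x_\eps/\eps} u_\eps)$.

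The main obstacle is the liminf inequality in item 4, which is a two-scale Fatou-type statement where both the integrand $f_\eps$ and the measure $P_\eps$ vary. Neither Fatou's lemma nor the portmanteau theorem apply directly, but one can combine them as follows. For any bounded continuous $\Psi \ge 0$ and any fixed $R > 0$, define $g_{\eps,R}(x,v) = \fint_{K_R} (f_\eps \wedge \Psi)(T^\eps_\mu(x,v)) \, d\mu$; the lower bound half of \textbf{(ii)} gives, by a Fatou argument along every convergent subsequence of the $T^\eps$-orbit, that $\liminf_\eps g_{\eps,R}(x_\eps,v_\eps) \ge \fint_{K_R} f(T_\mu(x,v)) \, d\mu$ whenever $(x_\eps,v_\eps) \to (x,v)$, i.e.\ the right-hand side is a $\Gamma$-liminf of $g_{\eps,R}$. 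Since this limit is lower semicontinuous on $\Omega \times X$ and $P_\eps \rightharpoonup P$, a standard portmanteau argument applied to $g_{\eps,R}$ (which is essentially a continuous-in-$(x,v)$ average) yields $\liminf_\eps \int g_{\eps,R} \, dP_\eps \ge \int \fint_{K_R} f \circ T_\mu \, d\mu \, dP$. By $T^\eps$-almost-invariance of $P_\eps$ in the $v$-direction (the translates $\theta_\mu$ commute with the embedding up to the negligible boundary correction used in item 2), $\int g_{\eps,R} \, dP_\eps \le \int (f_\eps \wedge \Psi) \, dP_\eps + o_\eps(1) \le \int f_\eps \, dP_\eps + o_\eps(1) = F_\eps(u_\eps) + o_\eps(1)$, so letting $\Psi \nearrow +\infty$ and using monotone convergence gives $\liminf_\eps F_\eps(u_\eps) \ge \int \fint_{K_R} f \circ T_\mu \, d\mu \, dP$.

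Finally, since $P$ is $T_\lambda$-invariant and $f \ge 0$ is measurable with $\int f \, dP < \infty$, Theorem \ref{th6.1} applied to the continuous $d$-parameter group $\{T_\lambda\}$ gives that the averages $\fint_{K_R} f \circ T_\lambda \, d\lambda$ converge $P$-a.e.\ and in $L^1(P)$ to the function $f^*$, with $\int f^* \, dP = \int f \, dP$. Letting $R \to \infty$ in the bound from the previous paragraph and using monotone/dominated convergence then yields the desired $\liminf_\eps F_\eps(u_\eps) \ge \int f^*(x,u) \, dP(x,u) = \int f(x,u) \, dP(x,u)$, completing the proof. The only delicate point is the passage to the limit in the third paragraph, where the two-scale structure forces one to carry along both a spatial average on $K_R$ (to smooth $f_\eps$ into something continuous-like against $P_\eps$) and a monotone truncation at level $\Psi$; the ergodic theorem is then what allows one to remove the auxiliary $K_R$-averaging at the very end.
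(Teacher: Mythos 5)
Your proposal is correct and follows essentially the same route as the paper's (sketched) proof: tightness from the energy bound plus the coercivity half of \textbf{(ii)}, invariance via the change of variables $y=x+\eps\lambda$ and \eqref{assomt}, and item 4 by combining the $\Gamma$-liminf of \textbf{(ii)}, the nonnegativity \textbf{(i)}, Fatou-type joint lower semicontinuity under weak convergence of $P_\eps$, and Wiener's ergodic theorem, exactly as in the cited Lemma 2.2 of \cite{ssgl}. The only cosmetic slip is that one display drops the truncation $\wedge\Psi$ on the right-hand side before you pass to the limit $\Psi\nearrow+\infty$, but the surrounding logic makes clear this is intended.
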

We now indicate the ingredients of the proof (details can be found in \cite{ssgl,ss1}).

\begin{proof}  1. The main point is to show that $\{P_{\eps}\}_{\eps}$ is tight, i.e. for any $\eta > 0$ there exists a compact  set $K_\eta$ such that $P_{\eps}(K_\eta) \geq 1 - \eta$ for small $\eps$. This comes as a consequence of  the assumption that $\{F_{\eps}(u_\eps)\}_\eps$ is bounded and the coercivity assumption \textbf{(ii)} on the functionals.
The fact that the first marginal is the normalized Lebesgue measure is obvious since it is true for each $P_\eps$ and thus  remains true in the limit.\\
2. The invariance by $T_{\lambda}$ is a straightforward consequence of the definition of $P_\eps$. Consider a test-function $\Phi\in C^0(\Omega \times X)$ and $\lambda \in \R^d$.   On the one hand~:
\begin{displaymath}
\lim_{\eps \rightarrow 0} \fint_{\Omega}\Phi(x, \theta_{\f{x}{\eps} + \lambda} u_{\eps}) dx = \lim_{\eps \rightarrow 0} \fint_{\Omega} \Phi(x, \theta_{\f{x}{\eps}} u_{\eps} )dx
\end{displaymath}
because $\f{x}{\eps} + \lambda \approx \f{x}{\eps}$ for any $\lambda$ fixed when $\eps$ goes to zero (this uses the assumption \eqref{assomt}).
 But on the other hand, for any $\lambda$, we have by definition of $P_\eps$,
\begin{displaymath}
\lim_{\eps \rightarrow 0} \fint_{\Omega}\Phi(x, \theta_{\f{x}{\eps} + \lambda} u_{\eps}) dx = \lim_{\eps \rightarrow 0} \int \Phi( x,\theta_\lambda u) \, dP_\eps (x,u)= 
\int \Phi(x, \theta_{\lambda} u) dP(x,u).
\end{displaymath}
We deduce that  we must  have, for all continuous $\Phi$, 
\begin{equation}
\int \Phi(x, \theta_{\lambda} u) dP  =  \int \Phi(x, u) dP,
\end{equation}
which exactly means that $P$ is $T_\lambda$-invariant.
\\
3. This is a rather direct consequence of the definition of $P_\eps$.
\\
4. This is a result that uses the fine topological information provided by assumptions 
 \textbf{(ii)},  and combines it with the weak convergence of $P_\eps $ to $P$, assumption \textbf{(i)} and Fatou's lemma, cf. \cite[Lemma 2.2]{ssgl}.
 \end{proof}

As desired, this result provides a lower bound on functionals of the type $F_\eps (u_\eps)$, which is expressed in terms of the probability $P$, i.e. in terms of the limits of $u_\eps$. 
As a corollary, it implies the weaker result of lower bound of $F_\eps$ by a number~: 
\be\label{lbab}
\liminf_{\eps \rightarrow 0} F_\eps(u_\eps) \geq \int \inf_{u} f^*(x,u) dP(x,u) = \fint_{\Omega} \inf_{u} f^*(x,u) dx.
\ee
The minimization of the function $f^*$ is similar to a ``cell problem" in homogenization (cf. e.g. \cite{braides4}).

Once this result is proved, it remains to show, if possible,  that such a lower bound is sharp, which requires  
 constructing a family  $\{u_\eps\}$ such that 
\be\label{ub}
\limsup_{\eps \rightarrow 0}F_\eps (u_\eps)  \leq \fint_{\Omega} \inf f^*(x, \cdot) ) dx.
\ee
This certainly requires at least that the $\Gamma$-liminf relation in assumption \textbf{(i)} be also a $\Gamma$-limsup, i.e. that there exist recovery sequences.  This really depends on the specifics of the local  functionals.
If \eqref{ub} can be shown,  then, just as in Proposition \ref{gammaconvmini}, comparing \eqref{lbab} and \eqref{ub} implies that if $u_\eps $ minimize $F_\eps$ for every $\eps$ and $\min F_\eps $ is bounded, then letting  $P$ be as in Theorem \ref{thabstrait}, we must have
$$P-a.e. (x,u), \  u \text{ minimizes the local functional }\   f^*(x,\cdot).$$

We will next see  how to apply this abstract result in the context of the Coulomb gas Hamiltonian. It has also been used for vortices in Ginzburg-Landau in \cite{ssgl}, as we will see in Chapter \ref{glnext},  and droplets in the Ohta-Kawasaki model \cite{gms2}. In all these cases, we were able to conclude because the corresponding upper bound \eqref{ub} turned out to be provable.

\section[Next order for the Coulomb gas]{Next order lower bound for the Coulomb gas Hamiltonian}

\subsection{Assumptions}\label{secassump}

To conclude with  our final results in Chapter \ref{chapub} 
we will make additional assumptions on $V$ which we already state~:
\benu
\item the strongest ones  made in Chapter \ref{leadingorder}, i.e. that $V$ is continuous, finite-valued and satisfies \textbf{(A3})--\textbf{(A4)}.
This  in particular guarantees that the equilibrium measure $\mu_0$ exists and has compact support.
\item \textbf{(A5)} The support  $\Sigma$ of the equilibrium measure has a $C^1$ boundary.
\item \textbf{(A6)} The equilibrium measure $\mu_0$ has an $L^{\infty}$ density which is bounded below of class $C^1$ on its support~: 
\be
\mu_0(x) = m_0(x) \mathbf{1}_{\Sigma}(x) dx
\ee
where $\Sigma$ is the support of $\mu_0$ and $m_0\in C^1(\Sigma)\cap L^\infty(\R^d)$ is its density, which satisfies 
\begin{equation}\label{bornesm0}
0 < \underline{m} \leq m_0 \leq \overline{m}.
\end{equation} 
\eenu
Again, by abuse of notation, we will confuse $m_0(x)$ and $\mu_0(x)$.

If $V$ is smooth enough, these assumptions are sufficiently generic. They are nonempty: an easy example is the case when $V$ is a multiple of $|x|^2$ and $\mu_0$ is a multiple of the  characteristic function of a ball (see Example 2 in Chapter \ref{leadingorder}) -- in fact any $V$ positive quadratic works as well.   
Recall  also that from Proposition \ref{proequivpb},  when $V$ is $C^2$ $\mu_0$ is a measure with  density $m_0= (\hal \Delta V\indic_{\omega}) \in L^\infty$, thus if $\Delta V $ is bounded below by a positive constant, \eqref{bornesm0} is satisfied. 
If in addition $V$ is $C^3$ on $\Sigma$,  then $\mu_0 $ is $C^1 $ in $\Sigma$ and \textbf{(A6)} is fully   satisfied. This strong assumption is assumed mostly for convenience, to simplify our upper bound construction. For the lower bound, the assumption that $\mu_0\in C^0(\Sigma)$ (and probably even less) suffices.
Note that when \textbf{(A6)} holds, by continuity of $\Delta V$, $\Sigma$ and the coincidence set $\omega$ must coincide.

The assumption \textbf{(A5)} can be investigated in light of the regularity theory for the obstacle problem, for which $C^1$ regularity of the boundary of the coincidence set  is generic in some sense in dimension 2  \cite{schaeffer,monneau}, or is true if the coincidence set if convex.
Note also that a result \cite{kindernirenberg,isakov} shows that if the boundary of the coincidence 
set is $C^1$, it is in fact analytic.
Again, weaker conditions should suffice. 

%We need to make an additional on the equilibrium measure $\mu_0$ : from now on we will suppose that $\mu_0$ has a $C^1$ density on its support $\Sigma$. Let us recall that the density is proportional to $\Delta V$ on $\Sigma$, so that ly, we expect the density to be of the form $\mathbf{1}_{\Sigma} \times m_0$ with $m_0$ of class $C^1$. We also recall that we ask for $\Sigma$ to have a $C^1$ boundary (this is mostly a technical assumption in order to make things easier, but probably not an essential requirement).

\subsection{Lower bound}
As already mentioned, we now return to Proposition \ref{propRS}, in order to extract $\mathcal{W}$ as a limiting lower bound, using the abstract framework of Section \ref{frameworkergodique}.

In view of the results of Proposition \ref{developpementasympto}, in order to bound from below $H_n$ at the next order, it suffices to bound from below $\mathcal H_n$ and by monotonicity in $\eta$ to bound from below
 $\frac{1}{n} \int_{\R^d} |\nab h_{n,\eta}'|^2 $,  where $h_{n,\eta}'$ is given by \eqref{1to1}. This will be done
according to the scheme of Section \ref{frameworkergodique}. We first consider $\eta$ as fixed and let $n\to \infty$, and later let $\eta \to 0$.
The setup to use the abstract framework is to take $\Omega= \Sigma$, $X= 
L^q_{\mathrm{loc}} (\R^d, \R^d) $ for some $q<\frac{d}{d-1}$, and $\eps= n^{-1/d}$.
Assumption \textbf{(A5)} ensures in particular that the condition \eqref{assomt} is satisfied.

We wish to obtain lower bounds for sequences of configurations $(x_1, \dots, x_n)$. In all that follows, the configuration depends implicitly on $n$, i.e. we mean $(x_{1,n}, \dots, x_{n,n})$ but drop the second index in the notation.
The main lower bound result that we obtain with the method outlined above is~: 

\begin{theo}[Lower bound at next order for the Coulomb gas Hamiltonian]  \label{thlbnext} \mbox{}Assume that $V$ is continuous and such that the equilibrium measure $\mu_0$ exists and satisfies \textbf{(A5)}--\textbf{(A6)}.   For any $x_1, \dots, x_n\in \R^d$, let $h_n' = g * (\sum_{i=1}^n \delta_{x_i'} -\mu_0') $ as in \eqref{dehn'1}.
Let $P_n\in \mathcal{P}(\Sigma \times X)$ be the push-forward of
the normalized Lebesgue measure on $\Sigma $ by 
\begin{displaymath}
x \mapsto (x, \nabla h'_n (n^{1/d} x + \cdot)).
\end{displaymath}
Assume $\mathcal H_n(x_1', \dots, x_n') \le C n$ for some constant $C$ independent of $n$, where $\mathcal H_n$ is as in \eqref{Wncal}. 
Then, up to extraction of a subsequence, $P_n$ converges weakly in the sense of probabilities  to a probability measure $P\in \mathcal{P}(\Sigma \times X)$ such that 
\begin{itemize}
\item[(i)] $P$ is translation-invariant, and its first marginal is  the normalized Lebesgue measure on $\Sigma$.
\item[(ii)]  For $P$-almost every  $(x,E)$, $E$ belongs to the class $\bar{\mc{A}}_{\mu_0(x)}$.
\item[(iii)] We have the following $\Gamma-\liminf$ inequality~: 
\begin{equation} \label{liminfconclu1}
\liminf_{n \to+ \infty} \frac{1}{n} \mathcal H_n(x_1', \dots, x_n')  \geq    \widetilde{\mc{W}}(P),
\end{equation}where $\widetilde{\mc{W}}$ is defined over the set of probability measures   $P\in \mathcal{P}(\Sigma \times X)$
satisfying $(i)$ and $(ii)$ 
 by 
\be\label{tildew}
\widetilde{\mc{W}}(P):= \frac{|\Sigma|}{c_d}\int \mc{W}(E)\, dP(x,E).\ee
\end{itemize}
\end{theo}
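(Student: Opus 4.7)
The plan is to apply the abstract two-scale $\Gamma$-$\liminf$ result of Theorem~\ref{thabstrait} with $\Omega = \Sigma$, $\varepsilon = n^{-1/d}$, configuration space $X = L^q_{\mathrm{loc}}(\R^d,\R^d)$ for some fixed $q \in (1, d/(d-1))$ (so that $\nabla h_n' \in X$ uniformly in $n$ by Lemma~\ref{lem:fluctu field}), and the translation action $\theta_\lambda E = E(\cdot+\lambda)$; the requirement~\eqref{assomt} is ensured by the $C^1$ regularity of $\partial\Sigma$ from \textbf{(A5)}. The sequence of configurations is $u_n = \nabla h_n'$. For each fixed $\eta \in (0,1)$ the local energy functional is
\begin{equation*}
f_\eta(x,E) := \frac{1}{c_d}\int_{\R^d} \chi(y)\, |E_\eta(y)|^2\, dy - \mu_0(x)\, g(\eta) + C,
\end{equation*}
where $\chi \in C_c^\infty(\R^d)$ is nonnegative, supported in $K_1$ and of integral $1$, $E_\eta$ is the smeared field of Definition~\ref{def1} (constructed from the atomic part of $(-\mathrm{div}\,E)/c_d + \mu_0(x)$), and $C$ is a fixed constant chosen large enough that $f_\eta \ge 0$, so that assumption \textbf{(i)} of the abstract theorem holds.

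The first task is to verify the coercivity/$\Gamma$-$\liminf$ assumption~\textbf{(ii)}. A bound on $\int_{K_R} f_\eta \circ T_\lambda^\varepsilon\, d\lambda$ gives a uniform bound on $\int_{K_R}|E_{n,\eta}|^2$ over large cubes, from which Lemma~\ref{lem:fluctu field} yields a uniform $L^q_{\mathrm{loc}}$-bound on $E_n$ itself; by Banach--Alaoglu, up to extraction $E_n \to E$ in $L^q_{\mathrm{loc}}$. The distributional charge equation passes to the limit: since the background $\mu_0'(n^{1/d}x+\cdot)$ converges locally uniformly to the constant $\mu_0(x)$ (by continuity of $m_0$ in \textbf{(A6)}) and since Lemma~\ref{lem:fluctu charge} prevents any pathological concentration, the weak limit of the atomic part $\sum_i \delta_{x_i'-n^{1/d}x}$ is a locally finite sum $\sum_p N_p \delta_p$ with $N_p \in \mathbb{N}^*$, placing $E \in \bar{\mathcal{A}}_{\mu_0(x)}$ — the possible multiplicities are precisely what forces us to work with $\bar{\mathcal{A}}$ rather than $\mathcal{A}$. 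The $\Gamma$-$\liminf$ inequality $\liminf\int\chi|E_{n,\eta}|^2 \ge \int\chi|E_\eta|^2$ then follows from weak $L^2_{\mathrm{loc}}$-convergence $E_{n,\eta} \rightharpoonup E_\eta$ (the smearing removes the singularities and the limit of the smeared atomic part is determined by the limit of the atomic measure) combined with lower semicontinuity of the $L^2$-norm.

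Theorem~\ref{thabstrait} then delivers items (i)--(ii) of the conclusion — invariance of $P$ and the fact that $E \in \bar{\mathcal{A}}_{\mu_0(x)}$ for $P$-a.e.\ $(x,E)$ — together with $\liminf_n F_n(\nabla h_n') \ge \int f_\eta^*(x,E)\, dP(x,E)$. The ergodic average is identified by exchanging $\fint_{K_R}$ with $\int \chi(y)\,dy$ (using $\int\chi = 1$ and translation equivariance of the smearing), applying Wiener's multiparameter ergodic theorem together with the $\theta$-invariance of $P$ to replace the $K_R$-average by its $R \to \infty$ limit, and invoking that $E \in \bar{\mathcal{A}}_{\mu_0(x)}$ to identify this limit with $\mathcal{W}_\eta(E) + \mu_0(x)c_d g(\eta)$ ($P$-almost surely); after the cancellation of the $\mu_0 g(\eta)$ terms against the subtraction in $f_\eta$, this gives $f_\eta^*(x,E) = \frac{1}{c_d}\mathcal{W}_\eta(E) + C$.

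Finally, a direct computation of $F_n(\nabla h_n')$ via Fubini and the change of variable $z = n^{1/d}x$ yields, up to boundary corrections, an upper bound of the form $F_n \le \frac{1}{c_d n |\Sigma|}\bigl(\int|\nabla h_{n,\eta}'|^2 - nc_d g(\eta)\bigr) + C$, which by the almost-monotonicity Lemma~\ref{prodecr} and the definition~\eqref{Wncal} of $\mathcal{H}_n$ is bounded by $\frac{\mathcal{H}_n}{c_d n |\Sigma|} + O(\eta) + C$. Combining this with the abstract lower bound and then letting $\eta \to 0$, using Proposition~\ref{promono} to interchange limit and integral (the almost-monotonicity provides an $\eta$-independent lower bound on $\mathcal{W}_\eta$ permitting a Fatou-type argument) and the convergence $\mathcal{W}_\eta \to \mathcal{W}$, yields the announced inequality~\eqref{liminfconclu1}.

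The main obstacles I expect are: first, controlling the boundary corrections in the change of variables from $\Sigma$ to $n^{1/d}\Sigma$, which requires showing that $\int|\nabla h_{n,\eta}'|^2$ on a thin slab around $\partial(n^{1/d}\Sigma)$ is $o(n)$ — this uses the perimeter bound from \textbf{(A5)} together with the coercivity bound $\mathcal{H}_n \le Cn$ via Lemma~\ref{lem:fluctu charge} to rule out concentration of charge near the boundary; second, verifying that the discrete charges pass to the distributional limit with the correct multiplicities along subsequences in $L^q_{\mathrm{loc}}$, a step that genuinely requires the charge-fluctuation estimate of Lemma~\ref{lem:fluctu charge}; third, the exchange of $\liminf_n$ with $\lim_{\eta \to 0}$, for which uniformity in $\eta$ is delivered by the almost-monotonicity of Proposition~\ref{promono}.
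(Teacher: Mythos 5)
Your proposal is correct and follows essentially the same route as the paper: the abstract two-scale framework of Theorem~\ref{thabstrait} applied at fixed $\eta$ on $\Sigma$ with $X=L^q_{\loc}$, coercivity and identification of the limit fields via Lemma~\ref{lem:fluctu field} and the passage to the limit in the charge equation, the Fubini/rescaling comparison of $F_n$ with $\frac1n\int|\nabla h'_{n,\eta}|^2$ and hence with $\mathcal H_n$ via \eqref{lbwnc}, and finally $\eta\to0$ by Fatou using the uniform lower bound of Proposition~\ref{promono}. The only notable (harmless) deviations are that you fold $-\mu_0(x)g(\eta)+C$ into the local functional rather than subtracting $c_d g(\eta)$ globally at the end, and that the boundary-layer correction you flag as an obstacle is not actually needed in this direction, since extending the nonnegative integrand from $n^{1/d}\Sigma+y$ to all of $\R^d$ already gives the required one-sided bound on $F_n$.
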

This result was proven in this form in \cite{rs}. The same also holds with $\mathcal{W}$ replaced by $W$ in dimension $d=2$, as was previously proven in \cite{ss1}, and also in dimension $d=1$ in \cite{ss2}.

One may guess the value of the minimum of $\widetilde{W}$ on its domain of definition: by property $(i)$ on $P$, we have 
\begin{equation}\label{mintw}
\min \widetilde{\mc{W}}\ge \f{|\Sigma|}{c_d} \int\Big( \min_{\bar{\mc{A}}_{\mu_0(x)}} \mc{W}\Big) \ dP(x,E) = \f{1}{c_d} \int_{\Sigma} \Big(\min_{\bar{\mc{A}}_{\mu_0(x)}   } \mathcal{W}\Big)dx,
\end{equation} 
and by the scaling relations 
 \eqref{scalingW} and \eqref{scalingW2d}, we thus get 
\begin{multline}\label{eq:gamma d}
 \min \widetilde{\mc{W}}\ge \xi_d := \frac{1}{c_d} \int_{\Sigma} \min_{\bar{\mc{A}}_{\mu_0(x)}} \mc{W}\, dx
 \\
 =
\begin{cases}
  \displaystyle  \f{1}{c_d} \left(\displaystyle \int_{\Sigma} \mu^{2 - 2/d}_0(x) dx \right) \min_{\bar{\mc{A}}_1} \mc{W}& \text{for } d\ge 3\\
\displaystyle  \f{1}{2\pi } \min_{\bar{\mc{A}}_1} \mc{W}  - \displaystyle \f12 \int_{\Sigma} \mu_0(x)  \log \mu_0(x) dx & \text{for $d=2$}.\end{cases}\end{multline}
It turns out, as we will see below, that these inequalities are equalities. 
In view of the splitting formula in Proposition \ref{developpementasympto}, and dropping the term $\sum_i \zeta(x_i) $ which is always nonnegative, 
Theorem \ref{thlbnext} has the following 
\begin{coroll}\label{cor61}
 We have
$$
\liminf_{n\to  + \infty} n^{2/d - 2}\left(\min H_n - n^2I(\mu_0)  +\Big(\f{n}{2}\log n \Big) \indic_{d=2}  \right) \geq \xi_d ,$$where $\xi_d $ is as in \eqref{eq:gamma d}.
\end{coroll}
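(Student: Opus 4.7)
The plan is to show this is essentially a direct book-keeping consequence of the splitting identity in Proposition \ref{developpementasympto} combined with Theorem \ref{thlbnext}, with no substantive new work required; the statement just records the ``number'' lower bound extracted from the $\tilde{\mathcal{W}}$ lower bound.

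First, I would apply the splitting formula to a minimizing configuration $(x_1,\dots,x_n)$ of $H_n$. Since $\zeta \ge 0$ everywhere (Definition \ref{def23} and the characterization \eqref{EulerLagrange}), I can simply drop the nonnegative term $2n\sum_i \zeta(x_i)$, obtaining
\begin{equation*}
\min H_n - n^2 I(\mu_0) + \Bigl(\tfrac{n}{2}\log n\Bigr)\indic_{d=2} \;\ge\; \frac{n^{1-2/d}}{c_d}\,\mathcal H_n(x_1',\dots,x_n').
\end{equation*}
Multiplying by $n^{2/d-2}$, the desired conclusion reduces to proving
\begin{equation*}
\liminf_{n\to\infty} \frac{1}{c_d n}\,\mathcal H_n(x_1',\dots,x_n') \;\ge\; \xi_d.
\end{equation*}

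Next, I would handle the two cases depending on the behaviour of $\frac{1}{n}\mathcal H_n$ along the chosen minimizing sequence. If $\liminf_{n\to\infty}\frac{1}{n}\mathcal H_n = +\infty$, the inequality is trivial. Otherwise, passing to a subsequence realizing the liminf, we have $\mathcal H_n(x_1',\dots,x_n') \le Cn$, which is exactly the hypothesis of Theorem \ref{thlbnext}. Applying that theorem, I extract a further subsequence along which the push-forward probability measures $P_n$ converge to a translation-invariant $P \in \mathcal{P}(\Sigma \times X)$ whose first marginal is the normalized Lebesgue measure on $\Sigma$ and which is concentrated on electric fields $E \in \bar{\mathcal{A}}_{\mu_0(x)}$, with
\begin{equation*}
\liminf_{n\to\infty} \frac{1}{n}\,\mathcal H_n(x_1',\dots,x_n') \;\ge\; \tilde{\mathcal W}(P) = \frac{|\Sigma|}{c_d}\int \mathcal W(E)\,dP(x,E).
\end{equation*}

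Finally, I would use the pointwise-in-$x$ bound $\mathcal W(E) \ge \min_{\bar{\mathcal A}_{\mu_0(x)}}\mathcal W$ valid for $P$-a.e.\ $(x,E)$ by property (ii), integrate against the marginal $|\Sigma|^{-1}\,dx$ on $\Sigma$ coming from property (i), and invoke the scaling identities \eqref{scalingW}--\eqref{scalingW2d} to reduce $\min_{\bar{\mathcal{A}}_{\mu_0(x)}}\mathcal{W}$ to $\min_{\bar{\mathcal A}_1}\mathcal W$ weighted by $\mu_0(x)^{2-2/d}$ (resp.\ the logarithmic expression in $d=2$). This is exactly the computation already carried out in \eqref{mintw}--\eqref{eq:gamma d} yielding $\tilde{\mathcal W}(P) \ge \xi_d$, and the corollary follows. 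There is no real obstacle: the corollary is the translation of Theorem \ref{thlbnext} into a numerical lower bound on $\min H_n$, with Corollary \ref{corlb} serving only to guarantee that the dichotomy in the second step is nontrivial (i.e.\ that $\mathcal H_n/n$ is bounded below, so the ``bounded'' case really happens for minimizers once a matching upper bound is produced in the sequel).
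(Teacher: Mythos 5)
Your proof is correct and follows essentially the same route as the paper: the corollary is obtained by combining the splitting formula of Proposition \ref{developpementasympto} (dropping the nonnegative $\zeta$ term) with Theorem \ref{thlbnext} and the computation \eqref{mintw}--\eqref{eq:gamma d} giving $\widetilde{\mc{W}}(P)\ge \xi_d$. Your explicit handling of the dichotomy on $\liminf \frac{1}{n}\mathcal H_n$ to justify the hypothesis $\mathcal H_n\le Cn$ of Theorem \ref{thlbnext} is a welcome clarification of a step the paper leaves implicit.
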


\begin{proof}[Proof of the theorem]
 As announced, we apply the abstract framework of Section~ \ref{frameworkergodique} for fixed $\eta$.
 We will need the following notation~: given a gradient  vector field $E=\nab h$ satisfying a relation of the form 
 $$-\div E = c_d \Big(\sum_{p\in \Lambda} N_p \delta_p - \mu(x)\Big)\ \text{in } \R^d,$$
 whether an element of $\bar{\mathcal{A}}_m$, or the gradient of a potential defined by \eqref{defhn}, 
 we define $E_\eta$ to be  as in \eqref{esmeared}, and we denote by
 $\Phi_\eta$ the map $E \mapsto E_\eta$, which to a vector field corresponding to singular charges assigns the vector field corresponding to smeared out charges.

  Let us  define $P_{n, \eta}$ as the push-forward of the normalized Lebesgue measure on $\Sigma$ by the map
\begin{displaymath}
x \mapsto (x, \nabla h'_{n, \eta} (n^{1/d} x + \cdot)).
\end{displaymath}
In other terms, $P_{n, \eta}$ is the push-forward of $P_n$ by $\Phi_\eta$. Then, we take $\chi$ to be a  nonnegative cut-off function supported in $B(0,1)$ and of integral $1$, and  set 
\begin{displaymath}
f_n(x, E) = \left\lbrace \begin{array}{cl}\displaystyle \int \chi(y)|E|^2(y)dy & \text{ if } E = \nabla h'_{n, \eta}(n^{1/d}x + \cdot) \\ + \infty & \text{otherwise.}
\end{array} \right.
\end{displaymath}
This gives the ``local" energy at the small scale. The definition ensures that we only consider a class of vector fields that are of the interesting form.

We then let  $F_n(E)$ be given, as in Theorem \ref{thabstrait},  by  
\begin{equation} \label{defFnE}
F_n(E) = \fint_{\Sigma} f_n(x, \theta_{n^{1/d} x} E) dx
\end{equation}
and we may observe that by Fubini's theorem and a change of variables, 
\begin{multline*}
F_n(E) =  \f{1}{|\Sigma|} \int_{\R^d} \int_\Sigma  \chi(y) |\nabla h'_{n, \eta}(n^{1/d}x +y)|^2 dx dy\\ \le  \f{1}{n|\Sigma|} \int_{\R^d}    \int_{\R^d} |\nabla h'_{n, \eta}(z)|^2 \chi(y) \, dy  dz.
\end{multline*}
Since $\int \chi=1$, it follows that 
\begin{equation}\label{615}
  \f{1}{n} \int_{\R^d} |\nabla h'_{n, \eta}|^2\geq |\Sigma|F_n(E)
\end{equation}
so in order  to bound from below $\frac{1}{n} \int_{\R^d} |\nab h_{n,\eta}'|^2 $ as desired, it does suffice to bound from below $F_n$, which will be done via Theorem \ref{thabstrait}.

To do so, we have to check the three assumptions  \textbf{(i)}--\textbf{(ii)} stated in  Section~\ref{frameworkergodique}. First, it is true that the energies $f_n$ are nonnegative,  this gives the condition \textbf{(i)}.
 To check condition \textbf{(ii)}, we use the following lemma

\begin{lemme}[Weak compactness of local electric fields]\label{lemprel}\mbox{}\\
Let $h_{n,\eta}'$ be as above, and let $\nu_n'=\sum_{i=1}^n \delta_{x_i'}$ . 
Assume that  for every $R>1$ and for some  $\eta\in(0,1)$, we have
\begin{equation}\label{eq:asum h}
\sup_n  \int_{K_R} |\nab h_{n,\eta}'(n^{1/d}\underline{x_n}+\cdot) |^2 \le C_{\eta, R},
\end{equation}
and $\underline{x_n}\to x\in\mr^d$ as $n\to \infty$ (a sequence of blow-up centers).
Then $\{\nu'_n(n^{1/d}\underline{x_n}+\cdot)\}_{n}$ is locally bounded and up to extraction converges weakly as $n\to \infty$ in the sense of measures to
$$\nu = \sum_{p\in\Lambda}N_p \delta_p$$
where $\Lambda $ is a discrete set of $\R^d$ and $N_p\in \mathbb{N}^*$.
In addition, there exists $E\in L^q_{\mathrm{loc}}(\R^d, \R^d) $, $q<\frac{d}{d-1}$, $E_\eta\in L^2_{\mathrm{loc}} (\R^d, \R^d)$, with $E_\eta=\Phi_\eta(E)$,  such that, up to extraction, as $n \to \infty$, 
\begin{eqnarray}\label{cvh2}
& \nab h_n'(n^{1/d} \underline{x_n}+\cdot)  \rightharpoonup   E \ \text{weakly in } \ L^q_{\mathrm{loc}}(\R^d, \R^d)   \ \text{for}\ q<\frac{d}{d-1}%& \nab \het (n^{1/d}x_n+ \cdot)  \rightharpoonup  \j \ \text{weakly in } \ \Lp  \ \text{for}\ p<2, \ \text{as } \ \eta\to 0  \mbox{ and } n\to \infty.
\\
\label{cvh}
& \nab h_{n,\eta}' (  n^{1/d}\underline{x_n}+\cdot)   \rightharpoonup  E_\eta \ \text{ weakly in } \ L^2_{\mathrm{loc}} (\R^d, \R^d). \end{eqnarray}
Moreover $E $ is the gradient of a function $ h$, and if $x\notin \pa\Sigma$, we have 
\begin{equation}\label{eqhl}
- \Delta h = c_d(\nu -\mu_0(x))\quad \text{in} \ \R^d
%, \quad x= \lim_{n\to \infty} \underline{x_n}
 \end{equation} hence $E\in \bar{\mathcal{A}}_{\mu_0 (x)}$.
\end{lemme}

\begin{proof}
First, from \eqref{eq:asum h} and \eqref{dehn'1}, exactly as in the proof of \eqref{feg} we have that for some $t \in (R-1,R)$, for every $n$, 
$$\left| \int_{ K_t}    \sum_i  \delta_{x_i'}^{(\eta)} (n^{1/d} \underline{x_n} + x')    - \int_{K_t} \mu_0(\underline{ x_n}  +  n^{-1/d}  x')\, dx' \right| \le          C_{\eta, R},$$
for some constant depending only on $\eta$ and $R$. 
It follows that, letting $\underline{\nu_n'}:=\nu_n'(n^{1/d} \underline{x_n} + \cdot) $, we have
$$\underline{\nu_n'}(K_{R-1}) \le C_d \|\mu_0\|_{L^\infty}  R^d  +C_{\eta,R}.$$
This establishes that $\{\underline{\nu_n'}\}$ is locally bounded independently of $n$. In view of the form of $\underline{\nu_n'}$, its limit can only be of the form $\nu= \sum_{p\in \Lambda} N_p \delta_p$, where $N_p$ are positive  integers and $\Lambda $ is a discrete set in $\R^d$.

Up to a further extraction we also have \eqref{cvh} by \eqref{eq:asum h} and weak compactness  of $\nab h_{n,\eta}'$ in $L ^2_{\mathrm{loc}}$. The compactness and  convergence \eqref{cvh2}  follow from Lemma~\ref{lem:fluctu field}. The weak local  convergences of both $\underline{\nu_n'}$ and $\nabla h_n ' (n^{1/d}\underline{ x_n} + \cdot)  $ together with the continuity of $\mu_0$ away from $\pa \Sigma$ (cf. \textbf{(A6)}),   imply after passing to the limit  in
$$-\Delta h_{n}'(n^{1/d} \underline{x_n} + \cdot) = c_d \(\underline{\nu_n'} - \mu_0 (\underline{x_n} + n^{-1/d} \cdot)\)
$$ (which is obtained by centering \eqref{hn'} around $\underline{x_n}$) 
that $E$ must be a gradient and that \eqref{eqhl} holds. Finally $E_\eta=\Phi_\eta(E)$ because one may check that $\Phi_\eta$ commutes with the weak convergence in $L^q_{\mathrm{loc}}(\R^d, \R^d) $ for the $\nab h_n'  (n^{1/d} \underline{x_n} + \cdot)$ described above. \end{proof}

 To check condition \textbf{(ii)}, let us thus 
 assume that 
 $$\forall R>0, \quad \limsup_{n\to  + \infty} \int_{K_R} f_n (T_\lambda^n (\underline{x_n}, Y_n))\, d\lambda<\infty, \quad \underline{x_n} \in \Sigma.$$
 By definition of $f_n$, this condition is equivalent to 
 $$\forall R>0, \forall n \ge n_0, \  Y_n= \nab h_{n,\eta}'(n^{1/d} \underline{x_n} + \cdot)  \text{ and } \limsup_{n\to +\infty} \int \chi * \indic_{K_R} |Y_n|^2 <+\infty, \ \underline{x_n} \in \Sigma.$$
This implies the assumption of Lemma  \ref{lemprel}. We may also assume, up to extraction of a subsequence that $\underline{x_n} \to x \in \Sigma$ (since $\Sigma $ is compact). Applying Lemma \ref{lemprel}, we have $Y_n \rightharpoonup Y$ weakly in $L^2_{\mathrm{loc}}(\R^d, \R^d)$, and all the other results of the lemma. 
This weak convergence implies in particular that \begin{multline*}
\liminf_{n \to + \infty} f_n(\underline{x_n}, Y_n)  \ge f(x,Y)\\:=\begin{cases}
 \int \chi(y) |Y|^2 (y) \, dy & \text{if} \ x\in \Sigma \backslash \pa \Sigma \ \text{ and } Y=\Phi_\eta( E) \ \text{for some }  E \in \bar{\mathcal{A}}_{\mu_0(x)}\\
 0 & \text{if} \ x\in \pa \Sigma\\
 + \infty & \text{otherwise}.\end{cases} 
\end{multline*}
This completes the proof that condition \textbf{(ii)} holds.

Theorem \ref{thabstrait} then yields the convergence (up to extraction) of $P_{n,\eta}$ to some $P_\eta$, and, in view of \eqref{615},  
\be\label{510}\liminf_{n\to + \infty} \frac{1}{n} \int_{\R^d} |\nab h_{n,\eta}'|^2\ge \liminf_{n\to + \infty}
|\Sigma| F_n(E)\ge |\Sigma| \int f^*(x,Y) dP_\eta(x,Y)\ee
where 
$$f^*(x,Y)= \lim_{R\to + \infty} \fint_{K_R} f(x, \theta_\lambda Y) \, d\lambda.$$
By definition of $f$ and since $\pa \Sigma$ is of Lebesgue measure $0$ by \textbf{(A5)}, for $P_\eta$-a.e. $(x,Y)$, we must have $Y = \Phi_\eta(E)$ with some $E\in\bar{\mc{A}}_{\mu_0(x)}$. Pushing forward by $\Phi_\eta^{-1}$, we get the convergence of $P_n$ to $P$ satisfying the first two stated properties.
Moreover, applying Fubini's theorem,  we may write, 
$$f^*(x,Y)= \lim_{R\to +\infty} \dashint_{K_R} \chi * \indic_{K_R} |Y|^2  \ge 
\lim_{R \to + \infty} \dashint_{K_R} |Y|^2$$
where we used that  $\chi* \indic_{K_R} \ge \indic_{K_{R-1}}$.
By definition of the push-forward, it follows from \eqref{510} that 
$$ \liminf_{n\to + \infty} \frac{1}{n} \int_{\R^d} |\nab h_{n,\eta}'|^2\ge |\Sigma| \int \Big(\lim_{R \to + \infty} \dashint_{K_R} |\Phi_\eta(E)|^2\Big) \, dP(x,E).$$
Using that $\int \mu_0=1$, the fact that the first marginal of $P$ is the normalized Lebesgue measure, that $P$-a.e., $E\in \bar{\mc{A}}_{\mu_0(x)}$, 
 and the definition of $\mathcal{W}_\eta$ (Definition \ref{def1}),  we deduce that 
\begin{multline*}
\liminf_{n\to + \infty} \frac{1}{n} \int_{\R^d} |\nab h_{n,\eta}'|^2-c_d g(\eta) \\
\ge
|\Sigma|  \int \Big(\lim_{R \to + \infty}\dashint_{K_R} |\Phi_\eta(E)|^2 - c_d g(\eta) \mu_0(x)\Big) \, dP(x,E)\\=
|\Sigma| \int \mathcal{W}_\eta(E)\, dP(x,E).\end{multline*}
Inserting into \eqref{lbwnc}, we find that 
$$\liminf_{n\to \infty}\frac{1}{n}  \mathcal H_n(x_1', \dots, x_n')  \ge |\Sigma| \int \mathcal{W}_\eta(E)\, dP(x,E)- C\eta\|\mu_0\|_{L^\infty}.
$$
Since $\mathcal{W}_\eta$ is bounded below as seen in Corollary \ref{minoW} (and the constant remains bounded when $\mu_0$ does in view of  \eqref{scalingW}--\eqref{scalingW2d}), we may apply Fatou's lemma and take the $\liminf_{\eta\to 0}$ on both sides, and  we obtain, by definition of $\mathcal{W}$, that
$$\liminf_{n\to \infty}\frac{1}{n}  \mathcal H_n(x_1', \dots, x_n') \ge |\Sigma| \int \mathcal{W} (E)\, dP(x,E).$$
%Combining this with the result of Proposition \ref{developpementasympto}, the conclusion \eqref{liminfconclu1} follows. 

\end{proof}

\begin{rem}
Note that for this lower bound, we do not really need the full strength of \textbf{(A5)}--\textbf{(A6)}: it suffices that $\mu_0$ be continuous on its support,  that $\pa \Sigma $ has zero measure, and that $\Sigma$ satisfies \eqref{assomt}. 
\end{rem}

%%%%%%%%%%%%%%%%%%%%%%%%%%%%%%%%%%%%%%%%%%%%%%%%%%%%%%%%%%%%%%%%%%%%%%%%%%%%%%%%%%%%%%%%%%%%%%%%%%%%%%%%%%%%%%%%%%%%%%%%%%%%%%%%%%%%%%%%%%CHAP 7%%%%%%%%%%%%%%%%%%%%%%%%%%%%%%%%%%%%%%%%%%%%%%%%%%%%%%%%%%%%%%%%%%%%%%%%%%%%%%%%%%%%%%%%%%%%%%%%%%%%%%%

\chapter{Deriving $\mathcal W$ as the large $n$ limit: screening, upper bound, and consequences} 
  \label{chapub}
  
  In this chapter, we obtain our final results on the Coulomb gas. First,  we describe how to obtain the upper bound that optimally  matches  the lower bound obtained in Chapter \ref{derivingw}. This upper bound relies on an important construction, which we call the ``screening" of a point configuration. 
 Once the upper and lower bounds match, it follows that  the prefactor governing the $n^{2-2/d}$ order term in $H_n$ is indeed $\widetilde {\mathcal W}$, defined from $\mathcal W$ as in \eqref{tildew}. As a consequence we obtain  
    an asymptotic expansion of $\min H_n$ up to order $n^{2-2/d}$, with  prefactor  $\min \widetilde {\mathcal W}$,  and the fact that minimizers of $H_n$ have to 
   converge to minimizers of $\widetilde{\mathcal W}$. 
  We also  derive  consequences on the statistical mechanics, with an expansion up to order $n^{2-2/d}$ of $\log \Z$, which becomes sharp as the inverse temperature $\beta \to \infty$, and some large deviations type results, which show that the Gibbs measure concentrates on minimizers of $\widetilde {\mathcal W}$ as $ \beta \to \infty$.
  
\section{Separation of points and screening}

In order to construct test configurations which will almost achieve equality in the lower bound of Theorem \ref{thlbnext}, we need to start from a minimizer of $\mathcal W$, and be able to truncate it in a finite box, so as to then copy and paste such finite configurations. The tool to be able to do this is the screening result. 
To screen a (possibly infinite) configuration  means to  modify it near  the boundary
of a cube $K_R$  to make the normal component of the electric  field vanish on $\pa K_R$, still keeping the points well-separated all the way to the boundary of the cube. This modification needs to add only a negligible energy cost.  The vanishing normal component will in particular impose
the total number of points in the cube, but it also makes the configurations ``boundary compatible" with each other, which  will allow to copy and paste them together, e.g. to periodize them.  Physically,  ``screening" roughly means here that   a particle sitting outside of $K_R$ does not ``feel" any electric field coming from $K_R$.

The method we originally used consists in first  reducing to configurations with points that are simple and ``well-separated"  in the sense seen previously, which simplifies the screening construction. This uses 
an unpublished result of E. Lieb \cite{lieb2} which states roughly that 
\begin{theo}[Lieb] Points minimizing the Coulomb interaction energy must be well-separated. \end{theo}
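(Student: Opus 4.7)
The plan is a one-particle perturbation argument. At a minimizer $(x_1,\dots,x_n)$ of $H_n$, each coordinate $x_i$ must itself globally minimize the one-particle energy
\[
U_i(y) := 2\sum_{j \neq i} g(y-x_j) + nV(y),
\]
with the other $n-1$ points held fixed. Since $U_i$ is $+\infty$ at each $x_j$, $j\neq i$, and grows like $nV$ at infinity, its global minimality at $x_i$ gives in particular
\[
U_i(x_i) \le \fint_{B(x_i,r)} U_i(y)\,dy \qquad \text{for every } r>0.
\]

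The right-hand side can be computed explicitly via the mean value property for the Coulomb kernel (Newton's theorem). For any $j$ with $|x_i-x_j|\ge r$ the function $y\mapsto g(y-x_j)$ is harmonic on $B(x_i,r)$, so its ball-average equals $g(x_i-x_j)$ exactly; for $j$ with $|x_i-x_j|<r$ the ball-average is a strictly smaller explicit quantity $G_r(|x_i-x_j|)$, with $g(\rho)-G_r(\rho)$ of the same order as $g(\rho)$ whenever $\rho\ll r$. Hence the minimality inequality collapses to
\[
\sum_{j:\,|x_i-x_j|<r} \bigl(g(x_i-x_j) - G_r(|x_i-x_j|)\bigr) \le \tfrac{n}{2}\bigl(\fint_{B(x_i,r)} V - V(x_i)\bigr) \le C n r \|\nabla V\|_{L^\infty(\widetilde{\Sigma})},
\]
valid whenever $x_i$ lies in a fixed neighbourhood $\widetilde{\Sigma}$ of the droplet. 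Choosing $r$ of the order of the conjectured separation scale $n^{-1/d}$ and solving for the minimal distance $\rho_i=\min_{j\neq i}|x_i-x_j|$ yields a lower bound $\rho_i\ge c_V n^{-1/d}$ in the bulk.

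Two ingredients still need to be put in place: (i) that $x_i$ indeed stays in a fixed neighbourhood of $\Sigma$, which follows from the confining term $\sum_i\zeta(x_i)$ appearing in the splitting formula (Proposition~\ref{developpementasympto}) together with the a priori upper bound on $\min H_n$ coming from any valid test configuration; and (ii) a treatment of points near $\partial\Sigma$, where $\mu_0$ degenerates and the averaging radius has to be adapted to the boundary geometry (using assumption~\textbf{(A5)} and regularity of the obstacle problem, possibly combined with the discrepancy estimate of Lemma~\ref{lem:fluctu charge}). The main obstacle is that the crude form of the above inequality produces only a separation of order $n^{-1/(d-2)}$ in dimensions $d\ge 3$, because the right-hand side $Cnr$ does not record the approximate local neutrality of the minimizer. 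The sharp rate $n^{-1/d}$ is recovered by iteration: the first, crude bound controls the number of neighbours of $x_i$ in any fixed microscopic ball, and one then reapplies the mean-value inequality on that finite cluster using the renormalization of $g$ by the neutralizing background charge $\mu_0'$, which cancels the leading electrostatic contribution and leaves the optimal balance.
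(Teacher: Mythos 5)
Your first step --- at a minimizer each $x_i$ is a global minimizer of the field $U_i$ generated by the other charges and the confinement --- is exactly the first step of Lieb's argument as the paper presents it (for Proposition \ref{avantscreening}, in the blown-up jellium setting with smeared charges and constant background). Where you diverge is the second step: the paper splits $U=U^{\mathrm{int}}+U^{\mathrm{ext}}$, with $U^{\mathrm{int}}$ the potential of the nearest charge \emph{together with the neutralizing background inside} $B(0,2r_0)$ and $U^{\mathrm{ext}}$ everything else; $U^{\mathrm{ext}}$ is superharmonic in the ball, hence minimized on $\partial B(0,2r_0)$, and $U^{\mathrm{int}}$ is radial and decreasing for $r_0$ small, so an offending point strictly inside can be moved to the boundary with an energy gain. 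You instead use the sub-mean-value inequality quantitatively, which is a legitimate cousin of the same maximum-principle mechanism.

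The genuine gap is the estimate $\fint_{B(x_i,r)}V-V(x_i)\le Cr\|\nabla V\|$ together with the ``iteration'' meant to repair it. That first-order bound discards precisely the cancellation that makes the argument work: the ball average of the affine part of $V$ equals $V(x_i)$, so in fact $\fint_{B(x_i,r)}V-V(x_i)=O(r^2\|D^2V\|)$, and by \eqref{densmu0} the right-hand side of your inequality is really $O(nr^2\Delta V)\sim nr^2\mu_0(x_i)$ --- this \emph{is} the renormalization by the background, and it comes for free from the symmetry of the ball, with no iteration. With $r=Mn^{-1/d}$ the right-hand side is $O(M^2n^{1-2/d})$, which matches $g(n^{-1/d})$ in every dimension; since $G_r(\rho)\le G_r(0)=\frac d2 g(r)$ (for $d\ge3$; similarly in $d=2$), one has $g(\rho_i)-G_r(\rho_i)\ge\frac12 g(\rho_i)$ unless $\rho_i\gtrsim r$ already, and the conclusion $\rho_i\ge c\,n^{-1/d}$ follows in one step. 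By contrast, your crude bound is even weaker than you state: in $d=2$ it only gives $\rho_i\ge re^{-C\sqrt n}$, and the neighbour count it yields in a ball of radius $n^{-1/d}$ is $O(n^{1/d})$, not $O(1)$, so the ``finite cluster'' your iteration is supposed to operate on is not finite. Your remaining caveats (confinement to a neighbourhood of $\Sigma$ via $\zeta$, boundary of $\Sigma$) are reasonable and are also set aside in the paper's sketch; note finally that for the screening application the paper needs the quantitative form --- two charges closer than $2r_0$ allow an energy decrease by a fixed constant --- which requires tracking how much $U$ drops when the point is moved, not merely that it drops.
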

A more precise statement and a proof in the setting with confining potential  can be found in dimension $d = 2$ in  \cite[Theorem 4]{rns}. This was readapted to the  setting of smeared out charges in \cite{rs} and gives  the following
\begin{prop}[Reducing to well-separated points] \label{avantscreening}
Let $\Lambda$ be a discrete subset of $K_R$ and let $h$ satisfy
\begin{equation}\label{eq:Laplace h}
-\Delta h =  c_d \Big( \sum_{p\in \Lambda} N_p \delta_p ^{(\eta)} -1\Big) \quad \text{in} \ K_R.
\end{equation}
Denote $\Lambda_R = \Lambda \cap K_{R-1}$. There exist three positive constants $\eta_0,r_0,C$ such that if $\eta < \eta_0$, $R$ is large enough and one of the following conditions does not hold~:
\begin{eqnarray}\label{cond1}& \forall p  \in \Lambda_R, \quad  N_{p} = 1\\
\label{cond2} & \forall p \in \Lambda_R, \ \dist(p,\Lambda_R \setminus \{ p \})\ge 2 r_0,
\end{eqnarray}
then there exists $\tilde{\Lambda}$, a discrete subset of $K_R$ and an associated potential $\tilde{h}$ satisfying
\begin{equation}\label{dhtilde}
-\Delta \tilde{h} =  c_d \Big( \sum_{p\in \tilde{\Lambda}} N_p \delta_p ^{(\eta)} -1\Big)\quad \text{in}\  K_R
\end{equation}
such that
\[
\int_{K_R} |\nabla \tilde{h}| ^2 \leq  \int_{K_R} |\nabla h| ^2 - C.
\]

%Let us consider the two following conditions on a configuration of points $\sum N_p \delta_p$ :
%\begin{enumerate}
%\item $N_p \equiv 1$ (all points are simple) 
%\item $\forall p \neq q \in \left(\Lambda \cap K_{R-1} \right), |p-q| \geq r_0$ for a certain $r_0 > 0$.
%\end{enumerate}
%Then these conditions are necessary for a configuration to have small energy. Precisely, it means that there exists $\eta_0, R_0, r_0, C$ such that if $\eta < \eta_0, R > R_0$ and at least one of the two conditions are not satisfied by a given $h_{\eta}$, then one can find $\tilde{h}_{\eta}$ in the same class with a smaller energy : 
%\begin{displaymath}
%\int |\nabla \tilde{h}_{\eta}|^2 \leq \int |\nabla h_{\eta}|^2 - C.
%\end{displaymath}

\end{prop}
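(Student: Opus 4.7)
The plan is to argue by a Lieb-style ``swap'': whenever either \eqref{cond1} or \eqref{cond2} fails, I would exhibit $\tilde\Lambda$ obtained from $\Lambda$ by relocating a single unit of charge from a ``bad'' point $p\in\Lambda_R$ to a carefully chosen ``good'' location $q\in K_{R-1}$, with a quantitative Dirichlet-energy drop of at least a fixed $C>0$. The key identity is obtained as follows. For any $p\in\Lambda_R$, split $h=h^{(p,1)}+u_p$ on $K_R$, where $u_p$ is the solution of $-\Delta u_p=c_d\delta_p^{(\eta)}$ with (say) zero trace on $\partial K_R$, and $h^{(p,1)}$ absorbs the remaining $(N_p-1)$ units at $p$, all other smeared charges, and the neutralizing background. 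Since $p\in K_{R-1}$ and $\eta<1$, the support of $\delta_p^{(\eta)}$ lies in $K_R$, and swapping it for $\delta_q^{(\eta)}$ with $q\in K_{R-1}$ produces $\tilde h=h^{(p,1)}+u_q$ still satisfying \eqref{dhtilde}. Expanding $|\nabla(h^{(p,1)}+u_\bullet)|^2$ and applying Green's formula together with $-\Delta u_p=c_d\delta_p^{(\eta)}$, and using the translation invariance of $\int|\nabla u_p|^2$, gives
\[
\int_{K_R}|\nabla\tilde h|^2-\int_{K_R}|\nabla h|^2 \;=\; 2c_d\bigl(\bar h^{(p,1)}(q)-\bar h^{(p,1)}(p)\bigr),
\]
where $\bar h^{(p,1)}(x):=\fint_{\partial B(x,\eta)} h^{(p,1)}$.

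The next step is to show that the starting value $\bar h^{(p,1)}(p)$ at a ``bad'' $p$ is arbitrarily large. If \eqref{cond1} fails, i.e.\ $N_p\ge 2$, then $h^{(p,1)}$ still contains an $(N_p-1)$-fold smeared charge at $p$, contributing $(N_p-1)g(\eta)-O(1)$ to $\bar h^{(p,1)}(p)$, which diverges as $\eta\to 0$. If \eqref{cond2} fails, then some $p'\in\Lambda_R\setminus\{p\}$ lies within distance $2r_0$ of $p$, and its smeared charge contributes at least $g(2r_0+\eta)-O_\eta(1)$ to $\bar h^{(p,1)}(p)$, which is as large as desired once $r_0$ is taken small. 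In both cases one secures $\bar h^{(p,1)}(p)\ge A$ with $A\to+\infty$ as $(r_0,\eta_0)\to 0$.

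It then suffices to produce a destination $q\in K_{R-1}$ at distance at least $2r_0$ from every point of $\Lambda\setminus\{p\}$ for which $\bar h^{(p,1)}(q)\le B$, where $B$ is a universal constant; calibrating $r_0,\eta_0$ so that $A\ge B+1$ yields the uniform gain $C:=2c_d$. To exhibit such $q$, I would proceed in two steps. First, the charge fluctuation bound of Lemma \ref{lem:fluctu charge} applied to $h$ forces the total number of points of $\Lambda$ in $K_R$ to be at most $C|K_R|$, so the ``sparse'' set $S\subset K_{R-1}$ where no point of $\Lambda\setminus\{p\}$ is within distance $2r_0+\eta$ has Lebesgue measure at least $c|K_R|$ for $R$ large. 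Second, working with the canonical zero-mean representative of $h^{(p,1)}$ on $K_R$ and using the Poincaré inequality together with the bound on $\int_{K_R}|\nabla h^{(p,1)}|^2$ (controlled in turn by the ambient bound on $\int_{K_R}|\nabla h|^2$ plus the self-energy of the removed unit), one obtains an $L^2$, hence $L^1$, bound for $\bar h^{(p,1)}$ over $S$ of order $|K_R|$. Markov's inequality then furnishes a $q\in S$ with $\bar h^{(p,1)}(q)\le B$.

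The main obstacle is making the last step truly uniform in $R$, $\eta$ and $\Lambda$: $h^{(p,1)}$ is only defined up to an additive harmonic function, and a naive Poincaré bound on $K_R$ scales badly with $R$. The remedy is to work with the canonical (zero-mean) representative and to exploit the fact that $-\Delta h^{(p,1)}$ has almost-vanishing total flux across $\partial K_R$ (by the neutrality enforced by the background minus the total smeared charge), so that $h^{(p,1)}$ can be compared to a bounded harmonic extension, yielding a scale-free mean bound. Calibrating $r_0,\eta_0$ and the resulting universal constants to close the inequality $A-B\ge 1$ is the delicate point, after which the swap produces the claimed $\tilde\Lambda$ satisfying \eqref{dhtilde} with energy at most $\int_{K_R}|\nabla h|^2-C$.
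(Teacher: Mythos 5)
Your opening move -- the swap identity expressing the energy change as $2c_d\bigl(\bar h^{(p,1)}(q)-\bar h^{(p,1)}(p)\bigr)$, i.e.\ as the difference of the potential generated by ``everything else'' evaluated at the new and old locations of the displaced unit charge -- is exactly the paper's starting point (its identity \eqref{minolocU}). The gap is in how you locate the destination $q$. You try to find $q$ \emph{globally}, by proving $\bar h^{(p,1)}(p)\ge A$ with $A$ large and exhibiting some $q\in K_{R-1}$ with $\bar h^{(p,1)}(q)\le B$ universal. Neither bound is justified. For the first, you only estimate the contribution of the offending charge(s) at or near $p$ (the $(N_p-1)g(\eta)$ or $g(2r_0+\eta)$ term); the remaining contributions to $\bar h^{(p,1)}(p)$ -- all other points of $\Lambda$, the background over all of $K_R$, and whatever harmonic function fixes the representative of $h$ -- are sign-indefinite and of size growing with $R$, so no uniform $A$ comes out. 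For the second, a universal $B$ would require an a priori bound on $\int_{K_R}|\nabla h|^2$ that the proposition does not assume, and even then your Poincar\'e/Markov bound scales with $R$ and with the chosen normalization; the normalization cancels in the difference $\bar h^{(p,1)}(q)-\bar h^{(p,1)}(p)$ but not in the two separate estimates you need. There is also a concrete error in the identity itself for a distant $q$: $u_p$ is the Dirichlet Green potential of the box, so $\int_{K_R}|\nabla u_p|^2 = c_d g(\eta)+c_d^2H(p,p)+o(1)$ where $H(p,p)$ is the Robin constant of $K_R$ at $p$; this is \emph{not} translation invariant, and between two points of $K_{R-1}$ at different distances from $\partial K_R$ it varies by as much as $O(\log R)$ in $d=2$, which swamps the fixed gain $C$ you are after.

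The paper's proof avoids all of this by choosing $q$ \emph{locally}, within distance $2r_0$ of the offending pair, so that only local and universal quantities enter. One decomposes the potential $U$ felt by the point to be moved as $U=U^{int}+U^{ext}$, where $U^{int}=c_d\Delta^{-1}\bigl(\delta_0^{(\eta)}-\mathbf{1}_{B(0,2r_0)}\bigr)$ collects the other charge sitting at the center together with the background inside $B(0,2r_0)$, and $U^{ext}$ collects everything outside. Then $U^{ext}$ is superharmonic in $B(0,2r_0)$, hence attains its minimum over the closed ball at a boundary point $\bar x\in\partial B(0,2r_0)$, while $U^{int}$ is radial, explicit, and decreasing on $[0,2r_0]$ for $r_0$ small; so moving the offending charge to $\bar x$ decreases both pieces, and the quantitative gain comes entirely from the explicit $U^{int}$ (which also handles the multiple-point case as a limiting instance). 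Because source and destination are within $O(r_0)$ of each other, the self-energy discrepancy and boundary effects are genuinely negligible, and the constants $\eta_0,r_0,C$ are universal. If you want to salvage your write-up, replace your global search for $q$ by this local superharmonicity argument; as it stands, the step you defer as ``the delicate point'' is the actual content of the proof and your proposed remedies do not close it.
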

Another way of phrasing the proposition  is that  if a configuration has points that are too close to one another, we can always replace it by one that has a smaller energy, thus we can always take a   minimizing sequence for
\begin{equation}\label{fetar}
F_{\eta, R} := \inf \left\lbrace\int_{K_R} |\nabla h|^2, -\Delta h = c_d \left(\sum N_p \delta^{(\eta)}_p - 1\right) \textrm { in } K_R\right\rbrace.
\end{equation}
with  points that are simple and well-separated (at least in $K_{R-1}$).
Note that in the proposition, the configuration of points $\Lambda$ may depend on $\eta$.

We may now state the screening result.
\begin{prop}[Screening] \label{propscreening} 
There exists $\eta_0>0$ such that the following holds for all $\eta<\eta_0$. Let $h_\eta$ satisfy  \eqref{eq:Laplace h}--\eqref{cond1}--\eqref{cond2} and
\begin{equation}\label{eq:bound ener cube}
\int_{K_R} |\nab h_\eta| ^2 \leq C_\eta R ^d.
\end{equation}
Then there exists $\hat{\Lambda}$ a configuration of points and $\nab\hat{h}$ an associated gradient vector  field (both possibly also depending on $\eta$)  defined in $K_R $ and satisfying
\begin{equation}\label{eq:proscreen field}
\left\{\begin{array} {ll}
-\Delta \hat {h} = c_d\Big( \displaystyle \sum_{p\in \hat{\Lambda}} \delta_p - 1\Big) &  \mbox{ in } K_R\\ 
\displaystyle\frac{\pa \hat{h}}{\pa  \nu}  =0 \quad & \text{on} \ \pa K_R\end{array}\right. \end{equation}
such that for any $p\in \hat{\Lambda}$
\begin{equation}\label{eq:mindistr plus}
\min \left( \dist(p,\hat{\Lambda}\setminus \{p\}), \dist(p,\partial K_R) \right) \geq \frac{ r_0}{10}
\end{equation} with $r_0$ as in \eqref{cond2}, 
and
\begin{equation}\label{majow}
\int_{K_R} |\nab \hat{h}_\eta  |^2 \le \int_{K_R} |\nab h_\eta| ^2 + o( R^d)
\end{equation}
as $R\to \infty$, where the $o$ depends only on $\eta$.

%Let $h_{\eta}$ such that
%\begin{displaymath}
% -\Delta h_{\eta} = c_d \left(\sum N_p \delta^{(\eta)}_p - 1\right) \textrm { on } K_R
%\end{displaymath}
%with $\int_{K_R} |\nabla h_{\eta}|^2 \geq C_{\eta} R^d$ and with the two assumptions defined in proposition \ref{avantscreening} satisfied. Then there exists $\bar{h}_{\eta}$ of the same form that “screens” the configuration outside $K_R$, that is : 
%\begin{equation} \label{resultatscreening} \left\lbrace
%\begin{array}{cc}
%- \Delta \bar{h}_{\eta} =  c_d \sum_{p \in \bar{\Lambda}} \left( \delta_{p}^{(\eta)} - 1 \right)  & \textrm{ on } K_R \\
%\frac{\partial{\bar{h}_{\eta}}}{\partial \vec{\nu}} =  0 & \textrm { on } \partial K_R
%\end{array}.\right.
%\end{equation}
%Moreover the points of $\bar{\Lambda}$ are simple and well-separated in $K_R$, and the energy of $\bar{\Lambda}$ is such that : 
%\begin{equation}
%\int_{K_R} |\nabla \bar{h}_{\eta}|^2 \leq \int_{K_R} |\nabla h_{\eta}|^2 + o(R^d).
%\end{equation}
\end{prop}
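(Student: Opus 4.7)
The strategy is to keep the original configuration essentially unchanged on a slightly smaller cube $K_{R'}$ with $R' = R - T$, where $1 \ll T \ll R$ is a layer thickness to be chosen later, and to replace $\Lambda$ inside the annular layer $K_R\setminus K_{R'}$ by a periodic arrangement of simple points on a near-unit sublattice, chosen so that global neutrality $\int_{K_R}(\sum_p \delta_p - 1) = 0$ (which is forced by the Neumann condition on $\partial K_R$ via the divergence theorem) is satisfied exactly. Inside $K_{R'}$, I would also replace each smeared charge $\delta_p^{(\eta)}$ by the true Dirac $\delta_p$; by Lemma \ref{lemequiv} and the separation condition \eqref{cond2}, this changes the energy only by the explicit quantity $\#(\Lambda\cap K_{R'})(c_d g(\eta) + o_\eta(1))$, which is accounted for on the right-hand side of \eqref{majow}.

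Next I would pick the inner scale $R'$ by a mean-value argument. Averaging \eqref{eq:bound ener cube} over $r\in[R-T,R-T/2]$ produces some $R'$ in that interval with $\int_{\partial K_{R'}}|\nabla h_\eta|^2 \le 2 C_\eta R^d/T$, and by the separation assumption \eqref{cond2} together with an additional avoidance of radii we can also ensure that $\partial K_{R'}$ stays at distance $\ge \eta$ from every point. Now count the required new points: if $N' := \#(\Lambda\cap K_{R'})$, then by Lemma \ref{lem:fluctu charge} and \eqref{eq:bound ener cube} the discrepancy $|N' - |K_{R'}||$ is $O(R^{d-1}\sqrt{C_\eta})$, and we need to fit $N'' := |K_R| - N'$ new simple points inside $K_R\setminus K_{R'}$. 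Tile the annular layer by disjoint closed hyperrectangles $Q_j$ of sidelengths in $[1-\delta, 1+\delta]$, placing exactly one point at the center of each $Q_j$; by absorbing the discrepancy $N'' - |K_R\setminus K_{R'}|$ into small adjustments of a handful of the $Q_j$, the minimum interpoint distance and distance to $\partial K_R$ stay $\ge r_0/10$ provided $T$ is large (this uses that $r_0$ is a universal constant from Proposition \ref{avantscreening}).

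To build the screening potential $\hat h$, I would solve, in each cell $Q_j$ of the layer, a local Neumann problem
\[
-\Delta u_j = c_d(\delta_{p_j} - 1) \text{ in } Q_j, \quad \frac{\partial u_j}{\partial \nu} = \varphi_j \text{ on } \partial Q_j,
\]
where $\varphi_j$ equals $\partial h_\eta/\partial\nu$ averaged over the portion of $\partial Q_j$ lying on $\partial K_{R'}$, and equals $0$ on the rest of $\partial Q_j$ (in particular on $\partial K_R$); compatibility is arranged so that $\int_{\partial Q_j}\varphi_j = c_d(1 - 1) = 0$ up to a controlled defect that can be spread over several cells. Gluing these $u_j$ to $h_\eta|_{K_{R'}}$ and then passing from smeared to true charges inside yields $\hat h$ satisfying \eqref{eq:proscreen field} and \eqref{eq:mindistr plus}. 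Standard elliptic estimates on a unit cell give $\int_{Q_j}|\nabla u_j|^2 \le C + C\|\varphi_j\|_{L^2(\partial Q_j)}^2$, so summing and using $\sum_j \|\varphi_j\|_{L^2}^2 \le \int_{\partial K_{R'}}|\nabla h_\eta|^2 \le 2C_\eta R^d/T$ together with the $O(TR^{d-1})$ contribution of the unit "self-energy" terms produces
\[
\int_{K_R\setminus K_{R'}} |\nabla \hat h|^2 \le C\,TR^{d-1} + \frac{C_\eta R^d}{T},
\]
and the gluing cross-term is bounded by Cauchy--Schwarz by the same quantities; choosing, say, $T = R^{1/2}$ makes both contributions $o(R^d)$, yielding \eqref{majow}.

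\textbf{Main obstacle.} The delicate point is that the flux data $\partial h_\eta/\partial\nu$ on $\partial K_{R'}$ is only controlled in $L^2$, not pointwise: we cannot directly prescribe it cell-by-cell, as it could be arbitrarily rough and the compatibility condition in each $Q_j$ would fail. The cell-by-cell averaging $\varphi_j$ resolves compatibility but creates a jump on $\partial K_{R'}$ whose $L^2$ norm must be absorbed by a small corrector $v$ in the annulus satisfying $-\Delta v = 0$, with Neumann data on $\partial K_{R'}$ equal to this jump and zero on $\partial K_R$. Showing that $v$ has small Dirichlet energy relative to the $T R^{d-1}$ budget, uniformly in $R$, is the heart of the argument and is what dictates the lower bound on the thickness $T$; this has to be carried out carefully by a trace/extension argument on each cell so that the discrepancy bound from Lemma \ref{lem:fluctu charge} can be leveraged to control the average flux on each $Q_j$.
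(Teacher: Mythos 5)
Your overall strategy coincides with the paper's: select a good inner boundary by a mean-value argument on \eqref{eq:bound ener cube}, tile the remaining layer by unit-size cells each receiving one new point, solve local Neumann problems with matching fluxes, and control the added energy by elliptic estimates so that it is $o(R^d)$ after choosing the layer thickness appropriately. However, two steps of your execution contain genuine gaps.

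First, the claim that one can choose $R'$ in a window of length $T/2$ so that $\partial K_{R'}$ stays at distance $\ge \eta$ from every point of $\Lambda$ is false for large $R$: each point whose $\ell^\infty$-norm lies near the window excludes an interval of radii of length about $2\eta$, and by \eqref{cond2} there are of order $TR^{d-1}$ such points, so the excluded set of radii has measure of order $\eta\, T R^{d-1}$, which exceeds the window length $T/2$ once $R$ is large (for fixed $\eta$). The paper therefore does not look for a cube boundary avoiding all smeared charges; it keeps a cube $\partial K_t$ chosen only by the energy mean-value argument and then \emph{deforms} it into a surface $\partial \Gamma$ that goes around the offending balls $B(p,\eta)$, isolating those points in small cubes appended to $K_t$ while preserving the $o(R^d)$ bound on the boundary energy. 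Without this (or an equivalent device), the neutrality count, the applicability of Lemma \ref{lemequiv} (which requires points to be at distance $\ge 2r_0$ from the boundary of the domain on which it is applied), and the flux bookkeeping on the interface all break down.

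Second, the gluing is not completed. Replacing the exact Neumann data $\partial h_\eta/\partial \nu$ on $\partial K_{R'}$ by its cell averages $\varphi_j$ introduces, as you yourself note, a jump of the normal component across $\partial K_{R'}$; this jump is a surface charge in the distributional divergence of the glued field, so \eqref{eq:proscreen field} fails, and the corrector $v$ meant to absorb it is never constructed (nor is it clear that its energy fits in the $o(R^d)$ budget, since the jump is only controlled in $L^2$ of a surface of measure $\sim R^{d-1}$). The paper avoids averaging altogether: it prescribes the \emph{exact} flux of $h$ on the faces lying on $\partial \Gamma$ --- $L^2$ control of the data on a unit face, hence control in $H^{-1/2}$, is already enough for the Neumann problem on a cell to have an $H^1$ solution with energy bounded by $C+C\int |\partial h/\partial\nu|^2$ --- and arranges the solvability condition $\int_{\partial \mathcal{K}_i} \partial u_i/\partial\nu = c_d\left(|\mathcal{K}_i|-1\right)$ by choosing the fluxes on the internal faces consistently between adjacent cells. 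Relatedly, the object obtained by gluing $\nabla u_j$ to $\nabla h$ has the correct divergence but is not a gradient; the paper adds a divergence-free field $X$ with $X\cdot\nu=0$ on $\partial K_R$ (Hodge decomposition) to project onto gradients, and checks by integration by parts that this projection only decreases the Dirichlet energy. This projection step is missing from your construction and is needed to produce the function $\hat h$ satisfying \eqref{eq:proscreen field}--\eqref{majow}.
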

\begin{rem} \label{remneutraliteglobale} The vanishing of the normal derivative $\frac{\partial{\hat{h}}}{\partial \nu} $ on $\partial K_R$ implies, by Green's theorem, that $\# (\hat{\Lambda} \cap K_R )= |K_R|$ exactly.
\end{rem}
\begin{rem}In \cite{ps} we prove that arbitrary configurations with bounded energy can be screened, not only those whose points are well-separated. The separation makes the construction easier however, and is also needed to finish the proof of the upper bound energy.\end{rem}

Let us start by giving the idea of the proof of Proposition \ref{propscreening}. For details, cf. \cite{rs}.
\begin{proof}

\noindent {\bf Step 1.} It consists in selecting, by a mean value argument, a ``good boundary"  $\pa K_t$ (cf. Fig \ref{fig9})  at distance $L$ with $1\ll L\ll R$ (as $R \to \infty$) from $\pa K_R$, and such that 
\be\label{estbor1}\left\lbrace
\begin{array}{cc}
\int_{\partial K_t} |\nabla h_{\eta}|^2 & \leq\frac{ C_{\eta} R^d }{L}=o(R^d) \\
\int_{K_{t+1} \backslash  K_{t-1}}  |\nabla h_{\eta}|^2 & \leq \frac{ C_{\eta} R^d}{L} = o(R^d).
\end{array} \right.
\ee
\noindent
{\bf Step 2.} Taking $\eta < r_0 / 4$ so that the balls $B(p, \eta)$ are disjoint, we can modify the boundary $\partial K_t$ into $\partial \Gamma$ (cf. Fig \ref{fig9})  so that $\partial \Gamma$ intersects no ball $B(p, \eta)$ and still satisfies 
\be\label{estbor}
\int_{\partial \Gamma} |\nabla h_{\eta}|^2 \leq o( R^d).\ee
We do not move the points whose associated smeared charges intersect $\pa K_t$. Instead,
we isolate them in small cubes and leave unchanged all the points lying in $\Gamma$, defined as  the
union of $K_t$ with these small cubes.
\medskip 

\noindent
{\bf Step 3.} 
We build a new configuration of points and potential in $K_R \backslash \Gamma$, to replace the previous one.
 To do so, we partition this region  into hyperrectangles  $\mathcal{K}_i$, each centered at some point $x_i$, on each of which we  solve 
 \be\label{equatui}
\left\lbrace \begin{array}{c}
- \Delta u_i =  c_d (\delta_{x_i} - 1) \textrm { in } \mathcal{K}_i \\
\textrm{ the normal derivatives } \frac{\partial u_i}{\partial \nu} \textrm{ are compatible }
\end{array}
\right.
\ee
so that the normal derivatives “connect” nicely, meaning that they agree (with suitable  orientation) on any two adjacent hyperrectangles, as well as on the boundary of $\Gamma$. 
The new set $\hat{\Lambda}$ is defined as $\cup_i \{x_i\} \cup( \Lambda\cap K_{R-1})$. 
One checks that the hyperrectangles have sidelengths which are bounded below in such a way that the new set $\hat{\Lambda}$ satisfies \eqref{eq:mindistr plus}
\medskip

\begin{figure}[h!]
\begin{center}
\includegraphics[scale=0.6]{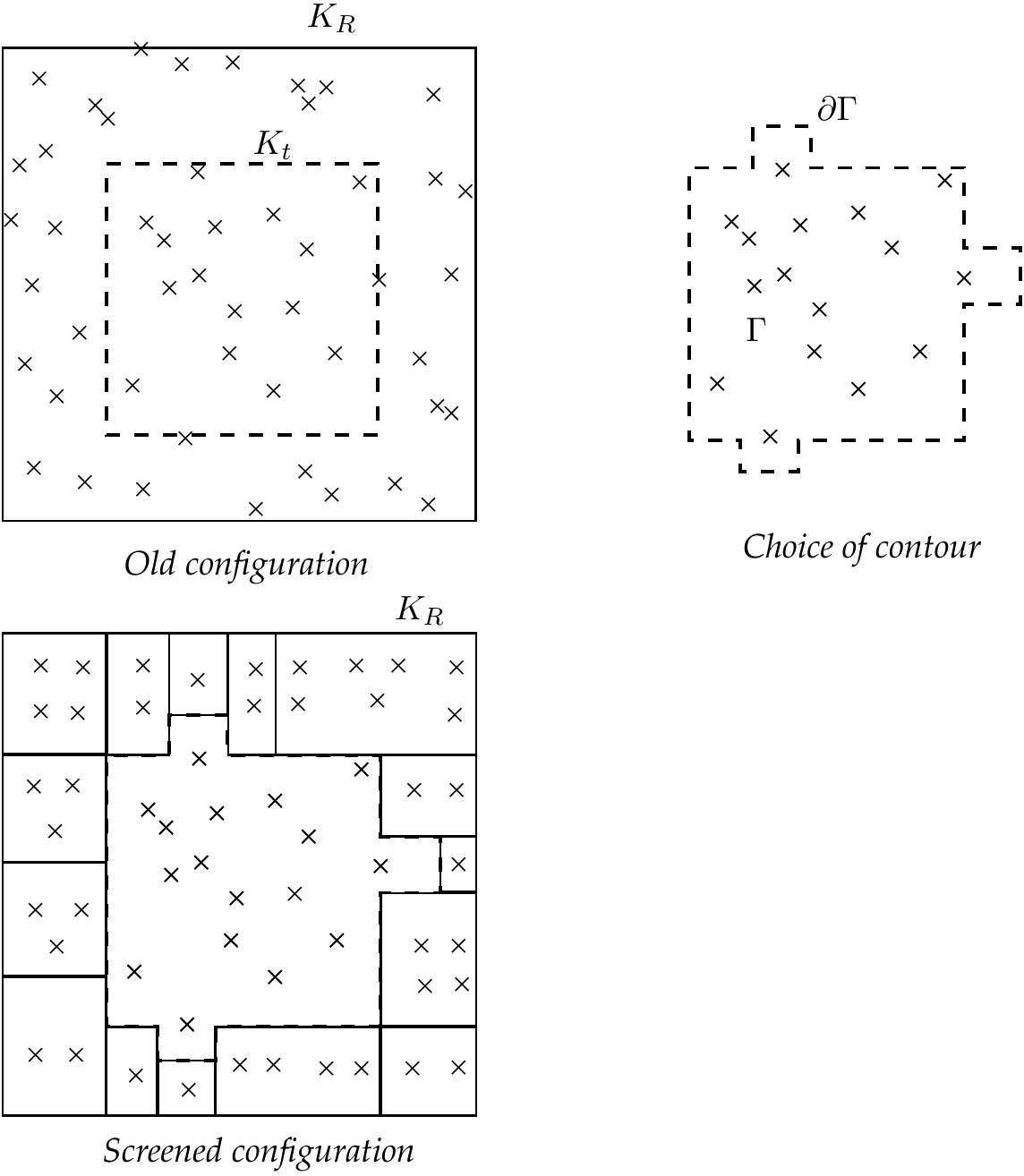}
\end{center}
\caption{The screening construction}\label{fig9}
\end{figure}

\noindent
{\bf Step 4.} We define a global vector field and estimate its energy. First of all, to $\nab h_\eta$ given in the statement of the proposition  corresponds a  $\nab h$ via \eqref{heta}. Since $\Lambda$ may possibly depend on $\eta$, so does $\nab h$, but this is not important.
Defining then $E$ to be $\nab u_i$ on each $\mathcal{K}_i$,  and $E= \nab h$   in $\Gamma$, thanks to the compatibility condition we may check that $E$ satisfies 
\be\label{diveee}
-\div E = c_d \Big(\sum_{p\in\hat{ \Lambda}}  \delta_{p} -1\Big) \ \textrm{ in } \ K_R.\ee
Indeed, one may check that the divergence of $E$ (or any vector field) in the sense of distributions on each interface is given by the jump in normal derivative (here constructed to be zero) while the divergence on each cell $\mathcal{K}_i$ is given by \eqref{equatui}.

By elliptic estimates, and using \eqref{estbor}, one can  evaluate $\int |\nab u_i|^2$ and we claim that such a construction can be achieved with
\begin{equation*}
\int_{K_R\backslash \Gamma} |E_\eta|^2 \le   \sum_i \int_{\mathcal{K}_i} |\nabla u_{i,\eta}|^2 \leq o(R^d),
\end{equation*}
with $E_\eta= E+ \sum_{p \in \hat{\Lambda}} \nab f_\eta(\cdot -p)$, 
i.e. the modification in the boundary layer $K_R \backslash \Gamma$ can be made with a negligible energy, so that 
\be\label{estene}
\int_{K_R} |E_\eta|^2\le \int_{K_R} |\nab h_\eta|^2 + o(R^d).\ee
We would  like to define $\nab \hat{h}$ as $E$ in $K_R$,  but the problem is that $E$ is not a gradient. To remedy this, we use a kind of Hodge (or Helmoltz if $d \le 3$) decomposition, which consists in adding a vector field to $E$ to make it a gradient, without changing its divergence,  while not deteriorating the energy estimate \eqref{estene}.
Let us now show this more precisely.

%\begin{lemme}
%Let $E$ be a vector-field  on some open set $U\subset \R^d$. There exists a function $h$ with $-\Delta h= \div E$,  
%such that \begin{displaymath}
%\div E = c_d \left( \sum_{p \in \Lambda} \delta^{(\eta)}_{p} - 1 \right) \mbox{ in } K.
%\end{displaymath} 
%There exists a potential $h$ such that : 
%\begin{displaymath}
%- \Delta h =  c_d \left( \sum_{p \in \Lambda} \delta^{(\eta)}_{p} - 1 \right) \textrm{in $K$ and } \int_{K} |\nabla h|^2 \leq \int_K |E|^2
%\end{displaymath}
%\end{lemme}
%\begin{proof}
The Hodge decomposition tells us that 
we may find a vector field $X$ defined over $K_R $, such that 
$\div X = 0$ in $K_R$, the normal component $X \cdot {\nu} = 0$ on $\partial K_R$ and $E + X$ is the  gradient of a function, which we call  $\hat{h}$,  hence so is $E_{\eta} + X= \nab \hat{h}_\eta$. An easy computation then yields
\begin{displaymath}
\int_{K_R} |E_{\eta}|^2 = \int_{K_R} |E_{\eta} + X|^2 + \int_{K_R} |X|^2 - 2 \int_{K_R} (E_{\eta} + X) \cdot X 
\end{displaymath}
But we can apply Green's theorem on the right-most term and find \begin{displaymath}
\int_{K_R} (E_{\eta} + X) \cdot X  = 0 
\end{displaymath}
since we saw that   $E_\eta +X $ is a gradient and by assumption $\div X = 0$ and $X \cdot {\nu}=0$. We conclude that 
\begin{equation}
\int_{K_R} |E_{\eta}|^2 \geq \int_{K_R} |E_{\eta} + X|^2 = \int_{K_R} |\nabla \hat h_\eta|^2
\end{equation} 
which combined with \eqref{estene}, proves that $\hat{h}$ satisfies all the desired properties (cf. \eqref{diveee}).
\end{proof}

\begin{coroll}
\label{corollscreen}
Given $\eta<1$, for any $R$ large enough  and such that $|K_R|\in \mathbb{N}$, there exists  an $\bar{h}$ satisfying \eqref{eq:proscreen field}--\eqref{eq:mindistr plus}, 
such that 
\be\label{645} \limsup_{R\to +\infty} \dashint_{K_R} |\nab\bar{h}_\eta|^2- c_d g(\eta) \le \inf_{\bar{\mathcal{A}}_1} \mathcal{W}_\eta.\ee 
\end{coroll}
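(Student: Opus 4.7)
The plan is to run the screening procedure on a near-minimizing electric field of $\mathcal W_\eta$ in $\bar{\mathcal A}_1$ inside larger and larger cubes, and check that the (relative) energy on $K_R$ stays controlled by $\inf \mathcal W_\eta$ plus a vanishing error. Concretely, given $\epsilon>0$, I would first select $E\in\bar{\mathcal A}_1$ with $\mathcal W_\eta(E)\le \inf_{\bar{\mathcal A}_1}\mathcal W_\eta+\epsilon$; the existence of such an $E$ uses that $\mathcal W_\eta$ is bounded below (Proposition \ref{promono}) and that $\mathcal W$ is finite somewhere (item 4 of Proposition \ref{propertiesoftheenergies}), combined with $\mathcal W_\eta(E)\le \mathcal W(E)+C\eta$ which also follows from Proposition \ref{promono}.

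Next, unpacking the definition $\mathcal W_\eta(E)=\limsup_{R\to\infty}\dashint_{K_R}|E_\eta|^2-c_dg(\eta)$ as a $\limsup$ gives $R_0=R_0(\epsilon,\eta)$ such that for every $R\ge R_0$ with $|K_R|\in\mathbb N$,
\[
\int_{K_R}|E_\eta|^2 \le |K_R|\bigl(\mathcal W_\eta(E)+c_dg(\eta)+\epsilon\bigr),
\]
so in particular the bound \eqref{eq:bound ener cube} in Proposition \ref{propscreening} holds with a constant depending only on $\eta$ (and $\epsilon$). The potential $h_\eta$ associated to the restriction of $E$ to $K_R$ solves \eqref{eq:Laplace h} there, but a priori may have multiple or close points. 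I would then iterate Proposition \ref{avantscreening}: each time one of the separation/simplicity conditions \eqref{cond1}--\eqref{cond2} fails, one obtains a new configuration whose energy is smaller by a fixed constant $C>0$; since the energy on $K_R$ stays bounded below (by the argument behind the lower bound of $\mathcal W_\eta$, or simply because $\int|\nabla h_\eta|^2\ge 0$), this process terminates in finitely many steps and yields a potential $\tilde h$ associated with simple, well-separated points in $K_{R-1}$, with
\[
\int_{K_R}|\nabla \tilde h_\eta|^2 \le \int_{K_R}|E_\eta|^2.
\]

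Now $\tilde h$ satisfies all the hypotheses of Proposition \ref{propscreening}, which produces $\hat h=:\bar h_R$ obeying \eqref{eq:proscreen field}--\eqref{eq:mindistr plus} together with the energy estimate
\[
\int_{K_R}|\nabla(\bar h_R)_\eta|^2 \le \int_{K_R}|\nabla\tilde h_\eta|^2 + o(R^d) \le |K_R|\bigl(\mathcal W_\eta(E)+c_dg(\eta)+\epsilon\bigr)+o(R^d),
\]
where $o(R^d)/R^d\to 0$ as $R\to\infty$ at fixed $\eta$. Dividing by $|K_R|$, subtracting $c_dg(\eta)$, and taking $\limsup_{R\to\infty}$ yields
\[
\limsup_{R\to\infty}\Bigl(\dashint_{K_R}|\nabla(\bar h_R)_\eta|^2-c_dg(\eta)\Bigr)\le \inf_{\bar{\mathcal A}_1}\mathcal W_\eta+2\epsilon.
\]
Since $\epsilon$ was arbitrary, a standard diagonal extraction over a sequence $\epsilon_k\to 0$ (with the corresponding threshold radii $R_0(\epsilon_k,\eta)\nearrow\infty$) produces a single family $\{\bar h_R\}$ satisfying \eqref{eq:proscreen field}--\eqref{eq:mindistr plus} for all $R$ large enough with $|K_R|\in\mathbb N$, and realizing the desired bound \eqref{645}. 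The only genuine subtlety is making sure that the intermediate reduction to well-separated configurations in Step 3 really produces an admissible input for the screening proposition (which is essentially the content of Proposition \ref{avantscreening}) and that the diagonal argument can be performed uniformly in $R$; everything else is bookkeeping once the two black-box propositions are in hand.
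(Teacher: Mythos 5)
Your proposal is correct and follows essentially the same route as the paper: reduce to simple, well-separated points via Proposition \ref{avantscreening}, screen via Proposition \ref{propscreening}, and control the energy by comparison with the global infimum of $\mathcal W_\eta$. The only cosmetic difference is that the paper organizes the comparison through the localized minimization problem $F_{\eta,R}$ of \eqref{fetar} (bounding $\limsup_R F_{\eta,R}/|K_R|-c_dg(\eta)$ by $\inf_{\bar{\mathcal{A}}_1}\mathcal W_\eta$ at the end, by restricting global competitors), whereas you fix a global near-minimizer up front and restrict it, then diagonalize in $\epsilon$ — the two bookkeepings are interchangeable.
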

\begin{proof} In view of  Propositions \ref{avantscreening} and \ref{propscreening}, for each given $R$, we may choose a $\nab \bar{h}$ approximating $F_{\eta, R}$ (cf. \eqref{fetar}) and associated to simple well-separated points (separated by $r_0$ which may depend only on $d$)  and with $\frac{\pa \bar{h}}{\pa \nu}=0$ on $\pa K_R$, which implies by Remark~\ref{remneutraliteglobale} that $\# (\Lambda \cap K_R)= |K_R|$. 
Then, letting $R \to \infty$, we have
$$\limsup_{R\to +\infty} \dashint_{K_R} |\nab\bar{h}_\eta|^2- c_d g(\eta) \le \limsup_{R \to + \infty} \frac{F_{\eta, R}}{|K_R|} -c_dg(\eta)  . $$
Then \eqref{645} follows as  a consequence of the definitions of $\mathcal{W}_\eta$ and $F_{\eta, R}$. 
\end{proof}
Note that taking the $\bar{h}$ given by this corollary and periodizing it after reflection allows to show that $\inf_{\bar{\mathcal{A}}_1} \mathcal{W}_\eta$ has a minimizing sequence made of periodic vector fields. 
\\

Let us  now give a sketch of the proof of Proposition \ref{avantscreening}, whose argument is based on \cite{lieb2}. 
\begin{proof}
We treat the case of simple points, the case of multiple points can be ruled out in the same way as a limiting case of simple points. 

Let $\Lambda$ be the set  of points, and  let us look a particular point of $\Lambda$, which we can assume, up to translation, to be the origin. For simplicity of the presentation,  we will neglect the boundary effects and do as if the configuration lived in the whole space $\R^d$. We wish to show that if  $\Lambda\backslash \{0\}$ contains a point $x$ very close to $0$, then the point $x$ can be moved away from $0$ to a point $y$ while decreasing the energy. 

The first step is to show that for a minimizer,  each point is at the minimum of the potential generated by the rest.
Let $U$ be the potential generated by all the points of $\Lambda$ except $x$ (in particular $U$ is regular in a neighborhood of $x$) and by the  background distribution (which here is constant, but weaker assumptions would suffice). Suppose that we modify the configuration by moving  $x$ to $y$,  and let $\bar{h}$ be the perturbation induced on the electrostatic potential, i.e.
\begin{displaymath}
\left\{\begin{array}{ll} -\Delta \bar{h} = c_d \left( \delta^{(\eta)}_y - \delta^{(\eta)}_x \right) & \text{in} \ K_R\\ \bar{h}=0 & \text{on} \ \partial K_R.  \end{array}\right.
\end{displaymath}
We want to estimate the energy of the perturbed configuration. Writing $g_\eta=\min(g,g(\eta))$, we have $\bar h= g_\eta(\cdot -y)-g_\eta (\cdot- x)$, and $h= U+g_\eta(\cdot -x)$.  Several integrations by parts allow to see that 
\begin{multline}\label{minolocU}
\int_{K_R} |\nabla (h+ \bar{h})|^2 - \int_{K_R} |\nabla h|^2 =
\int_{K_R} |\nab \bar{h}|^2 + 2 c_d \int_{K_R} ( \delta^{(\eta)}_y - \delta^{(\eta)}_x) h\\
= 
c_d \int (\delta_y^{(\eta)}- \delta_x^{(\eta)} ) (g_\eta (\cdot-y)-g_\eta(\cdot -x)) + 2 c_d \int ( \delta^{(\eta)}_y - \delta^{(\eta)}_x)( U+ g_\eta (\cdot -x))
\\
\simeq  2c_d \int U (\delta_{y}^{(\eta)} - \delta_{x}^{(\eta)}) \quad \text{as} \ \eta \to 0
\end{multline} where we have used that $g_\eta=g(\eta)$ on $\pa B(0, \eta)$.
Thus, if the configuration minimizes the energy, then both sides must always be nonnegative hence  $x$ must be at a  local (even global) minimum of $U$. This is true, at least formally for $\eta = 0$, but can be adapted to smeared out charges with errors that become negligible as $\eta$ gets small. 
On the other hand, the potential $U$ may be decomposed as 
\begin{displaymath}
U = U^{ext} + U^{int}
\end{displaymath} 
where 
\begin{displaymath}
U^{int} = c_d \Delta^{-1}\left(\delta_{0}^{(\eta)} - \textbf{1}_{B(0,2 r_0)}\right)
\end{displaymath}
is the potential created by the singular charge at $0$ and by the background distribution in the ball, and 
\begin{displaymath}
U^{ext} =  c_d \Delta^{-1}\left(\sum_{p\in \Lambda\backslash \{x,0\} } \delta_p^{(\eta)}  - \textbf{1}_{B(0, 2r_0)^c}\right)
\end{displaymath}
is the potential generated by all the other charges, without $x$ (and the background distribution).  We may then observe  that $U^{ext}$ is super-harmonic in the ball $B(0,2r_0)$ and thus achieves its minimum at some point $\bar{x}$ that belongs to the boundary $\partial B(0, 2r_0)$. Moreover, $U^{int}$ is radial and explicitly computable, and if $r_0$ is small enough one can check that $U^{int} (r)$ is decreasing and thus   $U^{int}$ achieves its minimum over $B(0,2r_0)$ on the boundary $\partial B(0,2 r_0)$. But then  if $x\in B(0,2r_0)$ and $x\neq 0$, $x$ can be moved to $y=\bar{x}$, this decreases $U$, hence  in view of \eqref{minolocU}, this decreases the energy (how much it can be decreased can  be better  estimated, and this quantitative version of the argument 
allows to adapt the proof to the case of a bounded set and with $\eta$ nonzero).
This shows the desired result: if $0$ and $x$ (hence two arbitrary points in the configuration) are not separated by a distance $2r_0$ depending only on $d$, then the energy can be decreased. 
\end{proof}

%We also mention the following fact : 
%\begin{lemme} \label{compaciteparcontrole} Let $-\Delta h_{n, \eta} = c_d(\sum \delta_{x_i}^{(\eta)} - m_n)$ with $m_n \rightarrow m$. If
%\begin{displaymath}
%\int_{K_R} |\nabla h'_{n, \eta}|^2 \geq C_{\eta, R} (\textrm{independent of } n)
%\end{displaymath}
%then $\nu_{n, \eta} = \sum \delta^{(\eta)}_{x_i}$ is locally bounded in the sense of measures, and converges up to extraction, and the gradient $\nabla h_n$ also converges up to extraction to a gradient vector-field $\nabla h$ in $L^q_{loc}$ ($q < \f{d}{d-1}$). 
%\end{lemme}
%\begin{proof}
%The same computations as in the proof of lemma \ref{neutralite0} give that :
%\begin{displaymath}
%\nu_{n, \eta} (K_R) \leq C_R \textrm{ so } \mu_n(K_R) \leq C_R 
%\end{displaymath}
%and $\mu_n$ is indeed locally bounded. Moreover, the $L^2$-bound on $\nabla h'_{n, \eta}$ ensures the existence of a limit $\nabla h_{\eta}$ (up to extraction) for $\nabla h'_{n, \eta}$ (in $L^2_{loc}$), such that, in the sense of distributions : 
%\begin{equation}
%- \Delta h_{\eta} = c_d( \nu_{\eta} - m).
%\end{equation} 
%Since $\nabla h'_n = \nabla h'_{n, \eta} - \sum N_{x_i} \nabla f_{\eta} (x-x_i)$ the convergence of $\nabla h'_{n, \eta}$ to $\nabla h_{\eta}$ ensures that $\nabla h'_n$ converges to $\nabla h$ in $L^q_{loc}$ with $\nabla h$ defined as : 
%\begin{displaymath}
%\nabla h := \nabla h_{\eta} - \sum N_{x_i} \nabla f_{\eta}(x -x_i).
%\end{displaymath}
%\end{proof}

\section{Upper bound and consequences for ground states}\label{sec63}
As already mentioned, the  lower bound of Theorem \ref{thlbnext} needs to be complemented by a corresponding upper bound, proving that the lower bound was   sharp. As in $\Gamma$-convergence, this is accomplished by an explicit construction.

The following proposition states the result we can obtain. It shows that we can find some configuration of points for which the lower bound of Corollary \ref{cor61} is sharp. Because we have in view the statistical mechanics problem as well, it will be useful to show that this is true not only for that configuration, but for a ``thick enough" neighborhood of it.

\begin{prop}[Upper bound at next order] 
\label{proub}
For any $\eps>0$ there exists $r_1>0$ and for any $n$ a set $A_n \subset (\mr^d)^n$ such that 
\begin{equation}\label{eq:volume min}
|A_n|\ge n! \left(\pi (r_1)^d/n\right)^n  
\end{equation}
and for any $(y_1, \dots, y_n) \in A_n$ we have
\begin{equation}\label{bsup 3d}
\limsup_{n\to \infty}  n^{2/d-2} \left(H_n(y_1, \dots, y_n) - n^2 I(\mu_0) + \Big( \frac{n}{2}\log n\Big) \indic_{d=2}      \right) \le \xi_d+ \eps,
\end{equation}where $\xi_d$ is defined by \eqref{eq:gamma d}.
\end{prop}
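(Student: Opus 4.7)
The plan is to combine the splitting formula of Proposition \ref{developpementasympto} with the screening construction (Corollary \ref{corollscreen}) to build a test configuration whose points all lie in $\Sigma$ (so that the confinement term $2n\sum_i \zeta(y_i)$ vanishes) and which locally looks, in blown-up coordinates, like a near-minimizer of the renormalized energy $\mathcal{W}$ at the appropriate local background density $\mu_0(x)$. In view of \eqref{eq:gamma d} and the splitting formula, it suffices to show
\[
\frac{1}{n}\mathcal H_n(y_1',\dots,y_n') \le c_d(\xi_d + \eps) + o(1),
\]
which, plugged into Proposition \ref{developpementasympto}, gives \eqref{bsup 3d}.

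First, using the monotonicity from Proposition \ref{promono}, I fix $\eta<\eta_0$ small enough that $\inf_{\bar{\mathcal A}_1}\mathcal W_\eta \le \min_{\bar{\mathcal A}_1}\mathcal W + \eps/4$, and take $R$ large with $|K_R|\in\mathbb N$ so that Corollary \ref{corollscreen} supplies a density-$1$ screened configuration $\bar \Lambda\subset K_R$ of exactly $|K_R|$ points, with pairwise and boundary separation at least $r_0/10$, vanishing normal component on $\partial K_R$, and per-volume energy within $\eps/2$ of $\min_{\bar{\mathcal A}_1}\mathcal W$. The scaling relations of Proposition \ref{propertiesoftheenergies} convert this, for any $m\in[\underline m,\overline m]$, into a density-$m$ analogue on a rescaled cube, with energy within $\eps/2$ of $\min_{\bar{\mathcal A}_m}\mathcal W$. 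Next I partition $\Sigma' = n^{1/d}\Sigma$ into hyperrectangles $\{K_\alpha\}$ of sidelengths in $[L,2L]$ with $L=L(\eps)$ large but $L\ll n^{1/d}$; using $\mu_0\in C^1(\Sigma)$ and $\mu_0\ge \underline m>0$ from \textbf{(A6)}, I can slightly distort the partition so that $m_\alpha|K_\alpha|\in\mathbb N$ for $m_\alpha:=\mu_0(x_\alpha)$ at a reference point $x_\alpha\in K_\alpha$; cells within distance $L$ of $\partial\Sigma'$ are discarded, which by \textbf{(A5)} loses only $O(|\partial \Sigma|\,n^{1-1/d}L)=o(n)$ points. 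On each retained $K_\alpha$ I place the rescaled density-$m_\alpha$ screened configuration from above.

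Because each local electric field has vanishing normal component on $\partial K_\alpha$, the patched field $E'$ has the correct divergence $c_d\bigl(\sum_i \delta_{y_i'}-\mu_0'\bigr)$ across interfaces on $\cup_\alpha K_\alpha$; a Hodge-type correction as in Step~4 of the proof of Proposition \ref{propscreening} turns $E'$ into the gradient of a global potential $h_n'$ at energy cost $o(n)$, while the truncations $E_{\alpha,\eta}$ remain unaffected on each cell since the smearing scale $\eta$ is much smaller than $r_0/10$. By linearity of the integral and the identity $\sum_\alpha |K_\alpha|m_\alpha = n+o(n)$,
\[
\int_{\R^d}|\nabla h'_{n,\eta}|^2-nc_d g(\eta) \le \sum_\alpha |K_\alpha|\,\mathcal W_\eta(E_\alpha)+o(n) \le \sum_\alpha |K_\alpha|\min_{\bar{\mathcal A}_{m_\alpha}}\mathcal W + c_d\eps\,n + o(n),
\]
where the second inequality combines the screening construction with the monotonicity $\mathcal W_\eta \le \mathcal W + C\eta$. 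A Riemann-sum argument, using continuity of $x\mapsto\min_{\bar{\mathcal A}_{\mu_0(x)}}\mathcal W$ on $\Sigma$ (via \eqref{scalingW}--\eqref{scalingW2d} and continuity of $\mu_0$), identifies the right-hand side as $nc_d \xi_d + c_d\eps\,n+o(n)$; dividing by $n$, letting $\eta\to 0$, and invoking Proposition \ref{developpementasympto} yields \eqref{bsup 3d} (the logarithmic $\indic_{d=2}$ correction coming from \eqref{scalingW2d}). To obtain $A_n$ I perturb each $y_i$ in a ball of radius $\rho\asymp n^{-1/d}$ small enough to keep the points in $\Sigma$ and pairwise separated by at least $r_0/(20\,n^{1/d})$; the cell-by-cell estimates and the continuity of $\zeta$ degrade only by $o(1)$ at scale $n^{2-2/d}$ under such perturbations, so taking the union over $\sigma\in\mathbb S_n$ of products of such balls gives \eqref{eq:volume min} for a suitable $r_1=r_1(\eps)$. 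The main obstacle is the pasting step: the concatenated screened fields satisfy the correct divergence equation across cell interfaces but are not \emph{a priori} gradients, and the Hodge correction must be implemented — as in the proof of Proposition \ref{propscreening} — without spoiling the energy estimate; the integer-matching condition $m_\alpha|K_\alpha|\in\mathbb N$ and the handling of a neighborhood of $\partial\Sigma$ are precisely what force the regularity assumptions \textbf{(A5)}--\textbf{(A6)}.
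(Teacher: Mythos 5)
Your overall strategy is the one the paper follows: tile the blown-up droplet $\Sigma'$ by hyperrectangles, paste rescaled screened near-minimizers from Corollary \ref{corollscreen} into each cell, glue via the compatible Neumann data, project onto gradients \`a la Hodge, run a Riemann sum, and conclude with the equality case of the splitting formula plus a small-ball perturbation for the volume of $A_n$. There are, however, two concrete gaps in your execution.

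First, the divergence claim for the patched field is false as written. Each local screened field satisfies $-\div E_\alpha = c_d(\sum_p\delta_p - m_\alpha)$ with $m_\alpha$ \emph{constant} on $K_\alpha$, so the concatenated field has background $\sum_\alpha m_\alpha\indic_{K_\alpha}$, not $\mu_0'$; it is therefore not the field whose gradient part is the $h_n'$ appearing in Proposition \ref{developpementasympto}. The paper inserts an explicit corrector on each cell, $-\Delta u_\alpha = c_d(m_\alpha - \mu_0')$ with Neumann boundary conditions, shown to be negligible in energy precisely because $|\nabla\mu_0'|\le Cn^{-1/d}$ by \textbf{(A6)}; you must add $\nabla u_\alpha$ to $E_\alpha$ before the Hodge projection (which, incidentally, is an orthogonal projection and hence costs nothing, rather than $o(n)$). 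Second, your point count does not come out to exactly $n$: taking $m_\alpha = \mu_0(x_\alpha)$ at a reference point gives $\sum_\alpha m_\alpha|K_\alpha| = n + o(n)$ only, and discarding the cells near $\partial\Sigma'$ loses a further $o(n)$ points without saying where they go. Since $H_n$ is a function of exactly $n$ points, this must be repaired: the paper takes $m_\alpha = \fint_{K_\alpha}\mu_0'$ with the partition arranged so that $\int_{K_\alpha}\mu_0'\in\mathbb{N}$, and builds a separate well-separated, $o(n)$-energy vector field $E_{\mathrm{bound}}$ in the boundary layer carrying the remaining points, so that $\#\Lambda_n = \int_{\Sigma'}\mu_0' = n$ exactly. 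With these two repairs your argument coincides with the paper's proof.
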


\begin{proof} 
We sketch the main steps of the construction, which relies  on pasting together vector fields obtained via  the screening construction of  Proposition  \ref{propscreening}, more precisely those given by Corollary \ref{corollscreen}. That corollary was stated for hypercubes but it applies to hyperrectangles as well.
\medskip

\noindent
{\bf Step 1.}
We fix some large $R$ and, thanks to assumption  \textbf{(A5)} (cf. Section \ref{secassump}), partition $\Sigma'$ (the blown-up of the set $\Sigma$) into hyperrectangles $K_i$ of sidelengths in  $[R,2R]$ and such that $\int_{K_i} \mu_0' \in \mn$.   This is not very difficult to do, cf. \cite[Lemma 7.5]{ss1}, it leaves however a (layer) region $\Sigma_{\mathrm{bound}}'$ near the boundary of $\Sigma'$ which cannot exactly be partitioned into hyperrectangles.
We let $m_i = \fint_{K_i} \mu_0'$.
\medskip

\noindent
{\bf Step 2.}
We  paste  in each  $K_i$, copy of the $\bar{h}$ given by Corollary \ref{corollscreen}, translated and rescaled  by a factor $m_i^{1/d}$, so that we have a  solution to 
$$\left\{\begin{array}{ll}
-\Delta \bar{h}_i= c_d(\sum_p \delta_p - m_i) & \text{in} \ K_i\\
\frac{\pa \bar{h}_i}{\pa \nu} =0 & \text{on} \ \pa K_i,\end{array}\right.$$ and 
\be\label{estbhi}
\fint_{K_i} |\nab (\bar{h}_i)_\eta|^2- m_i c_d g(\eta)  \le \min_{\bar{\mathcal{A}}_{m_i} }\mathcal{W}+o_R(1).\ee
Note that the rescaling factor does not degenerate by assumption \textbf{(A6)} (cf. Section \ref{secassump}),  that the total number of points in $K_i$ is $m_i |K_i|= \int_{K_i} \mu_0'$, and that the points are separated by a distance depending only on $\overline{m}$ in assumption \textbf{(A6)}, as provided by Corollary \ref{corollscreen}.
\smallskip 

\noindent
{\bf Step 3.} Since $m_i$ is not the desired weight $\mu_0'$,  we correct $\bar{h}_i$ by adding a solution to 
$$
\left\{\begin{array}{ll}
-\Delta u_i= c_d(m_i- \mu_0') & \text{in} \ K_i\\
\frac{\pa u_i}{\pa \nu} =0 & \text{on} \ \pa K_i.\end{array}\right.$$
Thanks to assumption \textbf{(A6)}, we know that  $\mu_0'$ varies slowly (more precisely $|\nab \mu_0'|\le C n^{-1/d}$ at the scale considered), so $\|m_i - \mu_0'\|_{L^\infty(K_i)}$ is small, which allows to prove by elliptic regularity estimates that $u_i$ is small in a strong sense. 
We note that this is the point where we use the $C^1$ regularity of $\mu_0$ on its support, but that it could easily be replaced by a weaker statement showing slow variation (such as a H\"older continuity assumption).
 We then let $E_i= \nab \bar{h}_i+ \nab u_i$ in each $K_i$.
 \medskip
 
 \noindent
 {\bf Step 4.} We complete the construction by defining a vector field of the right form in the region $\Sigma_{\mathrm{bound}}'$ near the boundary of $\Sigma'$. Because that region has a negligible volume, it is not important to use an approximate minimizer of $\mathcal{W}$, it suffices to construct some vector field $E_{\mathrm{bound}}$ associated to well-separated points, and  satisfying 
 $$\left\{\begin{array}{ll}
-\div E_{\mathrm{bound}}= c_d (\sum_{p\in \Lambda_{\mathrm{bound}}}\delta_p - \mu_0') &\text{in} \ \Sigma_{\mathrm{bound}}'\\
E_{\mathrm{bound}}\cdot \nu= 0 & \text{on} \ \pa \Sigma_{\mathrm{bound}}'.\end{array}\right.
 $$
 and \be\label{ebound}
 \int_{\Sigma_{\mathrm{bound}}'}  |E_{\mathrm{bound}, \eta}|^2 - \# \Lambda_{\mathrm{bound}}c_d g(\eta)\le o(n).
 \ee

 \noindent
 {\bf Step 5.} We paste together the vector fields $E_i$ and $E_{\mathrm{bound}}$ defined in all the regions that make up $\Sigma'$, extend them by $0$ outside $\Sigma'$, and call the result $E$. Because the normal components of these vector fields are continuous across the interfaces between the regions, $E$ globally satisfies  a relation of the form
 $$-\div E = c_d (\sum_{p\in \Lambda_n} \delta_p - \mu_0')\quad \text{in} \ \R^d,$$
 for a collection of points $\Lambda_n$ in $\Sigma'$ which are simple and well separated, and for which we can check that $\# \Lambda_n=n$ (since $\int_{\Sigma'} \mu_0'= n\int\mu_0= n$).
 The vector field $E$ is no longer a gradient, however we can keep the points of $\Lambda_n$ and define $h_{n,\eta}'$ associated to them via \eqref{dehn'1}.
 Computing exactly as in Step 4 of the proof of Proposition \ref{propscreening}.
 shows that $$
 \int_{\R^d} |\nab h_{n,\eta}'|^2\le \int_{\R^d} |E_\eta|^2 $$
 where $E_\eta$ is as in \eqref{esmeared}  and 
 in view of the  above ($\nab u_i$ is  negligible), we may write 
 \be\label{1est}
 \int_{\R^d} |\nab h_{n,\eta}'|^2\le \int_{\R^d} |E_\eta|^2\le \sum_i \int_{K_i}|\nab(\bar{ h}_i)_{\eta}|^2 +\int_{\Sigma'_{\mathrm{bound}}} |E_{\mathrm{bound}, \eta}|^2 +   \text{ negligible terms}.\ee 
 \noindent
 {\bf Step 6.} We estimate the energy of the constructed configuration. 
Adding  the contributions given by \eqref{estbhi}, \eqref{ebound},  and inserting into \eqref{1est},  we obtain that for $R$ and $\eta$ given, we have  a configuration of points $\{x_1',\dots, x_n'\}=\Lambda_n$ (and a corresponding blown-down configuration ($x_1, \dots, x_n$))  for which 
 $$\int_{\R^d} |\nab h_{n,\eta}'|^2- nc_d g(\eta)
 \le \sum_i |K_i| \min_{\bar{\mathcal{A}}_{m_i} }\mathcal{W}+no_R(1).$$
 Using  a Riemann sum argument and  the continuity of $m \mapsto \min_{\bar{\mathcal{A}}_m} \mathcal{W}$ which follows from \eqref{scalingW}--\eqref{scalingW2d}, we conclude that 
 $$\int_{\R^d} |\nab h_{n,\eta}'|^2- n c_d g(\eta) \le \int_{\Sigma'} \Big(\min_{\bar{\mathcal{A}}_{\mu_0'} (x)} \mathcal{W}_\eta\Big) \, dx + o_R (n).$$
 Since the points are well-separated, as soon as $\eta $ is small enough, we are in the case of equality of Proposition \ref{propRS}, which means that we may write (using also that $\zeta(x_i)=0$ by Definition \ref{def23} since all the $x_i$'s are in $\Sigma$ )
 \begin{multline*}
 H_n (x_1,\dots, x_n) \le n^2 I(\mu_0)- \Big(\frac{n}{2}\log n\Big) \indic_{d=2} 
\\ + \frac{n^{2-2/d}   }{c_d}        \( \frac{1}{n}     \int_{\Sigma'} \min_{\bar{\mathcal{A} }_{\mu_0' (x)} }\mathcal{W}_\eta \, dx +o_R(1)\) .\end{multline*}
Taking $R$ large enough and $\eta$ small enough, and using the fact that $\min_{\bar{\mathcal{A}}_m} \mathcal{W}_\eta \to \min_{\bar{\mathcal{A}}_m}\mathcal{W}$ as $\eta \to 0$,  and by definition of $\xi_d$, we can find $(x_1, \dots, x_n)$ so that \eqref{bsup 3d} holds, i.e. so that we have the desired right-hand side up to an error $\eps$.
\smallskip 
 
 \noindent 
 {\bf Step 7.}
 The statement about the volume of the set $A_n$ follows by noting that if $y_i\in B(x_i,r_1 n ^{-1/d})$ for $r_1$ small enough (depending on $r_0$) then $y_1,\dots, y_n$ are also well separated and we may perform the same analysis for $H_n(y_1,\dots, y_n)$, except with an additional error  depending only on $r_1$ and going to $0$ when $r_1 \to 0$, so which can be made $<\eps$ by taking $r_1$ small enough. It is  in addition clear that the set of such $y_i$'s has volume $n! (r_1^d/n)^n$ in configuration space: the $r_1 ^d /n$ term is the volume of the ball $B(x_i,r_1 n ^{-1/d})$, it is raised to the power $n$ because there are $n$ points in the configuration, and multiplied by $n !$ because permuting $y_1,\dots,y_n$ does not change the energy.

\end{proof}

\begin{rem}
If we view these results as a $\Gamma$-convergence at next order of $H_n$, then Theorem \ref{thlbnext} provides  the $\Gamma$-liminf relation, but Proposition \ref{proub} only provides the $\Gamma$-limsup relation at the level of minimizers, which is enough to conclude about these, but does not provide a full $\Gamma$-convergence result.  To get one, we would need to construct recovery sequences for all probabilities $P$ satisfying the first two properties of Theorem \ref{thlbnext}. This is technically more complicated, because such $P$'s need to be approximated by a single sequence, and it was accomplished in \cite{ss1} for $W$ in dimension $2$. 
In the setting of $\mathcal{W}$, we are limited by the fact that our screening procedure is written only for configurations with well-separated points. \end{rem}

Comparing with the lower bound result of Theorem  \ref{thlbnext}, we immediately obtain the following result on the minimum and  minimizers of the Coulomb gas Hamiltonian.

\begin{theo}[Ground state energy expansion and microscopic behavior of minimizers]\label{th1}\mbox{}\\
Assume $V$ is continuous and such that the equilibrium measure exists  and satisfies \textbf{(A5)}--\textbf{(A6)} (cf. Section \ref{secassump}).
As $n \to \infty$ we have 
\be\label{expansionmin}
\boxed{
\min H_n =   \displaystyle n^2 I(\mu_0) - \Big(\frac{n}{2}\log n \Big) \indic_{d=2} 
+ n^{2-2/d}\xi_d + o(n^{2-2/d}),}\ee
where $\xi_d$ is as in \eqref{eq:gamma d}, and it holds that 
$\xi_d = \min \widetilde{\mc{W}}.$
In addition, if $(x_1, \dots, x_n)\in (\mr^d)^n$ minimize $H_n$, \footnote{again, the configuration depends implicitly on $n$}letting $h_n' $ be associated via \eqref{hn'} and  $P_n$ and $P$  be as in Theorem \ref{thlbnext},  then  $P$ minimizes $\widetilde{\mathcal{W}}$ and for  $P$-a.e. $(x, \j)$, $\j$ minimizes $\mathcal{W}$ over $\bar{\mathcal{A}}_{\mu_0(x)}$.
 \end{theo}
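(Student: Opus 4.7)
The plan is to close the gap between Corollary \ref{cor61} and Proposition \ref{proub}. Any family $(y_1,\ldots,y_n)\in A_n$ produced by Proposition \ref{proub} satisfies \eqref{bsup 3d}, hence $\min H_n\le H_n(y_1,\ldots,y_n)$ gives
$$\limsup_{n\to\infty} n^{2/d-2}\bigl(\min H_n - n^2 I(\mu_0)+\tfrac{n}{2}\log n\cdot\mathbf{1}_{d=2}\bigr)\le\xi_d+\varepsilon.$$
Letting $\varepsilon\to 0$ and combining with Corollary \ref{cor61} yields the expansion \eqref{expansionmin}.

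Next I would identify $\xi_d$ with $\min\widetilde{\mathcal W}$. One inequality, $\min\widetilde{\mathcal W}\ge\xi_d$, is exactly the disintegration argument in \eqref{mintw}: write $dP(x,E)=\tfrac{dx}{|\Sigma|}\otimes dP_x(E)$, use $E\in\bar{\mathcal A}_{\mu_0(x)}$ for $P$-a.e.\ $(x,E)$, and minimize $\mathcal{W}$ pointwise. For the reverse, I apply Theorem \ref{thlbnext} to the family $(y_1,\ldots,y_n)$ from Proposition \ref{proub}. All those points lie in (or within $O(n^{-1/d})$ of) $\Sigma$, so $2n\sum_i\zeta(y_i)$ is absorbed into the $o(n^{2-2/d})$ error; inserting \eqref{bsup 3d} into the splitting formula Proposition \ref{developpementasympto} gives $\mathcal H_n(y_1',\ldots,y_n')/n\le c_d(\xi_d+\varepsilon)+o(1)$. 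Theorem \ref{thlbnext} then produces, after extraction, a limit $P$ with $\widetilde{\mathcal W}(P)\le\liminf\mathcal H_n/n$, so $\widetilde{\mathcal W}(P)\le\xi_d+\varepsilon$ after matching the scaling constants in \eqref{tildew}. Sending $\varepsilon\to 0$ proves $\min\widetilde{\mathcal W}\le\xi_d$.

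For the last claim, let $(x_1,\ldots,x_n)$ minimize $H_n$. Combining \eqref{expansionmin} with the splitting formula, the nonnegativity of $\zeta$, and the crude lower bound $\mathcal H_n\ge -Cn$ from Corollary \ref{corlb}, I obtain simultaneously $2n\sum_i\zeta(x_i)\le Cn^{2-2/d}$ and $\mathcal H_n(x_1',\ldots,x_n')\le Cn$, so the hypotheses of Theorem \ref{thlbnext} hold. Up to a subsequence the associated $P_n$ converge to some $P$ with $\widetilde{\mathcal W}(P)\le\liminf\mathcal H_n/n$. But \eqref{expansionmin}, re-read through the splitting formula, forces this liminf to equal $c_d\xi_d$ (again modulo scaling constants), so $\widetilde{\mathcal W}(P)\le\xi_d=\min\widetilde{\mathcal W}$, and thus $P$ is a minimizer. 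Since $P$ achieves equality throughout \eqref{mintw}, the $P$-a.e.\ nonnegative integrand $\mathcal W(E)-\min_{\bar{\mathcal A}_{\mu_0(x)}}\mathcal W$ has zero $P$-integral, hence vanishes $P$-a.e., which is exactly the stated pointwise minimization.

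The main obstacle, such as it is, is bookkeeping: tracking the factors of $c_d$ and $|\Sigma|$ through the splitting formula, the scaling identities \eqref{scalingW}--\eqref{scalingW2d}, and the definition \eqref{tildew} so that the upper bound from Proposition \ref{proub} and the lower bound from Theorem \ref{thlbnext} align as sharp equalities; one also wants to check that the slight boundary contribution from $\sum_i\zeta(y_i)$ in the construction is $o(n^{2-2/d})$, which follows from the continuity of $\zeta$, $\zeta\equiv 0$ on $\Sigma$, and the $O(n^{-1/d})$ boundary layer in the partition of $\Sigma'$.
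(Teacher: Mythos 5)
Your proposal is correct and follows essentially the same route as the paper, which obtains Theorem \ref{th1} precisely by matching the upper bound of Proposition \ref{proub} against the lower bound of Theorem \ref{thlbnext}/Corollary \ref{cor61} via the splitting formula, with the $P$-a.e.\ minimization extracted from equality in \eqref{mintw}. Your flagged ``bookkeeping'' concern about the factor $c_d$ is well taken (the stated \eqref{liminfconclu1} and the bound actually derived in its proof differ by a factor $c_d$, and one must use the latter), but this does not affect the validity of the argument.
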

This result was obtained in \cite{rs} for $d\ge 2$ and previously for $d=2$ in \cite{ss1} with $W $ instead of $\mathcal{W}$, which gives another proof  that in that case $\min_{\mathcal{A}_1} W= \min_{\bar{\mathcal{A}}_1} \mathcal{W}$ (we have seen other ways of justifying this in Corollary  \ref{propequiv}). In \cite{ss2} the corresponding result for $d=1$ is obtained:
\be
 \boxed{
\min H_n =   \displaystyle n^2 I(\mu_0) - n\log n   + n\Big(  \frac{1}{2\pi} \min_{\mathcal{A}_1} W    - \int \mu_0(x) \log \mu_0(x) \,dx\Big) +o(n)  } 
  \ee
and we recall that in that case the value of the minimum  of $W$ is known, cf. Theorem \ref{thm1d}.

These expansions of the minimal energy are to be compared to Theorem \ref{thcvmini}~: we have 
obtained as announced a next order expansion of the minimal energy, in terms of the unknown (except in 1D) constants $\min W$, $\min \mathcal{W}$, and we have seen that -- modulo the logarithmic terms in dimensions $1$ and $2$ -- this next order term lives at the order $n^{2-2/d}$, something which was not immediate.  Moreover, with the statement ``for $P$-a.e. $(x,E)$,  $E$ minimizes $\mathcal{W}$ over $\bar{\mathcal{A}}_{\mu_0(x)}$," we have obtained a characterization of the minimizers at the microscopic level~: after blow-up around a point $x\in \Sigma$ chosen uniformly, one sees a jellium of points with  density (or  ``background")  $\mu_0(x)$ (note that this is the only way -- through the equilibrium measure -- that the result depends on the potential $V$), which has to almost everywhere be a minimizer of $\mathcal{W}$. 
This reduces the study of the minimizers of $H_n$ to the minimization of $\mathcal{W}$. If one believes that minimizers of $\mathcal{W}$ are lattices, as one is led to believe in dimension $2$ (recall in dimension $2$ the Abrikosov triangular lattice is the unique best one), then one expects that minimizers of $H_n$ microscopically look ``almost-everywhere" like such  lattices.  

In \cite{rns} we showed that in dimension $2$, for true minimizers (and not just configurations whose energy is asymptotically minimal) the ``almost-everywhere" part of the statement can be replaced by ``everywhere".  We also showed as in the proof of Proposition \ref{avantscreening} that for minimizers, points are separated by a distance $c n^{-1/d}$ with $c>0$ depending only on the dimension and $\|\mu_0\|_{L^\infty}$, and that  the discrepancies of the numbers of points (cf. Lemma \ref{lem:fluctu charge}) are $O(R^{d-1})$ (this can be compared to \cite{ao}). Those results  should be adaptable to general dimensions.

\section{Consequences on the statistical mechanics}
 Now that we have completed the expansion at next order of the Hamiltonian $H_n$, we may, just as in Section \ref{LDPsection}, apply it to obtain information on the thermal states and the partition function, by simply inserting this expansion into 
 \eqref{gibbsagain}.
 In this context, the result of Theorem \ref{th1} on ground states can be viewed as a zero temperature  (or $\beta=\infty$) result.

\begin{theo}[Next order asymptotic expansion of the partition function]
\label{thz}
Assume $V$ is continuous and such that $\mu_0$ exists and satisfies \textbf{(A4)}--\textbf{(A6)} (cf. \eqref{conditionintegrabilite} and Section \ref{secassump}). Assume  $\bar{\beta}:= \limsup_{n\to +\infty}\beta n^{1-2/d}>0$.
Then there exists $C_{\bar{\beta}}$ (depending on $V$ and $d$) such that $\lim_{\bar{\beta} \to +\infty} C_{\bar{\beta}}=0$  
and 
\be  \label{631}
 \boxed{ \left|\log Z_{n,\beta} +    \frac{\beta}{2} \(  n^2  I(\mu_0) - \Big(\f{ n}{2}\log n\Big) \indic_{d=2}   + n^{2-2/d}\xi_d\)  \right| \le C_{\bar{\beta}}  \bar{\beta} n,}\ee
 where $\xi_d $ is as in \eqref{eq:gamma d}.
\end{theo}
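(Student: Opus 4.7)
Starting from the splitting formula (Proposition \ref{developpementasympto}), we may rewrite
$$\log Z_{n,\beta} = -\frac{\beta}{2}\left(n^2 I(\mu_0) - \Big(\tfrac{n}{2}\log n\Big)\mathbf{1}_{d=2}\right) + \log\!\int e^{-\beta n\sum_{i}\zeta(x_i) - \frac{\beta n^{2-2/d}}{2 c_d}\mathcal H_n(x'_1,\dots,x'_n)}dx,$$
so the task reduces to showing the remaining integral equals $\exp\bigl(-\tfrac{\beta}{2} n^{2-2/d}\xi_d + O(C_{\bar\beta}\bar\beta n)\bigr)$. The two inequalities in \eqref{631} will be established separately: the lower bound on $\log Z_{n,\beta}$ is a direct consequence of the upper-bound construction of Proposition \ref{proub}, while the matching upper bound on $\log Z_{n,\beta}$ — the main difficulty — requires a Laplace-type analysis that becomes sharp only in the low-temperature regime $\bar\beta\gg 1$.

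\textbf{Lower bound on $\log Z_{n,\beta}$.} Fix $\varepsilon>0$ and let $A_n$ be the set of configurations produced by Proposition \ref{proub}. On $A_n$ the Hamiltonian satisfies $H_n \le n^2 I(\mu_0) - (\tfrac{n}{2}\log n)\mathbf{1}_{d=2} + n^{2-2/d}(\xi_d+\varepsilon)$, and $|A_n|\ge n!(\pi r_1^d/n)^n$, which by Stirling's formula gives $\log|A_n|\ge -C_{r_1}n$. Bounding $Z_{n,\beta}$ from below by the integral over $A_n$ and taking logarithms yields
$$\log Z_{n,\beta} + \tfrac{\beta}{2}\!\left(n^2 I(\mu_0) - \Big(\tfrac{n}{2}\log n\Big)\mathbf{1}_{d=2} + n^{2-2/d}\xi_d\right) \ge -C_{r_1} n - \tfrac{\beta}{2}\varepsilon\, n^{2-2/d}.$$
Since $\beta n^{2-2/d}\le \bar\beta n$ eventually, the right-hand side is at least $-\bar\beta n\bigl(\varepsilon/2 + C_{r_1}/\bar\beta\bigr)$, which has the required form upon sending first $\bar\beta\to\infty$ and then $\varepsilon\to 0$.

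\textbf{Upper bound on $\log Z_{n,\beta}$.} By Theorem \ref{th1}, $\min H_n = n^2I(\mu_0) - (\tfrac{n}{2})\log n\,\mathbf{1}_{d=2} + n^{2-2/d}\xi_d + o(n^{2-2/d})$. Factoring it out,
$$Z_{n,\beta} = e^{-\frac{\beta}{2}\min H_n}\int e^{-\frac{\beta}{2}(H_n - \min H_n)}dx,$$
and it suffices to bound the remaining integral by $e^{C_{\bar\beta}\bar\beta n}$. Using the layer-cake representation
$$\int e^{-\frac{\beta}{2}(H_n - \min H_n)}dx = \int_0^\infty \tfrac{\beta}{2}e^{-\beta s/2}\,V(s)\,ds, \qquad V(s):=|\{H_n \le \min H_n+s\}|,$$
one needs a sub-exponential estimate on $V(s)$. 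Such an estimate follows by combining the pointwise lower bound of Proposition \ref{propRS} at a fixed small $\eta$ (which forces $2n\sum\zeta(x_i) \le s + O(n^{2-2/d})$ together with an analogous bound on $\mathcal H_n$) with Lemma \ref{asymptozeta}, giving
$$\int_{\{\sum_i\zeta(x_i)\le\tau\}}dx \le e^{\alpha\tau + n(\log|\omega| + o(1))}$$
for any $\alpha>0$, while the uniform lower bound $\mathcal H_n\ge -Cn$ of Corollary \ref{corlb} controls the remaining configurational volume. Plugging this into the layer-cake integral and choosing $\alpha$ slightly smaller than $\beta/2$, one obtains the convergence of the $s$-integral precisely when $\beta n^{1-2/d}$ is large, yielding the claimed estimate.

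\textbf{Main obstacle.} The delicate point is the control of $V(s)$ together with the associated tail integral: the layer-cake bound closes only when the exponential decay $e^{-\beta s/2}$ dominates the entropic growth of $V(s)$, which requires $\beta n^{1-2/d}$ — i.e. $\bar\beta$ — to be large. This trade-off between energy and entropy in the Gibbs measure is precisely what forces the error to take the form $C_{\bar\beta}\bar\beta n$ with $C_{\bar\beta}\to 0$ only as $\bar\beta\to\infty$; at bounded $\bar\beta$ the sharpness degrades, in agreement with the fact that at high temperature one should not expect $\log Z_{n,\beta}$ to be governed by $\min H_n$.
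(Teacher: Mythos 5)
Your argument is correct and rests on exactly the same three ingredients as the paper's proof: Proposition \ref{proub} plus Stirling for the lower bound on $\log Z_{n,\beta}$ (this half is identical to the paper's), and, for the upper bound, the energy lower bound coming from the splitting formula together with Lemma \ref{asymptozeta}. The only real difference is packaging: the paper does not pass through $\min H_n$ and a layer-cake over energy levels; it bounds $H_n$ from below pointwise by its deterministic part plus $2n\sum_i \zeta(x_i)$ (via Theorem \ref{thlbnext}, applied with $A_n=(\R^d)^n$ so that $\Q(A_n)=1$), keeps the factor $e^{-\beta n \sum_i \zeta(x_i)}$ inside the integral, and evaluates it directly with Lemma \ref{asymptozeta}. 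Your Chernoff bound on $V(s)$ is the same computation unwound, and it does close, but two details in your write-up need fixing. First, the exponential Markov parameter should be taken either fixed and at least the $\alpha_0$ of \textbf{(A4)} (so that $\int e^{-\alpha\zeta}<+\infty$), or of the form $\lambda \beta n$ with $\lambda<1$ (the scaling appearing in Lemma \ref{asymptozeta}); ``$\alpha$ slightly smaller than $\beta/2$'' is wrong on both counts, since $\beta$ may tend to $0$ when $d\ge 3$, making $\int e^{-\alpha\zeta}$ possibly infinite, while convergence of the $s$-integral only requires $\alpha<\beta n$, which is automatic for $n$ large. Second, the $s$-integral converges as soon as $\beta n\to\infty$; largeness of $\bar\beta$ is not what makes the Laplace argument close, it is only what makes the resulting $O(n)$ entropy errors and the $\varepsilon \beta n^{2-2/d}$ error from the construction small \emph{relative} to $\bar\beta n$, i.e.\ what forces $C_{\bar\beta}\to 0$ (choosing $\varepsilon$, and if needed $\lambda$, appropriately as functions of $\bar\beta$). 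With these corrections your proof is a faithful, slightly more roundabout version of the paper's.
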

In \cite{ss2} one finds the analogous result in dimension $1$: there exists $C_\beta$ with $\lim_{\beta\to  +\infty} C_\beta=0$ such that 
$$
 \boxed{ \left|\log Z_{n,\beta} +    \frac{\beta}{2} \(  n^2  I(\mu_0) -  n\log n   + n \xi_1\)  \right| \le C_\beta \beta n  ,}$$ with $\xi_1= \frac{1}{2\pi} \min_{\mathcal{A}_1}  W - \int_{\R } \mu_0 \log \mu_0 . $

These results should be compared to the expansion stated in Theorem \ref{LDP}~: again, we obtain here an expansion to next order. The expansion gets more precise as $\bar{\beta} \to + \infty$ i.e. $\beta\gg n^{2/d-1}$ (and then one essentially recovers the minimal energy as in Theorem \ref{th1}). However it already identifies the conjectured crystallization regime as $\beta \gg n^{2/d-1}$, and complements the result of Corollary \ref{boundlogz} by a lower bound of similar form.
Except in dimension  $1$, where as already mentioned, $Z_{n,\beta}$ is exactly known via Selberg integrals (at least  for $V$ quadratic), these results improve on  the known results.

We conclude with a result of large deviations type, which improves on Theorem~\ref{LDP} by bounding the probability of rare microscopic events. 

\begin{theo}[Rare events at the microscopic scale]\label{th4}
Assume $V$ is continuous and satisfies \textbf{(A3)}--\textbf{(A6)}.
Let $i_n$ be the map which to  any $x_1, \dots, x_n\in \R^d$ associates    $P_n\in \mathcal{P}(\Sigma\times X)$ as in Theorem \ref{thlbnext}. 
For any $n>0$ let $A_n\subset (\mr^d)^n$ and $$A_\infty= \{ P \in \mathcal{P}(\Sigma \times X)|
\exists (x_1, \dots , x_n) \in A_n, i_n(x_1, \dots, x_n) \rightharpoonup P \  \text{up to a subsequence} \}.$$ Let $\bar{\beta}>0$ be as in Theorem \ref{thz},   $\xi_d$ be as in \eqref{eq:gamma d}, and $\widetilde{\mc{W}}$ as in \eqref{tildew}. There exists $C_{\bar{\beta}}$ such that $\lim_{\bar{\beta} \to + \infty} C_{\bar{\beta}}=0$, and 
\begin{equation}
\label{ldr}
\limsup_{n\to \infty} \frac{ \log \Q(A_n)}{n ^{2-2/d}} 
\le -\frac{\beta}{2} \left(\inf_{P\in A_\infty}\widetilde{\mc{W}}  -  \xi_d - C_{\bar{\beta}}  \right).
\end{equation} 
%Moreover, letting $\widetilde{ \Q}$ denote the push-forward of $\Q$ by $i_n$ (defined in \eqref{eq:Pnun}), $\{\widetilde{\Q}\}_{n}$ is tight and converges, up to a subsequence, to a probability measure on $\mathcal P(X)$. 
\end{theo}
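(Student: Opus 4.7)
The strategy is a standard large deviations upper bound, combining three ingredients established earlier in the excerpt: the exact splitting formula of Proposition \ref{developpementasympto}, the sharp lower bound on $\log Z_{n,\beta}$ from Theorem \ref{thz}, and the next-order $\Gamma$-liminf for $\mathcal{H}_n$ from Theorem \ref{thlbnext}. I begin by writing
\[
\mathbb{P}_{n,\beta}(A_n) = \frac{1}{Z_{n,\beta}}\int_{A_n} e^{-\frac{\beta}{2}H_n(x_1,\dots,x_n)}\,dx_1\cdots dx_n
\]
and insert the splitting to factor out the leading piece $e^{-\frac{\beta}{2}(n^2 I(\mu_0)-(n\log n/2)\mathbf{1}_{d=2})}$, the $\zeta$-weight $e^{-\beta n\sum_i\zeta(x_i)}$, and the remaining $e^{-\frac{\beta n^{1-2/d}}{2 c_d}\mathcal{H}_n(x_1',\dots,x_n')}$. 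The leading piece will cancel, up to a $C_{\bar{\beta}}\bar{\beta}n$ error, against the corresponding term in $Z_{n,\beta}^{-1}$ thanks to Theorem \ref{thz}, and the $\zeta$-integral, estimated separately via a straightforward adaptation of Lemma \ref{asymptozeta}, only contributes an $O(n)$ term in the logarithm.

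The crux is to convert the pointwise $\Gamma$-liminf of Theorem \ref{thlbnext} into the uniform infimum bound
\[
\inf_{(x_1,\dots,x_n)\in A_n}\frac{1}{n}\mathcal{H}_n(x_1',\dots,x_n') \ge \inf_{P\in A_\infty}\widetilde{\mathcal{W}}(P) - \varepsilon_n,\qquad \varepsilon_n\to 0.
\]
I will prove this by contradiction: if it fails, one finds configurations $(x_1^{(n)},\dots,x_n^{(n)})\in A_n$ along a subsequence with $\mathcal{H}_n/n$ bounded and
$\limsup \mathcal{H}_n/n < \inf_{A_\infty}\widetilde{\mathcal{W}}$. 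Since the sequence is energy-bounded, Theorem \ref{thlbnext} applies: up to a further extraction, the associated $P_n=i_n(x_1^{(n)},\dots,x_n^{(n)})$ converges weakly to some probability $P$, which by the very definition of $A_\infty$ lies in $A_\infty$, and for which
$\liminf \mathcal{H}_n/n \ge \widetilde{\mathcal{W}}(P) \ge \inf_{A_\infty}\widetilde{\mathcal{W}}$,
contradicting the assumption. (If no energy-bounded sequence exists, $\mathcal{H}_n/n\to+\infty$ on $A_n$ and the conclusion is trivial.)

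Combining these ingredients and using $\zeta\ge 0$, I get
\[
\int_{A_n} e^{-\frac{\beta}{2}H_n} \le e^{-\frac{\beta}{2}(n^2 I(\mu_0)-(n\log n/2)\mathbf{1}_{d=2})}\, e^{-\frac{\beta}{2} n^{2-2/d}(\inf_{A_\infty}\widetilde{\mathcal{W}}-\varepsilon_n)}\int_{(\R^d)^n}e^{-\beta n\sum_i\zeta(x_i)},
\]
(where the $1/c_d$ factor of the splitting is absorbed into the passage from $\mathcal{H}_n/n$ to $\widetilde{\mathcal{W}}$). Dividing by $Z_{n,\beta}$, the leading and logarithmic terms cancel by Theorem \ref{thz}, leaving
\[
\log\mathbb{P}_{n,\beta}(A_n) \le -\frac{\beta}{2} n^{2-2/d}\bigl(\inf_{A_\infty}\widetilde{\mathcal{W}}-\xi_d\bigr) + o(\beta n^{2-2/d}) + Cn + C_{\bar{\beta}}\bar{\beta}n.
\]
Dividing by $n^{2-2/d}$ and using that $\beta n^{1-2/d}\to\bar{\beta}$, both residual error terms $Cn/n^{2-2/d}$ and $C_{\bar{\beta}}\bar{\beta}n/n^{2-2/d}$ are of order $\beta$, and can be absorbed into a single constant (still denoted) $C_{\bar{\beta}}$ satisfying $C_{\bar{\beta}}\to 0$ as $\bar{\beta}\to\infty$, yielding the claimed inequality \eqref{ldr}.

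The main obstacle is the compactness step in the contradiction argument: it requires that $\mathcal{H}_n(x_1^{(n)},\dots,x_n^{(n)})/n$ bounded indeed implies tightness of $P_n$ and access to Theorem \ref{thlbnext}; both are built into the statement of that theorem. A secondary, more bookkeeping-type difficulty is tracking the interplay between $\beta$ and $n$ in the regime $\beta\asymp \bar{\beta} n^{2/d-1}$, so that the $Cn$ entropy-type term from the $\zeta$-integral and the $C_{\bar{\beta}}\bar{\beta}n$ error from Theorem \ref{thz} both produce contributions of the correct order $\beta n^{2-2/d}$ that vanish in the large-$\bar{\beta}$ limit, as required by the statement.
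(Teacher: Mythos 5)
Your proposal is correct and follows essentially the same route as the paper: the exact splitting, the lower bound on $\log Z_{n,\beta}$ from the construction of Proposition \ref{proub} (equivalently, Theorem \ref{thz}), the reduction of the uniform bound over $A_n$ to Theorem \ref{thlbnext} applied to an almost-minimizing sequence in $A_n$ (your contradiction argument is just a rephrasing of this), and Lemma \ref{asymptozeta} for the $\zeta$-integral. The only cosmetic difference is that the paper proves the $Z_{n,\beta}$ lower bound inside the same combined proof rather than quoting Theorem \ref{thz}, which changes nothing.
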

Here $A_\infty$ is the set of accumulation points of the $P_n$'s associated to configurations in $A_n$.
Another way of phrasing this result is that in the limit $n\to \infty$
$$\widetilde{\mc{W}}(P) \le \xi_d+C_{\bar{\beta}} = \min\widetilde{\mc{W}} + C_{\bar{\beta}}$$
except with exponentially decaying probability. This means that we have a threshhold phenomenon: the Gibbs measure concentrates on configurations whose $\widetilde{\mathcal{\mc{W}}}$ is below the minimum plus $C_{\bar{\beta}}$. The threshhold tends to $0$ as $\beta \gg n^{2/d-1}$ and, in that regime,  the Gibbs measure concentrates on minimizers of $\widetilde{\mc{W}}$, a weak crystallization statement.  
This indicates that even at nonzero temperature, configurations have some order (if one believes of seeing $\mathcal{W}$ as a measure of disorder, cf. the end of Section \ref{sec54}), in the sense that their $\widetilde{\mc{W}}$ cannot be too large. 
  \begin{proof}[Proof of Theorems \ref{th1} and \ref{th4}]
 The proof is a direct consequence of the definition of $\Q$ and Theorem \ref{thlbnext}, following the same method as in Theorem \ref{LDP}.
 We start by obtaining  a lower bound for $Z_{n,\beta}$ from Proposition \ref{proub}~: let $\eps>0$ be given. By   definition of $Z_{n,\beta}$, we may write 
$$
 Z_{n,\beta} = \int_{(\R^d)^n} e^{-\frac{\beta}{2}H_n(x_1,\dots , x_n)} \, dx_1 \dots dx_n
 \ge \int_{A_n} e^{-\frac{\beta}{2}H_n(x_1,\dots , x_n) }\, dx_1 \dots dx_n
$$
 where $A_n$ is the set given by Proposition \ref{proub}.
Taking the logarithm and  inserting \eqref{bsup 3d}, we obtain
 $$
 \log Z_{n,\beta} \ge \log |A_n| - \frac{\beta}{2}\( n^2 I(\mu_0) +\Big ( \frac{n}{2}\log n\Big) \indic_{d=2} +n^{2-2/d}  (\xi_d +\eps) \) .$$
 But $\log |A_n| \ge \log  (n! \pi (r_1 )^d/n)^n \ge -C_\eps n  -C $ using Stirling's formula, where $C_\eps$ depends on $r_1$, itself depending on $\eps$.
 Inserting, we obtain the lower bound corresponding to \eqref{631}.
 
 Conversely, let $A_n$ be an arbitrary set in $(\R^d)^n$. 
 Assume  $(x_1, \dots, x_n)$ minimizes (or almost minimizes) $H_n$ over $A_n$, then Theorem \ref{thlbnext} gives us that 
 \begin{multline*}
  H_n(x_1, \dots, x_n) - 2n \sum_{i=1}^n \zeta(x_i)  + \Big(\frac{n}{2} \log n\Big) \indic_{d=2}\\ \ge n^{2-2/d}\( \widetilde{\mathcal{W}}(P) +o_n(1)\)\ \ 
 \ge n^{2-2/d}\(  \inf_{P \in A_{\infty} }\widetilde{\mc{W}}(P)+o_n(1)\).\end{multline*}
 It follows that, inserting this estimate into \eqref{gibbsagain},  we obtain an upper bound on the probability of $A_n$ by writing
\begin{multline*}
 \log \Q(A_n) \le - \log Z_{n,\beta} \\- \frac{\beta}{2} \( -  \Big(\frac{n}{2} \log n\Big) \indic_{d=2}+ n^{2-2/d} \( 
  \inf_{P \in A_{\infty} }\widetilde{\mathcal{W}}(P)+o_n(1)\) \) \\
 + \log 
  \int_{(\R^d)^n}e^{-\beta n \sum_{i=1}^n \zeta(x_i)}  \, dx_1 \dots dx_n.\end{multline*}
  Inserting the lower bound on $\log Z_{n,\beta}$ obtained above, and the result of 
   Lemma~\ref{asymptozeta} (since our assumptions ensure that $\beta n \to \infty$ as $n \to \infty$) 
 we find 
 \begin{multline}\label{qan}
 \log \Q(A_n) \le - \log Z_{n,\beta} -  \frac{\beta}{2}  n^{2-2/d} \( -\xi_d +
  \inf_{P \in A_{\infty} }  \widetilde{ \mathcal{W}}(P) +o_n(1)\)  \\
 + n (|\omega|+o_n(1)).\end{multline}
Taking in particular $A_n = (\R^d)^n$, we have $\Q(A_n)=1$,  and we can check that \begin{multline*}
A_\infty\subset  \Big\{ P| E\in \bar{\mathcal{A}}_{\mu_0(x)}  P\text{-.a.e, }  \text{and the first marginal of $P$ is the normalized } \\
\text{Lebesgue measure on } \ \Sigma\Big\}\end{multline*} by Theorem \ref{thlbnext}. It follows as previously  that 
$$\inf_{P \in A_{\infty} }\widetilde{ \mathcal{W}}(P)  = \frac{1}{c_d} \int\min_{\bar{\mathcal{A}}_{\mu_0(x)}}  \mathcal{W}  \, dx= \xi_d$$ by the change of scales formula.
Inserting into \eqref{qan}, 
we deduce the upper bound for $\log Z_{n,\beta}$ stated in \eqref{631}.  This completes the proof of \eqref{631} and of Theorem \ref{th1}. 
Using then again \eqref{qan} for a general $A_n$, and plugging in \eqref{631}, we obtain the result of Theorem \ref{th4}.
 \end{proof}
 The analogues of Theorems \ref{th1}, \ref{thz} and \ref{th4} are obtained in \cite{ps} for the more general case of Riesz interaction kernels $g(x)=|x|^{-s}$ with $d-2\le s <d$.
 
 In \cite{lebles} we are able to go further and obtain, for the Riesz as well as the Coulomb interaction kernel,  a complete next order Large Deviations Principle  on the limiting objects $P$, with rate function 
 $$\widetilde{ \mathcal W }(P)+ \frac{1}{\bar\beta} Ent (P)$$ where $Ent (P)$ is a specific relative entropy with respect to the Poisson process. This implies  
 the existence of  an exact asymptotic expansion of $\log Z_{n,\beta}$ up to order $n$, hence of a thermodynamic limit, as well as several other results.
 
 %%%%%%%%%%%%%%%%%%%%%%%%%%%%%%%%%%%%%%%%%%CHAP 7
%%%%%%%%%%%%%%%%%%%%%%%%%%%%%%%%%%%%%%
%%%%%%%%%%%%%%%%%%%%%%%%%%%%%%%%%%%%
\chapter{The Ginzburg-Landau functional: presentation and heuristics}
\label{glheuris}

In this chapter, we present some nonrigorous heuristics on the Ginzburg-Landau model that allow to see how and when  vortices are expected to form in minimizers.
A detailed presentation of the functional and of the related physics was already given in \cite[Chap. 2]{livre}, so we will try here to focus more on the new additions compared to that reference. We also refer to the classic book of  \cite{dg}.

\section{The functional}
Let us start by  recalling the expression of the  Ginzburg-Landau functional that was introduced in \eqref{GL} in Chapter \ref{chap-intro}:
\be \label{GL2e}
G_{\eps}(u, A)  = \f{1}{2} \int_{\Omega} |(\nabla - iA)u|^2 + |\curl  A - \he|^2 + \f{(1 - |u|^2)^2}{2\eps^2},
\ee
and the Ginzburg-Landau equations 
\begin{equation*}
(\text{GL})\left\lbrace 
\begin{array}{cc}
- \nabla^2_Au = \f{1}{\eps^2} u(1- |u|^2) & \text{in} \ \om
 \\ [2mm]
- \nabla^{\perp}h = \langle i u, \nabla_A u \rangle & \ \text{in} \ \om
\end{array} \right.
\end{equation*} with boundary conditions
\begin{equation*}
\left\lbrace 
\begin{array}{cc}
\nab_A u \cdot \nu=0 & \text{on} \ \bo\\
h=\he& \text{on} \ \bo.\end{array}\right.
\end{equation*}
The precise meaning of the quantities appearing here was given in Chapter \ref{chap-intro}. It is an important fact that this is a $\mathbb{U}(1)$ gauge theory, i.e. all the physically meaningful quantities, such as  the  energy  and the equations (GL) are invariant under the gauge-transformations
$$
\left\lbrace \begin{array}{l}
 u \mapsto u  e^{i\Phi}\\
 A\mapsto A + \nab \Phi \end{array}\right.$$
where $\Phi$ is any smooth real-valued function. One may easily check that, in addition to the energy,  gauge-invariant quantities include $|u|^2$ which represents the density of superconducting electrons, the magnetic field $h= \nab \times A$, and the superconducting current $j:= \langle iu , \nab_A u \rangle$.
 For more general gauge theories in theoretical physics, in particular non-Abelian ones, we refer to \cite{jt,manton}.
\section{Types of states, critical fields}

\subsection{Types of solutions and phase transitions}
Three types of solutions (or states) to (GL) can be found:
\begin{enumerate}
\item  the normal solution : $(u\equiv 0, \curl A \equiv \he)$.  This is a true solution to $(GL)$ and its energy is  very easily computed: it is  $ \frac{|\om|}{4\eps^2}$.
\item
the Meissner solution (or superconducting solution) :  $(u \equiv 1, A\equiv 0)$, and all its gauge-transforms. This is a true solution if $\he=0$, and a solution close to this one (i.e. with $|u|\simeq 1$ everywhere) persists if $\he $ is not too large.
Its energy is approximately
$G_\eps(1,0)=\frac{\he^2}{2}|\om|$.
By comparing these energies, we see that the Meissner solution is more favorable when $\he$ is small, while the normal solution is more favorable when $\he$ is large enough, more precisely when $\he > \frac{1}{\eps \sqrt{2}}$.
\item the vortex solutions: there is another state, with vortices, called the {\it mixed state} where normal and superconducting phases co-exist, and which is more favorable for  intermediate values of $\he$.
\end{enumerate}

The physics gives us the following more precise results. 
There are three main critical values of $\he$ or {\it critical
fields} $\hci$, $\hcii$, and $\hciii$, for which phase-transitions
occur. 
\begin{itemize}
\item For $\he<\hci$ there are no vortices and  the energy minimizer is  the superconducting state $(u \equiv 1, A\equiv 0)$. (This is a true solution if $\he=0$, and a solution close to this one (i.e. with $|u|\simeq 1$ everywhere) persists if $\he $ is not too large.)
It is said that the superconductor ``expels" the applied magnetic field, this is the ``Meissner effect", and the corresponding solution is called the Meissner solution. 
\item  For $\he= \hci$, which is of the order of $\lep$ as $\eps \to 0$,   the first vortice(s) appear.
\item For $\hci<\he<\hcii$ the superconductor is in the ``mixed phase" i.e. 
there are vortices, surrounded by superconducting phase where $|u|\simeq 1$.
 The higher $\he>\hci$, the more vortices
there are. The vortices  repel each other so they tend to arrange in
 triangular Abrikosov lattices in order to minimize their
repulsion.
\item  For $\he= \hcii\sim \frac{1}{\eps^2} $, the vortices are so densely packed that they
 overlap each other, and  a second phase transition
 occurs, after which $|u|\sim 0$ inside the sample, i.e. all
 superconductivity in the bulk of the sample is lost.
\item For $\hcii<\he<\hciii$
 superconductivity
persists only  near the boundary, this is called {\it surface
superconductivity}.  More details and the mathematical study of this transition are found in \cite{fh} and references therein.
\item 
 For $\he> \hciii=O(\frac{1}{\eps^2}) $ (defined
in decreasing fields), the sample is completely in the normal
phase, corresponding to the ``normal" solution $u\equiv 0, h\equiv \he$ of (GL). See \cite{gp} for a proof. 
\end{itemize}
The picture below represents the phase diagram for two-dimensional superconductors which is found in physics textbooks.

\begin{figure}

\begin{center}
\includegraphics[scale=0.6]{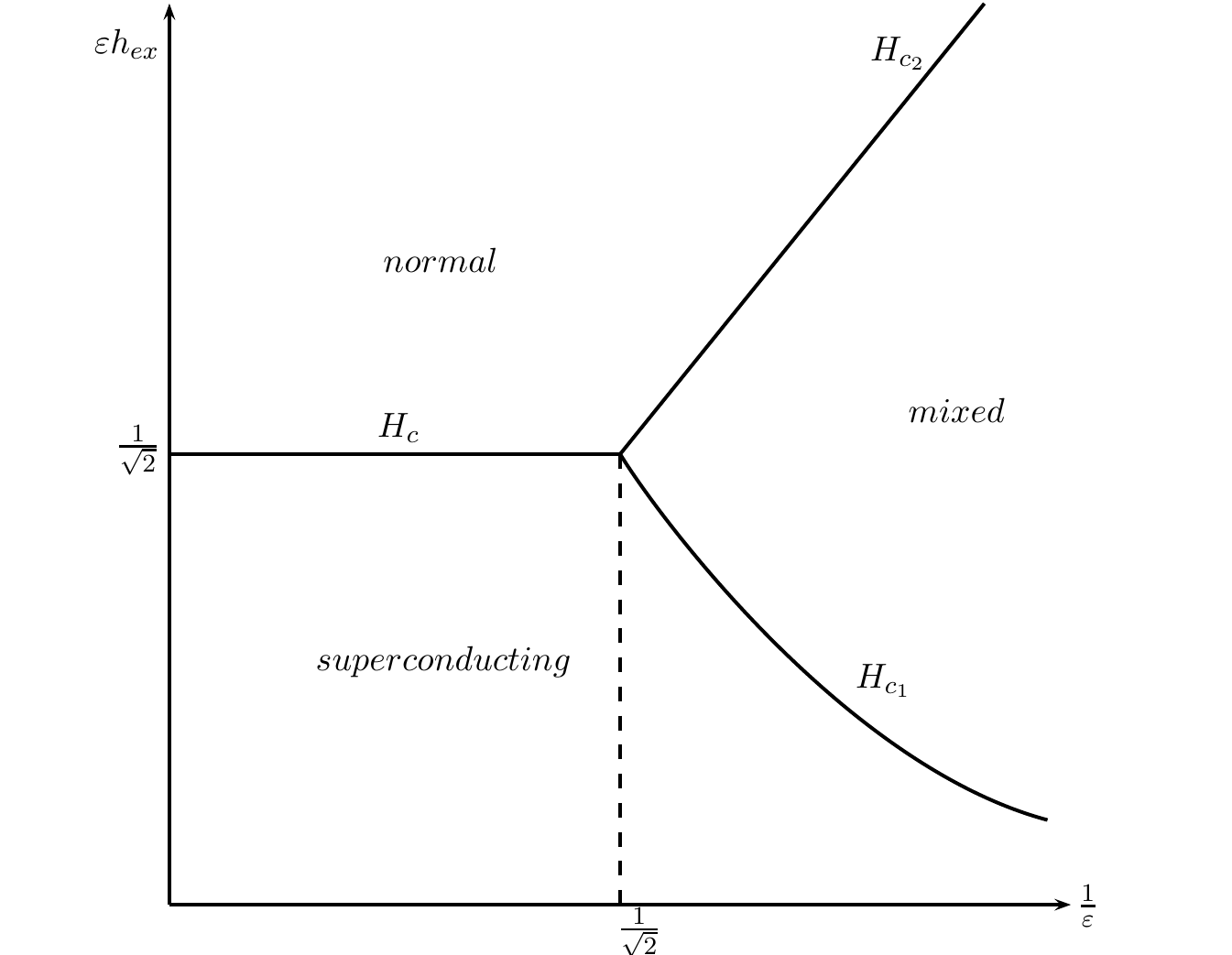}
\end{center}
 \caption{Phase diagram in $\mr^2$}
 \end{figure}

\subsection{Vortex solutions}\label{722}

We have already mentioned in Chapter \ref{chap-intro} that vortices are isolated zeroes of the order parameter $u$, and that they come with an integer topological {\it degree}. 

 When $\eps$ is small, the potential term in \eqref{GL2e} implies that  any discrepancy between $|u|$ and $1$ is strongly penalized,   and a scaling
argument hints that $|u|$ is different from $1$ only  in regions of
characteristic size $\eps$.
 A typical vortex centered at a point $x_0$
``looks like" $ u= \rho e^{i\varphi} $ with $\ro (x_0)=0$ and
$\ro(x)= f(\frac{|x-x_0|}{\eps}) $ where $f(0)=0$ and $f$ tends to $1$
as $r \to + \infty$, i.e. its characteristic core size is $\eps$,
and
\be\label{degvor}\frac{1}{2\pi} \int_{\pa B(x_0,R\eps)} \frac{\pa \varphi}{\pa \tau} = d \in \mz\ee is
its degree (note that the phase $\vp$ can only be understood as a ``multi-valued function"). For example
$\vp= d \theta$ where $\theta $ is the polar angle centered at
$x_0$ yields a vortex of degree $d$.

True radial solutions in $\mr^2$ of the Ginzburg-Landau equations
 of degree $d$, of
 the form
$$u_d(r, \theta)=f_d(r)e^{id \theta}, \quad  A_d(r, \theta)=
g_d(r) (-\sin \theta, \cos\theta) $$ have been shown to exist  \cite{plohr,plohr2,bergerchen}. In \cite{gustsigal}, it was shown that they are stable for $\eps<1/\sqrt{2}$ and $|d|\le 1$ and for $\eps \ge 1/\sqrt{2}$, unstable otherwise.

Let us compute the approximate energy of the rescaled version of such a solution, in a bounded size domain (say $B_R$)~: letting $\widetilde{u_d}(r,\theta)=u_d(\frac{r}{\eps}, \theta)$ and $\widetilde{A_d}(r,\theta)= \frac{1}{\eps} A_d(\frac{r}{\eps}, \theta)$, 
\begin{eqnarray}\nonumber
\hal \int_{B_R} |\nab_{\widetilde{A_d}}\widetilde{u_d}|^2 & = & 
\hal \int_{B_{R/\eps}   }|\nab_{A_d}u_d|^2\\ \nonumber
& = & 
 \hal \int_{B_{R/\eps}   } |\nab f_d|^2 + f_d^2 |\nab (d\theta)- A_d|^2 \\
\nonumber
& \approx & \hal \int_{B_{R/\eps}} |\nab f_d|^2 +\hal \int_0^{R/\eps}   f_d^2 \left(\frac{d}{r}\right)^2 2\pi r \, dr\, d\theta\\ \label{costvor}   & = & \pi d^2 \int_1^{R/\eps} \frac{dr}{r}= \pi d^2 \log \frac{R}{\eps} ,\end{eqnarray}
as $\eps \to 0$. Here we have used the fact that $f_d$ is expected to have a cut-off effect in balls of lensgthscale $\eps$ near the vortex center (here the origin). The error in the above estimate is in fact $O(1)$ as $\eps \to 0$.
We thus see that with such an ansatz, in a solution with vortices, each of them ``costs"  at leading order an energy $\pi d^2 \lep$, with $d$ its degree.
%A more careful computation of the whole energy $G_\eps(u_1,A_1)$, and a comparison with $G_\eps(1,0)$ shows that  the energy of the vortex solution becomes smaller than that of  the Meissner solution when $\he \sim c \lep$ (we shall see a more precise estimate of $c$ below), thus leading to expect a critical value $\hci$ of transition from superconducting (Meissner) to mixed state at $\hci=c \lep$.

In a bounded domain, there are indeed solutions with several such vortices glued together, for example arranged along a triangular lattice (their existence is proved at least as a bifurcation from the normal solution, see \cite{c,almog4}).

\subsection[Related models]{Related models: superfluids and rotating Bose-Einstein condensates}

For comparison, it is interesting to mention the energy functional corresponding to the Gross-Pitaevskii model (in the so-called Thomas-Fermi regime) of superfluids such as Bose-Einstein condensates, in rotation with velocity vector $\Omega$, and in a confining potential $V$ (cf. \cite{fetter} for general reference and \cite{lseir} for the derivation from quantum mechanics)
$$\text{GP}(u)=\int_{\mr^2}|\nab u|^2 - (\Omega \times x)\cdot \langle iu,\nab u \rangle +V(x)|u|^2 +G|u|^4$$
which, after completing the squares can equivalently be written as
\be 
\text{GP} (u)=  \int_{\mr^2} |\nab u- i \Omega \times x u |^2 + (V(x)- \Omega^2) |u|^2  + G |u|^4
\ee
or
\be
\text{GP}_\eps(u)= \int_{\mr^2} |\nab u- i \Omega \times x u |^2 + \frac{(V_{\mathrm{eff}}(x)  - |u|^2)^2}{2\eps^2}  \ee
for some effective potential $V_{\mathrm{eff}}$.
The well-known  analogy with the Ginzburg-Landau model is readily visible: the role of $A $ is replaced by that of the angular momentum vector-field $\Omega \times x$, and that of $\he$ by $\Omega$. The  effective potential $V_{\mathrm{eff}}$ (which depends on $V, \Omega, $ and $G$) does not create significant differences from the constant $1$ when $\eps$ is small compared to other characteristic constants. The limit $\eps \to 0$ is called in this context the Thomas-Fermi regime. 
As a result  of this strong analogy, the techniques developed for Ginzburg-Landau adapt well to treat such functionals, cf. the review article \cite{nicolas}    and references therein. The analogy functions  well for rotation angles which are not too large, i.e. for the equivalent of the regime of $\he$ (much) smaller than the  second critical field $\hcii$, but significantly breaks down after that, i.e. the physics is very different for very high rotation (but the mathematical tools are still useful).

\section{Heuristics}
\subsection{Rough heuristics}
Let us examine the competition between all the (nonnegative) terms appearing in \eqref{GL2e}. We will write $u$ in trigonometric form as $u=\ro e^{i\vp},$ with again $\vp$ a ``multi-valued" function.
For a configuration with vortices, we have the formal relation 
\be\label{nabphi}
\nab \times \nab \vp = 2\pi \sum_i d_i \delta_{a_i}\ee 
where the $a_i$'s are the vortex centers, and the $d_i\in \mz$ their degrees. This relation is  true in the sense of distributions.  To check it, it suffices to test against a smooth function and use \eqref{degvor} (note also that $\nab \times \nab =0$ for true functions).

As we have seen, the term $\int \f{ (1-|u|^2)^2}{2\eps^2}$ prefers Meissner states $\ro \simeq 1$, or states with vortices of lengthscale scale $\eps$, while it disfavors the normal state $\ro=0$.
By explicit computation, the  quantity $ |\nab_A u|^2 $ is   in trigonometric form 
\be \label{nabautrig}
  |\nab_A u|^2=  |\nab \ro|^2 +\ro^2 |\nab \vp - A|^2.\ee
The term $\io |\nab_A u|^2$ thus favors $\ro$ to be constant and 
\be\label{nabp}
\nab \vp \approx  A.\ee
The term $\int |\nab \times A - \he|^2$ ``prefers" the induced field $h = \nab \times A$ to ``align" with the applied field $\he$:
\be\label{ah}
\nab \times A\approx \he
\ee
Taking the curl of \eqref{nabp} and combining with \eqref{nabphi} and \eqref{ah} leads to the formal relation
\be\label{close} 2\pi \sum_i d_i \delta_{a_i} \approx  \he,\ee
which indicates at least heuristically that when $\he$ is not small, there should be vortices (otherwise the left-hand side would vanish).
The question of understanding what 
\eqref{close} exactly means and  in what sense, and what configurations of points $a_i$ satisfy this assertion, is the core of the matter of our study. Based on what we have seen in previous chapters, we can expect that    the configuration of vortex points $a_i$ (with degrees $d_i=1$)  which best approximates the uniform distribution of density   $\he$ in \eqref{close}, is the triangular Abrikosov lattice of density $\he$. 
 We will see that this becomes true only when $\he $ is large enough, because as seen above, it  costs a fixed amount  $\pi d^2 \lep$ in the term  $\int |\nab \vp|^2$ to create one vortex. This way, the size of $A$, which is of order $\he$ by \eqref{ah}, has to become significantly larger than $\lep$ for this heuristics to be completely correct. Below this threshold, boundary effects are also important, as we shall see, and the true optimal distribution of the vortices is a constant distribution in a subdomain of $\Omega$, analogous to the equilibrium measure for Coulomb gases. We will next give more precise heuristics related to these facts, and give a complete proof in Chapter \ref{glleading}.

\subsection{The vorticity measure and the London equation}
The precise  meaning to give to relations of the form \eqref{nabphi} and \eqref{close} is given via the {\it vorticity measure} (or vorticity) of a configuration, defined by 
\be\label{vorticity}
\mu(u, A) = \curl \langle iu, \nab_A u\rangle + h\ee
which is a gauge-invariant quantity. It was first introduced in this form in \cite{ss3}, and is the gauge-invariant analogue of the {\it Jacobian determinant} of  $u$ seen as a map from $\mr^2 $ to $\mr^2$ in the outlook popularized by Jerrard and Soner \cite{js}, itself previously viewed  in \cite{bbh} as the Hopf differential of the map $u$. It is also the analogue of the vorticity of a fluid. 
Note that in trigonometric representation  we have 
$$\mu(u, A) = \curl (j+ A) = \curl (\ro^2  (\nab \vp-A)+A) \approx \curl \nab \vp$$
at least when $\ro $ is close to $1$, as is expected in the limit $\eps \to 0$.
This is why, in view of \eqref{nabphi}, we may write the heuristic relation 
\be \label{jacheur}
\mu(u, A)\approx 2\pi \sum_i d_i \delta_{a_i}.\ee
This relation holds in the asymptotics of $\eps \to 0$, and its proper meaning will be given in the next chapter. Suffice it for now to say that it is more correct to replace the sum of Dirac masses in the right-hand side of \eqref{jacheur} by Diracs smeared out at the scale $\eps$ -- characteristic lengthscale of the vortices -- that we denoted $\delta_{a_i}^{(\eps)}$, as alluded to in \eqref{london}, and as done for the Coulomb gas. 
\medskip

Turning again to the functional \eqref{GL2e}, one may observe that for a fixed $u$, the energy $G_\eps$ is a  positive quadratic function of $A$, thus always has a unique critical point in terms of $A$,  and that critical point is  a minimum. We may thus always consider that without loss of generality, $G_\eps$ has been minimized with respect to $A$, this decreases the energy and does not affect the zeroes of $u$, i.e. the vortices and their degrees, which are the objects we wish to understand.  This way, we may assume that the Euler-Lagrange equation associated with this minimization, which is the second relation in (GL), is satisfied, together with its boundary condition: 
\be\label{gl2}\left\lbrace \begin{array}{ll}
-\nab^\perp h = j =\langle iu, \nab_A u \rangle & \text{in} \ \Omega\\
h=\he & \text{on} \ \pa \Omega.\end{array}\right.
\ee
Taking the curl of this equation, one obtains  $-\Delta h = \curl j$, which we can rewrite, by definition of $\mu$,  as 
\be\label{eqlondon}
\left\lbrace \begin{array}{ll}
-\Delta h + h =  \mu(u, A) & \text{in} \ \Omega\\
h=\he & \text{on} \ \pa \Omega.\end{array}\right.\ee
This is exactly the rigorous version of the  London equation \eqref{london}, which directly relates the vorticity $\mu$ and the induced magnetic field $h$. Another way of writing it is that 
\be
h(x)=\he+ \io G_\Omega(x,y) \, (\mu(u,A)-\he)(y) \, dy
\ee
where $G_\Omega $ is a Green-type function (or more correctly a Yukawa potential) of the domain with Dirichlet boundary condition, solution to 
\be\label{eqgreen}
\left\lbrace \begin{array}{ll}
-\Delta G_\Omega (\cdot, y) + G_\Omega(\cdot, y) =  \delta_y & \text{in} \ \Omega\\
G_\Omega(\cdot, y) =0 & \text{on} \ \pa \Omega.\end{array}\right.\ee

 This  shows that $h$ can be mathematically seen as the potential generated by the vorticity distribution $\mu(u, A)$ via $G_\Omega$. This kernel 
 depends on the domain, but its leading contribution is the Coulomb kernel in dimension $2$, $-\frac{1}{2\pi}\log |x|$, hence the origin of the analogy with the 2D Coulomb gas, as explained in Chapter \ref{chap-intro}.

Note that when the  vorticity vanishes, the London equation  reduces to 
\be\label{eqlondon0}
\left\lbrace \begin{array}{ll}
-\Delta h + h =  0 & \text{in} \ \Omega\\
h=\he & \text{on} \ \pa \Omega,\end{array}\right.\ee
hence (up to dividing by $\he$)  we can expect a particular role  to be played by the function $h_0$, solution of 
\be\label{h0}
\left\lbrace \begin{array}{ll}
-\Delta h_0 + h_0 =  0 & \text{in} \ \Omega\\
h_0=1 & \text{on} \ \pa \Omega,\end{array}\right.\ee
which depends only on the domain $\Omega$ and exhibits exponential decay away from $\pa \Omega$.
The situation expected when there are no vortices is to have $h\approx \he h_0$, which physically corresponds to the Meissner effect, for which it is said that the applied magnetic field is expelled by the superconducting sample and only penetrates in it in a layer localized near the boundary (in our normalization, this layer has characteristic lengthscale $1$, but physically, it is the so-called {\it penetration depth}).

\subsection[Formal derivation of the first critical field]{Approximation to the energy and formal  derivation of the first critical field}
We may now justify \eqref{apprx}, which we recall here~:
\be\label{appr2}
G_\eps(u, A)\approx \hal \io |\nab h|^2 + |h-\he|^2
\ee
with $h=\nab \times A$ solving \eqref{eqlondon}.
 Taking the norm of \eqref{gl2} we may compute that  in trigonometric form
$|\nab h|^2 = |j|^2= \ro^4 |\nab \vp - A|^2 $. Comparing with
   \eqref{nabautrig} we thus have 
$$\io |\nab_A u|^2 =\io |\nab \ro|^2 + \frac{|\nab h|^2}{\ro^2} . $$
But for any solution of $\text{(GL)}$, it holds that $|u|\le 1$ (this can be checked using the maximum principle on the equation satisfied by $|u|$). We may thus bound from below 
$$\io |\nab_A u|^2 \ge \io |\nab h|^2 $$ and we expect almost equality in view of the heuristic relation $\rho \approx 1$. The difference will turn out to be indeed negligible  as $\eps \to 0$ as a by-product of our analysis, based on comparing ansatz-free lower bounds and upper bounds obtained by explicit constructions.

Once \eqref{appr2} is established, deriving the first critical field can be done formally: at the point where the first vortices appear, we can expect that the induced magnetic field is well approximated to leading order  by the magnetic field generated in the situation with no vortices, i.e.  $\he h_0$. One may then expand around that function by setting $h=\he h_0 +h_{1,\eps}$ where $h_{1,\eps}$ is seen as a correction term, insert this into \eqref{appr2} and expand in terms of this splitting.  
This yields
\begin{eqnarray*}
G_\eps (u_\eps,A_\eps) &  \approx \D\f{\he^2}{2} \D\io |\nab h_0|^2 + |h_0-1|^2 + \hal\D \io |\nab h_{1,\eps}|^2 +|h_{1,\eps}|^2\\ &
 + \he\D \io  (-\Delta h_{1,\eps}+h_{1,\eps}) (h_0-1)
\\ & =  \D\f{\he^2}{2}\D \io |\nab h_0|^2 + |h_0-1|^2 + \hal\D \io |\nab h_{1,\eps}|^2 +|h_{1,\eps}|^2 \\
& +\he \D \io (h_0-1) \mu(u_\eps, A_\eps),
\end{eqnarray*}
where for the cross-term we have used an integration by parts, and \eqref{eqlondon} with \eqref{h0}.
With the approximate relation \eqref{jacheur} and estimating $\hal \io|\nab h_{1,\eps}|^2 +h_{1,\eps}^2 $ as the cost to create a vortex, i.e. $\pi \sum_i d_i^2 \lep$ by the heuristic of Section \ref{722}, we are led to 
$$G_\eps (u_\eps,A_\eps)  \approx \f{\he^2}{2} \io |\nab h_0|^2 + |h_0-1|^2+\pi \sum_i d_i^2 \lep + 2\pi \he\sum_i d_i (h_0-1)(a_i).$$
The energy of a configuration with vortices thus becomes smaller than that of the vortex-free Meissner solution if we can have 
$$\pi \sum_i d_i^2 \lep +2\pi \he\sum_i d_i (h_0-1)(a_i)\le 0.$$
Noting that $h_0-1\le 0 $ in $\Omega$ by the maximum principle applied to the equation \eqref{h0}, a quick examination shows that this can be first achieved when 
$$\he \ge \frac{\lep}{2\max_\Omega |h_0-1|}$$
and with vortices that have degrees $d_i=+1$, located at the point(s) where $h_0-1$ achieves its minimum (or equivalently $|h_0-1|$ achieves its maximum).
This gives the leading order value of the first critical field 
\be\label{vhci} \boxed{\hci\sim \frac{1}{2\max_\Omega |h_0-1|}\lep \quad \text{as}\  \eps \to 0.}\ee
This heuristic is in fact correct, it first appeared (in a slightly different but equivalent form) in \cite{br2}, it was then justified rigorously in \cite{s1}. A proof of the most precise result can be found in \cite[Chap. 12]{livre}.

This expansion of $\hci$ confirmed and made more precise the expansion known to physicists, e.g.  in \cite{dg}, by giving the exact prefactor in terms of the domain $\Omega$, and locating the points of nucleation of the first vortices. 
As soon as there is more than one vortex accumulating near one of the optimal point(s), the Coulomb repulsion between vortices starts to kick in, and slightly delays the onset of more vortices. Again for details we refer to \cite{livre}. 

 We will see in Chapter \ref{glleading} how to derive more information  about the number and optimal distribution of vortices above $\hci$.

%%%%%%%%%%%%%%%%%%%%%%%%%%%%%%%%%%%%%%%%%%%%%%%%%%%%%%%%%%%%%%%%%%%%%%%%%%%%%%%%%%%%%%CHAP 8
%%%%%%%%%%%%%%%%%%%%%%%%%%%%%%%%%%%%%%%%%%%%%%%%%%%%%%%%%%%%%%%%%%%%%%%%%%%%%%%%%%%%
\chapter{Main mathematical tools for Ginzburg-Landau}\label{toolsgl}
Mathematicians started to get interested in the  Ginzburg-Landau  model mostly in the 90's, with Berger-Chen, Chapman, Rubinstein, Schatzman,  Du, Gunzburger, Baumann, Phillips, cf. e.g. to the review papers  \cite{ch,dugu}. Then, Bethuel-Brezis-H\'elein \cite{bbh} were the first to introduce  tools to  systematically study vortices, their exact profile and their asymptotic energy (with important input from Herv\'e-Herv\'e \cite{hh} and Mironescu \cite{miro}). They did it in the simplified context of the  two-dimensional  Ginzburg-Landau equation not containing the magnetic gauge (set $A\equiv 0$, $\he \equiv 0$ in (GL)), and  under an a priori bound $C\lep$ on the energy, which allows only for a fixed number of vortices as $\eps \to 0$. The analysis of that model was completed by many works, including the precise study of solutions by Comte-Mironescu \cite{cm1,cm2}, and the monograph of Pacard-Rivi\`ere \cite{pr}.  
 
 The analysis of the simplified model was   adapted to the model with gauge but with a different boundary condition  and still the same a priori bound, by Bethuel and Rivi\`ere \cite{br,br2}, still in dimension 2. The three-dimensional (more physical) versions were first studied in Rivi\`ere \cite{ri1}, and later in \cite{lr,js,bbo,bjos}, among others. For a (slightly outdated) review, we refer to  \cite[Chap. 14]{livre}.

An important  challenge after these works was to be able to treat the case where the a priori bound  is released and the number of vortices blows up as $\eps \to 0$, as really happens in  the full model with magnetic field.  There are two main related technical tools that have been widely used and applied in such a situation. The first is the ``vortex balls construction" method introduced independently by Jerrard \cite{jerrard} and Sandier  \cite{sandier}, which allows to get completely general lower bounds for the energy of a configuration in terms of its vortices (regardless of their number and degrees).
 The second is the so-called ``Jacobian estimate" which gives a quantitative estimate  and meaning for \eqref{jacheur}, i.e. relates 
 the vorticity of an arbitrary configuration, as defined in \eqref{vorticity} (or the Jacobian in the gauge-independent version) to its underlying vortices.

\section{The ball construction method}
As mentioned, the ball construction method was first introduced in two slightly different variants in \cite{sandier} and \cite{jerrard}, and it was reworked and  improved over the years e.g. \cite{ss2,livre,compagnon,jspirn}, and extended to higher dimensions \cite{js,sandier3d}. It would be too long here to prove the best-to-date results, but we will give an idea of the method and a statement of  results. The main question in the end, for what we need here,  was to obtain estimates on the energy  
that allow for only an error of a constant per vortex.

\subsection{A sketch of the method}

 The method consists in starting by understanding  lower bounds for unit-valued complex functions in the plane. 
 \subsubsection{A lower bound on an annulus}
 Assume that $u$ is a (complex-valued) function mapping an annulus (say centered at the origin) to the unit circle, in other words $|u|=1$ in $B_R\backslash B_r$.
 If $u$ is sufficiently regular (say, continuous, for more refined assumptions, see \cite{bn1,bn2} and references therein), we can define its degree as the integer
 $$d=deg(u, \pa B_t):= \frac{1}{2\pi} \int_{\pa B_t} \langle iu, \frac{\pa u}{\pa \tau}\rangle = \frac{1}{2\pi} \int_{\pa B_t} \frac{\pa \vp}{\pa \tau},$$
 which is constant over $t\in [r, R]$ and where we have written $u=e^{i\vp}$ for $\vp$ some real-valued lifting of $u$ (for questions of  existence and regularity of a lifting see \cite{brezismiro} and references therein).
We may then write 
\begin{eqnarray}
\int_{B_R \backslash B_r} |\nab u|^2 & = & \int_{B_R\backslash B_r} |\nab \vp|^2 
\ge \int_r^R \int_{\pa B_t} \left|\frac{\pa \vp}{\pa\tau} \right|^2 \, dt\\
& \ge & \int_r^R \left( \int_{\pa B_t}  \frac{\pa \vp}{\pa \tau}\right)^2 \frac{dt}{2\pi t}
\end{eqnarray}
where the second relation follows by an application of Cauchy-Schwarz's inequality (note the similarity with the proof of Lemma \ref{lem:fluctu charge}).
We then recognize the degree $d$ and may write 
\be\label{bb}
\int_{B_R \backslash B_r} |\nab u|^2\ge \int_r^R \frac{4\pi^2 d^2}{2\pi t}\, dt= 2\pi d^2  \log \frac{R}{r},\ee
and there is equality if and only if $\frac{\pa \vp}{\pa \tau}$ is constant on each circle $\pa B_t$, which amounts  in the end to $u$ being of the form $e^{i(\theta+\theta_0)}$ in polar coordinates centered at the center of the annulus.
The lower bound \eqref{bb} is general and is the building block for the theory.
It does show how a vortex of degree $d$ induces a logarithmic cost, as in \eqref{costvor}.

In a Ginzburg-Landau configuration  with vortices, we will not have $|u|=1$ everywhere, but we can expect that $|u|\approx 1$ except in small regions of scale $\eps$ around the vortex cores. We can expect to be able to localize the ``bad regions" where $|u|$ is far from $1$, which must contain all the vortices,  in balls of size  $C\eps$. We may then center around each such ball an annulus of inner radius $C\eps$ and outer radius $R$ (the largest possible so that it does not intersect any other vortex), and then the estimate \eqref{bb} yields on such an annulus a lower bound
by $\pi d^2 \log \frac{R}{C\eps}\sim \pi d^2 \log \frac{1}{\eps}$  at leading order as $\eps \to 0$.
 If we can build such annuli that are disjoint, then we may add the lower bounds obtained this way  and obtain a global lower bound of the form 
 $$\int |\nab u|^2 \ge \pi \sum_i d_i^2 ( \lep  +O(1))\quad \text{as } \eps \to 0$$
 where $d_i$ are the degrees of the vortices.
 Two questions remain: first  to find an algorithm to build such {\it disjoint} annuli in some optimal way, and second to handle the fact that we do not really have $|u|=1$ outside of small balls, but rather $|u|\approx 1$, with a control via the energy term $\io\frac{(1-|u|^2)^2}{2\eps^2}.$
 These questions are answered in a slightly different way 
  by both the methods of \cite{jerrard} and \cite{sandier}; we now give the main elements.
  
 \subsubsection{Construction of initial balls}
 To initiate the ball construction, one does need some weak upper bound on the energy, of the form 
 $$G_\eps(u,A) \le C\eps^{\alpha -1 }, \quad \alpha \in (0,1)$$
 which implies 
 \be \label{contrro}
 \io |\nab \ro|^2 + \frac{(1-\ro^2)^2}{2\eps^2} \le C \eps^{\alpha -1}\ee
 with $\ro = |u|$.
 This control implies, for $0<\delta <1$,  a control of the total perimeter of the bad set $\{\ro \le 1- \delta\}$ via the co-area formula (cf. \cite{eg}), this is the argument used by Sandier: by Cauchy-Schwarz, we have 
 $$\io |\nab \ro | (1-\ro^2) \le C \eps^{\alpha} $$
 and the left-hand side is equal to 
 $$ \int_0^{+\infty} (1-t^2) \mathcal{H}^1(\{ |\ro(x)| =t\}) \, dt,$$ where $\mathcal{H}^1$ denotes the one-dimensional Hausdorff measure. 
 The upper bound thus allows to find many level sets $\{\ro \le 1-\delta\}$, with $\delta$ as  small as  a power of $\eps$, whose perimeter is small. A compact set of small perimeter can then be covered by disjoint closed  balls $B_i$ of radii $r_i$, with $\sum_i r_i $ controlled by that perimeter.
 
 In Jerrard's construction, the initial balls are obtained differently. The use of the co-area formula is replaced by the following lemma, based on elementary arguments:
 \begin{lemme}[Lower bound on circles \cite{jerrard}, Lemma 2.3]\label{lemmjcercle}
 Letting $\ro $ be a real-valued function defined over $B(x,r)\subset \mr^2 $ with $ r\ge \hal \eps$, if $m= \min_{\pa B(x,r)} \ro(x)$, we have 
 $$\int_{\pa B(x, r)}|\nab \ro|^2+ \frac{(1-\ro^2)}{2\eps^2} \ge c_0 \frac{(1-m)^2 }{\eps}$$
 for some universal constant $c_0>0$. \end{lemme}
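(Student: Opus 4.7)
The plan is to parameterize $\pa B(x,r)$ by arclength, reducing the estimate to a one-dimensional inequality for the periodic function $f(s) := \ro(x + r(\cos(s/r), \sin(s/r)))$ on $[0, 2\pi r]$. Since the tangential component of $\nab \ro$ is bounded by $|\nab \ro|$, setting $a := \int_0^{2\pi r} (f'(s))^2\, ds$ and $b := \int_0^{2\pi r} (1-f^2(s))^2\, ds$, one has
\begin{equation*}
\int_{\pa B(x,r)} |\nab \ro|^2 + \frac{(1-\ro^2)^2}{2\eps^2}\ \ge\ a + \frac{b}{2\eps^2}.
\end{equation*}
A harmless truncation (replace $\ro$ by $\max(\ro,0)$, which does not increase either integrand) lets us assume $0\le f \le M$ with $m = \min f$, $M = \max f$.

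The heart of the argument is to couple kinetic and potential contributions through AM--GM and Cauchy--Schwarz. By AM--GM, $a + b/(2\eps^2) \ge \sqrt{2ab}/\eps$, while Cauchy--Schwarz gives
\begin{equation*}
\sqrt{ab}\ \ge\ \int_0^{2\pi r} |f'(s)| (1-f^2(s))_+\, ds.
\end{equation*}
To bound the right-hand side from below, I would use a one-dimensional co-area argument: since $f$ is continuous on the circle and attains both its minimum and its maximum, every value in $[m,M]$ has at least two preimages, whence
\begin{equation*}
\int_0^{2\pi r} |f'|(1-f^2)_+\, ds\ \ge\ 2\int_m^{\min(M,1)} (1-t^2)\, dt.
\end{equation*}
Chaining these three inequalities reduces the problem to controlling the explicit integral $\int_m^{\min(M,1)} (1-t^2)\, dt$ from below by a multiple of $(1-m)^2$.

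A short case analysis closes the argument. If $M \ge 1$, an explicit computation gives $2\int_m^1 (1-t^2)\,dt = \frac{2(1-m)^2(2+m)}{3}$, which is at least $\frac{4}{3}(1-m)^2$, and the chain above yields the claim. If $M < 1$, one distinguishes whether $\ro$ is uniformly far from $1$ or oscillates significantly. When $1-M \ge (1-m)/2$, the raw potential bound suffices:
\begin{equation*}
\frac{b}{2\eps^2}\ \ge\ \frac{\pi r (1-M^2)^2}{\eps^2}\ \ge\ \frac{\pi(1-M)^2}{2\eps}\ \ge\ \frac{\pi(1-m)^2}{8\eps},
\end{equation*}
using $r \ge \eps/2$ and $1+M \ge 1$. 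Otherwise $M-m > (1-m)/2$, and the elementary inequality $1-t^2 \ge 1-t$ on $[0,1]$ gives $\int_m^M (1-t^2)\,dt \ge \frac{(1-m)^2 - (1-M)^2}{2} \ge \frac{3(1-m)^2}{8}$, so plugging into the coupled bound yields $a + b/(2\eps^2) \ge \frac{3\sqrt{2}(1-m)^2}{4\eps}$. Taking $c_0$ to be the minimum of the three constants so obtained finishes the proof.

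The main obstacle is conceptually this coupling of kinetic and potential terms: a naive separate estimate is too weak, because $a$ scales like $1/r$ while the per-unit-length potential scales like $r$, and only after combining them under the square root does the $r$-dependence cancel, so that the vertical oscillation of $f$ survives in a form independent of the radius. The hypothesis $r \ge \eps/2$ enters at exactly one place, in the ``flat'' subcase $M < 1$, to convert a per-unit-length potential bound on the circle into the desired $1/\eps$ scaling.
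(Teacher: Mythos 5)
The paper does not prove this lemma; it cites it to Jerrard's paper, and your argument follows essentially the same strategy as the original (Lemma 2.3 of \cite{jerrard}): couple the kinetic and potential terms so that the radius cancels (you do it by a global AM--GM plus Cauchy--Schwarz and the Banach indicatrix; Jerrard does a pointwise AM--GM along a single arc joining the minimum to a point where $\ro=\frac{1+m}{2}$), reduce to the explicit integral $\int (1-t^2)\,dt$ over the range of $\ro$, and use $r\ge \eps/2$ only in the flat case where $\ro$ stays below a threshold. Your three cases and the constants $\frac{4\sqrt2}{3}$, $\frac{\pi}{8}$, $\frac{3\sqrt2}{4}$ check out. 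Note also that you have (correctly) read the potential term as $(1-\ro^2)^2/(2\eps^2)$: the exponent is missing in the statement as printed, and the unsquared version is false (take $\ro$ constant and larger than $1$), so yours is the intended reading.

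The one step whose justification is wrong is the truncation. Replacing $\ro$ by $\max(\ro,0)$ \emph{increases} the potential integrand wherever $-1<\ro<0$, since there $(1-\ro^2)^2<1=(1-\max(\ro,0)^2)^2$; and even if it did not, the reduction is logically backwards, because truncation also replaces the target $(1-m)^2$ by the smaller quantity $(1-\max(m,0))^2$. For $m\to-\infty$ the information your co-area step extracts is capped at $2\int_{-1}^{1}(1-t^2)\,dt=\frac{8}{3}$ while the right-hand side grows like $m^2/\eps$, so that regime genuinely requires the quartic growth of $(1-\ro^2)^2$ near the minimum and a different (easy, but different) argument; the same remark applies to the unaddressed corner $m>1$, where $\int_m^{\min(M,1)}(1-t^2)_+\,dt=0$. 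None of this matters for the way the lemma is used: it is always applied to $\ro=|u|\ge 0$ with $m\le 1$ (and Jerrard states it for $|v|$), so you should simply assume $0\le m\le 1$ from the outset, or restate the bound with $(1-m)_+^2$, and delete the truncation. With that modification the proof is complete.
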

 This is another way of quantifying the cost of  $|u|$ being away from $1$.
 
Then Jerrard only  covers, again by disjoint closed balls of radii $r_i$,  the connected components of the set $\{\ro \le \hal\}$ on the boundary of which $u$ has nonzero degree, and is able to do it in such a way that the radius $r_i$ of each ball $B_i$  is bounded above by  $\eps$ times the energy  that the ball contains.

\subsubsection{Ball construction method}
Consider a collection of (disjoint closed) initial balls $\mathcal{B}_0=\{B_i\}$ of radii $r_i$, and let us assume to fix ideas that $|u|=\ro =1$ outside of these balls. 
If we have disjoint annuli centered around these same balls, of inner radii $r_i$ and outer radii $R_i$, then we may add the lower bounds given by \eqref{bb} to obtain 
$$\hal \io |\nab u|^2 \ge \pi \sum_i d_i^2 \log \frac{R_i}{r_i},$$
where $d_i$ is the degree of $u$ on each annulus.
We then see that these lower bounds combine nicely {\it if} all the ratios  $\frac{R_i}{r_i}$ are equal, because then 
$$ \log \frac{R_i}{r_i}= \log \frac{\sum_i R_i}{\sum_i r_i}=\log s$$
where $s$ is the common ratio $R_i/r_i$, in other words the common conformal factor of the annuli. Let us underline that this is the point where the construction is purely {\it two-dimensional}: in higher dimensions the energy $\int |\nab u|^2$ is not conformally invariant and the estimates on annuli would not involve logarithms and not combine well.

In order to apply this reasoning, the annuli need to all be disjoint. The idea of the ball construction method is to build such disjoint annuli by continuously growing jointly all the initial balls, keeping their centers fixed, and multiplying their radii by the same factor $s\ge 1$, until $s$ is large enough (typically of order $1/\eps$). This way the previous reasoning applies, and at least for $s$ close enough to $1$, the balls (hence the annuli) remain disjoint. 

At some point during the growth process, two  (or more) balls  can become tangent to each other. The method then is to {\it merge} them into a ball that contains them both and has a radius equal to the sum of the radii of the merged ball.  In other words, if $B_1= B(a_1, r_1)$ and $B_2= B(a_2, r_2)$ are tangent, we merge them into $B'=B(\frac{a_1 r_1+a_2 r_2}{r_1+r_2}, r_1+ r_2)$.
 (The resulting ball can then intersect other balls, in which case one proceeds to another merging, etc, until all the balls are disjoint).
The merging process preserves the total sum of the radii, and as for the degrees we have 
\be\label{addingdeg} deg(u, \pa B')= deg (u, \pa B_1)+ deg(u, \pa B_2).\ee
 Thus the only problem is that the  $ d_i^2$ do not add up nicely during merging. 
The price to pay is to give up on obtaining a lower bound with a  $\sum_i d_i^2$ factor, but rather to keep a lower bound by the smaller factor $\sum_i |d_i|$ (here we use that the $d_i$'s are all integers). Such factors do add up nicely through merging since we have \eqref{addingdeg} during a merging, thus $|d|=|d_1+d_2|\le |d_1|+|d_2|$. 

The fact that we need to abandon the hope of lower bounds by $\sum_i d_i^2$ is completely natural, due to possible cancellations of 
singularities (or vortices) of $u$. If $u$ has a vortex of degree $+1$, and a nearby vortex of degree $-1$ at distance $r$, once the associated balls have been merged, the total degree is $0$, and one does not expect any substantial energy to lie in the annuli surrounding the merged balls. 

After mergings, the old collection is replaced by the new collection (with merged balls) which is still made of disjoint balls,  and the growth process is resumed, until some next intersection and merging happens, etc. The construction can then be stopped at any value of the parameter $s$, depending on the desired final total radius of the balls, and the desired final lower bound. 

Using this method and combining it with \eqref{bb},  one arrives at the following result, where for any ball $B$, $r(B)$ will denote its radius. Also if $\mathcal{B}$ is a collection of balls, $\lambda \mathcal{B}$ is the collection of balls with same centers, and radii multiplied by $\lambda$.
\begin{prop}[Ball construction]\label{propunit}
Let $\mathcal{B}_0$ be a collection of disjoint closed balls in the plane. Assume $u:\Omega \backslash \cup_{B\in \mathcal{B}_0} \to \mathbb{S}^1$. 
For any $s\ge 1$ there exists a family of balls $\mathcal B(s)$ such that the following holds.
\begin{itemize}
\item 
$\mathcal{B}(1)= \mathcal B_0.$
\item
For any $s_1\le s_2$ we have 
$$\cup_{B\in \mathcal B(s_1)} B \subset \cup_{B\in \mathcal B(s_2)} B.$$
\item
There exists a finite set $T$ (the set of merging ``times") such that 
if $[s_1, s_2]\subset [1, +\infty) \backslash T$, we have 
$\mathcal{B}(s_2) = \frac{s_2}{s_1} \mathcal{B}(s_1)$.
\item 
$$\sum_{B\in \mathcal{B}(s)}  r(B)= s \sum_{B \in \mathcal{B}_0} r(B).$$
\item
For any $B \in \mathcal{B}(s)$ such that $B \subset \Omega$, denoting $d_B= deg(u, \pa B)$ we have 
$$\hal \int_B |\nab u|^2 \ge \pi |d_B| \log s= \pi |d_B| \log \frac{\sum_{B\in \mathcal{B}(s)}  r(B)}{  \sum_{B \in \mathcal{B}_0} r(B)}.$$\end{itemize}
\end{prop}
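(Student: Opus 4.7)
The plan is to construct the family $\{\mathcal{B}(s)\}_{s\ge 1}$ by an iterative algorithm alternating two phases: a \emph{growth phase}, during which every ball in the current collection is dilated about its own center by a common factor, and a \emph{merging phase}, triggered whenever two or more balls become tangent. Concretely, set $\mathcal{B}(s_0):=\mathcal{B}_0$ with $s_0=1$. Assuming $\mathcal{B}(s_k)$ has been built as a collection of pairwise disjoint closed balls, define, for $s\ge s_k$,
\begin{equation*}
\mathcal{B}(s) := \Bigl\{ B\bigl(a,\tfrac{s}{s_k}r\bigr)\,:\, B(a,r)\in\mathcal{B}(s_k) \Bigr\},
\end{equation*}
and let $s_{k+1}$ be the first $s>s_k$ at which two distinct balls of this collection become tangent. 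At $s=s_{k+1}$, partition the collection into its maximal \emph{tangency clusters} (two balls are linked if they are tangent at time $s_{k+1}$), and replace each cluster $\{B(a_i,r_i)\}_{i=1}^N$ by a single ball $B(a,\sum_i r_i)$ whose center $a$ is chosen so that $\bigcup_i B(a_i,r_i)\subset B(a,\sum_i r_i)$; such a center exists because any chain of pairwise tangent balls has diameter at most $2\sum_i r_i$. This yields $\mathcal{B}(s_{k+1})$, a new disjoint collection, and one iterates. The set $T:=\{s_k\}_{k\ge 1}$ of merging times is locally finite in any bounded range of $s$ since the total number of balls strictly decreases at each merging.

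The first four bulleted properties are then built into the construction: $\mathcal{B}(1)=\mathcal{B}_0$ by definition, the inclusion $\bigcup\mathcal{B}(s_1)\subset\bigcup\mathcal{B}(s_2)$ for $s_1\le s_2$ holds because both growth and merging only enlarge the covered set, and the pure-scaling property on every $[s_1,s_2]\subset[1,+\infty)\setminus T$ is exactly the growth rule. For the total radius, growth multiplies $\sum r(B)$ by the dilation factor, and merging preserves it by the choice of the radius of the merged ball; composing these steps gives $\sum_{B\in\mathcal{B}(s)}r(B)=s\sum_{B\in\mathcal{B}_0}r(B)$.

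The last property is the main content of the statement. For each $B\in\mathcal{B}(s)$ with $B\subset\Omega$ and $d_B:=\mathrm{deg}(u,\partial B)$, I will prove by induction on the number of merging times crossed in $[1,s]$ that
\begin{equation*}
\tfrac{1}{2}\int_B|\nabla u|^2 \,\ge\, \pi\,|d_B|\,\log s.
\end{equation*}
On a pure growth interval $[s_k,s]$ with $s<s_{k+1}$, each $B\in\mathcal{B}(s)$ corresponds uniquely to $B_*\in\mathcal{B}(s_k)$ with the same center and same degree, and the annulus $B\setminus B_*$ has ratio of outer to inner radius equal to $s/s_k$; since $|u|=1$ on this annulus, the building-block estimate \eqref{bb} yields
\begin{equation*}
\tfrac{1}{2}\int_{B\setminus B_*}|\nabla u|^2 \,\ge\, \pi\,d_B^{\,2}\log(s/s_k) \,\ge\, \pi\,|d_B|\log(s/s_k),
\end{equation*}
where the last inequality uses $d_B\in\mathbb{Z}$. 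At a merging time $s_{k+1}$, a merged ball $B$ contains the disjoint union of its constituents $B_i$, and the additivity of the degree gives $d_B=\sum_i d_{B_i}$, hence $|d_B|\le\sum_i|d_{B_i}|$. Combining the inductive hypothesis applied to each $B_i$ with this subadditivity,
\begin{equation*}
\tfrac{1}{2}\int_B|\nabla u|^2 \,\ge\, \sum_i\tfrac{1}{2}\int_{B_i}|\nabla u|^2 \,\ge\, \pi\sum_i|d_{B_i}|\log s_{k+1} \,\ge\, \pi\,|d_B|\log s_{k+1}.
\end{equation*}
Telescoping across all merging times in $[1,s]$ closes the induction.

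The main obstacle is precisely this subadditivity of the absolute degree at each merging, which is what forces the bound $\pi|d_B|\log s$ rather than the naive $\pi d_B^{\,2}\log s$: opposite-sign vortices absorbed into the same merged ball can cancel, and only a factor $|d_B|$ survives the recursion. The remaining points---discreteness of $T$, existence of a center realizing $\bigcup B_i\subset B$ at each merging, and a careful treatment of simultaneous tangencies via the cluster decomposition---are routine geometric verifications made tractable by the conservation law $\sum r(B)=s\sum r(\mathcal{B}_0)$, which keeps the total size of the collection under control throughout the growth.
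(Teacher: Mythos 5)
Your proposal is correct and follows essentially the same route as the paper: grow all balls by a common conformal factor, use the annulus lower bound \eqref{bb} on each pure growth interval together with $d^2\ge |d|$, and at each merging trade $d^2$ for $|d|$ via the additivity of the degree, so that the logarithms telescope. Two small points in the merging step deserve repair. First, the existence of a ball of radius $\sum_i r_i$ containing a tangency cluster does not follow from the diameter bound you invoke (a planar set of diameter $2R$ need not be contained in a ball of radius $R$ -- think of an equilateral triangle); the correct and standard argument is the iterated pairwise merge $B_1\cup B_2\subset B\bigl(\tfrac{r_1a_1+r_2a_2}{r_1+r_2},\,r_1+r_2\bigr)$, which is exactly the formula the paper uses. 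Second, after a cluster is replaced by this larger ball, the new ball may intersect balls outside the original cluster, so the merging must itself be iterated at the same time $s_{k+1}$ until the collection is again disjoint; this terminates because the number of balls strictly decreases, and it preserves all the invariants your estimate relies on (total radius, degree additivity, and disjoint containment of the constituents), so the energy bound is unaffected.
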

This is the building block estimate. As mentioned, one needs to control the  initial total radius by some small factor; typically, one can expect it to be  $n\eps$ where $n$ is  the number of initial balls. Then one may choose the parameter $s$  according to the needs,  so that the final sum of the radii  be not too large, but still large enough for  the factor in the right-hand side to be    at  leading order $\pi |d|\log \frac{1}{n\eps}$. For example, a good choice may be $\sum_{B\in \mathcal{B}(s)} r(B)= 1/\lep$, which is  a $o(1)$ quantity (guaranteeing small balls) but such that $\log (\sum_B  r(B))^{-1}= \log \lep $ is negligible compared to $\lep$.

\subsubsection{Dealing with non unit-valued functions}
The  main technical difficulty that remains is to handle the fact that $|u|$ is not really equal to $1$ outside of small ``initial balls." One then needs to use the fact that outside of the initial balls $u$ does not vanish and one may write 
\be \label{ecrit}
\int|\nab u|^2 = \int \ro^2\left|\nab \left(\frac{u}{\ro}\right)\right|^2 + |\nab \ro|^2 + \frac{(1-\ro^2)^2}{2\eps^2},\ee
use a bound from below for $\ro $, and then  bounds from below for unit vector fields to bound from below $\int |\nab ( \frac{u}{\ro} )|^2$.

In Sandier's construction, this is handled by combining the result of Proposition \ref{propunit} with  a co-area argument as outlined above, but in a rather sophisticated manner since the argument has to be applied to all sub-level sets at once. For details, we refer to \cite[Chap. 4]{livre}.

In Jerrard's construction, this is handled by  combining \eqref{ecrit} with the result of Lemma \ref{lemmjcercle} and \eqref{bb}
to obtain 
$$\int_{\pa B_r}  |\nab u|^2 + \frac{(1-|u|^2)^2}{2\eps^2}\ge m^2 \frac{2\pi d^2}{r} + c_0 \frac{(1-m)^2}{\eps},$$
with $m= \min_{\pa B_r} |u|$. 
 Optimizing over $m$ yields
$$\hal \int_{\pa B_r}  |\nab u|^2 + \frac{(1-|u|^2)^2}{2\eps^2}\ \ge \lambda_\eps (\frac{r}{|d|}) $$
with $\lambda_\eps  (s)  $ that behaves like $\min (\frac{c}{\eps}, \frac{\pi}{s})$, and whose antiderivative $\Lambda_\eps$ satisfies $\Lambda_\eps(s) \ge \pi \log \frac{s}{\eps} -C.$
 The balls are grown and merged (in the same way as explained before) in such a way that it's not the factor of sum of radii which is constant, but rather the parameter $s=r(B)/|d_B|$, to preserve $r\ge s|d|$.
One checks  that the estimate $$\hal \int_B  |\nab u|^2 + \frac{(1-|u|^2)^2}{2\eps^2}
\ge r(B) \frac{\Lambda_\eps(s)}{s}     \ge\pi |d| \( \log \frac{s}{\eps} -C\) $$
is true initially and is preserved through the growth and merging process, yielding the desired estimate at the end of the growth process. 
 For more details, we refer to \cite{jerrard}.

In all cases, the presence of the gauge $A$ does not change substantially the situation, the method consists in controlling the error that it creates via the term $\int |\curl A-\he|^2$.

\subsection{A final statement}\label{sec812}
To give a more precise idea, let us now finish with  the statements of a result on the complete Ginzburg-Landau functional.  If one is interested in the Ginzburg-Landau functional without magnetic gauge, then it suffices to set $A \equiv 0$ in the following result. 

It is borrowed from \cite[Theorem 4.1]{livre}.
 A similar result (slightly stronger in some sense, slightly weaker in some other) and following Jerrard's construction \cite{jerrard}, can be found in \cite[Proposition 5.2]{compagnon}.   We let $F_\eps $ denote the Ginzburg-Landau energy with $\he$ set to $0$.
\begin{theo}[Ball construction lower bound]\label{thboules}
For any $\alpha\in (0,1)$, there exists $\eps_0(\alpha)$ such that for any $\eps<\eps_0$, if $(u,A)$ is such that $\io |\nab |u||^2 + \frac{(1-|u|^2)^2}{2\eps^2}\le \eps^{\alpha-1}$, the following holds.

For any $1>r>\eps^{\alpha/2}$, there exists a finite collection of disjoint closed balls $\mathcal{B}=\{B_i\}$ such that 
\begin{itemize}
\item 
$\sum_{B\in \mathcal{B}} r(B)=r$
\item
$$\{x\in \Omega| \dist (x, \bo) >\eps, ||u(x)|-1|\ge \eps^{\alpha/4}\} \subset \cup_i B_i.$$
\item Writing $d_i= \deg (u, \pa B_i)$ if $B_i\subset \{x\in \Omega| \dist(x, \bo ) >\eps\}$,  $d_i=0$ otherwise, and $D=\sum_i |d_i|$, we have 
\be\label{mainboules}\hal \int_{\cup_i B_i} |\nab_A u|^2 + |\curl A|^2 + \frac{(1-|u|^2)^2}{2\eps^2}\ge \pi D \(\log \frac{r}{D\eps}-C\)\ee where $C$ is a universal constant.
\item If in addition $F_\eps(u,A) \le \eps^{\alpha-1}$ then 
$$D\le C \frac{F_\eps(u,A)}{\alpha\lep}.$$
\end{itemize}
\end{theo}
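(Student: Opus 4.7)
The plan is to follow the ball construction strategy outlined just before the statement, adapted to handle simultaneously the non-unit modulus of $u$ and the presence of the gauge $A$. First I would produce an initial covering of the bad set. The hypothesis $\io |\nab |u||^2 + (1-|u|^2)^2/(2\eps^2) \le \eps^{\alpha-1}$ gives, via Cauchy--Schwarz and the co-area formula, that $\int_0^1 (1-t^2)\, \mathcal{H}^1(\{|u|=t\})\, dt \le C\eps^\alpha$, so one can select a level $t_0\in[1-\eps^{\alpha/4}, 1-\hal \eps^{\alpha/4}]$ for which the superlevel set $U_0:=\{|u|\le t_0\}$ has perimeter $\le C\eps^{\alpha/2}$. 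A Vitali-type covering then yields a finite collection of disjoint closed balls $\mathcal B_0$ covering $U_0\cap\{\dist(\cdot,\bo)>\eps\}$ with $\sum_{B\in\mathcal B_0} r(B) \le C\eps^{\alpha/2}\ll r$; this is the covering claimed in the second bullet.

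Next I would apply the growth-and-merging algorithm of Proposition~\ref{propunit} starting from $\mathcal B_0$, stopping at the unique time $s$ for which the total radius reaches $r$ (this is possible because $r>\eps^{\alpha/2}\ge \sum_{B\in\mathcal B_0}r(B)$ after possibly a harmless additional inflation). Between successive merging times, on each ball $B$ that lies in $\{\dist(\cdot,\bo)>\eps\}$ and hence in the gauged-unit-valued region $\{|u|>t_0\}$ outside $\mathcal B_0$, I would lower bound the energy on the annular part. The core computation is the Cauchy--Schwarz estimate on circles: writing $u=\rho e^{i\vp}$ locally and using \eqref{ecrit}, one gets $\int_{\pa B_t} |\nab_A u|^2 \ge (\min_{\pa B_t}\rho)^2 \cdot 2\pi (d_B-\Psi_A(t))^2/t$, where $\Psi_A(t)=\frac{1}{2\pi}\int_{B_t} \curl A$ is the gauge contribution controlled by $\|\curl A\|_{L^2}$ via Cauchy--Schwarz. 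Since $\rho\ge t_0 = 1-O(\eps^{\alpha/4})$ outside the initial balls, the factor $\rho^2$ can be pushed into the logarithmic integration with a negligible multiplicative loss, and the gauge error $\Psi_A$ integrates to a $O(1)$ additive loss absorbed into $C$. Integrating in $t$ and summing \emph{additively} the lower bounds across the disjoint annuli between merging times (using that $|d|\le|d_1|+|d_2|$ under merging so that $D$ only decreases under merging) then yields the logarithmic-in-ratio-of-radii estimate of Proposition~\ref{propunit}, which telescopes to the total conformal factor $s=r/\sum_{B\in\mathcal B_0}r(B)$. Since $\sum_{B\in\mathcal B_0}r(B)\le C\eps^{\alpha/2} \le CD\eps$ (with $C$ absorbing a lower bound $D\ge 1$; if $D=0$ the estimate is trivial), this gives precisely \eqref{mainboules}.

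The main technical obstacle is the coupled treatment of three imperfections: $u$ not of unit modulus, the gauge $A$, and the merging operation that loses $\sum d_i^2$ in favor of $\sum|d_i|$. The cleanest way to package the first two is to prove the monotonicity statement ``$s\mapsto \pi D(s)(\log(s/s_0)-C)$ is dominated by the total energy in $\cup_{B\in\mathcal B(s)} B\setminus \cup_{B\in\mathcal B_0} B$" directly by differentiating in $s$ between merging times and checking that the jumps at merging times are favorable (this is exactly where $|d|$ replaces $d^2$). The third, fourth and final bullet then follow by a one-line arithmetic: inserting $r$ into \eqref{mainboules} and using $F_\eps(u,A)\le \eps^{\alpha-1}$ gives $\pi D\log(r/(D\eps))\le F_\eps(u,A)$; choosing for instance $r=\eps^{\alpha/2}$ and observing that if $D\le \eps^{-\alpha/4}$ then $\log(r/(D\eps)) \ge \tfrac{\alpha}{4}\lep-C$, while if $D>\eps^{-\alpha/4}$ one already has $D\le F_\eps/\lep$, yields $D\le C F_\eps/(\alpha\lep)$ in both cases.
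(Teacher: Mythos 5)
Your overall architecture (single-level initial covering of the low-modulus set via Cauchy--Schwarz and the co-area formula, growth-and-merging as in Proposition \ref{propunit}, multiplicative correction for $\rho^2$ and additive correction for the gauge flux) is the right skeleton and matches the strategy sketched in Section 8.1. But there is a genuine gap at the final step, and it is exactly the difficulty the sketch flags when it says the co-area argument ``has to be applied to all sub-level sets at once.'' The inequality $\sum_{B\in\mathcal B_0}r(B)\le C\eps^{\alpha/2}\le CD\eps$ is false: since $\alpha\in(0,1)$ one has $\eps^{\alpha/2}/\eps\to+\infty$ as $\eps\to0$, so no universal constant, nor the observation $D\ge 1$, rescues it. With initial total radius $C\eps^{\alpha/2}$ your telescoping of the conformal factor only yields
$$\hal\int_{\cup_i B_i}|\nab_A u|^2+\dots\ \ge\ \pi D\Big(\log\frac{r}{C\eps^{\alpha/2}}-C\Big)=\pi D\Big(\frac{\alpha}{2}\lep+\log r-C\Big),$$
i.e.\ only the fraction $\alpha/2$ of the logarithm per unit degree, which is strictly weaker than \eqref{mainboules} and would not suffice downstream (already the constant $\frac{1}{2\lambda}|\mu|(\om)$ in the $\Gamma$-limit of Chapter 9 would come out wrong). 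The missing energy $\pi D\big(1-\frac{\alpha}{2}\big)\lep$ lives between scale $\eps$ and scale $\eps^{\alpha/2}$ around the zeros, partly inside your initial balls, and a single-level covering discards it. Moreover the tension is structural: taking $t_0$ close to $1$ (needed for the second bullet and for $\rho^2\approx 1$) degrades the perimeter bound to $\eps^{\alpha/2}$, while taking $t_0$ of order $\hal$ gives perimeter $C\eps^{\alpha}$ but ruins the modulus factor. Recovering the full $\log\frac{r}{D\eps}$ requires either Jerrard's route --- Lemma \ref{lemmjcercle}, which makes every circle crossing the low-modulus region cost $\gtrsim (1-m)^2/\eps$, so the initial balls can be taken of radius comparable to $\eps$ per unit of degree with the energy they already contain credited to the lower bound --- or Sandier's route of integrating the growth estimate over all levels $t$ simultaneously.

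Two smaller points. First, the second bullet also requires covering the superlevel set $\{|u|\ge 1+\eps^{\alpha/4}\}$; the same co-area selection at a level $t_1>1$ handles it, but you should say so. Second, the dichotomy you invoke for the last bullet (``if $D>\eps^{-\alpha/4}$ one already has $D\le F_\eps/\lep$'') is asserted without justification: \eqref{mainboules} applied at the final radius becomes vacuous when $D\eps$ is comparable to $r$, so one must instead read off the lower bound at an intermediate stage of the growth where the total radius is still large compared to $D\eps$ but the logarithmic gain is already of order $\alpha\lep$.
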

In practice the last item already gives a rough lower bound on the energy $F_\eps$ (without optimal constants) which can serve to provide a first control on $D$, which  can then be inserted into the main result \eqref{mainboules}. 
Compared to the heuristic lower bound of $\pi \sum_i d_i^2 \lep$, this lower bound 
\begin{itemize}
\item loses $d_i^2$ and replaces it by $|d_i|$: as explained this is normal due to possible cancellations between vortices happening at small scales. 
\item introduces an error $-D\log D$: this is also normal due to the possibility of many vortices accumulating near a point, or near the boundary (think of the case of $n$ vortices of degree $1$ regularly placed at small distance from the the boundary of the domain).
\item introduces an error $-CD$ where $C$ is an unknown constant. When one knows that the number of vortices is bounded, the analysis derived from \cite{bbh} allows to identify the constant order term  in the energy of a vortex. It is (at least in the case of degree $\pm 1$ vortices), a constant that they denote $\gamma$, and which depends on the explicit optimal profile of the modulus of $u$ for a radial vortex (identified in \cite{hh,miro}).
One thus usually proceeds in two steps: first control the number of vortices via ball construction lower bounds which give the correct leading order energy, then if one can show that the number of vortices is locally bounded,  recover this constant  order term $\gamma$.
\item In order to accomplish this program, one may need (we needed) to localize the above lower bound over finite size balls in a (possible large) domain, and eliminate the $-D \log D$ error.  We have seen that the energy carried by vortices is not only located at the vortex centers, it is spread over relatively large annuli surrounding them. 
In case of vortices accumulating  near a point, the ball construction (because it stops at finite total radius) is missing some energy $\pi D^2 \log \frac{R}{r}$ (as given by \eqref{bb}) which is lying in even larger annular regions. Although $\log \frac{R}{r}$ is then of order $1$, such an energy suffices to compensate $-D \log D$ thanks to the power $2$ in $D^2$ which beats $-D \log D$ when $D$ gets large.
The method to do this and combine it with the ball construction, itself properly localized, is quite technical   in its details, and is the object of \cite{compagnon}, to which we refer the interested reader.
\end{itemize}

 \section{The ``Jacobian estimate"}
 Let us  now turn to the ``Jacobian estimate," which allows to give a rigorous meaning to 
  \eqref{jacheur}, in terms of the result of a ball construction. Estimates of the same nature already  appeared in \cite{br,s1,ss3}, the estimate was optimized and its name popularized 
 by the work of Jerrard and Soner \cite{js}.  
 Let us state it in the version presented in \cite[Chap. 6]{livre}

 The case without gauge $A$ is again contained in what follows by taking $A\equiv 0$.
 \begin{theo}[Jacobian estimate]\label{thjac}
 Let $u : \Omega \to \mathbb{C}$ and $A : \Omega \to \mr^2$ be $C^1$. Let $\mathcal{B}$ be a collection of disjoint closed  balls with centers $a_i$ and radii $r_i$ such that 
 $$\left\{x\in \Omega, \dist(x, \bo) >\eps,  ||u|-1|\ge \hal \right\} \subset \cup_{B\in \mathcal{B}} B
 .$$ Then, letting $d_i=\deg(u, \pa B(a_i,r_i))$ if $B(a_i,r_i) \subset\{x\in \Omega, \dist(x, \bo) >\eps\}$ and $0$ otherwise, defining $\mu(u,A)$ by \eqref{vorticity}, if $\eps $ and $r$ are less than $1$,   we have for $C>0$  some universal constant
 \begin{equation}\label{jacest}
 \left\| \mu(u,A)  - 2\pi \sum_i d_i \delta_{a_i} \right\|_{(C^{0,1}_0(\Omega))^*} \le C \max(\eps, \sum_i r_i) (1+ F_\eps (u,A)).\ee
 Moreover $\|\mu(u,A)\|_{(C^0(\Omega))^*} \le C F_\eps(u,A).$
 \end{theo}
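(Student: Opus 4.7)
The plan is to test $\mu(u,A) - 2\pi \sum_i d_i \delta_{a_i}$ against an arbitrary Lipschitz function $\zeta$ compactly supported in $\Omega$ (normalized by $\|\zeta\|_{C^{0,1}} \le 1$), and show the pairing is bounded by $C \max(\eps, r)(1 + F_\eps(u,A))$, with $r := \sum_i r_i$. The starting algebraic identity, obtained by expanding $\nabla_A u = \nabla u - iAu$, is $j + A = \langle iu, \nabla u\rangle + (1-|u|^2) A$, which yields
\begin{equation*}
\mu(u,A) = \curl\langle iu, \nabla u\rangle + \curl\bigl((1-|u|^2)A\bigr).
\end{equation*}
Integration by parts (no boundary term, since $\zeta$ vanishes on $\partial\Omega$) then rewrites the pairing as
\begin{equation*}
-\int_\Omega \nabla^\perp \zeta \cdot \langle iu, \nabla u\rangle \;-\; \int_\Omega \nabla^\perp \zeta \cdot (1-|u|^2)A \;-\; 2\pi \sum_i d_i\,\zeta(a_i).
\end{equation*}

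The $(1-|u|^2)A$ piece is bounded by $\|\nabla\zeta\|_\infty\|1-|u|^2\|_{L^2}\|A\|_{L^2}$. Since $\|1-|u|^2\|_{L^2}^2 \le 2\eps^2 F_\eps$, and after fixing a Coulomb gauge ($\div A = 0$, $A\cdot\nu = 0$ on $\partial\Omega$) the elliptic estimate on $\Delta A = -\nabla^\perp h$ gives $\|A\|_{L^2} \le C F_\eps^{1/2}$, this contributes at most $C\eps(1+F_\eps)$. For the first integral I would split over $U := \bigcup_i B_i$ and its complement. On $U$, Cauchy--Schwarz together with $|U| \le \pi r^2$ and standard $L^p$ control of $u$ and $\nabla u$ by $F_\eps$ gives $\int_U |u||\nabla u|\,dx \le C r (1 + F_\eps^{1/2})$, absorbed into $C r (1+F_\eps)$.

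On $\Omega \setminus U$, the hypothesis $\bigl||u|-1\bigr|<1/2$ permits the polar decomposition $u = \rho v$ with $\rho := |u|$ and $v := u/|u| \in \mathbb{S}^1$, giving the pointwise identity $\langle iu,\nabla u\rangle = \rho^2 \langle iv,\nabla v\rangle$. The key observation is that $\curl\langle iv, \nabla v\rangle = 2\det\nabla v$ vanishes pointwise on $\Omega \setminus U$, since $|v|\equiv 1$ forces $v\cdot\partial_k v = 0$ and hence $\det\nabla v = 0$. Applying Stokes on $\Omega\setminus U$ therefore yields
\begin{equation*}
\int_{\Omega\setminus U}\! \nabla^\perp \zeta \cdot \langle iv,\nabla v\rangle \;=\; \pm\, \sum_i \int_{\partial B_i} \zeta\,\langle iv,\partial_\tau v\rangle\, ds,
\end{equation*}
and the degree formula $\int_{\partial B_i}\langle iv,\partial_\tau v\rangle\,ds = 2\pi d_i$, together with the replacement of $\zeta$ by $\zeta(a_i)$ on $\partial B_i$, produces $\pm 2\pi \sum_i d_i \zeta(a_i) + \mathrm{Err}_1$, where $|\mathrm{Err}_1| \le 2\pi \|\nabla\zeta\|_\infty \sum_i |d_i|\,r_i$. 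The residual defect $\int_{\Omega\setminus U}\nabla^\perp\zeta\cdot(\rho^2-1)\langle iv,\nabla v\rangle$ is bounded by $\|\nabla\zeta\|_\infty \|\rho^2 - 1\|_{L^2}\|\langle iv,\nabla v\rangle\|_{L^2(\Omega\setminus U)}$, whose second factor is $\le C F_\eps^{1/2}$ via $\rho^2 \langle iv,\nabla v\rangle = j + \rho^2 A$ and $\rho\ge 1/2$.

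The main obstacle is controlling $\mathrm{Err}_1$: one needs a bound on the total degree $D := \sum_i |d_i|$, which is furnished by the ball-construction estimate (last item of Theorem~\ref{thboules}) yielding $D \le C F_\eps/\lep \le C(1+F_\eps)$, so $\sum_i |d_i|\,r_i \le D\cdot r \le C r(1+F_\eps)$. Summing the contributions gives \eqref{jacest}. Finally, the second bound $\|\mu(u,A)\|_{(C^0(\Omega))^*} \le C F_\eps$ requires no comparison to Diracs: expanding via $\curl(fX) = f\curl X + \nabla^\perp f\cdot X$ yields
\begin{equation*}
\mu(u,A) = 2\det\nabla u + (1-|u|^2) h - \nabla^\perp |u|^2 \cdot A,
\end{equation*}
exhibiting $\mu$ as a function whose $L^1$-norm is bounded termwise by $\|\nabla u\|_{L^2}^2$, $\|1-|u|^2\|_{L^2}\|h\|_{L^2}$, and $\|\nabla |u|^2\|_{L^2}\|A\|_{L^2}$, each dominated by $CF_\eps$; hence $\|\mu\|_{(C^0)^*} \le \|\mu\|_{L^1} \le C F_\eps$.
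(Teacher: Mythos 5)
The overall architecture (test against a Lipschitz $\zeta$, integrate by parts, use the polar form $u=\rho v$ off the balls where $\curl\langle iv,\nabla v\rangle=0$) is reasonable, but the step where you push Stokes all the way to the circles $\partial B_i$ and then estimate $\mathrm{Err}_1=\sum_i\int_{\partial B_i}(\zeta-\zeta(a_i))\langle iv,\partial_\tau v\rangle\,ds$ contains a genuine gap. Your bound $|\mathrm{Err}_1|\le 2\pi\|\nabla\zeta\|_\infty\sum_i|d_i|r_i$ implicitly uses $\int_{\partial B_i}|\langle iv,\partial_\tau v\rangle|\,ds=2\pi|d_i|$, which holds only if the tangential phase derivative has constant sign on $\partial B_i$. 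In general $\int_{\partial B_i}|\partial_\tau v|\,ds$ can be arbitrarily large compared with $2\pi|d_i|$, and it is a line integral on a \emph{prescribed} circle, so it is not controlled by the area energy $F_\eps$ either (no mean-value choice of radius is available, since the $a_i,r_i$ are given data). The fallback you invoke, $D\le C F_\eps/\lep$ from the last item of Theorem \ref{thboules}, is not applicable: that estimate is for balls produced by the ball construction under the a priori bound $F_\eps\le\eps^{\alpha-1}$, whereas in Theorem \ref{thjac} the balls are arbitrary and no energy bound is assumed; and even granting it, it would not repair the confusion between $|d_i|$ and $\int_{\partial B_i}|\partial_\tau v|$. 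This is precisely the difficulty the paper's proof is designed to avoid: one replaces $u$ by $\tilde u=\chi(|u|)\,u/|u|$ with $|\tilde u|=1$ off the balls, so that $\tilde\mu=\curl\langle i\tilde u,\nabla_A\tilde u\rangle+h$ is an $L^1$ function supported in $\cup_i B_i$ with $|\tilde\mu|\le|\nabla_A\tilde u|^2+|h|$; the comparison with $2\pi\sum_i d_i\delta_{a_i}$ then costs only $\sum_i r_i\int_{B_i}|\tilde\mu|\le C(\sum_i r_i)(1+F_\eps)$, with no need to control degrees or boundary line integrals, while $\|\mu-\tilde\mu\|_{(C^{0,1}_0)^*}\le C\eps F_\eps$ follows from the gauge-covariant identity $\langle iu,\nabla_A u\rangle-\langle i\tilde u,\nabla_A\tilde u\rangle=(\rho^2-\chi(\rho)^2)(\nabla\varphi-A)$ and Cauchy--Schwarz.

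Two secondary problems. First, several of your estimates rest on $\|\nabla u\|_{L^2}$ and $\|A\|_{L^2}$, which are not gauge invariant; even after fixing the Coulomb gauge one only gets $\|\nabla u\|_{L^2}\le C(1+F_\eps)$ (not $CF_\eps^{1/2}$), which breaks the power counting in your bound $\int_U|u||\nabla u|\le Cr(1+F_\eps^{1/2})$ and in the $L^1$ bound on $2\det\nabla u$ at the end; the clean route for the last assertion is the pointwise gauge-covariant bound $|\mu(u,A)-h|\le|\nabla_A u|^2$. Second, the polar decomposition $v=u/|u|$ is unavailable on the layer $\{x:\dist(x,\bo)\le\eps\}\setminus\cup_i B_i$, where $u$ may vanish (the hypothesis only covers the bad set at distance $>\eps$ from $\bo$); this region must be treated separately and is the source of the $\eps$ in $\max(\eps,\sum_i r_i)$, which your argument as written does not produce.
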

 The spaces $(C^0)^*$ and $(C^{0,1})^*$ here are the space of bounded Radon measures and 
the dual of Lipschitz functions, respectively.
For estimates in the dual of H\"older spaces, see the statement in \cite{livre}. 
Note that this result is naturally meant to work with a collection of disjoint balls obtained by a ball-construction. The total radius chosen to end the construction has to be taken small enough if one wants the estimate to be precise  ---  this is of course in competition with the lower bound estimate which improves as the total radius gets larger. So we see why  $\sum_i r_i$ has to be optimized according to the needs. 
Note that it's the  centers of the final balls in the construction (that may depend on the final total radius chosen $r$)  which play the role of approximate vortex centers.
More precise estimates can be obtained, but with points $a_i$ that do not correspond to  centers of balls obtained in a  ball construction, this was done in \cite{jspirn}.

The proof is easy enough that we can give its main argument.
   \begin{proof} 
   %We present the proof in the case $A=0$, the general case is not much more difficult and can be found in \cite[Chap. 6]{livre}. 
   We set $\mu= \mu(u,A)$.
   First, let us consider the function $\chi$ on $[0,+\infty]$ defined by $\chi(t)= 2t$ if $t \le 1/2$, $\chi(t)= 1$ if $t\in [\hal, \frac{3}{2}]$, $\chi(t)=t$ if $t\ge 2$, and $\chi $ is continuous and piecewise affine. It satisfies 
   \be \label{chmt}
   |\chi(t)^2 -t^2|\le 3 t|1-t|.\ee

   We may then set 
   $\tilde{u}= \chi(|u|)\frac{u}{|u|} $. 
   By assumption on the balls, we have $|\tilde{u}|=1$  outside of $\cup_{B\in\mathcal{B}} B$.
   We then define $\tilde{\mu}= \curl \langle i\tilde{u}, \nab_A \tilde{u}\rangle+\curl A$ and check two facts~:
   \be\label{peerja}
   \|\mu- \tilde{\mu}\|_{C^{0,1}(\Omega)^*}\le C \eps F_\eps(u,A),\ee
    \be\label{muvan}
    \tilde{\mu}=0 \quad \text{outside} \ \cup_{B\in\mathcal{B}} B.\ee
    For the first fact, it suffices to use an integration by parts: let $\zeta$ be a smooth test-function vanishing on $\bo$. By definition  of $\tilde{u}$ we have 
    \begin{multline}\label{rhhs}
    \left|\io \zeta(\mu-\tilde{\mu})\right| = \left|\io \nab^\perp \zeta \cdot (\langle iu, \nab_{A} u\rangle -\langle i\tilde{u}, \nab_A \tilde{u}\rangle)\right|\\ \le \|\nab \zeta\|_{L^\infty(\Omega)} \io \frac{||u|^2 - |\tilde{u}|^2 |}{|u|} |\nab_A u|\le  3\|\nab \zeta\|_{L^\infty(\Omega)} \io |1-|u|| |\nab_A u|
    \end{multline}
    where we used the formal relation $\langle iu, \nab_A u\rangle = \ro^2 (\nab \vp - A)$ and $|\nab_A u|\ge \ro |\nab \vp - A|$ together with \eqref{chmt}.
    It then suffices to apply Cauchy-Schwarz to control the right-hand side of \eqref{rhhs}
    by $\eps F_\eps (u,A)$ and conclude \eqref{peerja}.
    \eqref{muvan} is a consequence of the simple observation that wherever $|u|=1$, we have $\curl \langle iu, \nab_A u \rangle +\curl A= \curl \nab \vp=0$. We thus know  that $\tilde{\mu}$ is supported in the (disjoint) balls only, and thus we may write, for any smooth test-function $\zeta$ vanishing on $\bo$, 
  \begin{multline*}
   \io \zeta \tilde{\mu}= \sum_i \int_{B(a_i,r_i)}
\zeta \tilde{\mu}\\
= \sum_i \zeta(a_i) \int_{B(a_i, r_i) } \curl \( \langle i\tilde{u},\nab_A  \tilde{u}\rangle + A \)  + \sum_i \int_{B(a_i,r_i)}  (\zeta-\zeta(a_i) ) \tilde{\mu}.\end{multline*}   
   The first term of the right-hand side of this relation can be handled by Stokes'  theorem, and recalling that $|\tilde{u}|=1$ on the boundary of each ball and the definition of the degree, we find 
   \be\label{premt}
   \sum_i \zeta(a_i) \int_{B(a_i, r_i) } \curl \( \langle i\tilde{u},\nab_A  \tilde{u}\rangle + A \)  =   
   2\pi \sum_i d_i \zeta(a_i)\ee (for the balls that are $\eps$-close to $\bo$ we need to replace $a_i$ by the nearest point on the boundary).
   The second term can be bounded above thanks to the Lipschitz continuity of $\zeta$ by 
   $$\|\zeta\|_{C^{0,1}(\Omega)} \sum_i r_i \int_{B(a_i,r_i)} |\tilde{\mu}|.$$
   Noting that 
   $\tilde{\mu}= 2(\pa_x \tilde{u} - i A_x \tilde{u} ) \times  (\pa_x \tilde{u} - iA_y \tilde{u}) + \curl A$ (this is the same as using  the formal relation $\curl \langle i\tilde{u}, \nab \tilde{u}\rangle= \curl (\tilde{\ro}^2 (\nab\vp -A) ) = \nab^\perp \tilde{\ro}^2 \cdot \nab \vp$), we can bound $|\tilde{\mu}| $ by 
   $|\nab_A \tilde{u}|^2 + |h|$, and in view of the definition of $\chi$ we are led to the control of the second term by $\|\zeta\|_{C^{0,1}(\Omega)} (\sum_i r_i +\eps) (1+F_\eps(u,A))$ (a little discussion is again needed for the balls that are very close to the boundary). Combining this with 
   \eqref{peerja} and \eqref{premt}, we obtain the result.
   
   \end{proof}

   %%%%%%%%%%%%%%%%%%%%%%%%%%%%%%%%%%%%%%%%%%%%%%%%%%%%%%%%%%%%%%%%%%%%%%%%%%%%%%%%%%%%%%%%%%%%%%%%%%CHAP  9
   %%%%%%%%%%%%%%%%%%%%%%%%%%%%%%%%%%%%%%%%%%%%%%%%%%%%%%%%%%%%%%%%%%%%%%%%%%%%%%%%%%%%%%%%%

   \chapter{The leading order behavior for Ginzburg-Landau} \label{glleading}
   In this chapter, thanks to the tools presented in the previous chapter, we carry out the same program as in Chapter \ref{leadingorder} i.e.  the program of obtaining a mean-field limit or leading order behavior of minimizers (or ground states) of the Ginzburg-Landau functional (without temperature).
   The content of this chapter is essentially that of \cite{ss3} or \cite[Chap. 7]{livre}, but we will try here to highlight the analogy with the Coulomb gas.
   
\section{The $\Gamma$-convergence result}

   In what follows, the space $H^{-1}(\Omega)$ denotes the dual of the Sobolev space $H^1_0(\om)$, and  $\mathcal{M}(\om)$  denotes the space of bounded Radon measures over $\om$, i.e. $C^0(\om)^*$. For a measure $\mu$ in $\mathcal{M}(\om)$, $|\mu|(\om)$ denotes its total variation. 
   
   We admit the fact (see e.g. \cite[Sec 7.3.1]{livre} or \cite[Lemma 3.2]{gms}) that if $\mu\in H^{-1}(\om)$ then  
    $U^\mu(x)= \int G_\Omega(x,y)\, d\mu(y)$, with $G_\Omega$ given by \eqref{eqgreen}  makes sense and we have
\be\label{nabug}
\io |\nab U_\mu|^2 + |U_\mu|^2 = \iint_{\om \times \om } G_\om (x,y) \, d\mu(x)\, d\mu(y).\ee
%Note this is the analogue of   but in a bounded domain.

   \begin{theo}[$\Gamma$-convergence of the Ginzburg-Landau functional \cite{ss3}, \cite{livre}, Chap. 7]\label{th9.1}
   Assume \be\label{asshe}
   \lim_{\eps \to 0 } \frac{\he}{\lep}= \lambda>0.\ee  Then, as $\eps \to 0$, the functional $\frac{G_\eps}{\he^2}$ $\Gamma$-converges as $\eps \to 0$, for  the sense of the convergence  of $\frac{\mu(u_\eps, A_\eps)}{\he} $ to $\mu$ in $\mathcal{M}(\Omega)$, to the functional 
  \be\label{elmu}
  E_\lambda(\mu)= \frac{1}{2\lambda} |\mu|(\om)+ \hal \io |\nab h_{\mu}|^2 + |h_\mu-1|^2\ee
  defined over $\mathcal{M}(\om) \cap H^{-1}(\om)$,
  where $h_\mu$ is the potential generated by $\mu $ as follows~:
  \be\label{hmu}
  \left\{ \begin{array}{ll}
  -\Delta h_\mu+ h_\mu= \mu & \text{in} \ \Omega\\
  h_\mu=1& \text{on} \ \bo.\end{array}\right.
  \ee
     \end{theo}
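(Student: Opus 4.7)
The plan is to establish the $\Gamma$-liminf and $\Gamma$-limsup inequalities separately, using the ball construction (Theorem \ref{thboules}) and the Jacobian estimate (Theorem \ref{thjac}) as the main technical tools, combined with a natural rescaling by $\he$. Given a sequence $(u_\eps,A_\eps)$ with $\mu_\eps := \mu(u_\eps,A_\eps)/\he \to \mu$ in $\mathcal{M}(\Omega)$, I would first introduce the rescaled induced field $\tilde h_\eps := (\curl A_\eps)/\he$ and note that it solves $-\Delta \tilde h_\eps + \tilde h_\eps = \mu_\eps$ with boundary value $1$, so by standard elliptic theory $\tilde h_\eps \rightharpoonup h_\mu$ in $H^1(\Omega)$ (after a preliminary argument, by a crude ball construction bound, that $\{\mu_\eps\}$ is bounded in $H^{-1}$). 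The natural small parameter for splitting $G_\eps$ is then $1/\he^2$, and the whole argument hinges on matching the vortex cost $\pi D \lep$ (where $D=\sum_i|d_i|$) with the limit term $\tfrac{1}{2\lambda}|\mu|(\Omega)$ via the assumption $\he/\lep \to \lambda$.

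For the $\Gamma$-liminf, I would fix a radius $r=r(\eps)$ going to $0$ slowly (e.g.\ $r=1/\lep$) and apply Theorem \ref{thboules} to obtain a collection $\mathcal B_\eps$ of disjoint balls covering the ``bad set'' of $u_\eps$, with vortex centers $a_{i,\eps}$ and degrees $d_{i,\eps}$. This yields
\[
\tfrac12\int_{\cup \mathcal B_\eps}|\nabla_A u|^2+|\curl A|^2+\tfrac{(1-|u|^2)^2}{2\eps^2}\ge \pi D_\eps(\lep - \log D_\eps - C).
\]
Outside these balls, $|u_\eps|\approx 1$ and the pointwise inequality $|\nabla_A u|^2\ge |\nabla h|^2/|u|^2 \ge |\nabla h|^2$ (using $|u|\le 1$) lets me bound the remaining part of $G_\eps$ below by $\tfrac12 \int_{\Omega\setminus\cup\mathcal B_\eps}|\nabla h|^2+(h-\he)^2$, which, after dividing by $\he^2$ and using $|\cup\mathcal B_\eps|\to 0$, is lower semicontinuous and passes to $\tfrac12\int_\Omega|\nabla h_\mu|^2+(h_\mu-1)^2$ in the liminf. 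Meanwhile, Theorem \ref{thjac} gives $\|\mu(u_\eps,A_\eps)-2\pi\sum d_{i,\eps}\delta_{a_{i,\eps}}\|_{(C^{0,1}_0)^\ast}=o(\he)$, so $2\pi\sum d_{i,\eps}\delta_{a_{i,\eps}}/\he \rightharpoonup \mu$ as measures; lower semicontinuity of total variation under weak-$\ast$ convergence then yields $\liminf 2\pi D_\eps/\he \ge |\mu|(\Omega)$, hence $\liminf \pi D_\eps\lep/\he^2 \ge |\mu|(\Omega)/(2\lambda)$. Summing gives $\liminf G_\eps/\he^2\ge E_\lambda(\mu)$.

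For the $\Gamma$-limsup, a standard density argument reduces the problem to $\mu$'s of the form $\mu = \sum_j c_j\delta_{x_j}\in H^{-1}$ with $c_j\in\mathbb{R}$, and then by a further diagonal extraction to sums of the form $\frac{2\pi}{\he}\sum_i d_i\delta_{x_i}$ with $d_i=\pm 1$. For such a target, I would place approximately $\lfloor |c_j|\he/(2\pi)\rfloor$ well-separated vortices of sign $\mathrm{sign}(c_j)$ in a small neighborhood of each $x_j$, use the explicit radial profile $f_d(|x-a_i|/\eps)e^{id_i\theta_i}$ for $u_\eps$ near each vortex, and choose $A_\eps = \nabla \phi_\eps - \nabla^\perp h_\eps$ where $h_\eps$ solves the London equation $-\Delta h_\eps+h_\eps = 2\pi\sum_i d_i\delta_{a_i}^{(\eps)}$ in $\Omega$ with $h_\eps=\he$ on $\partial\Omega$. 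A direct computation (analogous to \eqref{costvor}) shows that each vortex contributes $\pi\lep+O(1)$ to the core energy, summing to $\pi D\lep/\he^2 \to |\mu|(\Omega)/(2\lambda)$, while $\tfrac12\int|\nabla h_\eps|^2+(h_\eps-\he)^2$ divided by $\he^2$ converges to $\tfrac12\int|\nabla h_\mu|^2+(h_\mu-1)^2$ by continuous dependence of the Yukawa potential on its source.

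The main obstacle is the $\Gamma$-limsup construction in full generality: one needs to simultaneously (i) diagonalize a measure-theoretic approximation of an arbitrary signed $\mu\in H^{-1}\cap\mathcal M$ by discrete vortex configurations, (ii) keep the vortices well-separated so that their mutual interaction is negligible and the Jacobian estimate yields $\mu(u_\eps,A_\eps)/\he\to\mu$, and (iii) make sure the lower-order errors from cutting off the radial profiles, from the approximation $\delta_a^{(\eps)}\approx\delta_a$, and from the density argument, all vanish after division by $\he^2$. The matching of the ball-construction leading term $\pi D\lep$ with the explicit radial-profile upper bound, together with the Jacobian estimate controlling $|\mu_\eps|(\Omega)$ in terms of $D_\eps/\he$, is what makes the prefactor $1/(2\lambda)$ the sharp one.
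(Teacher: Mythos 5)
Your $\Gamma$-liminf argument is essentially the paper's: ball construction at total radius $o(1)$, the Jacobian estimate to identify the limit of $2\pi\sum_i d_{i,\eps}\delta_{a_{i,\eps}}/\he$ with $\mu$, lower semicontinuity of the total variation to get $\liminf 2\pi D_\eps/\he\ge|\mu|(\om)$, and the second Ginzburg--Landau equation (after minimizing in $A$) to pass from $|\nab_A u|^2$ to $|\nab h|^2$ outside the balls. The only technical point you gloss over is that ``$|\cup\mathcal B_\eps|\to 0$ plus lower semicontinuity'' needs the small extraction trick of integrating over $\om\setminus\mathcal A_N$ with $\mathcal A_N=\cup_{n\ge N}(\cup_i B_i)$, but that is routine.

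The genuine gap is in the $\Gamma$-limsup, and it is the density step. You propose to reduce to measures of the form $\sum_j c_j\delta_{x_j}$ and then cluster $\lfloor|c_j|\he/2\pi\rfloor$ vortices near each $x_j$. But in dimension $2$ a nontrivial finite sum of Dirac masses is \emph{not} in $H^{-1}(\om)$: its Yukawa potential has a logarithmic singularity and $E_\lambda=+\infty$ on such measures. A diagonal argument for the $\Gamma$-limsup requires approximating $\mu$ by measures $\mu_k$ with $E_\lambda(\mu_k)\to E_\lambda(\mu)$, so one must approximate by \emph{more regular} measures, not more singular ones; your reduction goes in the wrong direction and the recovery sequences it would produce have energy blowing up faster than $\he^2$ (the mutual interaction of $\sim\he$ vortices concentrated in a fixed small neighborhood contributes $O(\he^2\log(1/\delta))$ per cluster, which does not converge to $\iint G_\om\,d\mu\,d\mu$). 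Relatedly, your energy accounting treats ``$\pi\lep$ per vortex core'' and the field energy $\hal\io|\nab h_\eps|^2+(h_\eps-\he)^2$ as separate additive contributions, whereas the $\pi\lep$ actually lives in the annuli and is exactly the diagonal part of $\iint G_\om\,d\mu_\eps\,d\mu_\eps$ when the sources are smeared onto circles of radius $\eps$; keeping them separate double-counts. The paper's route is to apply the Coulomb-gas $\Gamma$-limsup construction (Proposition \ref{gammaconvergenceHn} with $V=0$) to $\mu/|\mu|(\om)$ with $n=[\he|\mu|(\om)/2\pi]$: this produces points \emph{spread out according to $\mu$}, mutually separated by $\gtrsim\he^{-1/2}\gg\eps$, whose smeared empirical measures satisfy both $\mu_\eps/\he\to\mu$ and the convergence of the off-diagonal logarithmic energy to $\iint(-\log|x-y|)\,d\mu\,d\mu$, while the diagonal (self-interaction) terms of the smeared circles contribute exactly $4\pi^2 n\lep\sim 2\pi\he^2|\mu|(\om)/(2\lambda)$, i.e.\ the $\frac{1}{2\lambda}|\mu|(\om)$ term. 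Without an input of this type your construction cannot recover the cross term $\hal\io|\nab h_\mu|^2+|h_\mu-1|^2$ with the correct constant.
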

\begin{rem}  
\begin{enumerate}
\item   Note that here we use a sense of convergence of $(u,A)$ that is the convergence of a nonlinear function of $(u,A)$, as  in Remark \ref{rem4} in Chapter~\ref{leadingorder}. This is otherwise the counterpart of Proposition \ref{gammaconvergenceHn} for Ginzburg-Landau.
\item We could obtain convergence in a stronger sense for $\mu(u_\eps, A_\eps)/\he$, we refer to \cite[Chap. 7]{livre}.
\item We can in fact obtain the same result when $\lambda=\infty$, provided $\he \ll \frac{1}{\eps^2}$ as $\eps \to 0$. This is done in \cite[Chap. 8]{livre}.\end{enumerate}
\end{rem}
   \section{The proof of $\Gamma$-convergence}
   As in every proof of $\Gamma$-convergence, we need to prove a lower bound and an upper bound through the construction of a recovery sequence.
  \subsection{Lower bound} We in fact prove the stronger  result of  $\Gamma$-liminf   + compactness (cf. Remark \ref{gconvcom} in Chapter \ref{leadingorder})  i.e. that if $\frac{1}{\he^2} G_\eps(u_\eps, A_\eps)$ is bounded, then $\mu(u_\eps, A_\eps)/\he$ has a convergent subsequence and the $\Gamma$-liminf relation holds. 
   The lower bound relies on the estimates given by the ball construction method, and some lower semi-continuity arguments. Compared to the situation of the Coulomb gas, we do not have to worry about removing the diagonal terms, these are naturally smoothed out (at the scale $\eps$) in the Ginzburg-Landau  functional, but in turn we have  to estimate these terms, corresponding to the self-interaction -- or cost -- of each vortex, and we do so via the ball construction method.  Also we do not have an energy in the form of a sum of pairwise interactions but rather in integral form, as an integral of the potential generated by the charges, equivalent to \eqref{formalcomputation}. Note also that the fact that the vortex degrees do not have fixed sign would create difficulties in using the same method as in the proof of Proposition~\ref{gammaconvergenceHn}.
   
Let us start from an arbitrary family of configurations $(u_\eps, A_\eps)$, assuming $\frac{G_\eps(u_\eps, A_\eps)}{\he^2} \le C $ for some $C$ independent of $\eps$. Since we assume $\he \sim \lambda \lep$ with $\lambda>0$, this also implies that $\he \le C \lep$ and thus $G_\eps (u_\eps, A_\eps) \le C \lep^2$.  We may then apply Theorem \ref{thboules} with $\alpha=\hal $ and  final radius $r= \lep^{-10}.$
 It yields a collection of balls $\{B_i\}$ covering $\Omega_\eps:=\{x\in \Omega, \dist(x, \pa \om) >\eps\}$, outside of which we have $||u_\eps|-1|\le \eps^{1/4}$, with\footnote{recall that $F_\eps$ is the Ginzburg-Landau functional $G_\eps$ with $\he$ set to $0$}  $C F_\eps (u,A)\ge  D\lep$, and 
 \be \label{resb}
 \int_{\cup_i B_i} |\nab_{A_\eps} u_\eps|^2 + |h_\eps|^2 + \frac{(1-|u_\eps|^2)^2}{2\eps^2} 
 \ge \pi \sum_i |d_i|\(\log \frac{1}{\sum_i |d_i| \eps }- C \log \lep  \).
 \ee
  One immediately checks that the bound $D \lep \le C F_\eps (u,A)$, the bounds on $G_\eps $ and on $\he$ yield $D\le C \lep$ for some constant $C$ (depending only on $\lambda$). Plugging this into \eqref{resb}, we get
  \be \label{resb2}
 \int_{\cup_i B_i} |\nab_{A_\eps} u_\eps|^2 + |h_\eps-\he|^2 + \frac{(1-|u_\eps|^2)^2}{2\eps^2} 
 \ge \pi \sum_i |d_i|(\lep - C \log \lep  ).
 \ee
 It also implies that, defining $\nu_\eps= 2\pi \sum_i d_i \delta_{a_i}$ (the discrete approximate Jacobian), we have  that $\frac{\nu_\eps}{\he}$ is bounded in the sense of measures (since  \eqref{asshe} holds).   Thus, up to extraction, we can assume that $\frac{\nu_\eps}{\he} \to \mu$ for some bounded Radon measure $\mu\in \mathcal{M}(\om)$, and we have 
 \be\label{mucov}
 \liminf_{\eps \to 0} \frac{2\pi \sum_i  |d_i|}{\he} \ge |\mu|(\om).\ee
 In addition, with the Jacobian estimate Theorem \ref{thjac}, by choice of the  final radius $r= \lep^{-10}$, we find that $\frac{1}{\he}  (\mu(u_\eps, A_\eps)- \nu_\eps) \to 0$ in $(C^{0,1}(\om))^*$ and thus we also have \be\label{cvjaco}
 \frac{  \mu(u_\eps, A_\eps)}{\he}\to \mu\quad \text{in} \  (C^{0,1}(\om))^*.\ee
 
Next, since we are looking for a lower bound for $G_\eps(u,A)$ we can assume without loss of generality  that $G_\eps(u, \cdot)$ has been minimized with respect to $A$, which ensures, as explained in Chapter \ref{glheuris}, that the second Ginzburg-Landau equation \eqref{gl2} is satisfied, hence also the  London equation \eqref{eqlondon}. 
Dividing \eqref{eqlondon} by $\he$ and using \eqref{cvjaco}, we find that $\frac{h_\eps}{\he}$ (where $h_\eps= \curl A_\eps$) converges  (say in the sense of distributions) to some $h_{\mu}$ which is related to $\mu$ via \eqref{hmu}. 
%In addition, since $(C^{0,1})^* =W^{-1,1} $, by  elliptic regularity $\frac{h_\eps}{\he}$ will converge in $L^p$ for some $p>1$.

We recall that \eqref{gl2} implies that $|u_\eps|^2 |\nab_{A_\eps} u_\eps|^2\ge |\nab h_\eps|^2 $. Since $|u_\eps|=1$ outside of the balls modulo an error $\eps^{1/4}$, we may bound from below  $\int_{\Omega \backslash \cup_i B_i}  |\nab_{A_\eps} u_\eps|^2 $  by $\int_{\Omega \backslash \cup_i B_i} |\nab h_\eps|^2 + o(1)$, and thus with \eqref{resb2}  we are led to 
$$G_\eps(u_\eps,A_\eps) \ge  \pi \sum_i |d_i|\(\lep - C \log \lep  \) + \int_{\Omega \backslash \cup_i B_i} |\nab h_\eps|^2 +|h_\eps-\he|^2 .$$
The last step is to divide by $\he^2$ and pass to the liminf. For the second term, we observe that since $\sum_i r_i\to 0$ as $\eps$, we may extract a sequence $\{\eps_n\}_n$ such that 
  $\mathcal{A}_N :=\cup_{n\ge N} (\cup_i B_i) $ satisfies $|\mathcal{A}_N|\to 0$ as $N\to \infty$. In other words, there exists an arbitrarily  small set which contains all the balls for all  $\eps$'s  along the subsequence.  We may then write 
  \begin{multline*}C \ge  \liminf_{n\to \infty}\frac{
  G_{\eps_n}(u_{\eps_n},A_{\eps_n})}{\he^2} \ge \liminf_{\eps \to 0} \frac{\lep}{\he} \liminf_{\eps \to 0}\frac{\pi \sum_i |d_i|}{\he}\\ +\hal  \liminf_{n\to \infty} \int_{\Omega \backslash \mathcal{A}_N}
   \left|\nab\frac{ h_\eps}{\he}\right|^2 +\left|\frac{h_\eps}{\he}-1\right|^2 .\end{multline*}
   Using \eqref{asshe}, \eqref{mucov}, the weak convergence of $h_\eps/\he$ to $h_\mu$,  and weak lower semi-continuity (up to a further extraction), we deduce that for every $N$,
   $$  \liminf_{n\to \infty}\frac{
  G_{\eps_n}(u_{\eps_n},A_{\eps_n})}{\he^2} \ge \frac{1}{2\lambda}|\mu|(\Omega)+ \hal 
   \int_{\om \backslash \mathcal{A}_N} |\nab h_{\mu}|^2 + |h_\mu-1|^2.
   $$
   
   Letting then $N\to \infty$, since $|\mathcal{A}_N|\to 0$  we deduce that $h_\mu\in L^2(\om)$ and the lower bound 
   $  \liminf_{\eps\to 0}\frac{
  G_{\eps}(u_{\eps},A_{\eps})}{\he^2} \ge E_\lambda(\mu)$ holds along that subsequence.
 This proves the $\Gamma$-liminf relation (together with the compactness). 
\subsection{Upper bound}
 We prove the $\Gamma$-limsup inequality via the construction of a recovery sequence when $\mu$ is a nonnegative measure (the general case is not much different, see \cite[Chap. 7]{livre} for details).
 We first split $G_\om(x,y)$ (defined by \eqref{eqgreen}) as $G_\om(x,y)=  \frac{1}{2\pi} ( -\log |x-y| +S_\om(x,y))$ with $S_\om\in C^1(\om\times \om)$. 
 \medskip
 
 \noindent
{\bf Step 1.} Determining the vortex locations\\
 Since $\mu$ is a positive measure of finite mass in $\om$, we may apply the $\Gamma$-limsup part of Proposition \ref{gammaconvergenceHn} to the probability measure $\frac{\mu}{|\mu|(\om)}$, with potential $V=0$ and with 
 \be\label{defnnnn}
 n = \left[\frac{1}{2\pi} \he 
|\mu|(\om)\right] ,\ee with $[\cdot]$ the integer part.  This yields the existence of points $a_i$ (depending on $n$ hence on $\eps$), such that 
\be\label{prem1}
\nu_\eps:=\frac{\sum_i \delta_{a_i} }{[\frac{1}{2\pi} \he 
|\mu|(\om)] } \to  \frac{\mu}{|\mu|(\om)}\ee
and 
\be\label{prem2}\limsup_{n\to \infty} \iint_{\mr^2 \backslash \triangle} - \log |x-y|\, d\nu_\eps(x)\, d\nu_\eps(y)\le \frac{1}{|\mu|(\om)^2}  \iint - \log |x-y|\, d\mu(x)\, d\mu(y).\ee
Moreover, examining the proof in Proposition \ref{gammaconvergenceHn}, we see that the points $a_i$ are separated by a distance $Cn^{-1/2}  \ge c\he^{-1/2} \gg \eps$, we   may also check that the points can be assumed to all lie in $\om$, and that the same results hold when replacing $\nu_\eps$ by $\frac{\frac{1}{2\pi} \sum_i \mu_i  }{[\frac{1}{2\pi} \he 
|\mu|(\om)] }$ where $\mu_i$  is the uniform measure of mass $2\pi$ supported in $\pa B(a_i, \eps)$ (note the $\mu_i$'s have disjoint support by the previous observation).
In other words, we have 
\be\label{prem3}
\frac{\mu_\eps}{\he} :=\frac{\sum_i \mu_i }{\he} \to \mu\ee
in the weak sense of measures, and 
\be\label{prem4}\limsup_{\eps \to 0 } \sum_{i\neq j} \frac{1}{\he^2} \iint - \log |x-y|\, d\mu_i(x)\, d\mu_j(y)\le  \iint - \log |x-y|\, d\mu(x)\, d\mu(y).\ee
Since \eqref{prem3} holds, by weak convergence  and regularity of $S_\om$ we also have 
\be \label{prem5}
\limsup_{\eps \to 0 }  \frac{1}{\he^2} \iint S_\om(x,y)\, d\mu_\eps(x)\, d\mu_\eps(y)\le  \iint S_\om(x,y)\, d\mu(x)\, d\mu(y).\ee

We can also easily estimate the contributions of diagonal terms, by definition of $\mu_i$~:
\begin{multline}\label{prem6}
 \sum_{i=1}^n  \iint - \log |x-y|\, d\mu_i(x)\, d\mu_i(y)= - n \int_{[0,2\pi]^2} \log |\eps e^{i\theta}-\eps e^{i\phi}| \, d\theta\, d\phi\\
 = 4\pi^2 n \lep +C.\end{multline}
Combining \eqref{asshe}, \eqref{defnnnn}, \eqref{prem4}---\eqref{prem6} and the splitting of $G_\om$ we obtain 
\begin{multline}\label{prem7}
 \frac{1}{\he^2} \iint_{\om\times \om} G_\om(x,y)\, d\mu_\eps(x)\, d\mu_\eps(y)\\
 \le  \iint_{\om \times \om} G_\om(x,y)\, d\mu(x)\, d\mu(y) + \frac{ |\mu|(\om)}{\lambda} +o(1).\end{multline}
{\bf Step 2.}  Constructing the configuration.\\
In this step we sort of reverse-engineer the configuration $(u_\eps,A_\eps)$ from the vortices we have  constructed.
First we let $h_\eps$ be the solution of 
\be \label{eqforh}
\left\{\begin{array}{ll}
 - \Delta h_\eps + h_\eps= \mu_\eps & \text{in} \ \om\\
h_\eps=\he & \text{on} \ \bo.\end{array}\right.\ee
Then, we let $A_\eps$ be any vector field such that $\curl A_\eps=h_\eps$ in $\om$ and define $u_\eps=\ro_\eps e^{i\vp_\eps}$ as follows. 
We let 
\begin{equation}\label{917}\ro_\eps(x)= \begin{cases} 
0 & \text{in} \cup_{i=1}^n B(a_i, \eps)\\
\frac{|x-a_i|}{\eps} -1 & \text{in } B(a_i, 2\eps) \backslash B(a_i, \eps)\\
1& \text{otherwise.}
\end{cases}\end{equation}
For any $x \in \om \backslash \cup_{i=1}^n B(a_i, \eps)$, we let 
$$\vp_\eps(x)= \oint_{(x_0,x)} (A_\eps- \nab^{\perp} h_\eps) \cdot \tau \, d\ell,$$
where $x_0$ is any reference point in $\om \backslash \cup_i B(a_i, \eps)$, and $(x_0,x)$ is any curve joining $x_0$ to $x$ in $\om  \backslash \cup_i B(a_i, \eps).$
From \eqref{eqforh} we see that this definition does not depend modulo $2\pi$ on the curve chosen to join $x_0$ to $x$, thus $e^{i\vp}$ is well-defined in $(\cup_i B(a_i, \eps))^c$.
Indeed, if $\gamma= \pa U$ is a closed curve in $(\cup_i B(a_i, \eps))^c$, using Stokes' theorem and $\curl A_\eps=h_\eps$, we find 
$$\oint_\gamma (A_\eps-\nab^\perp h_\eps) \cdot \tau \,d\ell= \int_U (-\Delta h_\eps +h_\eps) = \int_U \sum_i \mu_i\in 2\pi \mn.$$ 
The (multi-valued) function $\vp_\eps$ satisfies
\be\label{pre7}
-\nab^\perp h_\eps = \nab \vp_\eps - A_\eps \quad \text{in} \  \om  \backslash \cup_i B(a_i, \eps).\ee
Finally, we may define $u_\eps=\ro_\eps e^{i\vp_\eps}$ and we notice that the fact that $\vp_\eps$ is not defined on $\cup_i B(a_i, \eps)$ is not important since $\ro_\eps $ is zero there.  
\medskip

\noindent
{\bf Step 3.}  Computing the energy of the test-configuration.\\
To finish it suffices to evaluate $G_\eps(u_\eps,A_\eps)$. 
First we notice that by construction $$\io |\nab |u_\eps||^2 + \frac{(1-|u_\eps|^2)^2}{2\eps^2} \le Cn \le o(\he^2)$$ by \eqref{defnnnn}. Using that $|\nab_{A_\eps} u_\eps|^2 = |\nab |u_\eps||^2 + |u_\eps|^2 |\nab \vp_\eps -A_\eps|^2 $ and that $|u_\eps|=0$ in $\cup_i B(a_i, \eps)$ and $|u_\eps|\le 1$ everywhere and \eqref{pre7}, 
we deduce that 
$$\io |\nab_{A_\eps} u_\eps|^2 +\frac{(1-|u_\eps|^2)^2}{2\eps^2} \le \int_{( \cup_i B(a_i, \eps))^c} |\nab \vp_\eps - A_\eps|^2  +O(n) \le \io |\nab h_\eps|^2+o(\he^2) .$$
It follows that for this configuration we have the inequality corresponding to \eqref{apprx}
i.e. 
$$G_\eps(u_\eps,A_\eps) \le \hal \io |\nab h_\eps|^2 + |h_\eps-\he|^2 +o(\he^2).
$$
In view of \eqref{eqforh}, we have $h_\eps(x)= \he +\io G_\om(x,y) \, (\mu_\eps(y)-\he)\, dy$ and using \eqref{nabug}, we have that 
$$ \io |\nab h_\eps|^2 + |h_\eps-\he|^2=  \iint_{\om\times \om} G_\om(x,y) \, d(\mu_\eps(x)-\he) (x)\, d(\mu_\eps(y)-\he)(y),
$$as in \eqref{apprx2}.
Evaluating this integral is now a direct consequence of \eqref{prem7} and \eqref{prem3}
and leads us to 
\begin{multline*}\limsup_{\eps \to 0}\frac{G_\eps (u_\eps,A_\eps) }{\he^2} \le\hal \iint_{\om\times \om} G_\om(x,y) \, d(\mu-1) (x)\, d(\mu(y)-1)(y)+ \frac{1}{ 2\lambda} |\mu|(\om)\\
= \hal \io |\nab h_\mu|^2 + |h_\mu-1|^2 +  \frac{1}{ 2\lambda} |\mu|(\om)=E_\lambda(\mu).\end{multline*}
Indeed, there is no problem in passing to the limit in terms of the form 
$$\iint_{\om\times \om} G_\om(x,y) \, dx \ d\frac{\mu_\eps}{\he} (y) $$ since one may check that the function $\io G_\om (x,y) \, dx $ is a continuous function of $y$.
This concludes the proof of the $\Gamma$-limsup, provided we check that we do have $\frac{\mu(u_\eps,A_\eps)}{\he} \rightharpoonup \mu$. But this can be checked  from 
$\mu(u_\eps,A_\eps) = \curl (|u_\eps|^2 (\nab \vp_\eps-A_\eps)) + h_\eps = \curl (|u_\eps|^2 \nab^\perp h_\eps) +h_\eps$, \eqref{eqforh}, \eqref{917} and \eqref{prem3}.

 \section[Mean-field limit and obstacle problem]{Minimization of the mean-field limit and connection to the obstacle problem}
 Once the $\Gamma$-convergence result is obtained, it immediately implies the leading-order behavior from Proposition \ref{gammaconvmini} and Remark \ref{gcvcomplete}: we have
 \begin{coroll}[Limit of Ginzburg-Landau minimizers]
 Assume \eqref{asshe}. Let $(u_\eps, A_\eps)$ minimize $G_\eps$, then as $\eps \to 0$ we have 
$$\frac{\mu(u_\eps, A_\eps)}{\he}\rightharpoonup \mu_\lambda$$
where $\mu_\lambda $ is the unique minimizer of $E_\lambda$.
\end{coroll}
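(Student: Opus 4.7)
The plan is to deduce this corollary from Theorem \ref{th9.1} using the abstract $\Gamma$-convergence machinery of Proposition \ref{gammaconvmini} together with Remark \ref{gcvcomplete}, plus a strict convexity argument for uniqueness of the minimizer of $E_\lambda$. First, I would verify that $E_\lambda$ admits a minimizer, and that this minimizer is unique. Existence follows from the direct method: the two pieces $\mu \mapsto |\mu|(\Omega)$ and $\mu \mapsto \tfrac{1}{2}\int |\nabla h_\mu|^2 + |h_\mu-1|^2$ are both nonnegative and weakly lower semicontinuous on $\mathcal{M}(\Omega)$ (for the second, note that $\mu \mapsto h_\mu$ is linear and continuous from $\mathcal{M}(\Omega) \cap H^{-1}(\Omega)$ to $H^1(\Omega)$ in view of \eqref{hmu}--\eqref{nabug}), and coercivity is immediate from the total variation term: any minimizing sequence $\{\mu_n\}$ satisfies $|\mu_n|(\Omega) \le 2\lambda E_\lambda(\mu_n)$, so it is uniformly bounded in total variation and admits a weakly-$*$ convergent subsequence. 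Uniqueness rests on strict convexity: setting $u_\mu := h_\mu - 1$, one has $-\Delta u_\mu + u_\mu = \mu - 1_\Omega$ with $u_\mu \in H^1_0(\Omega)$, so $\int |\nabla h_\mu|^2 + |h_\mu - 1|^2 = \int u_\mu\, d(\mu - 1_\Omega)$ is a strictly convex quadratic functional of $\mu$ (its associated bilinear form is the $H^{-1}$ inner product defined by \eqref{nabug}, which is positive definite). Adding the convex term $|\mu|(\Omega)/(2\lambda)$ preserves strict convexity, whence the minimizer $\mu_\lambda$ is unique.

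Next, I would produce a compactness statement for minimizers. Since $(u_\eps, A_\eps)$ minimizes $G_\eps$, evaluating $G_\eps$ on the recovery sequence built in the upper bound half of the proof of Theorem \ref{th9.1} for, say, the test measure $\mu_\lambda$ (or any fixed $\mu\in \mathcal{M}(\Omega)\cap H^{-1}(\Omega)$), yields
\[
\limsup_{\eps \to 0} \frac{G_\eps(u_\eps, A_\eps)}{\he^2} \le E_\lambda(\mu_\lambda) < +\infty.
\]
In particular $\{G_\eps(u_\eps, A_\eps)/\he^2\}$ is bounded, so the compactness part of Theorem \ref{th9.1} (the same extraction procedure that appears alongside the $\Gamma$-liminf inequality, based on the ball construction bound \eqref{resb2} and the Jacobian estimate Theorem \ref{thjac}) provides a subsequence along which $\mu(u_\eps, A_\eps)/\he$ converges in $\mathcal{M}(\Omega)$ (and in $(C^{0,1}(\Omega))^*$) to some $\mu_* \in \mathcal{M}(\Omega)\cap H^{-1}(\Omega)$.

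Then I would apply Proposition \ref{gammaconvmini}: the $\Gamma$-$\liminf$ relation of Theorem \ref{th9.1} gives $E_\lambda(\mu_*) \le \liminf_\eps G_\eps(u_\eps, A_\eps)/\he^2$, while minimality of $(u_\eps, A_\eps)$ combined with the $\Gamma$-$\limsup$ recovery sequence for any competitor $\mu$ gives $\liminf_\eps G_\eps(u_\eps, A_\eps)/\he^2 \le E_\lambda(\mu)$. Therefore $E_\lambda(\mu_*) \le E_\lambda(\mu)$ for every $\mu$, so $\mu_*$ is a minimizer of $E_\lambda$, and by the uniqueness established in the first paragraph, $\mu_* = \mu_\lambda$. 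Since the limit is the same along every convergent subsequence, the full family $\mu(u_\eps, A_\eps)/\he$ converges weakly to $\mu_\lambda$, as stated.

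The main obstacle in this plan is purely conceptual rather than technical: verifying strict convexity of $E_\lambda$ requires the identification of the nonlocal quadratic term $\int |\nabla h_\mu|^2 + |h_\mu-1|^2$ with a positive definite bilinear form on $\mathcal{M}(\Omega) \cap H^{-1}(\Omega)$ via the Green representation \eqref{nabug}, and some care is needed to ensure signed measures (the upper bound in the excerpt was written only for nonnegative $\mu$) are handled correctly — but the reference \cite[Chap. 7]{livre} already supplies the $\Gamma$-limsup for general $\mu\in \mathcal{M}(\Omega)\cap H^{-1}(\Omega)$, which is all that is needed. No further difficulty arises; everything else is a direct invocation of the $\Gamma$-convergence framework.
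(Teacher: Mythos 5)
Your proposal is correct and follows essentially the same route as the paper, which deduces the corollary directly from Theorem \ref{th9.1} together with Proposition \ref{gammaconvmini} and Remark \ref{gcvcomplete}, citing the (strict) convexity of $E_\lambda$ for uniqueness of $\mu_\lambda$; your write-up merely spells out the strict-convexity and compactness steps that the paper leaves as ``obvious.''
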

The fact that $E_\lambda$ has a unique minimizer is a consequence of its obvious convexity in $\mu$. 
This result is indeed a mean-field limit, since it describes the limit of the (suitably normalized) vorticity for which  \eqref{jacheur} holds. Since $\he \to +\infty $ as $\eps \to 0$, the number of vortices  is expected to blow-up like $\he$ too, and they arrange themselves according to the distribution $\mu_\lambda$, which plays the role of the equilibrium measure $\mu_0$ for the Coulomb gas in the Ginzburg-Landau context, cf. Figure \ref{fig12}.

The limiting energy $E_\lambda(\mu)$ is of similar nature as the mean-field limit Hamiltonian $I$ in Chapter \ref{leadingorder}. This is more readily visible if one rewrites $E_\lambda$ as 
\be\label{rewre} E_\lambda (\mu)= \frac{1}{2\lambda}|\mu|(\om)+\iint_{\om \times \om} G_\om(x,y) d(\mu-1)(x)\,d(\mu-1)(y) .\ee
Compared to $I$ in \eqref{definitionI}, the confining potential is replaced by the fact of working in a bounded domain with a Green-Dirichlet function, and the constraint that $\mu$ be a probability is replaced by the penalization term in $|\mu|(\om)$, which behaves like a Lagrange multiplier term.

We may now identify the minimizer of $E_\lambda$ with the solution of an obstacle problem, just like for the minimization of $I$. The correspondence here is in some sense even easier due to the fact that we are in a bounded domain. 

\begin{prop}[Identification of the optimal density]\label{pro91} The minimizer $\mu_\lambda$ of $E_\lambda$  is uniquely characterized by the fact that the associated potential $h_{\mu_\lambda}$ given by \eqref{hmu} is the solution of the following obstacle problem :
\be\label{obsgl}\min_{\substack{ h\ge 1-\frac{1}{2\lambda}\\ h-1\in H^1_0(\om)}}\io |\nab h|^2 + h^2.\ee
The function $h_{\mu_\lambda}$ is in turn characterized by the variational inequality 
$$\forall v \ge 1-\frac{1}{2\lambda}, v-1\in H^1_0(\om), 
\text{we have}  \quad \io \nab h_{\mu_\lambda} \cdot \nab (v-h_{\mu_\lambda}) + h_{\mu_\lambda} (v-h_{\mu_\lambda})  \ge 0 $$
or by the relations 
\be\label{elobs}
\begin{cases}
h_{\mu_\lambda}  \ge 1-\frac{1}{2\lambda} & \text{in} \ \om\\
h_{\mu_\lambda} = 1- \frac{1}{2\lambda} & \text{q.e. in the support of } \mu_\lambda\\
h_{\mu_\lambda}= 1 & \text{on} \ \bo.\end{cases}
\ee
\end{prop}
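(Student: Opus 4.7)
\medskip

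\noindent\textbf{Proof proposal.} The plan is to establish the equivalence via convex duality between the minimization of $E_\lambda$ and the obstacle problem, following the same scheme as in the proof of Theorem \ref{theoFrostman} but now in a bounded domain with a Green-Dirichlet kernel. The argument has three parts: uniqueness by strict convexity; construction of a candidate minimizer from the obstacle problem; and verification of this candidate's optimality by direct variational comparison.

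First, I would observe that $E_\lambda$ is strictly convex on $\mc M(\om)\cap H^{-1}(\om)$: the map $\mu\mapsto |\mu|(\om)$ is convex (as a norm), while the quadratic map $\mu \mapsto \frac12\io |\nab h_\mu|^2+(h_\mu-1)^2 = \frac12\iint G_\om(x,y)\,d(\mu-1)(x)\,d(\mu-1)(y)$ is strictly convex thanks to the positivity of $G_\om$ expressed in \eqref{nabug}. This gives uniqueness of any minimizer $\mu_\lambda$, so it suffices to exhibit one satisfying \eqref{elobs}.

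Next, I would invoke the standard theory of the obstacle problem (Section \ref{sec63}) to obtain a unique solution $h^*$ of \eqref{obsgl}, which satisfies the variational inequality
\[
\int_\om \nab h^*\cdot\nab(v-h^*)+h^*(v-h^*)\,dx\ge 0\qquad \forall v:\ v-1\in H^1_0(\om),\ v\ge \psi,
\]
where $\psi:=1-\frac{1}{2\lambda}$. Since $v-h^*\in H^1_0$, integration by parts rewrites this as $\int(v-h^*)\,d\mu^*\ge 0$ for $\mu^*:=-\Delta h^*+h^*\in\mc D'(\om)$. Testing against $v=h^*+\vp$ for nonnegative $\vp\in C_c^\infty$ forces $\mu^*\ge 0$, hence $\mu^*$ is a nonnegative Radon measure (in $H^{-1}$, since $h^*\in H^1$); testing against $v=h^*-\vp$ for $\vp$ supported in the (open) noncoincidence set $\{h^*>\psi\}$ shows that $\mu^*=0$ there, so $\supp(\mu^*)\subset \{h^*=\psi\}$. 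By uniqueness of the Dirichlet problem $h^*=h_{\mu^*}$, so the triple $(h^*,\mu^*)$ satisfies exactly the conditions \eqref{elobs}.

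The main obstacle will be the third step: showing that $\mu^*$ minimizes $E_\lambda$, which I would do by a direct energy comparison. For any admissible $\mu$ with $h:=h_\mu$, expanding the quadratic part via the standard convex identity and integrating by parts using $h-h^*\in H^1_0(\om)$ yields
\[
E_\lambda(\mu)-E_\lambda(\mu^*)=\tfrac{1}{2\lambda}\big[|\mu|(\om)-\mu^*(\om)\big]+\tfrac12\|h-h^*\|_{H^1}^2+\int(h-h^*)\,d\mu^*-\int(h-h^*)\,dx.
\]
The quadratic term is $\ge 0$; the key identity to exploit is that $h^*=\psi$ on $\supp(\mu^*)$ and $\psi+\frac{1}{2\lambda}=1$, so
\[
\int(h-h^*)\,d\mu^*+\tfrac{1}{2\lambda}\mu^*(\om)=\int(h-1)\,d\mu^*+\mu^*(\om),
\]
and one further integration by parts against $\mu^*=-\Delta h^*+h^*$, pairing with the Meissner function $h_0$ defined by \eqref{h0} (which satisfies $-\Delta(1-h_0)+(1-h_0)=1$, $1-h_0\in H^1_0$), reorganizes the remaining terms into a sum of quantities each controlled by $\tfrac{1}{2\lambda}|\mu|(\om)$ thanks to $h^*\ge\psi$ pointwise. (Variational perturbations $\mu_t=(1-t)\mu^*+t\nu$ provide an alternative route: for $\nu\ge 0$ the derivative at $t=0^+$ reduces to the obstacle variational inequality, while the reduction to nonnegative test $\nu$ uses that strictly signed $\mu$ has strictly larger total variation than its positive part and cannot decrease the quadratic term enough to compensate.) Combined with the uniqueness from Step 1, this yields $\mu_\lambda=\mu^*$, hence $h_{\mu_\lambda}=h^*$ solves \eqref{obsgl} and satisfies \eqref{elobs}, completing the proof.
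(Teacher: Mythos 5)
Your route is genuinely different from the paper's: the paper disposes of this in a few lines by invoking convex duality between $E_\lambda$ (viewed as a functional of $h_\mu-1$) and the \emph{double} obstacle problem $\min\{\hal\io|\nab h|^2+h^2+2h:\ h\in H^1_0(\om),\ |h|\le\frac{1}{2\lambda}\}$ (citing the book for the duality), and then observes by the maximum principle that the upper constraint is inactive, which reduces it to the one-sided problem \eqref{obsgl}; it only mentions in passing the Frostman-style variational alternative. Your self-contained energy comparison is a perfectly legitimate substitute and is arguably more transparent. Steps 1 and 2 (strict convexity via \eqref{nabug}, and extraction of $\mu^*\ge 0$ supported in the coincidence set from the variational inequality) are fine.

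There is, however, a concrete gap in Step 3, and it is precisely the point that in the paper's duality picture appears as the \emph{upper} constraint $|h-1|\le\frac{1}{2\lambda}$. After the exact cancellation of the $\frac{1}{2\lambda}\mu^*(\om)$ terms (your displayed ``key identity'' has a coefficient slip: $\int(h-h^*)\,d\mu^*=\int(h-1)\,d\mu^*+\frac{1}{2\lambda}\mu^*(\om)$, so in the energy difference the $\frac{1}{2\lambda}\mu^*(\om)$ terms cancel exactly rather than producing $+\mu^*(\om)$), one is left with
$E_\lambda(\mu)-E_\lambda(\mu^*)=\frac{1}{2\lambda}|\mu|(\om)+\int(h-1)\,d\mu^*-\io(h-h^*)\,dx+\hal\|h-h^*\|_{H^1}^2$. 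Using the symmetry of $G_\om$ and the pairing with $1-h_0$ (which solves $-\Delta u+u=1$, $u\in H^1_0(\om)$), the middle two terms combine into $\int(h^*-1)\,d\mu$, so everything reduces to $\frac{1}{2\lambda}|\mu|(\om)+\int(h^*-1)\,d\mu\ge 0$. This requires the \emph{two-sided} bound $1-\frac{1}{2\lambda}\le h^*\le 1$, i.e.\ $|h^*-1|\le\frac{1}{2\lambda}$: the lower obstacle bound alone, which is all you invoke (``thanks to $h^*\ge\psi$ pointwise''), controls only $\int(h^*-1)\,d\mu^+$ and says nothing about $-\int(h^*-1)\,d\mu^-$ for a signed competitor $\mu$. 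The missing upper bound $h^*\le 1$ is easy --- replace $h^*$ by $\min(h^*,1)$, which is still admissible and has no larger Dirichlet energy, and conclude by uniqueness --- but it must be stated; it is exactly the ``inactive upper constraint'' of the paper's proof. With that added, and the coefficient in the key identity corrected, your argument closes.
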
The constant function $1-\frac{1}{2\lambda}$ thus plays the role of the obstacle. 
We will write 
\be\label{coinset}
\omega_\lambda:=
\left\{h_{\mu_\lambda}=1-\frac{1}{2\lambda}\right\}\ee for  the coincidence set. 
From the above characterizations we deduce that $\mu_\lambda $ is a nonnegative measure, and that  \be\label{mul}
\mu_\lambda = -\Delta h_{\mu_\lambda} + h_{\mu_\lambda}= (1-\frac{1}{2\lambda} ) \indic_{\omega_\lambda} .\ee
Thus,  in this Ginzburg-Landau context, the optimal measure always has a density, and that density is always constant on its support, cf. Fig. \ref{fig12}.
 
\begin{figure}[h!] 
\begin{center}
\includegraphics[scale=0.6]{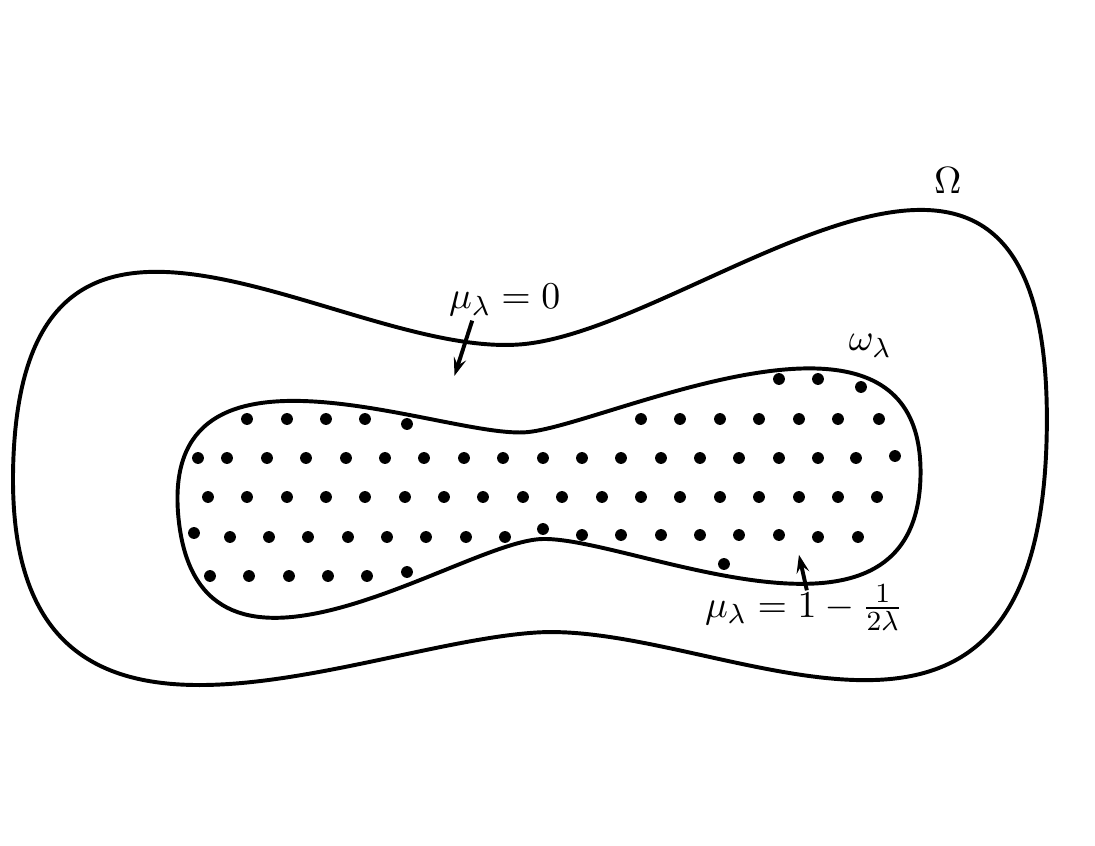}
\caption{The optimal measure for Ginzburg-Landau}\label{fig12}
\end{center}\end{figure}

 \begin{proof}[Proof of the proposition] To give an alternate proof to that of  Chapter \ref{leadingorder}, we may obtain this by convex duality. It suffices to observe that 
 the minimization of $E_\lambda$ viewed as the following function of $h_\mu-1$:
 $$\frac{1}{2\lambda} \io |-\Delta h+h +1| + \hal \io |\nab h|^2+ h^2
 $$ 
 is dual in the sense of convex duality to the minimization problem
 $$\min_{\substack{ |h|\le \frac{1}{2\lambda}\\ h\in H^1_0(\om)}} \hal \io |\nab h|^2 + h^2 + 2h
$$ in the sense that they have the same minimizer (and minimum). For a proof of this fact, cf. \cite[Chap. 7]{livre}.
 One  can then check that by the maximum principle the solution $h$ satisfies $h\le 0$, so the upper constraint $h\le \frac{1}{2\lambda}$ is not active, and thus $h+1$ solves 
  \eqref{obsgl}. 
 The alternate way is to start from \eqref{rewre} and  make variations on $\mu$ as in the proof of Theorem \ref{theoFrostman}. This leads to the equations \eqref{elobs}, and one can then check that they uniquely characterize the solution to \eqref{obsgl}.
 
 The regularity theory is exactly as in  Proposition \ref{proequivpb}: since the obstacle is constant hence smooth, $h_{\mu_\lambda}$ is $C^{1,1}$ by Frehse's regularity theorem, and we can deduce that \eqref{mul} holds.
 \end{proof}
 
 The solution of \eqref{obsgl} when the constraint of being above the obstacle is omitted, is obviously the function $h_0$ solution to \eqref{h0}. It then follows that $h_0$ is also the solution of the problem \eqref{obsgl}, if and only if $h_0$ lies above the obstacle i.e. $h_0 \ge 1-\frac{1}{2\lambda}$, equivalent to $\lambda \ge 1/(2\min (h_0-1))$. Whether this condition is satisfied depends on the value of $\lambda$, which we recall is $\lim_{\eps \to 0} \frac{\he}{\lep}$, i.e. encodes the intensity of the applied magnetic field.
 
 We deduce the following result on the description of $\mu_\lambda$
   as the external magnetic field is increased.
 
 \begin{prop}\label{pro92}
 \begin{itemize}
 \item $\omega_\lambda$ is increasing with respect to $\lambda$ and $\cup_{\lambda >0} \omega_\lambda= \om$. \item For $\lambda \le \lambda_\om:= \frac{1}{2\max |h_0-1|} $ we have $\omega_\lambda =\varnothing$, $\mu_\lambda=0$ and $h_{\mu_\lambda}= h_0$.
\item For $\lambda >\lambda_\om:=\frac{1}{2\max |h_0-1|} $ we have $\mu_\lambda \neq 0$,  and \eqref{mul} holds. \end{itemize}
\end{prop}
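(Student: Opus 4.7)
The three items split into two essentially independent pieces: (2) and (3) follow directly from the variational characterization established in Proposition \ref{pro91}, while item (1) requires two comparison-principle arguments. My plan is to dispatch (2) and (3) first. Since $-\Delta h_0 + h_0 = 0$ in $\Omega$ with $h_0 = 1$ on $\partial\Omega$, the maximum principle gives $h_0 \le 1$, so $\max|h_0 - 1| = 1 - \min_\Omega h_0$, and the condition $\lambda \le \lambda_\Omega$ is equivalent to $1 - 1/(2\lambda) \le \min_\Omega h_0$. Under this condition $h_0$ is admissible for the obstacle problem \eqref{obsgl}, and being the unique unconstrained minimizer of $\int|\nabla h|^2 + h^2$ over $\{h - 1 \in H^1_0(\Omega)\}$ it must also be the constrained minimizer; hence $h_{\mu_\lambda} = h_0$, $\mu_\lambda = 0$, and $\omega_\lambda = \{h_0 = 1 - 1/(2\lambda)\}$ is empty (or the zero-measure analytic argmin set of $h_0$ at the boundary case $\lambda = \lambda_\Omega$). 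Conversely, for $\lambda > \lambda_\Omega$ the set $\{h_0 < 1 - 1/(2\lambda)\}$ has positive Lebesgue measure, so $h_0$ violates the obstacle and cannot equal $h_{\mu_\lambda}$; uniqueness of the Helmholtz problem then forces $\mu_\lambda = -\Delta h_{\mu_\lambda} + h_{\mu_\lambda} \not\equiv 0$, and the identity \eqref{mul} has already been proved in Proposition \ref{pro91}.

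For the monotonicity in (1) the key input will be the sharp pointwise comparison
\[
0 \;\le\; h_{\mu_{\lambda_2}} - h_{\mu_{\lambda_1}} \;\le\; \tfrac{1}{2\lambda_1} - \tfrac{1}{2\lambda_2}, \qquad 0 < \lambda_1 < \lambda_2.
\]
The lower bound is the standard comparison principle for obstacle problems with raised obstacle (testing each variational inequality against $\pm(h_{\mu_{\lambda_1}} - h_{\mu_{\lambda_2}})_+$). For the upper bound, writing $c_i = 1 - 1/(2\lambda_i)$ and $\varphi := (h_{\mu_{\lambda_2}} - h_{\mu_{\lambda_1}} - (c_2 - c_1))_+ \in H^1_0(\Omega)$, I would test the VI for $h_{\mu_{\lambda_1}}$ against the competitor $v = h_{\mu_{\lambda_1}} + \varphi$ and the one for $h_{\mu_{\lambda_2}}$ against $v = h_{\mu_{\lambda_2}} - \varphi$; both are admissible because $\varphi \ge 0$ and, on $\{\varphi > 0\}$, $h_{\mu_{\lambda_2}} - \varphi = h_{\mu_{\lambda_1}} + (c_2 - c_1) \ge c_2$. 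Subtracting the two inequalities and using $h_{\mu_{\lambda_2}} - h_{\mu_{\lambda_1}} = \varphi + (c_2 - c_1)$ on $\supp \varphi$ yields $\int(|\nabla\varphi|^2 + \varphi^2 + (c_2-c_1)\varphi) \le 0$, which forces $\varphi \equiv 0$. The inclusion $\omega_{\lambda_1} \subset \omega_{\lambda_2}$ is then immediate: if $h_{\mu_{\lambda_1}}(x) = c_1$ the sharp comparison gives $h_{\mu_{\lambda_2}}(x) \le c_2$, and the obstacle constraint provides the reverse inequality.

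To prove $\bigcup_{\lambda>0}\omega_\lambda = \Omega$, I would produce an explicit admissible supersolution that touches the obstacle at a prescribed point. Given $x_0 \in \Omega$, pick $\chi \in C_c^\infty(\Omega)$ with $0 \le \chi \le 1$ and $\chi(x_0) = 1$, and set $v_\lambda := 1 - \chi/(2\lambda)$. Then $v_\lambda - 1 \in H^1_0(\Omega)$, $v_\lambda \ge 1 - 1/(2\lambda)$, $v_\lambda(x_0) = 1 - 1/(2\lambda)$, and
\[
-\Delta v_\lambda + v_\lambda = 1 + \tfrac{1}{2\lambda}(\Delta\chi - \chi) \;\ge\; 0
\]
as soon as $2\lambda \ge \|\chi - \Delta\chi\|_{L^\infty(\Omega)}$. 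A short VI argument (test the variational inequality for $h_{\mu_\lambda}$ with $h_{\mu_\lambda} - (h_{\mu_\lambda} - v_\lambda)_+$ and the supersolution condition with $(h_{\mu_\lambda} - v_\lambda)_+$, then subtract) gives the comparison $h_{\mu_\lambda} \le v_\lambda$ pointwise. In particular $h_{\mu_\lambda}(x_0) \le 1 - 1/(2\lambda)$, which combined with the obstacle forces $x_0 \in \omega_\lambda$ for all sufficiently large $\lambda$.

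The hard part will be the sharp two-sided comparison in the monotonicity step: the naive comparison gives only $h_{\mu_{\lambda_2}} \ge h_{\mu_{\lambda_1}}$, which is not enough to transfer the coincidence from $\lambda_1$ to $\lambda_2$, and finding the matching pair of admissible competitors (one adding and the other subtracting the same $\varphi$) is the only delicate point in the whole proposition. Once it is in place, the set-monotonicity is automatic and the exhaustion $\bigcup_\lambda \omega_\lambda = \Omega$ reduces to the routine cutoff construction above.
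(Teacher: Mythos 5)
Your proof is correct. For items (2) and (3) you argue exactly as the paper does: the text preceding the proposition observes that $h_0$ is the unconstrained minimizer of $\int_\Omega|\nabla h|^2+h^2$ over $h-1\in H^1_0(\Omega)$, hence solves the obstacle problem \eqref{obsgl} if and only if it sits above the obstacle, i.e.\ $\lambda\le\lambda_\Omega$; your admissibility/uniqueness argument is the same deduction spelled out. Item (1) is where you add genuine content: the paper states the monotonicity and exhaustion of $\omega_\lambda$ without proof (it is deferred to the monograph \cite{livre}), and your route is a clean, self-contained substitute. The key step is the sharp two-sided comparison $0\le h_{\mu_{\lambda_2}}-h_{\mu_{\lambda_1}}\le \frac{1}{2\lambda_1}-\frac{1}{2\lambda_2}$, obtained by testing the two variational inequalities of Proposition \ref{pro91} against the matched competitors $h_{\mu_{\lambda_1}}+\varphi$ and $h_{\mu_{\lambda_2}}-\varphi$; I checked the admissibility of both and the resulting inequality $\int(|\nabla\varphi|^2+\varphi^2+(c_2-c_1)\varphi)\le 0$, and the argument is sound (note that only the upper bound is actually used for the inclusion $\omega_{\lambda_1}\subset\omega_{\lambda_2}$). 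The exhaustion via the explicit supersolution $v_\lambda=1-\chi/(2\lambda)$ and the ``solution is below any admissible supersolution'' comparison is likewise correct. The only caveat, which you already flag, concerns the borderline $\lambda=\lambda_\Omega$: there $\omega_{\lambda_\Omega}=\{h_0=\min_\Omega h_0\}$ is a nonempty set of zero capacity/measure, so the proposition's claim $\omega_\lambda=\varnothing$ must be read up to this null set (the essential conclusions $\mu_\lambda=0$ and $h_{\mu_\lambda}=h_0$ are unaffected). What your approach buys is an elementary proof relying only on the variational inequality; what the paper's (implicit) approach buys is brevity, since it leans on the general theory of the obstacle problem developed elsewhere.
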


This way we recover in a weaker sense the value of the first critical field for which vorticity first appears:
\be
\label{hc1approx}
\hci \sim \lambda_\om\lep\ee
i.e. we give a first rigorous justification of \eqref{vhci}.

We can also deduce from the proposition that the onset of $\omega_\lambda$ is located near the point(s) of minimum of $h_0$, i.e.  that the vorticity first appears there, as formally derived in Chapter \ref{glheuris}.

\section{The intermediate regime near $\hci$}
Understanding what happens more precisely near $\hci$ (exact number and locations of the vortices as $\he \sim \hci$) requires a finer analysis than this leading order one: one needs to make more precise expansions around $\he h_0$ as in Section 7.4.2. This is done in \cite[Chap. 9, Chap. 12]{livre}. It is found that the vortices appear one by one near the point(s) of minimum of $h_0$ in $\om$, with new vortices appearing each time $\he$ is incremented by an order $\log \lep$, as long as $\he \sim \frac{\lep}{2\max|h_0-1|}$ i.e. $\lambda= 1/(2\max|h_0-1|)$.
The locations of the vortices tend to minimize exactly a Coulomb gas type of interaction:
if their number  $n$ remains bounded as $\eps \to 0$, their location (suitably blown-up near the point(s) of minimum of $h_0$) minimize
\be\label{wQ}-\sum_{i\neq j\in [1,n]} \log |x_i-x_j|+ n\sum_{i=1}^n Q(x_i)\ee
while if their number becomes  unbounded, their density (again after suitable scaling) tends to minimize among probability measures
\be\label{IQ}
I(\mu)= \iint  - \log |x-y|\, d\mu(x)\, d\mu(y)+ \int_{\mr^2} Q(x)\, d\mu(x)\ee
with $Q$ a nonnegative quadratic function  equal to the Hessian of $h_0$ at the point of minimum of $h_0$.
Once $\lambda>1/(2\max|h_0-1|)$ the optimal description is the one just given in Propositions \ref{pro91}, \ref{pro92}.

In other words, in the critical regime near $\hci$,  the interaction of vortices is precisely that of a 2D Coulomb gas with quadratic confining potential $Q$. One remembers from Chapter \ref{leadingorder} that the equilibrium measures associated to quadratic potentials $V$ are always of constant density, just like   what happens for $\mu_\lambda$.

 %%%%%%%%%%%%%%%%%%%%%%%%%%%%%%%%%%%%%%%%%%%%%%%%%%%%%%%%%%%%%%%%%%%%%%%%%%%%%%%%%%%%%%%%%%%%%%%%%%%%%%%%%%%%%%%%%%%%%%%%%%%%%%%%%%%%%%%%CHAP 10
 %%%%%%%%%%%%%%%%%%%%%%%%%%%%%%%%%%%%%%%%%%%%%%%%%%%%%%%%%%%%%%%%%%%%%%%%%%%%%%%%%%%%%%%%%%%%%%%%%%%%%%%%%%%%%%%%%%%%%%%%%%%%%%%
 
\chapter{The splitting and the next order behavior for Ginzburg-Landau}
\label{glnext}
In this chapter we sketch the method that allows us to derive the renormalized energy $W$ from the minimization of Ginzburg-Landau, at the next order, beyond the mean-field limit that we just saw in Chapter \ref{glleading}.
This is, in a simplified form, the content of the paper \cite{ssgl}.

\section{Splitting}
In order to extract the next order energy, it is very important to have an exact splitting of the Ginzburg-Landau functional which ``algebraically" decouples the orders. 
We saw in the previous chapter that the magnetic field $h_\eps$ satisfies $\frac{h_\eps}{\he} \to h_{\mu_\lambda}$ in some weak sense, with $\mu_\lambda$ the ``mean-field" limit, i.e. the minimizer of $E_\lambda$. 
Since $h_\eps$ plays the role of the potential $h_n$ for the Coulomb gas, the splitting for the Coulomb gas Hamiltonian in Chapter \ref{chapsplit} can then give us a hint~: to split $h_\eps$ as $\he h_{\mu_\lambda}+h_1$ where $h_1$ is a remainder.
This rough idea is correct, however there are two difficulties: first \eqref{apprx} is only an approximate relation, and not an identity, so we really need to work starting from the configurations $(u,A)$ themselves. Secondly the approximation $h\sim \he h_{\mu_\lambda}$ is correct at leading order, but a correction needs to be introduced in order to extract the right next order.

The fact is that we need to know the number of vortices (or their total degree) more precisely than through $\he \mu_\lambda$.
$\mu_\lambda$ was found by minimizing the limiting energy $E_\lambda$, which was itself
 derived by bounding below the cost of the self-interaction of each vortex via the ball construction lower bound, which gives a cost $\sim\pi |d|\lep$ per vortex, resulting  after taking the limit $\eps \to 0$ in the term 
$\lambda |\mu|(\om)$ in $E_\lambda$.
But one needs to be more precise: we will restrict ourselves to the situation where $\lambda>\lambda_\om$ (cf. \eqref{hc1approx}), i.e. the limiting measure $\mu_\lambda$ and the coincidence set $\omega_\lambda$ are  nontrivial.  The number of expected vortices in that regime is thus proportional to $\he$. Since vortices are uniformly distributed in $\omega_\lambda$, their mutual distances are of order $1/\sqrt{\he}$. 
We can thus think of each vortex as being alone in a box of size $C/\sqrt{\he}$, and remembering that its core size is always $\eps$, a lower bound of the type \eqref{bb} leads us to expecting a cost
$\pi |d|\log \frac{C}{\eps\sqrt{\he}}$ per vortex. In other words, there should be a correction of order $\log \he\sim \log \lep$ per vortex, which has not been accounted for in $E_\lambda$ (and did not matter at the leading order level).
This heuristically justifies introducing this correction in the self-interaction cost,  by minimizing instead of $E_\lambda$ the following problem:
\be\label{ecorr}
\hal \log \frac{1}{\eps\sqrt{\he} } \io |\mu|+ \frac{1}{2} \io |\nab h|^2 + |h-\he|^2\ee
where $\mu$ and $h$ are related via 
 \be\label{hmu2}
  \left\{ \begin{array}{ll}
  -\Delta h+ h= \mu & \text{in} \ \Omega\\
  h=\he& \text{on} \ \bo.\end{array}\right.
  \ee
This is a problem of the same form as the minimization of $E_\lambda$ (to see it, just divide everything by $\he^2$), except that the parameter $\lambda$ is replaced by the correction $\frac{\he} {\log \frac{1}{\eps\sqrt{\he}}}$, which is equivalent to $\lambda$ as $\eps \to 0$ (but here we define this energy for each fixed $\eps$).
As a result, the same proof as Proposition \ref{pro91} applies, and asserts that the minimization problem \eqref{ecorr} is equivalent to the obstacle problem 
\be \label{obscorr}
\min_{\substack{h\ge \he - \frac{1}{2}\log \frac{1}{\eps \sqrt{\he} } \\
 h- \he \in H^1_0(\om)}}    \io |\nab h|^2 +h^2.\ee
We will denote the solution by $h_{0,\eps}$ and the associated measure
\be\label{mu0e}
\mu_{0,\eps}=-\Delta h_{0,\eps} + h_{0,\eps}.\ee
It is clear that  $\frac{\mu_{0,\eps}}   {\he} \to \mu_\lambda$ and $\frac{h_{0,\eps}}{\he} \to h_{\mu_\lambda}$ as $\eps \to 0$, however these objects are a little more precise, and as explained above,  they are the ones with respect to which we should do the splitting. 

We will also denote by  $\omega_{0,\eps}= \{x, h_{0,\eps}(x)= \he - \hal \log \frac{1}{\eps \sqrt{\he} }\}$ the corresponding coincidence set, and  note that 
$$\mu_{0,\eps}= \(\he - \hal \log \frac{1}{\eps \sqrt{\he} }\) \indic_{\omega_{0,\eps}},$$
and recall that $\mu_{0,\eps}\ge 0$.

\begin{prop}[Splitting formula for Ginzburg-Landau \cite{ssgl}]
\label{splittinggl}
Let $(u,A)$ be an arbitrary configuration and  for any $\eps$ and   $\he$, set
$$A_{1,\eps}= A- \nab^\perp h_{0,\eps},$$ where $h_{0,\eps}$ is the solution of \eqref{obscorr}.
Then we have 
\be\label{splitgl}
\boxed{G_\eps(u,A)= G_\eps^0 + G_\eps^1(u,A_{1,\eps}) -\hal \io (1-|u|^2) |\nab h_{0,\eps}|^2 }\ee
where 
\be\label{ge0}
G_\eps^0= \hal \log \frac{1}   {\eps\sqrt{\he} }\io \mu_{0,\eps}+ \hal \io |\nab h_{0,\eps}|^2 + |h_{0,\eps}-\he|^2 \ee
and 
$$G_\eps^1 (u,A) = \hal \io |\nab_A u|^2 + |\curl A- \mu_{0,\eps}|^2 + \frac{(1-|u|^2)^2}{2\eps^2} + \io  (h_{0,\eps} - \he) \mu(u,A).$$
\end{prop}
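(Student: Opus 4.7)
The plan is to substitute $A = A_{1,\eps} + \nab^\perp h_{0,\eps}$ directly into $G_\eps(u,A)$, expand the three integrands treating $h_{0,\eps}$ as fixed, and then perform a single integration by parts on the resulting cross term to recover the expression in \eqref{splitgl}. Since every step is an exact algebraic identity, no $\eps \to 0$ or approximation argument will be needed: the splitting holds pointwise in $\eps$ for any sufficiently regular $(u,A)$.

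The two identities driving the expansion are
\begin{align*}
\nab_A u &= \nab_{A_{1,\eps}} u - iu\, \nab^\perp h_{0,\eps},\\
\curl A - \he &= (\curl A_{1,\eps} - \mu_{0,\eps}) + (h_{0,\eps} - \he),
\end{align*}
the second following from $\curl \nab^\perp = \Delta$ together with the PDE $-\Delta h_{0,\eps} + h_{0,\eps} = \mu_{0,\eps}$ from \eqref{mu0e}. Squaring the first produces $|\nab_{A_{1,\eps}} u|^2$, a cross term $-2 j_{A_{1,\eps}}\cdot \nab^\perp h_{0,\eps}$ with $j_{A_{1,\eps}} := \langle iu, \nab_{A_{1,\eps}} u\rangle$, and $|u|^2|\nab h_{0,\eps}|^2$; squaring the second produces $|\curl A_{1,\eps} - \mu_{0,\eps}|^2$, the cross term $2(\curl A_{1,\eps}-\mu_{0,\eps})(h_{0,\eps}-\he)$, and $|h_{0,\eps}-\he|^2$. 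Splitting $\hal|u|^2|\nab h_{0,\eps}|^2 = \hal|\nab h_{0,\eps}|^2 - \hal(1-|u|^2)|\nab h_{0,\eps}|^2$ immediately puts the last summand of \eqref{splitgl} and the $\hal\io|\nab h_{0,\eps}|^2$ piece of $G_\eps^0$ in place, and the $\hal|\curl A_{1,\eps}-\mu_{0,\eps}|^2$ and $(1-|u|^2)^2/(2\eps^2)$ pieces go straight into $G_\eps^1$.

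The key step is to convert the cross term $-\io j_{A_{1,\eps}} \cdot \nab^\perp h_{0,\eps}$ into the $(h_{0,\eps}-\he)\mu(u,A_{1,\eps})$ contribution appearing in $G_\eps^1$. Applying the two-dimensional identity $\io \nab^\perp f \cdot \vec v = -\io f\, \curl \vec v + \int_{\pa \Omega} f\, \vec v\cdot \tau$ to $f = h_{0,\eps}$ and $\vec v = j_{A_{1,\eps}}$, then using $\curl j_{A_{1,\eps}} = \mu(u,A_{1,\eps}) - \curl A_{1,\eps}$ and the boundary condition $h_{0,\eps} \equiv \he$ on $\pa \Omega$ (so that by Stokes' theorem the boundary integral equals $\he\io \curl j_{A_{1,\eps}}$), I obtain
$$-\io j_{A_{1,\eps}} \cdot \nab^\perp h_{0,\eps} \;=\; \io (h_{0,\eps}-\he)\,\mu(u,A_{1,\eps}) \;-\; \io (h_{0,\eps}-\he)\,\curl A_{1,\eps}.$$
The $-(h_{0,\eps}-\he)\curl A_{1,\eps}$ term cancels exactly against the $+(h_{0,\eps}-\he)\curl A_{1,\eps}$ generated by the other cross term, leaving the $\io(h_{0,\eps}-\he)\mu(u,A_{1,\eps})$ contribution to $G_\eps^1$ together with one leftover piece $-\io\mu_{0,\eps}(h_{0,\eps}-\he)$.

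It remains to recognise this leftover as the logarithmic term in $G_\eps^0$, which is where the obstacle-problem characterisation of $h_{0,\eps}$ enters decisively: by \eqref{obscorr}, $\mu_{0,\eps}$ is supported in the coincidence set $\omega_{0,\eps}$ where $h_{0,\eps}$ equals the constant obstacle value $\he - \hal\log\frac{1}{\eps\sqrt{\he}}$, so $\he - h_{0,\eps} = \hal\log\frac{1}{\eps\sqrt{\he}}$ holds $\mu_{0,\eps}$-almost everywhere, and hence $-\io\mu_{0,\eps}(h_{0,\eps}-\he) = \hal\log\frac{1}{\eps\sqrt{\he}}\io\mu_{0,\eps}$, which is precisely the remaining term of $G_\eps^0$. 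Assembling all pieces then yields \eqref{splitgl}. The only genuinely delicate point in this plan is organising the two cross terms so that they cancel cleanly under the integration by parts; once the Dirichlet condition $h_{0,\eps}=\he$ on $\pa \Omega$ and the pointwise obstacle relation on $\supp \mu_{0,\eps}$ are used at the right moments, everything else is bookkeeping.
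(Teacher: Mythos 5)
Your proposal is correct and follows essentially the same route as the paper's proof: expand the two squares using $A=A_{1,\eps}+\nab^\perp h_{0,\eps}$ and $\curl A=\curl A_{1,\eps}+h_{0,\eps}-\mu_{0,\eps}$, convert the cross terms by an integration by parts using the Dirichlet condition $h_{0,\eps}=\he$ on $\pa\Omega$ to produce $\io(h_{0,\eps}-\he)\,\mu(u,A_{1,\eps})$, and identify $-\io(h_{0,\eps}-\he)\,\mu_{0,\eps}$ with the logarithmic term via the constant value of $h_{0,\eps}$ on $\supp\mu_{0,\eps}$. The only difference is presentational: you spell out the boundary term and its treatment by Stokes' theorem, which the paper compresses into a one-line remark.
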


\begin{proof}[Proof of the splitting formula]
First, in view of the  definition of $A_{1,\eps}$ and \eqref{mu0e}, we may write
$$|\nab_A u|^2= |\nab_{A_{1,\eps}} u|^2 + |u|^2 |\nab h_{0,\eps}|^2- 2\nab^\perp h_{0,\eps}\cdot \langle iu, \nab_{A_{1,\eps}} u\rangle,$$
and
$$\curl A= \curl A_{1,\eps}+\Delta h_{0,\eps}= \curl A_{1,\eps} +h_{0,\eps}- \mu_{0, \eps}.
$$
Inserting into the expression of $G_\eps$ and expanding the squares, we find
\begin{multline*}
G_\eps(u,A)= \hal \io |u|^2 |\nab h_{0,\eps}|^2 +|h_{0,\eps}-\he|^2
\\+ \io - \nab^\perp h_{0,\eps} \cdot \langle iu, \nab_{A_{1,\eps}} u\rangle +(\curl A_{1,\eps}- \mu_{0,\eps}) (h_{0,\eps}  -\he)\\+ \hal \io |\nab_{A_{1,\eps}} u|^2 + |\curl A_{1,\eps} - \mu_{0,\eps}|^2 + \frac{(1-|u|^2)^2}{2\eps^2}\\
=\hal \io |\nab h_{0,\eps}|^2 +|h_{0,\eps}-\he|^2 + \io (h_{0,\eps}-\he) (\curl \langle iu, \nab_{A_{1,\eps}} u\rangle   +\curl A_{1,\eps}-\mu_{0,\eps})
\\+\hal \io (|u|^2 -1) |\nab h_{0,\eps}|^2
\end{multline*}
where we have used an integration by parts and the fact that $h_{0,\eps}=\he$ on $\bo$.
We next observe that 
$$ \io (h_{0,\eps}-\he) \mu_{0,\eps}= - \hal \log \frac{1}{\eps\sqrt{\he}} \io \mu_{0,\eps} $$
since that is the value of $h_{0,\eps} - \he$ on the support of $\mu_{0,\eps}$, and 
$$\curl \langle iu, \nab_{A_{1,\eps}} u\rangle   +\curl A_{1,\eps}= \mu(u,A_{1,\eps}) .$$
Inserting into the above,  we obtain the result.
\end{proof}

As desired, we have obtained an exact splitting formula for the Ginzburg-Landau energy. The first term $G_\eps^0$ is a constant independent of $(u,A)$ and  easily seen to be asymptotically  equivalent to $\he^2 E_\lambda(\mu_\lambda)$, i.e. to the leading order of the energy. The last term is generally $o(1)$ because thanks to the potential term in the energy we may control $\io (1-|u|^2) $  by $\eps \sqrt{G_\eps(u,A)}$ via Cauchy-Schwarz.
The middle term  $G_\eps^1$ is the interesting one : it is the difference between an energy functional which is very similar to Ginzburg-Landau, except with external field replaced by the non constant function $\mu_{0,\eps})$, and a term which, thanks to the Jacobian estimate  (Theorem  \ref{thjac}) can be evaluated by 
$$ \io  (h_{0,\eps} - \he) \mu(u,A)\simeq 2\pi \sum_i d_i (h_{0,\eps} - \he) (a_i)$$
(indeed, one may easily check that $\mu(u,A_{1,\eps}) \simeq \mu(u,A)$).
In this term, all vortices with positive  degree bring a negative contribution, since $h_{0,\eps} \le \he$ in $\om$ by the maximum principle. In other words they will allow to gain energy, while negative degree vortices will not and hence will not be favorable.  Moreover, $h_{0,\eps}- \he$ is minimal and equal to $-\hal \log \frac{1}{\eps\sqrt{\he}}$ in the coincidence set $\omega_{0,\eps}$, hence it will be most favorable to have vortices there. Setting 
\be\label{defzgl}
\zeta_\eps(x)= h_{0,\eps}- \he+ \hal \log \frac{1}{\eps\sqrt{\he}}\ee 
we have $\zeta_\eps \ge 0$ in $\om$, $\{\zeta_\eps= 0\}=\omega_{0,\eps}$ and we may thus rewrite the splitting formula formally as 
\begin{multline}
\label{splitformel}
G_\eps(u,A)\simeq G_\eps^0 + 2\pi \sum_i   d_i\zeta_\eps(a_i)\\+  \hal \io |\nab_{A_{1,\eps}} u|^2 + |\curl A_{1,\eps}- \mu_{0,\eps}|^2 + \frac{(1-|u|^2)^2}{2\eps^2}  - \pi n  \log \frac{1}{\eps\sqrt{\he}}+o(1),
\end{multline}
with $n=\sum_i d_i$ is  the number of vortices (assumed positive), and rigorously as 
\be 
G_\eps(u,A) = G_\eps^0 +  \io \zeta_{\eps} \mu(u,A_{1,\eps})+ \mathcal{F}_\eps(u,A_{1,\eps}) +o(1)\ee 
where 
 \begin{multline}\label{Feps}
 \mathcal{F}_\eps(u,A_{1,\eps}):=
  \hal \io |\nab_{A_{1,\eps}} u|^2 + |\curl A_{1,\eps}- \mu_{0,\eps}|^2 + \frac{(1-|u|^2)^2}{2\eps^2} \\ - \hal \log \frac{1}{\eps\sqrt{\he}}    \io \mu(u,A_{1,\eps}) .\end{multline}
This is of the same form as the splitting of the Coulomb gas Hamiltonian $H_n$ in Proposition \ref{developpementasympto}. The role of $n^2 I(\mu_0)$ is played by $G_\eps^0 \sim \he^2 E_\lambda(\mu_\lambda)$, $\zeta_\eps $ plays the same role of a confining potential as $\zeta$, confining  the points to the support of the optimal measure ($\mu_{0,\eps}$ for Ginzburg-Landau, the equilibrium measure for the Coulomb gas), and plays no role otherwise. The remaining term  $\mathcal{F}_\eps$ behaves as the precursor to the renormalized energy   $W(\nab h_n, \indic_{\mr^2})$, although this is more delicate to see : the term $ \hal \io |\nab_{A_{1,\eps}} u|^2 + |\curl A_{1,\eps}- \mu_{0,\eps}|^2 + \frac{(1-|u|^2)^2}{2\eps^2}   $ behaves like a Ginzburg-Landau energy, hence it will include the interaction between vortices, plus the cost of each vortex, which can be estimated via a ball-construction method, while the term 
 $- \pi n  \log \frac{1}{\eps\sqrt{\he}}$ in effect subtracts off the cost of each vortex, i.e.  ``renormalizes" the Ginzburg-Landau energy.
 
 All of the analysis from that point consists in showing rigorously that this is true, and that the term 
 $\mathcal{F}_\eps$
  will effectively converge to the (average of the)   renormalized energy $W$.  This will be more technical than for the Coulomb gas, because for instance we have to truly get rid of the possibility of (too many) negative vortices. Also we will  need to have very precise ball construction lower bounds for the cost of each vortex, to show that the compensation 
$-  \pi n  \log \frac{1}{\eps\sqrt{\he}}$, which includes the correction in $\log \lep$ that we inserted, is the right one.

\begin{rem}
In \cite{ssgl} the computations are made more complicated by the fact that we also treat the case of $\he$ possibly very close to $\hci$, which requires more precise estimates, themselves requiring  the mass of the measure which respect to which one splits to be quantized.
\end{rem}
\section{Deriving $W$ from Ginzburg-Landau}
\subsection{Rescaling and notation}
In view of  the splitting formula \eqref{splitgl}, and the fact that the last term is very small, in order to study $G_\eps$,   it suffices to study $G_\eps^1(u,A_{1,\eps})$.
We may introduce $h_{1,\eps}=\curl A-h_{0,\eps} $. In view of the fact (which we may assume) that for $(u,A)$ the second Ginzburg-Landau equation \eqref{gl2}, hence the London equation \eqref{eqlondon} is satisfied, we check that (by definition of $A_{1,\eps}$ and $h_{0,\eps}$), the function $h_{1,\eps}$ satisfies 
 \be\label{london1}
  \left\{ \begin{array}{ll}
  -\Delta h_{1,\eps}+ h_{1,\eps}= \mu(u,A)-\mu_{0,\eps}
   & \text{in} \ \Omega\\
  h_{1,\eps}=0 & \text{on} \ \bo.\end{array}\right.
  \ee
Thus $h_{1,\eps}$ is the analogue in the Ginzburg-Landau context of the potential $h_n$ defined in \eqref{defhn}. 

As in the case of the Coulomb gas, the next step is to blow up at the scale of the inter-vortex distance, here of order $1/\sqrt{\he}$.
Figure \ref{fig10} illustrates how we blow up around a center point belonging to the coincidence set of the obstacle problem, i.e. the support of $\mu_\lambda$ (or $\mu_{0,\eps}$), and wish to to find a triangular lattice distribution of vortices after blow-up in the limit $\eps \to 0$.

\begin{figure}[h!]

\begin{center}
\includegraphics[scale=.7]{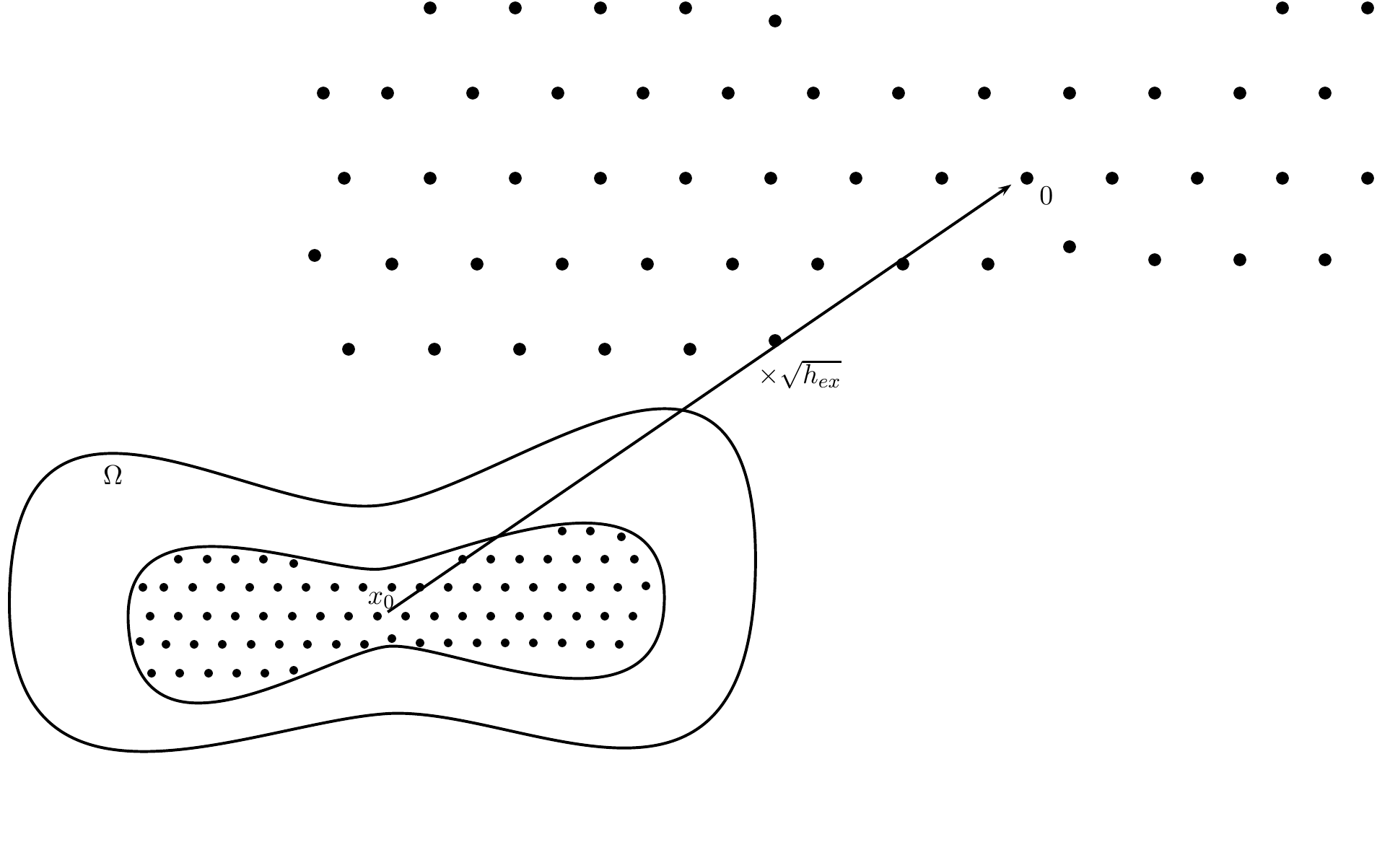}
\caption{Blow up to the Abrikosov lattice}\label{fig10}
\end{center}
\end{figure}

We thus define  $$x'= \sqrt{\he} x,  \ \eps'= \sqrt{\he} \eps, \ u'(x')= u(x),$$
$$ \ A'(x')=\sqrt{\he} A_{1,\eps}(\frac{x'}{\sqrt{\he}}),
  \ h'(x')= h_{1,\eps}(\frac{x'}{\sqrt{\he}}),\  \mu'(x')= \mu(u,A) (\frac{x'}{\sqrt{\he}} ) $$ $$\om'= \sqrt{\he} \om, \ \omega_{0,\eps}' = \sqrt{\he} \omega_{0,\eps},
\  \mu_{0,\eps}'= (1- \frac{1}{2\he} \log \frac{1}{\eps\sqrt{\he}} ) \indic_{\omega_{0,\eps}'} .$$
 We note that the density of $\mu_{0,\eps}' $ tends to $m_\lambda:=1-\frac{1}{2\lambda}$ as $\eps \to 0$.
 Rescaling the equation \eqref{london1} yields 
 $$- \Delta h'+ \frac{1}{\he} h'= \mu' - \mu_{0,\eps}'$$
 and thus if one centers the blow-up in a point of $\omega_{0,\eps}$ we will have in the limit $\eps \to 0$
 $$-\Delta h= 2\pi \sum_p \delta_p- m_\lambda \quad \text{in } \mr^2$$ for some points $p\in \mr^2$ (which we will prove appear with single-multiplicity), 
 i.e, $\nab h$ belongs to the admissible class $\mathcal{A}_{m_\lambda/2\pi}$ defined in Chapter \ref{chapdefw}, for which we can define $W(\nab h)$.
 
 Inserting this change of scales into the energy $\mathcal{F}_\eps$, we find
 $$\mathcal{F}_\eps(u,A_{1,\eps})= \hal \int_{\om_\eps'}|\nab_{A'} u'|^2+ \he|\curl A'- \mu_{0,\eps}'|^2 + \frac{(1-|u'|^2)^2}{2(\eps')^2}  - \hal \log \frac{1}{\eps'} \int_{\om_\eps'} \mu(u',A') .$$

As always, the main result will be obtained by proving first a lower bound, and second a matching upper bound via an explicit construction. 
\subsection{Lower bound}
It is to be obtained by the abstract method presented in Section \ref{frameworkergodique}, applied on the large sets $\omega_{0,\eps}'$, with the ``local" energy being naturally
\begin{multline}\label{fep}
f_\eps(u,A)=  \int_{\R^2} \chi \left[\hal |\nab_A u|^2 + \frac{\he}{2} |\curl A- \mu_{0,\eps}'|^2 + \frac{(1-|u|^2)^2}{4(\eps')^2} - \hal | \log {\eps'}| \mu(u,A)\right],
\end{multline}
with $\chi$ some smooth nonnegative cutoff function supported in $B(0,1)$ and of integral $1$.
Note that this local energy does not depend on a centering point, it is translation-invariant, so we may apply the method  of Section \ref{frameworkergodique} dropping the dependence in  the centering point $x$.
The energy outside of the coincidence set $\omega_{0,\eps}$ will simply be discarded,  it is indeed negligible for minimizers.

In order to apply the abstract framework of Section \ref{frameworkergodique}, a first  step is to show  that if 
\be\label{bornaprio}
\forall R>0, \quad \int_{K_R} f_\eps(\theta_\lambda (u_\eps,A_\eps))\, d\lambda  \le C_R\end{equation}
then $(u_\eps, A_\eps)$ has a  subsequence converging to some $(u,A)$ with  
$$\liminf_{\eps \to 0} f_\eps(u_\eps, A_\eps)\ge f(u,A).$$
 We also recall that \eqref{bornaprio} is equivalent to 
$$\int_{\R^2} \chi* \indic_{K_R}  \left[\hal |\nab_A u|^2 + \frac{\he}{2} |\curl A- \mu_{0,\eps}'|^2 + \frac{(1-|u|^2)^2}{4(\eps')^2} - \hal |\log \eps'| \mu(u,A)\right]\le C_R.$$

One of the important steps of the proof is to show that  the energy density $f_\eps$ controls the number of vortices, so that an upper bound of the form  \eqref{bornaprio} implies that the total degree of vortices in $K_{R-1}$ is  bounded by a constant (depending on $C_R$). Such a bound then easily implies that 
$$\int_{K_{R-1}} \hal |\nab_A u|^2 + \frac{\he}{2} |\curl A- \mu_{0,\eps}'|^2 + \frac{(1-|u|^2)^2}{4(\eps')^2} \le C_R| \log {\eps'}|.$$
As explained in Chapter \ref{toolsgl}, such an upper bound, which controls the number of vortices independently of $\eps'$ makes our life much easier, since it puts us in the framework of \cite{bbh,br2}, for which we can compute sharp and precise lower bounds (up to $o(1)$) for the Ginzburg-Landau energy. It also  completely rules out the possibility of vortices with degrees different from $+1$.
This type of analysis leads to the following lower bound: if \eqref{bornaprio} holds with $C_R$ replaced by $C R^2$  and the second Ginzburg-Landau equation  \eqref{gl2} holds, then 
up to extraction of a subsequence, we have that $h' $ converges to some $h \in \mathcal{A}_{m_\lambda/2\pi}$ and 
$$\liminf_{\eps \to 0} f_\eps(u_\eps, A_\eps) \ge  W(\nab h,\chi)   +\frac{\gamma}{2\pi} m_\lambda,
$$ where $\gamma$ is the constant from \cite{bbh,miro}, and $W$ is the precursor to the renormalized energy as in \eqref{defiW}.
The heuristic for this is quite natural: once the vortices, at points $p$,  have been shown to be of degree $1$ and in bounded number, we split the positive part of the energy as 
\begin{multline*}
\int_{\R^2}  \chi      \left[\hal |\nab_A u|^2 + \frac{\he}{2} |\curl A- \mu_{0,\eps}'|^2 + \frac{(1-|u|^2)^2}{4(\eps')^2}
\right] \\    \ge   \int_{\mr^2 \backslash \cup_p B(p, \eta)} \hal \chi   |\nab_A u|^2    + \int_{\cup_p B(p,\eta)\backslash B(p, M\eps')} \hal \chi |\nab_A u|^2  \\ +\int_{\cup_p B(p,M\eps')}   \chi   
   \left[\hal |\nab_A u|^2 + \frac{\he}{2} |\curl A- \mu_{0,\eps}'|^2 + \frac{(1-|u|^2)^2}{4(\eps')^2}    \right].   \end{multline*}
Outside the $B(p, M \eps)$ with $M$ large, we expect that $|u|\simeq 1$ (which allowed us to  discard some positive terms expected to be negligible). Then from the second Ginzburg-Landau equation, as seen in Chapter \ref{glheuris}, we have $|\nab_A u|\approx |\nab h|$. In  the annuli $B(p, \eta)\backslash B(p, M\eps)$, the energy is then expected to be bounded from below as in \eqref{bb}, which yields 
$\pi \log \frac{\eta}{M\eps'}$ per vortex. In the ``vortex cores" $B(p, M\eps)$ all the energy terms will matter, and the energy depends on the optimal radial profile for the Ginzburg-Landau energy, as given in \cite{hh,miro}, which gives  a contribution  $\gamma$ per vortex, in addition to the cost $\pi \log M$.
Combining all these terms, and multiplying by the number of vortices, which is expected to be $m_\lambda/2\pi$ per unit volume, we formally get 
\begin{multline*}f_\eps(u,A) \ge \int_{\mr^2 \backslash \cup_p B(p, \eta)}  \chi |\nab h|^2 +\frac{\gamma}{2\pi} m_\lambda+\sum_p \chi(p) \( \pi \log \frac{\eta}{\eps'} - \pi \log \frac{1}{\eps'}\)   \\
=  \int_{\mr^2 \backslash \cup_p B(p, \eta)}  \chi |\nab h|^2 +\sum_p \chi(p) \pi \log  \eta+\frac{\gamma}{2\pi} m_\lambda\end{multline*} which, modulo taking the limit $\eta \to 0$,  is exactly the stated result.

We may then define $f(h)= W(\nab h,\chi)+ \frac{\gamma}{2\pi} m_\lambda$, condition $\textbf{(ii)}$ of Section \ref{frameworkergodique} is then satisfied,  and it is also straightforward that $f^*$  defined as in Theorem \ref{th6.1}
satisfies  \begin{multline*}
f^*(h)= \lim_{R\to \infty} \dashint_{K_R} f(h(\lambda + \cdot ) ) d\lambda \\ = \lim_{R\to \infty} \dashint_{K_R}  W(\nab h, \chi *\indic_{K_R})+  \frac{\gamma}{2\pi} m_\lambda= W(\nab h)+  \frac{\gamma}{2\pi} m_\lambda\end{multline*}
where $W$ is now the full renormalized energy as in Definition \ref{def2}.
This allows us to use the abstract framework of Section \ref{frameworkergodique} except there is one major assumption which is not satisfied: namely the assumption $\textbf{(i)}$ that $f_\eps $ must be bounded below by a constant independent of $\eps$.  However this assumption can easily seen not to be true!
What is true is only  that $f_\eps$ is bounded below on average, but not pointwise. 
This causes one of the most serious and technical difficulties in the proof.

We point out that this problem  
 does not occur in the truncation approach to Coulomb gases where one takes the $R\to \infty$ limit {\it before} the $\eta \to 0$ limit, which is one of the main advantages of that approach. By contrast, it  occurs in the approach of  renormalizing by  first ``cutting out holes" and letting $\eta \to 0$.  In the Ginzburg-Landau setting, we see no analogue of the truncation method that could remedy this, in particular due to the fact that the signs of vortices can a priori be arbitrary, which makes the monotonocity of the truncation break down. Instead, we rely on the ball construction lower bounds to remedy this (and this is a completely two-dimensional remedy, as is the ball construction). 
The lower bound stated in  Theorem \ref{thboules} do not suffice, rather we need the ``improved lower bounds" that we introduced in \cite{compagnon} and which are roughly explained at the end of Section \ref{sec812}. Thanks to these lower bounds, which are sharp up to a constant error per vortex, we are able to show that even though the energy density associated to $f_\eps$ is not bounded below pointwise, the negative part of $f_\eps$ (corresponding to the subtracted  vortex costs) 
can be ``displaced" into the positive part of $f_\eps$ in order to replace $f_\eps$ by an equivalent energy density $g_\eps$ which is pointwise bounded below, without making too much error, in the sense that  
$\|f_\eps - g_\eps\|_{\mathrm{Lip}^*} $ is well controlled. We call this ``mass displacement." For that we have to look for energy that compensates the negative $- \pi| \log {\eps'}|$, and this energy is found in balls surrounded the vortices, as well as in annuli  (as described at the end of Section 8.1.2) that can be up to distance $O(1)$ away.
This is done in \cite{compagnon}, and at the same time it is shown that $g_\eps$ (hence $f_\eps$)  controls the number of vortices, a crucial fact whose need we mentioned above.

\subsection{Upper bound}
The upper bound, at least the one needed to construct a recovery sequence for minimizers, can be obtained with the same ideas as  for the Coulomb gas in the proof of Proposition \ref{proub}.
There, we considered the support of the equilibrium measure and partitioned it into rectangles on which we pasted the ``screened" minimizers obtained in Proposition~\ref{propscreening}, rescaled to have the proper density.  In the Ginzburg-Landau context, this is easier  since $ \mu_{0,\eps}'$ the analogue of the blown-up equilibrium measure, has a uniform density on its support. What we thus need to do is assume that this support $\omega_{0,\eps}'$ is nice enough (has $C^1$ boundary), which we can show is ensured for example by the strong requirement that $\om$ be convex (we could certainly remove that condition and replace it by the assumption that $\omega_{0,\eps}$ has no cusps), and then partition it (up to a small boundary layer) into squares of size $R\times R$. In each square we need to paste a solution of 
$$\left\{ \begin{array}{ll} -\Delta h= 2\pi \sum_{p\in \Lambda} \delta_p - m_\lambda & \text{in} \ K_R\\
\frac{\pa h}{\pa \nu} = 0 & \text{on} \ \pa K_R\end{array}\right.$$
with $$\limsup_{R\to \infty} \frac{W(\nab h, \indic_{K_R})}{|K_R|} \le \min_{\mathcal{A}_{m_\lambda/2\pi}} W.$$
This is a screened minimizer of $W$ over $\mathcal{A}_{m_\lambda/2\pi}$. The fact that such an $h$ can be found is proven in \cite{ssgl} and follows the same outline as  the proof of  Proposition \ref{propscreening} in the case of $\mathcal{W}$, except that we do not need to first reduce to configurations with well-separated points (the screening can be accomplished for essentially generic configurations). It is however  complicated by the lack of lower bound on the energy density associated to $W( \cdot, \chi)$, which requires  to go through another ``mass displacement" to transform the energy density into one that is bounded below, just as we did for $f_\eps$ above.
Once these screened minimizers are pasted to almost cover $\omega_{0,\eps}'$, we obtain  a resulting set of points (the vortices)  and an associated vector field $\nab h$, which we extend by $0$ outside $\omega_{0,\eps}$, and to which we add $\nab h_{0,\eps}$.  After projecting this vector field onto gradients (which can only decrease its energy, as seen in Step 5 of the proof of Proposition \ref{proub}) and  blowing down, this defines the test induced magnetic field   $h_\eps$. 
Then there remains to build a corresponding $(u,A)$, which can be done as in the proof of the upper bound for Theorem \ref{th9.1}. Because we now need the energy to be optimal at next order, we need to be more precise near the vortex cores, and plug in exactly the optimal radial profiles for vortices which give the energy $\gamma$ per point.

\subsection{A statement of main result}

Modulo the technical difficulties mentioned above, for which we refer to \cite{compagnon,ssgl}, we can state the main  result which holds in the setting we chose to describe. The result can be written in complete $\Gamma$-convergence form. However for simplicity, we only state it as the analogue of Theorem \ref{thlbnext}, together with the consequences for minimizers as in Theorem \ref{th1}.

\begin{theo}[Next order behavior of the Ginzburg-Landau functional \cite{ssgl}]
Assume  $\om$ is convex and  \eqref{asshe} holds.  Let $(u_\eps, A_\eps)$ be such that $G_\eps(u_\eps, A_\eps)\le G_\eps^0+ C\he$, and let  $P_\eps$ be the push-forward of the normalized Lebesgue measure on $\omega_{0,\eps}$ by 
$$x\mapsto \frac{1}{\sqrt{\he}} \nab h_\eps (x+ \frac{\cdot}{\sqrt{\he}} ) $$
where $h_\eps $ is implicitly extended by  $0$ outside the domain $\om$. Then, up to extraction of a subsequence, we have  $P_\eps \to P$ in the weak sense of probabilities,  where $P$ is some probability measure concentrated on $\mathcal{A}_{m_\lambda/2\pi}$ and 
\be\label{mingno}
G_\eps(u_\eps, A_\eps)\ge G_\eps^0 + \he |\omega_{0,\eps}| \left(\int W(\nab h) \,d P(\nab h) + \frac{m_\lambda \gamma}{2\pi}\right)+o(\he).\ee
If in addition $(u_\eps, A_\eps)$ minimizes $G_\eps$ then $P$-a.e $\nab h$ minimizes $W$ over $\mathcal{A}_{m_\lambda/2\pi}$ and
$$G_\eps(u_\eps, A_\eps)= G_\eps^0 + \he |\omega_{0,\eps}|  \left(\min_{\mathcal{A}_{m_\lambda/2\pi} } W  + \frac{m_\lambda \gamma}{2\pi}\right)+o(\he).$$
\end{theo}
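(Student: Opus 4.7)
\medskip

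\noindent\textbf{Proof proposal.} The plan is to mimic the strategy used for the Coulomb gas (Theorems \ref{thlbnext} and \ref{th1}), adapted to Ginzburg-Landau via the splitting formula of Proposition \ref{splittinggl}. First, I would write $G_\eps(u_\eps,A_\eps) = G_\eps^0 + \int_\Omega \zeta_\eps \, \mu(u_\eps,A_{1,\eps}) + \mathcal F_\eps(u_\eps,A_{1,\eps}) + o(1)$ with $\mathcal F_\eps$ as in \eqref{Feps} and $\zeta_\eps$ as in \eqref{defzgl}. Since $\zeta_\eps\ge 0$ and vanishes on $\omega_{0,\eps}$, this term plays the role of a confining potential analogous to $\zeta$ in Chapter \ref{chapsplit}, and the hypothesis $G_\eps(u_\eps,A_\eps)\le G_\eps^0 + C h_{\rm ex}$ forces vortices to be essentially supported in $\omega_{0,\eps}$. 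One can then rescale by a factor $\sqrt{h_{\rm ex}}$ and rewrite $\mathcal F_\eps$ as $\int_{\omega_{0,\eps}'} f_\eps(\theta_{x'}(u',A'))\,dx'$ (up to boundary corrections), with local energy $f_\eps$ exactly as in \eqref{fep}, putting us in the setting of the abstract framework of Section \ref{frameworkergodique} with $\Omega$ replaced by $\omega_{0,\eps}/\sqrt{h_{\rm ex}}$ (which, by assumption \eqref{asshe} and $C^1$ regularity of $\omega_\lambda$ coming from convexity of $\Omega$, satisfies the required condition \eqref{assomt}).

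\smallskip

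The lower bound is the main obstacle. I would combine the ball construction (Theorem \ref{thboules}) and the Jacobian estimate (Theorem \ref{thjac}) to extract, along a subsequence, a limiting vorticity $2\pi\sum_p \delta_p$ and a limiting electric field $\nabla h$; passing to the limit in the rescaled London equation \eqref{london1} shows $\nabla h\in \mathcal A_{m_\lambda/2\pi}$. The assumption \textbf{(ii)} in the abstract framework then requires the sharper lower bound $\liminf f_\eps(u_\eps,A_\eps)\ge W(\nabla h,\chi)+\frac{\gamma}{2\pi}m_\lambda$ for configurations with locally bounded $f_\eps$: this is achieved by first showing that such a bound forces the total degree in $K_R$ to be $O(1)$, reducing to the regime of \cite{bbh,br2}, so that vortices have degree $\pm 1$, cost $\pi|\log\eps'|+\gamma+o(1)$ per unit (via the optimal radial profile of \cite{hh,miro}), and the remaining ``annular'' energy converges to $W(\nabla h,\chi)$. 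The serious difficulty is that assumption \textbf{(i)} of Section \ref{frameworkergodique} fails: $f_\eps$ is \emph{not} pointwise bounded below because of the subtracted vortex self-energies $-\pi|\log\eps'|\,\mu(u,A)$. The remedy, which I would import verbatim from \cite{compagnon}, is the ``mass displacement'' procedure: using the improved ball-construction lower bounds of Section \ref{sec812}, one redistributes the negative contributions at each vortex into nearby annular regions where a compensating positive energy lives, producing an equivalent density $g_\eps$ which is pointwise bounded below, controls the number of vortices, and satisfies $\|f_\eps-g_\eps\|_{\mathrm{Lip}^*}=o(1)$. Applying Theorem \ref{thabstrait} to $g_\eps$ then yields a limit $P$ concentrated on $\mathcal A_{m_\lambda/2\pi}$ and, via Wiener's ergodic theorem together with $f^*(\nabla h)=W(\nabla h)+\frac{\gamma}{2\pi}m_\lambda$, the lower bound \eqref{mingno}.

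\smallskip

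For the second part (minimizers), I would construct a matching upper bound along the lines of Proposition \ref{proub}. Using the convexity of $\Omega$ and the $C^{1,1}$ regularity theory for the obstacle problem \eqref{obscorr}, one checks that $\omega_{0,\eps}$ has $C^1$ boundary and one may, up to a negligible boundary layer, partition the blown-up coincidence set $\omega_{0,\eps}'$ into squares of sidelength $R$ (the density of $\mu_{0,\eps}'$ converges uniformly to the constant $m_\lambda$, so no rescaling between cells is needed, which is simpler than in the Coulomb case). In each cell one pastes a ``screened'' approximate minimizer $\bar h$ of $W$ over $\mathcal A_{m_\lambda/2\pi}$, obtained by a version of Proposition \ref{propscreening} adapted to the two-dimensional renormalized energy $W$ (proved in \cite{ssgl}, again using mass displacement to circumvent the lack of a pointwise lower bound on the energy density for $W$); the zero-flux boundary condition on each square makes these vector fields compatible across interfaces, and Hodge projection onto gradients does not increase the energy. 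From the resulting vector field, extended by zero outside $\omega_{0,\eps}$ and added to $\nabla h_{0,\eps}$, I would reconstruct a test configuration $(u_\eps,A_\eps)$ as in the upper bound of Theorem \ref{th9.1}, but inserting the optimal radial profiles of \cite{hh,miro} in small cores around each point to harvest the constant $\gamma$ per vortex. Comparing this construction with \eqref{mingno} and using Proposition \ref{gammaconvmini}, one concludes that for minimizers, $P$-a.e.\ $\nabla h$ minimizes $W$ over $\mathcal A_{m_\lambda/2\pi}$ and that the energy expansion is sharp up to $o(h_{\rm ex})$.
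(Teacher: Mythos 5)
Your proposal follows essentially the same route as the paper: the exact splitting of Proposition \ref{splittinggl}, the blow-up at scale $1/\sqrt{\he}$, the two-scale ergodic framework of Section \ref{frameworkergodique} with the mass-displacement device of \cite{compagnon} to repair the failure of the pointwise lower bound on $f_\eps$, and the upper bound by pasting screened approximate minimizers of $W$ into squares tiling $\omega_{0,\eps}'$ with optimal radial profiles in the vortex cores to recover the constant $\gamma$. At the level of detail the text itself provides (the technical work being deferred to \cite{compagnon,ssgl}), this is a correct and complete account of the argument.
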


As announced, this provides a next order expansion of the Ginzburg-Landau energy, in a similar fashion as what we have seen for Coulomb gases with an error  $o(\he)$ which is also $o(1)$ per vortex. Moreover, it connects the question of minimizing Ginzburg-Landau to that of minimizing $W$.  If  Conjecture 1 (the conjecture that the triangular lattice minimizes $W$) was proven, then it would rigorously justify why the Abrikosov lattice appears in experiments on superconductors.  The result here says that almost all the blown up configurations ressemble minimizers of $W$ as $\eps \to 0$. Using the method of \cite{rns} (cf. end of Section \ref{sec63}) it should be possible to obtain a stronger result for all blow ups, i.e. a result of equidistribution of energy.
 
The assumption \eqref{asshe} has been made here for simplicity of presentation, the result in \cite{ssgl} is more general and works as long as the number of vortices diverges to infinity which happens as soon as $\he-\hci \gg \log \lep$, and as long as  $\he \ll \frac{1}{\eps^2}$.

\end{document}